\documentclass{article}





    \usepackage[nonatbib, preprint]{neurips_2023}

\usepackage[utf8]{inputenc} 
\usepackage[T1]{fontenc}    
\usepackage{hyperref}       
\usepackage{url}            
\usepackage{booktabs}       
\usepackage{amsfonts}       
\usepackage{nicefrac}       
\usepackage{microtype}      
\usepackage{xcolor}         

\usepackage{esvect}
\usepackage{dsfont}
\usepackage{amsmath}
\usepackage{amsthm}
\usepackage{amssymb}

\newtheorem{definition}{Definition}
\newtheorem{condition}[definition]{Condition}
\newtheorem{theorem}{Theorem}[section]
\newtheorem{lemma}{Lemma}
\newtheorem{corollary}[lemma]{Corollary}
\newtheorem*{proof*}{Proof}

\newcommand{\mapped}[1]{\overset{#1}{\sim}}
\newcommand{\notmapped}[1]{\overset{#1}{\not\sim}}
\newcommand{\define}{\triangleq}
\newcommand{\bcomment}[1]{}

\newcommand{\blue}[1]{\textcolor{blue}{#1}}
\newcommand{\red}[1]{\textcolor{red}{#1}}

\newcommand{\mtrx}[1]{\mathbf{#1}}
\newcommand{\vctr}[1]{\vv{#1}}

\newcommand{\eql}[1]{\overset{#1}{=}}

\newcommand{\pheq}{\phantom{=}}

\newcommand{\pth}[1]{\left(#1\right)} 
\newcommand{\croc}[1]{\left[#1\right]} 
\newcommand{\acc}[1]{\left\{#1\right\}} 
\newcommand{\ip}[1]{\left\langle#1\right\rangle} 

\newcommand{\bpth}[1]{\big(#1\big)}





\newcommand{\crocMat}[4]{\begin{bmatrix} #1 & #2 \\ #3 & #4\end{bmatrix}}
\newcommand{\crocVec}[2]{\begin{bmatrix} #1 \\ #2\end{bmatrix}}


\newcommand{\diag}{\operatorname{diag}}
\newcommand{\tr}{\operatorname{tr}}
\newcommand{\Var}{\operatorname{Var}}
\newcommand{\Cov}{\operatorname{Cov}}

\newcommand{\calA}{\mathcal{A}}
\newcommand{\calB}{\mathcal{B}}

\newcommand{\calD}{\mathcal{D}}

\newcommand{\calN}{\mathcal{N}}
\newcommand{\calO}{\mathcal{O}}

\newcommand{\calU}{\mathcal{U}}
\newcommand{\calV}{\mathcal{V}}
\newcommand{\calW}{\mathcal{W}}


\newcommand{\bbE}{\mathbb{E}}

\newcommand{\bbN}{\mathbb{N}}

\newcommand{\bbR}{\mathbb{R}}


\newcommand{\zeromat}{\mtrx{0}}

\newcommand{\amat}{\mtrx{a}}

\newcommand{\mmat}{\mtrx{m}}

\newcommand{\Amat}{\mtrx{A}}

\newcommand{\Emat}{\mtrx{E}}

\newcommand{\Gmat}{\mtrx{G}}

\newcommand{\Imat}{\mtrx{I}}

\newcommand{\Lmat}{\mtrx{L}}
\newcommand{\Mmat}{\mtrx{M}}

\newcommand{\Pmat}{\mtrx{P}}
\newcommand{\Qmat}{\mtrx{Q}}

\newcommand{\Umat}{\mtrx{U}}
\newcommand{\Vmat}{\mtrx{V}}
\newcommand{\Wmat}{\mtrx{W}}

\newcommand{\Thetamat}{\mtrx{\Theta}}
\newcommand{\Sigmamat}{\mtrx{\Sigma}}



\newcommand{\onevec}{\vctr{1}}
\newcommand{\zerovec}{\vctr{0}}

\newcommand{\avec}{\vctr{a}}
\newcommand{\bvec}{\vctr{b}}

\newcommand{\fvec}{\vctr{f}}

\newcommand{\mvec}{\vctr{m}}

\newcommand{\tvec}{\vctr{t}}

\newcommand{\xvec}{\vctr{x}}
\newcommand{\yvec}{\vctr{y}}

\newcommand{\Avec}{\vctr{A}}
\newcommand{\Bvec}{\vctr{B}}

\newcommand{\Xvec}{\vctr{X}}
\newcommand{\Yvec}{\vctr{Y}}

\newcommand{\alphavec}{\vctr{\alpha}}
\newcommand{\betavec}{\vctr{\beta}}

\newcommand{\muvec}{\vctr{\mu}}

\newcommand{\rhovec}{\vctr{\rho}}

\newcommand{\TODO}[1]{}
\usepackage{paralist}
\usepackage{tikz}
\usetikzlibrary{matrix}
\usepackage{pgfplots}

\title{Gaussian Database Alignment and Gaussian Planted Matching}

%

\author{%
  Osman E.~Dai \\
  School of Industrial and Systems Engineering \\
  Georgia Institute of Technology \\
  Atlanta, GA 30313, USA \\
  \texttt{oedai@gatech.edu} \\
  \And
  Daniel~Cullina \\
  School of Electrical Engineering and Computer Science \\
  Pennsylvania State University\\
  State College, PA 16801, USA \\
  \texttt{dqc5596@psu.edu} \\
  \And
  Negar~Kiyavash \\
  School of Computer and Communication Sciences \\
  Ecole Polytechnique Federale de Lausanne\\
  1015 Lausanne, Switzerland \\
  \texttt{negar.kiyavash@epfl.ch} \\
}

\begin{document}

\maketitle

\begin{abstract}
  Database alignment is a variant of the graph alignment problem: Given a pair of anonymized databases containing separate yet correlated features for a set of users, the problem is to identify the correspondence between the features and align the anonymized user sets based on correlation alone. This closely relates to planted matching, where given a bigraph with random weights, the goal is to identify the underlying matching that generated the given weights. We study an instance of the database alignment problem with multivariate Gaussian features and derive results that apply both for database alignment and for planted matching, demonstrating the connection between them. The performance thresholds for database alignment converge to that for planted matching when the dimensionality of the database features is \(\omega(\log n)\), where \(n\) is the size of the alignment, and no individual feature is too strong. The maximum likelihood algorithms for both planted matching and database alignment take the form of a linear program and we study relaxations to better understand the significance of various constraints under various conditions and present achievability and converse bounds. Our results show that the almost-exact alignment threshold for the relaxed algorithms coincide with that of maximum likelihood, while there is a gap between the exact alignment thresholds. Our analysis and results extend to the unbalanced case where one user set is not fully covered by the alignment.
\end{abstract}

\section{Introduction}

The modern ubiquity of data collection has drawn interest to the problem of data alignment, which is described as follows: We have two data sets containing information regarding various anonymized users. Both sets might contain data associated with a particular user, in which case we observe correlation between data of given user. Alignment is the problem of identifying such correlated pairs. This enables data merging (e.g. in the field of computational biology \cite{singh2008global} or computer vision \cite{conte2004thirty}) or de-anonymization (with several high-profile instances, such as the 2006 Netflix Prize incident \cite{narayanan2008robust} or 2016 release of the MBS/PBS healthcare data \cite{culnane2017health}).

Database alignment and graph alignment are two well studied versions of the alignment problem. In the former setting, the data consists of multi-dimensional features, each associated with an individual user. These features are correlated across the two databases only if there are associated with the same user. In the graph setting, features are associated with pairs of users, and they are correlated across the two graphs only if the pairs match.

A line of work has studied alignment of correlated Erd{\H{o}}s-R{\'e}nyi graphs, identifying information theoretic bounds for exact alignment \cite{pedarsani2011privacy,lyzinski2014seeded, cullina2017exact}, for partial alignment \cite{ganassali2021impossibility, hall2022partial}, analyzing various efficient and nearly-efficient algorithms \cite{shirani2017seeded, barak2019nearly, dai2019analysis, ding2021efficient, fan2022spectralA, fan2022spectralB,mao2022random, wu2022settling}. There are other results on variants of this problem, considering alignment with side information \cite{lyzinski2014seeded, shirani2017seeded, mossel2020seeded} or alignment of graphs with Gaussian edge weights \cite{ganassali2022sharp, wu2022settling}.

For database alignment, the earliest result identified the sharp information theoretic condition for the exact alignment of databases with finite-alphabet features \cite{cullina2018fundamental}. A later study identified tight bounds for almost-exact alignment when features are high-dimensional and each dimension is i.i.d with an arbitrary distribution \cite{shirani2019concentration}. Several works focused on Gaussian databases, presenting the sharp conditions of exact and almost-exact alignment of databases with Gaussian features \cite{dai2019database} and identifying the elliptic boundary that show the order of magnitude of errors that is achievable within the almost-exact alignment region \cite{dai2020achievability}. Another work studied the related problem of testing whether Gaussian databases are correlated \cite{zeynep2022detecting}.

Planted matching is a closely related problem. In this setting, we consider a bipartite graph, with random edge weights, over a pair of user sets with an underlying true matching. The weights of edges across true user pairs are different from those across users that are not matched under the true matching. In the original formulation \cite{chertkov2010inference}, it is motivated by the problem of tracking moving particles between two images. With particles in the two images forming the vertex set of the bipartite graph, each edge weight is some calculated measure of likelihood of two particles from different images correspond to each other. This can be considered as variant of the database alignment formulation where the features in each database is the information about the particle in each image. One significant difference between the two formulations is that in planted matching, edge weights are typically considered to be independent, while for database alignment, the actual likelihood measures between pairs are not truly independent. This follows from the observation that, in database alignment, the likelihood of the matched pair \((u,v)\) and that of \((u,v')\) both depend on the value of the feature associated with \(u\).

Earliest work studied planted matching in the case where the non-matched edges have uniform distribution while the distribution matched edges is folded Gaussian \cite{chertkov2010inference, semerjian2020recovery}. Later work studied the case where all edges are exponentially distributed, the matched ones having finite mean while the non-matched edges have mean on the order of the number of vertices in the bipartite graph \cite{moharrami2021planted}, which was then extended to  consider a variant of the problem with side information, where only a subset of the vertex pairs are eligible to be included in the matching, i.e. the matching is planted in a non-complete bipartite graph \cite{ding2021planted}. This last study established the sharp information theoretic conditions for almost-exact alignment in the case where the matched edge has a distribution with fixed density (i.e. not dependent on the size of the vertex set) while that of the non-matched edges is scaled (i.e. stretched) by the average degree in the bipartite graph.

The Gaussian case we study is distinct from the planted matching cases described above. We consider the case where all edges are Gaussian with unit variance and some distance between the means. In the regime of interest, the difference in means scales with the square root of log of the number of vertices. We derive sharp thresholds that dictate the order of magnitude of errors at any signal level. Furthermore, we extend our analysis to the unbalanced case where the two vertex sets differ in size and the true matching is injective but not surjective. We also consider the performance of various relaxations of the maximum likelihood estimator to understand the dynamics of this problem and the significance of various constraints at different signal levels.

\section{Model}

\subsection{Notation}

Random variables are denoted by upper-case Latin letters while their instances are denoted with the corresponding lowercase letter. Vectors and vector-valued functions are expressed by arrows (e.g. \(\avec\)) while bold font is used for matrices (e.g. \(\amat\)). 
The natural numbers and real numbers are denoted \(\bbN\) and \(\bbR\) respectively while calligraphic notation is used for others sets (e.g. \(\calA\)).

\subsection{Correlated Gaussian Databases}
\label{subsec:modelDatabase}

Let \(\calU\) and \(\calV\) denote sets of users. Let \(M\) denote an underlying partial mapping between \(\calU\) and \(\calV\): a bijection between some subset of \(\calU\) and \(\calV\). We write \(u\mapped{M}v\) if \(u\in\calU\) and \(v\in\calV\) are mapped to each other by \(M\). Any user from one database is mapped to at most one user from the other database. We use \(|M|\) to denote the number of pairs mapped by \(M\). Let \(\Mmat\) denote the matrix encoding of the mapping in \(\{0,1\}^{\calU\times\calV}\) such that \(M_{u,v}=1\) if \(u\) and \(v\) are mapped.

Databases are represented by functions that return feature vectors for each user in the relevant user set. \(\Avec:\calU\to\bbR^{\calD_a}\) and \(\Bvec:\calV\to\bbR^{\calD_b}\) are the databases associated with the two sets of users. The features in each database are indexed by elements in the sets \(\calD_a\) and \(\calD_b\) respectively.

We say \(\Avec\) and \(\Bvec\) are a pair of correlated Gaussian databases with covariance \(\crocMat{\Sigmamat_a}{\Sigmamat_{ab}}{\Sigmamat_{ab}^\top}{\Sigmamat_b}\) if
\begin{compactitem}
    \item All entries in \(\Avec\) and \(\Bvec\) are, together, jointly Gaussian.
    \item \(\Avec(u)\) is independent and identically distributed with variance \(\Sigmamat_a\) for every \(u\in\calU\).\\
    Similarly, \(\Bvec(v)\) is independent and identically distributed with variance \(\Sigmamat_b\) for every \(v\in\calV\).
    \item \(\textrm{Cov}\pth{\Avec(u), \Bvec(v)} = \Sigmamat_{ab}\) if \(u\mapped{M}v\) and
    \(\textrm{Cov}\pth{\Avec(u), \Bvec(v)} = 0\) if \(u\notmapped{M}v\).
\end{compactitem}

Under this model, features in \(\Avec\) and \(\Bvec\) may have arbitrary dimension \(|\calD_a|\) and \(|\calD_b|\) respectively. However, as shown in \hyperref[sec:canonical]{Appendix \ref*{sec:canonical}}, knowledge of the statistic \(\Sigmamat\) can be used to perform linear transformations on features from each database and eliminate degrees of freedom of the features that are not correlated with the other database.

\textbf{\underline{Problem setting:}} We consider the scenario where we observe a pair of correlated Gaussian databases \(\Avec\) and \(\Bvec\) with an unknown partial mapping \(M\) between the sets of users \(\calU\) and \(\calV\). The statistics \(\muvec\) and \(\Sigmamat\) of the i.i.d. distribution of correlated feature pairs are known. We have no prior knowledge of the mapping \(M\) beyond its size. We say the problem is unbalanced if \(|\calU|\neq |\calV|\).

\subsection{Planted Matching on Gaussian Bigraph}
\label{subsec:modelPlanted}

\(\calU\), \(\calV\) and \(M\) are defined as in \hyperref[subsec:modelDatabase]{Subsection  \ref*{subsec:modelDatabase}}. Given parameter \(\mu\in\bbR\), let \(\Wmat\), taking values in \(\bbR^{\calU\times\calV}\), denote the weight matrix of bipartite graph over \(\calU\) and \(\calV\) such that, given \(M\), \(\Wmat\) has independent Gaussian entries with unit variance and mean \(\bbE[W_{u,v}]=\mu\) if \(u\mapped{M}v\) and \(\bbE[W_{u,v}]=0\) otherwise. Without loss of generality, we assume \(\mu>0\).

\textbf{\underline{Problem setting:}} We observe \(\Wmat\) and want to identify the proper matching \(M\). The parameter \(\mu\) is known. We have no prior knowledge of \(M\) beyond its size.

\subsection{Algorithms}
\label{subsec:model-algo}

For database alignment, let \(\Gmat\) be the matrix such that \(G_{u,v}\) is the log-likelihood ratio of hypotheses \(u\mapped{M}v\) vs. \(u\notmapped{M}v\) for any \((u,v)\in\calU\times\calV\). (\(\Gmat\) can be calculated in polynomial time.) For planted matching, let \(\Wmat\) denote the random weight matrix.
\begin{compactitem}
    \item \textbf{Maximum likelihood estimation}: For both problems, the maximum likelihood estimator can be expressed as a linear program. Given \(|\calU|=|M|\), for arbitrary \(\tau\in\bbR\), find the maximizer \(\mmat\in\bbR^{\calU\times\calV}\) that maximizes \(\ip{\Gmat-\tau,\mmat}\) or \(\ip{\mu\Wmat-\tau,\mmat}\) under the constraints:
    \begin{align*}
        & \textrm{(a)}\,\, \sum_{u\in\calU} m_{u,v} \leq 1, \,\,\forall v\in\calV\\
        & \textrm{(b)}\,\, \sum_{v\in\calV} m_{u,v} = 1, \,\,\forall u\in\calU\\
        & \textrm{(c)}\,\, m_{u,v} \in [0,1], \,\,\,\,\, \forall (u,v)\in\calU\times\calV
    \end{align*}
    This is equivalent to the linear assignment problem and can be solved in polynomial time.
    \item \textbf{Maximum row estimation}: Removing constraint (a) gives us an algorithm that individually considers each user in \(\calU\), blind to all other users in the set. This relaxation is relevant when the mapping of a small subset of users is of interest, and not that of the entire set.
    \item \textbf{Threshold testing}: Removing (a) and (b) gives an algorithm that performs a likelihood ratio tests for each pair \((u,v)\in\calU\times\calV\) to decide whether the given pair is part of the true mapping.
\end{compactitem}
For each case, the constraint matrix is totally unimodular and therefore has an integer valued solution \(\mmat\in\{0,1\}^{\calU\times\calV}\). So the solutions always give us a mapping \(\calU\to\calV\) although not necessarily injective for maximum row estimation and not necessarily injective nor single-valued for threshold testing. A detailed description of algorithms as well as their computational complexity is given in Appendix A.

\section{Results}

\newcommand{\datA}{\Avec}
\newcommand{\datB}{\Bvec}

In this section, we summarize our main results. For database alignment, let \(I_{XY}\) denote the total mutual information between a pair of correlated features from the two databases. For planted matching, let \(\mu\) denote the difference of means. All the results are written in terms of `signal strength' \(\zeta\) which refers to \(I_{XY}\) for database alignment, and to \(\mu^2/2\) for planted matching.
\bcomment{
We analyse the performance of the maximum likelihood estimator for \(M\), which can be expressed as a linear program:

For database alignment, let \(I_{XY}\) denote the total mutual information between a pair of correlated features from the two databases. For planted matching, let \(\mu\) denote the difference of means. We consider the following three algorithms' performance in this work:
\begin{compactitem}
    \item \textbf{Maximum likelihood estimation}: Finds the complete mapping that maximizes the likelihood function.
    \item \textbf{Maximum row estimation}: For each user \(u\in\calU\), finds the user in \(v\in\calV\) that maximizes the likelihood of pair \((u,v)\). It may return a `mapping' where users in \(\calV\) are mapped to multiple users.
    \item \textbf{Threshold testing}: Performs a likelihood ratio test on every pair \((u,v)\in\calU\times\calV\) to determine whether they exceed a threshold in which case they are matched. It may return a `mapping' where users in both \(\calU\) and \(\calV\) are mapped to multiple users.
\end{compactitem}
As we show in \blue{ADD REF} \hyperref[sec:algo]{Section \ref*{sec:algo}}, maximum likelihood estimation can be expressed as a linear program, and the other two algorithms as relaxations of this linear program.
}

For database alignment, when per-feature correlation is low, measures of correlation relevant for our analysis can always be expressed in terms of mutual information \(I_{XY}\). Therefore the statements of some of our results are only accurate in this setting. We formally define this regime, where a large number of dimensions each carry infinitessimally small information:
\begin{condition}[Low per-feature correlation in database alignment]
    \label{cond:highDimensional}
    The covariance matrix \(\Sigmamat = \crocMat{\Sigmamat_a}{\Sigmamat_{ab}}{\Sigmamat_{ab}^\top}{\Sigmamat_b}\) is said to satisfy the low per-feature correlation condition if \(\left|\left|\Sigmamat_a^{-1/2}\Sigmamat_{ab}\Sigmamat_b^{-1/2}\right|\right|_2 \leq o(1)\), where \(||\cdot||_2\) denotes the \(\ell_2\) operator norm, i.e. largest singular value.
\end{condition}
Under this condition, mutual information is \(I_{XY}=\frac{1}{2}||\Sigmamat_a^{-1/2}\Sigmamat_{ab}\Sigmamat_b^{-1/2}||_F^2(1+o(1))\). The squared Frobenius norm is the sum of the singular values squared, so the condition implies \(I_{XY} = o(d)\) where \(d\) is the number of dimensions of features. Since \(I_{XY} = \Omega(\log n)\) in the regime where alignment is feasible, \hyperref[cond:highDimensional]{Condition \ref*{cond:highDimensional}} implies dimensionality \(d\geq \omega(\log n)\). The connection between this condition and the low per-feature correlation setting is shown in Appendix B.

\subsection{Achievability}

We say an algorithm achieves \underline{exact alignment} if the expected number of misaligned users is \(o(1)\), and \underline{almost-exact alignment} if the expected number of misaligned users is \(o(n)\) where \(n\) is the number of users covered by the true alignment.

\begin{figure}
    \centering
    \begin{tikzpicture}
        \begin{axis}[
            xmin = 0, xmax = 7.2,
            ymin = 0, ymax = 4.2,
            xtick distance = 1,
            ytick distance = 1,
            grid = both,
            minor tick num = 4,
            major grid style = {lightgray},
            minor grid style = {lightgray!25},
            width = \textwidth,
            height = 0.5\textwidth,
            xlabel = {\(\frac{\zeta}{\log n}\)},
            ylabel = {\(\frac{\log\pth{|\calV|-n}}{\log n}\)},
            legend cell align = {left},
            legend style={at={(0.02,0.96)},anchor= north west}
        ]
            
            \addplot [mark=none, thick, black] coordinates {(1, 0) (1, 1)};

            \addplot[
                domain = 2:4,
                samples = 100,
                smooth,
                thick,
                blue,
            ]{x/2-1};

            \addplot [mark=none, thick, green] coordinates {(4, 0) (4, 1)};

            \addplot [mark=none, thick, red] coordinates {(5.828, 0) (5.828, 1)};
            
            \addplot[
                domain = 1:8,
                samples = 100,
                smooth,
                thick,
                black,
            ]{x};

            \addplot[
                domain = 4:8,
                samples = 100,
                smooth,
                thick,
                blue,
            ]{(sqrt(x)-1)^2+0.005};
            
            \addplot[
                domain = 4:8,
                samples = 100,
                smooth,
                thick,
                green,
            ]{(sqrt(x)-1)^2-0.015};
            
            \addplot[
                domain = 5.828:8,
                samples = 100,
                smooth,
                thick,
                red,
            ]{(sqrt(x)-1)^2-1};
            
            \legend{almost-exact alignment, exact align. - max likelihood, exact align. - max row, exact align. - thresholding}
        \end{axis}
    \end{tikzpicture}
    \caption{Comparison of boundaries for achievability regions of exact alignment and almost-exact alignment. x-axis corresponds to signal strength and y-axis corresponds to order of magnitude of number of unmatched users in \(\calV\). Achievability regions are areas below/right of curves.}
    \label{fig:threshold}
\end{figure}
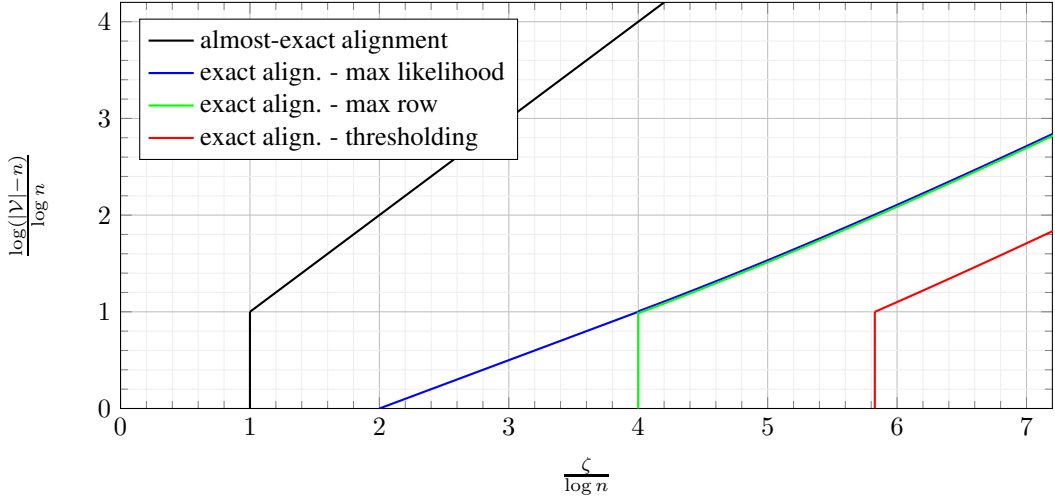

\begin{theorem}[Exact alignment and almost-exact alignment recovery]
    \label{thm:alpha}
    Let \(n = |M| = |\calU|\). Define \(\alpha = \frac{\log\pth{|\calV|-n}}{\log n}\) if \(|\calV|>n\). \(\zeta \geq c\log n + \omega(1)\) is a sufficient condition for \underline{exact alignment}, where the value of \(c\) is given below for the different algorithms and different sizes of \(|\calV|\):\\
    \begin{tabular}{|r|r|r|r|}
        \hline
        Size of \(\calV\)& \(|\calV|=n\) & \(n<|\calV| \leq n + o(n)\) & \(|\calV|\geq n+\Omega(n)\) \\
        \hline
        \hline
        Threshold. &\(\pth{1+\sqrt{2}}^2\) & \(\pth{1+\sqrt{2}}^2\) & \(\pth{1+\sqrt{1+\alpha}}^2\)\\
        \hline
        Max row &\(4\) & \(4\) & \(\pth{1+\sqrt{\alpha}}^2\)\\
        \hline
        Max likelihood & \(2\) & \(2(\alpha+1)\) & \(\pth{1+\sqrt{\alpha}}^2\)\\
        \hline
    \end{tabular}\\
    For \underline{almost-exact alignment}, \(\zeta\geq (\max\{\alpha,1\})\log n + \omega\pth{\sqrt{\log n}}\) is a sufficient condition for all three algorithms.

    \bcomment{
    Then, as \(n\to\infty\), the following are sufficient conditions for \underline{exact alignment}:
    \begin{compactitem}
        \item Threshold testing:\\
        \makebox[8cm]{\(\zeta \geq \pth{1+\sqrt{1+\max\{\alpha,1\}}}^2\log n + \omega(1)\)\hfill}
        \item Maximum row estimation:\\
        \makebox[8cm]{\(\zeta \geq \pth{1+\sqrt{\max\{\alpha,1\}}}^2\log n + \omega(1)\)\hfill}
        \item Maximum likelihood estimation:\\
        \makebox[6cm]{\(\zeta\geq 2(\alpha + 1)\log n + \omega(1)\)\hfill} if \,\(0\leq \alpha \leq 1\)\\
        \makebox[6cm]{\(\zeta \geq \pth{1+\sqrt{\alpha}}^2\log n + \omega(1)\)\hfill} if \,\(1\leq \alpha\)
    \end{compactitem}
    For almost-exact alignment, the condition below holds for all three aforementioned algorithms.
    \begin{compactitem}
        \item \makebox[8cm]{\(I_{XY}\geq (\max\{\alpha,1\})\log n + \omega\pth{\sqrt{\log n}}\)\hfill}
    \end{compactitem}
    }
\end{theorem}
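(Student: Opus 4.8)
The plan is to reduce both problems to the planted-matching model and then run a first-moment (union-bound) argument over ``error-certifying'' configurations, extracting the Gaussian thresholds by optimizing over each algorithm's free parameter. For database alignment under \hyperref[cond:highDimensional]{Condition \ref*{cond:highDimensional}} I would first argue that the log-likelihood-ratio matrix $\Gmat$ behaves like a planted Gaussian weight matrix with $\zeta=I_{XY}$: writing $G_{u,v}$ as a sum over the post-canonicalization feature coordinates, a central-limit/concentration estimate valid when the dimension is $\omega(\log n)$ gives that $G_{u,v}$ is approximately Gaussian with mean $\approx I_{XY}$ on matched pairs and $\approx -I_{XY}$ otherwise and variance $\approx 2I_{XY}$ in both cases, and --- crucially --- that $\{G_{u,v}\}_v$ in a fixed row are conditionally i.i.d.\ given $\Avec(u)$, since the $\Bvec(v)$ for $v$ not matched to $u$ are independent of $\Avec(u)$. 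After rescaling this matches planted matching with $\mu^2/2=I_{XY}$, so it suffices to prove the theorem for planted matching; I would defer the transfer estimates to the appendices and work with $\Wmat$ below.

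For \emph{threshold testing} the analysis is a marginal union bound: the output can differ from $M$ only if some matched pair has weight below $\tau$ or some non-matched pair has weight above $\tau$, so the expected number of such violations is at most $n\,\bar\Phi(\mu-\tau)+n(|\calV|-1)\,\bar\Phi(\tau)$ with $\bar\Phi(x)\le e^{-x^2/2}$; choosing $\tau$ to balance the two terms yields exact alignment once $\zeta\ge\pth{\sqrt{\log n}+\sqrt{\log(n(|\calV|-1))}}^2+\omega(1)$, which unwinds to the table entries $(1+\sqrt2)^2$ and $(1+\sqrt{1+\alpha})^2$. For \emph{maximum row estimation}, a user $u$ is mis-mapped exactly when $\max_{v\ne M(u)}W_{u,v}\ge W_{u,M(u)}$; conditioning on $W_{u,M(u)}\sim\calN(\mu,1)$ and bounding the tail of the maximum of the $|\calV|-1$ i.i.d.\ standard-Gaussian competitors --- the standard extreme-value step, splitting the bound $1\wedge(|\calV|-1)\bar\Phi(\cdot)$ at the point where it equals $1$ --- gives a per-user error probability of order $e^{-(\mu-\sqrt{2\log(|\calV|-1)})^2/2}$ plus a secondary $(|\calV|-1)e^{-\mu^2/4}$ term that is active only when $\mu\ge 2\sqrt{2\log(|\calV|-1)}$. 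Summing over the $n$ users and requiring $o(1)$ produces the two regimes: the constant $4$ when $|\calV|=n(1+o(1))$ (secondary term binds) and $(1+\sqrt\alpha)^2$ when $|\calV|=\Theta(n^{\alpha})$, $\alpha>1$ (primary term binds).

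For \emph{maximum likelihood}, if $\hat M\ne M$ then $M\triangle\hat M$ splits into alternating cycles and alternating paths, and at least one component beats $M$ in total weight; I would union-bound over all such components. An alternating cycle with $k$ matched edges is bad with probability $\bar\Phi(\mu\sqrt{k/2})\le e^{-\zeta k/2}$ and there are at most $n^k/(2k)$ of them, so $\sum_{k\ge2}$ is $o(1)$ once $\zeta\ge 2\log n+\omega(1)$ --- this is the $|\calV|=n$ case, where only cycles occur. In the unbalanced case each alternating path additionally frees one vertex outside $\mathrm{im}(M)$; an order-$k$ path is bad with probability $e^{-\zeta k/2}$ and there are $\approx n^k(|\calV|-n)$ of them, so a naive first moment requires $\zeta\ge 2(1+\tfrac{\alpha}{k})\log n$, worst at $k=1$, i.e.\ $\zeta\ge 2(1+\alpha)\log n$ --- already the claimed threshold when $|\calV|\le n+o(n)$. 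When $|\calV|\ge n+\Omega(n)$ the $k=1$ bound must be sharpened exactly as for maximum row estimation: instead of union-bounding over the $|\calV|-n$ free vertices I condition on the one matched-edge weight and use the extreme-value estimate for the max over free vertices, improving the $k=1$ requirement to $(1+\sqrt\alpha)^2\log n$; one then verifies that over all $k\ge1$ and over cycles the binding requirement is exactly $\max\{2,\,2(1+\alpha)\text{ or }(1+\sqrt\alpha)^2\}\log n$, matching the table.

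For \emph{almost-exact alignment} the same estimates apply, but now one only needs the expected number of mis-aligned users to be $o(n)$, so it is enough that each per-user or per-pair error probability be $o(1)$ rather than $o(1/n)$; re-optimizing the threshold and extreme-value bounds, this asks only for $\mu>\sqrt{2\log(|\calV|-1)}+\omega(1)$, i.e.\ $\zeta\ge(\max\{\alpha,1\})\log n+\omega(\sqrt{\log n})$, for all three algorithms. For maximum likelihood the number of mis-aligned users cannot be controlled by first moment over components (the potential bad structures heavily overlap); instead I would use optimality, $\ip{\Wmat,\hat M-M}\ge 0$, together with a union bound over the \emph{size} of the overlap deficit $t=|M\setminus\hat M|$: whp, uniformly over all candidate sets of each size $t$, the minimum removed weight and maximum inserted weight obey an inequality that forces $t=o(n)$ once $\zeta$ exceeds $\max\{\alpha,1\}\log n+\omega(\sqrt{\log n})$ (and, specialized to $t=1$, this re-proves exact recovery but with the coarser threshold-testing constant, which is why the finer component-by-component count is needed for exact maximum likelihood). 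I expect the main obstacle to be precisely the exact-alignment analysis of maximum likelihood in the unbalanced regime: one must interlock the crude first-moment control of cycles and long paths with the delicate extreme-value control of order-$1$ paths, carry the constants so that the two competing bounds $2(1+\alpha)$ and $(1+\sqrt\alpha)^2$ emerge with the correct crossover at $\alpha=1$, and check that no longer configuration ever dominates; a secondary difficulty is making the Gaussian-approximation and conditional-independence reduction for database alignment fully rigorous under \hyperref[cond:highDimensional]{Condition \ref*{cond:highDimensional}}.
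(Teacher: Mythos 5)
Your planted-matching analysis follows essentially the same skeleton as the paper's: a marginal union bound with a balanced threshold for threshold testing, a two-regime (typicality-split vs.\ crude union bound) analysis for maximum row, and a cycle/path decomposition of \(\hat M\triangle M\) for maximum likelihood with the \(k=1\) path term sharpened by conditioning on the true-pair score. I checked that your crossover bookkeeping is consistent (the ``secondary'' term \((|\calV|-n)e^{-\mu^2/4}\) is what produces \(2(1+\alpha)\) for \(\alpha<1\) and hands over to \((1+\sqrt{\alpha})^2\) at \(\alpha=1\), and the crude bound does suffice for all components with \(k\ge 2\) at that threshold), so for the planted matching model your outline is sound.

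The genuine gap is the very first step: reducing database alignment to planted matching by a ``central-limit/concentration estimate'' under Condition~\ref{cond:highDimensional}. The events you need to control have probability \(e^{-\Theta(\log n)}\), i.e.\ they live at large-deviation scale, where a CLT-type approximation of \(G_{u,v}\) by a Gaussian with matching mean and variance gives no control of the relevant tails; moreover, conditional independence of \(\{G_{u,v'}\}_{v'}\) within a row given \(\Avec(u)\) is not enough for the ML analysis, where the linear combinations of \(\Gmat\)-entries along a cycle or path couple many rows \emph{and} columns simultaneously. The paper never makes this reduction: it proves exact Chernoff-type bounds for \(\Gmat\) directly through a generating function \(R(\Thetamat)\) (Lemmas~\ref{lemma:genFunc2exp}--\ref{cor:rMapping}), evaluated on cycle and even-path blocks (Lemmas~\ref{lemma:rCycle}, \ref{lemma:rEvenPath}), which yields the planted-matching exponents up to an explicit correction \(6\rho_{\max}^2\tau\) that vanishes under Condition~\ref{cond:highDimensional}; this is where most of the technical content of the theorem lies, and also why several of the paper's bounds hold \emph{without} Condition~\ref{cond:highDimensional}, something your reduction cannot recover. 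Deferring this to ``transfer estimates in the appendices'' leaves the main difficulty unaddressed.

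Two smaller points. For ML almost-exact alignment you take a genuinely different route (optimality plus a union bound over the deficit size \(t\)) from the paper's atypicality/misalignment-despite-typicality decomposition over misalignment sizes; this route can work, but not in the form ``minimum removed weight vs.\ maximum inserted weight'': bounding each inserted edge by the global maximum over eligible false pairs (or union-bounding over all rematchings with the raw \(e^{-t\zeta/2}\)) loses the constant and lands at roughly \(2\max\{\alpha,1\}\log n\) or \((1+\max\{\alpha,1\})\log n\); to reach \(\max\{\alpha,1\}\log n\) you must bound the \emph{total} inserted weight by the sum of the top-\(t\) order statistics (\(\approx t\sqrt{2\max\{\alpha,1\}\log n}\)), which you do not spell out. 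Also, for threshold testing the almost-exact criterion is not ``each per-pair error probability \(o(1)\)'': there are \(n|\calV|\) false pairs, so the false-positive probability must be \(o(1/|\calV|)\); your stated condition \(\mu\ge\sqrt{2\log|\calV|}+\omega(1)\) is the right one, but the justification as phrased is off.
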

Note that for threshold testing and  maximum row estimation, the exact-alignment thresholds does not change with the size of \(|\calV|\)  as long as \(|\calV|\) is on the order of \(n\). For maximum likelihood estimation, the exact-alignment threshold increases linearly with \(\alpha\) in this regime.
For \(|\calV|\geq \Omega(n)\), the exact-alignment thresholds for all three algorithms increase quadratically with \(\sqrt{\alpha}\). Furthermore, the thresholds for maximum row estimation and maximum likelihood estimation coincide.
The boundaries for exact and almost-exact alignment for the algorithms are illustrated in \hyperref[fig:threshold]{Fig. \ref*{fig:threshold}}. Boundaries for maximum likelihood and maximum row algorithms completely overlap for \(I_{XY}/\log n \geq 4\).

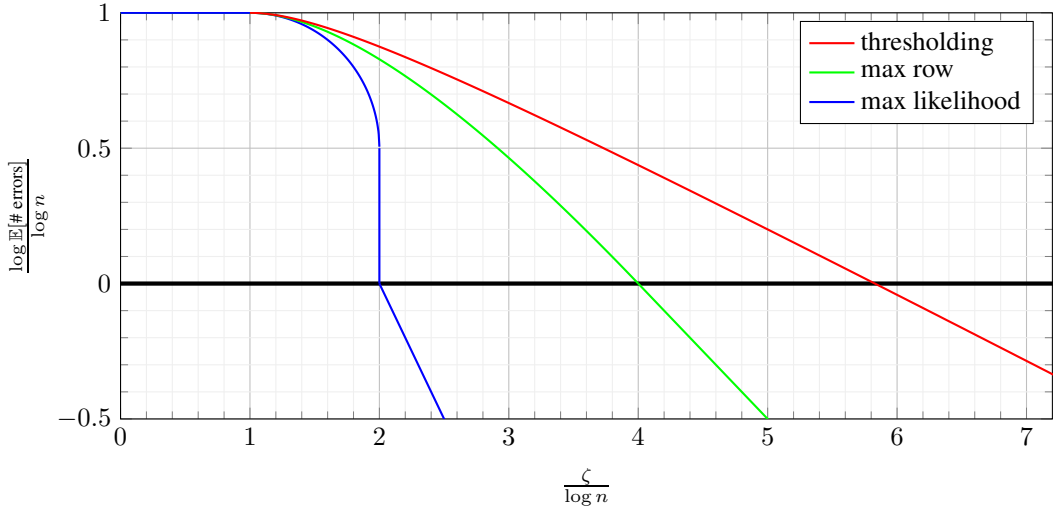
\begin{figure}
    \centering
    \begin{tikzpicture}
        \begin{axis}[
            xmin = 0, xmax = 7.2,
            ymin = -0.5, ymax = 1,
            xtick distance = 1,
            ytick distance = 0.5,
            grid = both,
            minor tick num = 4,
            major grid style = {lightgray},
            minor grid style = {lightgray!25},
            width = \textwidth,
            height = 0.5\textwidth,
            xlabel = {\(\frac{\zeta}{\log n}\)},
            ylabel = {\(\frac{\log\bbE[\textrm{\# errors}]}{\log n}\)},
            legend cell align = {left},
        ]
            
            \addplot[
                domain = 0:1,
                samples = 100,
                smooth,
                thick,
                red,
            ]{1};
            
            \addplot[
                domain = 0:1,
                samples = 100,
                smooth,
                thick,
                green,
            ]{1};
            
            \addplot[
                domain = 0:1,
                samples = 100,
                smooth,
                thick,
                blue,
            ]{1};
            
            
            \addplot[
                domain = 0:8,
                samples = 100,
                smooth,
                ultra thick,
                black,
            ]{0};
            
            \addplot[
                domain = 1:2,
                samples = 200,
                smooth,
                thick,
                blue,
            ]{sqrt(0.25-(x/2-0.5)^2)+0.5};
            
            \addplot[
                domain = 2:8,
                samples = 200,
                smooth,
                thick,
                blue,
            ]{2-x};

            \addplot [mark=none, thick, blue] coordinates {(2, 0) (2, 0.5)};
            
            \addplot[
                domain = 1:4,
                samples = 200,
                smooth,
                thick,
                green,
            ]{2*sqrt(x)-x};
            
            \addplot[
                domain = 4:8,
                samples = 200,
                smooth,
                thick,
                green,
            ]{2-x/2};
            
            \addplot[
                domain = 1:8,
                samples = 200,
                smooth,
                thick,
                red,
            ]{1-(1/x)*(x/2 - 0.5)^2};
            
            \legend{thresholding, max row, max likelihood}
        \end{axis}
    \end{tikzpicture}
    \caption{Comparison of the boundaries for the achievability regions when \(|\calV|=|\calU|=|M|=n\). The x-axis is to signal strength and the y-axis is the order of magnitude of the number of the expected number of mismatched users. Achievability regions are areas above/right of curves.}
    \label{fig:expectation0}
\end{figure}

\begin{theorem}[Expected number of errors in balanced case]
    \label{thm:beta}
    Let \(n = |M| = |\calU| = |\calV|\). The following sufficient conditions guarantee no more than \(n^{1-\beta+o(1)}\)\footnote{Sufficient conditions of linear and vertical form for maximum likelihood estimation both achieve an error bound of \(n^{1-\beta}\). For planted matching, the parabolic and elliptic ones, as well as the linear one for maximum row estimation achieve \(2n^{1-\beta}\). For database alignment, the parabolic and elliptic ones achieve \(2n^{1-\beta+o(1)}\) asymptotically as \(n\to\infty\), while the linear one for maximum row estimation achieves \(2n^{1-\beta}\).} errors in expectation. (The bounds that require \hyperref[cond:highDimensional]{Condition \ref*{cond:highDimensional}} for database alignment are specified.)
    
    \begin{tabular}{|l r c c|}
        \hline
        Sufficient cond. & Range of \(\beta\) & Form of boundary & Req. \hyperref[cond:highDimensional]{Cond. \ref*{cond:highDimensional}}\\
        \hline
        \hline
        \multicolumn{4}{|l|}{Threshold testing}\\
        \(\zeta \geq \pth{\sqrt{\beta}+\sqrt{1+\beta}}^2\log n\)  & \(0<\beta\phantom{\leq1/2.}\) & parabolic & no\\
        \hline
        \multicolumn{4}{|l|}{Maximum row estimation}\\
        \(\zeta \geq \pth{1+\sqrt{\beta}}^2\log n\) & \(0<\beta\leq 1\phantom{/2}\) & parabolic & yes\\
        \(\zeta \geq 2(1+\beta)\log n\) & \(1<\beta\phantom{\leq1/2.}\) & linear & no\\
        \hline
        \multicolumn{4}{|l|}{Maximum likelihood estimation}\\
        \(\zeta \geq \pth{1+2\sqrt{\beta(1-\beta)}}\log n\) & \(0<\beta\leq 1/2\) & elliptic & yes\\
        \(\zeta \geq 2\log n + 2\log\pth{\frac{\sqrt{5}-1}{2}}\) & \(1/2<\beta\leq 1\phantom{/2}\) & vertical & no\\
        \(\zeta \geq (1+\beta)\log n\) & \(1<\beta\phantom{\leq1/2.}\) & linear & no\\
        \hline
    \end{tabular}

    \bcomment{
    \begin{compactitem}
        \item Threshold testing:\\
        \makebox[6cm]{\(I_{XY} \geq \pth{\sqrt{\beta}+\sqrt{1+\beta}}^2\log n\)\hfill}
        \item Maximum row estimation:\\
        \makebox[6cm]{\(I_{XY} \geq \pth{1+\sqrt{\beta}}^2\log n\)\hfill} if\, \(0<\beta\leq1\)\\
        \makebox[6cm]{\(I_{XY} \geq 2(1+\beta)\log n\)\hfill} if\, \(1\leq\beta\)
        \item Maximum likelihood estimation:\\
        \makebox[6cm]{\(I_{XY} \geq \pth{1+2\sqrt{\beta(1-\beta)}}\log n\)\hfill} if\, \(0<\beta\leq1/2\)\\
        \makebox[6cm]{\(I_{XY} \geq 2\log n + \omega(1)\)\hfill} if\, \(1/2\leq\beta\leq1\)\\
        \makebox[6cm]{\(I_{XY}\geq (1+\beta)\log n\)\hfill} if\, \(1\leq\beta\)
    \end{compactitem}
    }
\end{theorem}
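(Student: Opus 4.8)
The plan is to bound $\bbE[\#\text{errors}] = \sum_{u\in\calU}\Pr[u\text{ is mismatched}]$ for each estimator by Gaussian tail estimates on the relevant scalar scores. After an affine normalization the two problems present the same one-dimensional picture: a matched pair carries a score whose mean exceeds that of an unmatched pair by $2\zeta$, both scores having variance $2\zeta$ --- exactly so for planted matching (take $W_{u,v}\sim\calN(\mu,1)$ versus $\calN(0,1)$, so the mean gap is $\mu=\sqrt{2\zeta}$ and $\mu^2/2=\zeta$), and, under \hyperref[cond:highDimensional]{Condition~\ref*{cond:highDimensional}}, approximately so for database alignment, where $G_{u,v}$ is a log-likelihood ratio with matched/unmatched mean gap $\approx 2I_{XY}=2\zeta$ and variance $\approx 2\zeta$. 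Every error event reduces to a comparison ``sum of $k$ false scores $\ge$ sum of $k$ true scores'', which has probability $\Pr[\calN(0,1)>\sqrt{k\zeta}] = e^{-k\zeta/2}\,n^{o(1)}$ in the relevant regime $\zeta=\Theta(\log n)$. The dichotomy for whether \hyperref[cond:highDimensional]{Condition~\ref*{cond:highDimensional}} is needed is: if a step uses only a \emph{one-sided} tail of $G$ then a Chernoff bound through the R\'enyi divergence of the correlated versus product Gaussian supplies the estimate unconditionally; if it needs the \emph{bulk} of the matched-score density pinned down, it invokes the condition.

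For threshold testing, fixing a cutoff $\tau$ I would use $\Pr[u\text{ mismatched}]\le\Pr[\text{matched score of }u\le\tau] + (n-1)\Pr[\text{a fixed false score}>\tau]$; imposing $n\,e^{-(\sqrt{2\zeta}-\tau)^2/2} + n^2\,e^{-\tau^2/2}\le n^{1-\beta}$ (up to $n^{o(1)}$) and optimizing $\tau$ forces $\sqrt{2\zeta}\ge\sqrt{2\log n}\,(\sqrt\beta+\sqrt{1+\beta})$, i.e.\ the parabolic condition $\zeta\ge(\sqrt\beta+\sqrt{1+\beta})^2\log n$, for every $\beta>0$; only one-sided tails enter, so no condition is needed. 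For maximum row estimation, $u$ is mismatched iff $\max_{v\neq M(u)}G_{u,v}\ge G_{u,M(u)}$, and I would bound this two ways. A pairwise union bound, $\Pr[u\text{ mismatched}]\le(n-1)\Pr[G_{u,v}-G_{u,M(u)}\ge0] = e^{-\zeta/2}\,n^{1+o(1)}$, summed over $u$, gives the linear condition $\zeta\ge2(1+\beta)\log n$, which is the binding one once $\zeta\ge4\log n$, i.e.\ $\beta>1$, and uses only a one-sided bound. For $\beta\le1$ I would instead condition on $G_{u,M(u)}=g$ and integrate $\min\bpth{1,(n-1)\Pr[\calN(0,1)>g]}$ against the matched-score density; the integral concentrates near $g\approx\sqrt{2\log n}$ and evaluates to $\Pr[u\text{ mismatched}]=n^{-(\sqrt{\zeta/\log n}-1)^2+o(1)}$, hence the parabolic condition $\zeta\ge(1+\sqrt\beta)^2\log n$ --- and it is the need to know the density \emph{in its bulk} that forces \hyperref[cond:highDimensional]{Condition~\ref*{cond:highDimensional}} here.

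For maximum likelihood I would first record the structural fact that, since the assignment LP has an integral optimum and $\Mmat$ is feasible, optimality of the returned matching $\hat M$ forces every cycle $C$ of the symmetric difference $M\,\triangle\,\hat M$ (a vertex-disjoint union of $M$-alternating even cycles) to satisfy $\ip{\Gmat,\ind_{C\cap\hat M}-\ind_{C\cap M}}\ge0$ --- otherwise un-swapping $C$ alone would strictly increase the objective --- so every mismatched user lies on such a ``bad'' cycle. A cycle with $\ell$ true and $\ell$ false edges is bad with probability $\Pr[\calN(0,1)>\sqrt{\ell\zeta}]$, and there are at most $n^{\ell}/\ell$ of them, each mismatching $\ell$ users of $\calU$, so $\bbE[\#\text{errors}]\le\sum_{\ell\ge2} n^{\ell}\,\Pr[\calN(0,1)>\sqrt{\ell\zeta}] = n^{o(1)}\sum_{\ell\ge2}(n\,e^{-\zeta/2})^{\ell}$. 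When $\zeta\ge2\log n$, equivalently $n\,e^{-\zeta/2}\le1$, this series converges and is dominated by its $\ell=2$ term $\Theta(n^2 e^{-\zeta})$, giving the linear condition $\zeta\ge(1+\beta)\log n$ for $\beta>1$; only a one-sided tail of a sum of near-Gaussians is used, so no condition is needed, and this settles $\beta>1$.

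The hard part is the ML bound for $\beta\le1$, i.e.\ the elliptic and vertical regimes, where $\zeta\le2\log n$, $n\,e^{-\zeta/2}\ge1$, and the series above diverges: a mismatched user lies in astronomically many bad cycles but in exactly one symmetric-difference cycle, so the cycle-by-cycle union bound grossly overcounts. My plan is to control $\Pr[u\text{ mismatched}]$ without that union bound, exploiting the vertex-disjointness of the symmetric-difference cycles and a frugal choice of witness: condition on the diagonal scores $\{G_{u',M(u')}\}_{u'}$, observe that $u$ can be swapped out only when its diagonal score is atypically small relative to the largest false scores available in its row (of order $\sqrt{2\log n}$), and then bound the number of users that can simultaneously be covered by a consistent system of vertex-disjoint bad alternating cycles. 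Optimizing the three-way trade-off among how weak the swapped diagonal entries are, how large the replacement edges must be, and the combinatorial cost of the configuration yields the elliptic boundary $\zeta=(1+2\sqrt{\beta(1-\beta)})\log n$ for $0<\beta\le1/2$; for $1/2<\beta\le1$ the optimization saturates at the exact-recovery threshold, giving the vertical boundary $\zeta\ge2\log n+2\log\tfrac{\sqrt5-1}{2}$, where $\tfrac{\sqrt5-1}{2}$ appears as the convergence radius of the generating function counting bad configurations through a fixed vertex. For database alignment the scores in this step are log-likelihood ratios rather than exact Gaussians, and matching the constants requires their two-sided Gaussian approximation, hence \hyperref[cond:highDimensional]{Condition~\ref*{cond:highDimensional}} and the extra $n^{o(1)}$ slack; the linear and vertical ML bounds and the threshold-testing bound use only one-sided controls and need no such approximation. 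I expect this combinatorial accounting --- excising the overcount from long bad cycles while retaining the sharp first-order constant --- to be the principal obstacle, and also the source of the extra factor of $2$ in the stated error bounds.
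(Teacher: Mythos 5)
Your threshold-testing argument, your maximum-row linear bound, and your ML linear bound via the cycle/path decomposition of $M\,\triangle\,\hat M$ all match the paper's proofs in substance. The genuine gap is exactly where you flag "the principal obstacle": the ML elliptic regime $0<\beta\le 1/2$. Your plan (condition on the diagonal scores, choose frugal witnesses, bound systems of vertex-disjoint bad cycles, optimize a three-way trade-off) is a sketch whose key steps are not established and in part not true: in a long alternating cycle a user can be swapped out even though its own true-pair score is high, because other users' deficits along the cycle compensate, so "swapped only if the diagonal score is atypically small" fails; and the constant $\frac{\sqrt{5}-1}{2}$ does not arise as a convergence radius of a per-vertex generating function. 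The device the paper actually uses is different and is what excises the overcount you identify: cover each rival mapping of misalignment size $\delta$ by (i) an atypicality event that depends only on the $\delta$ true pairs, so its count is $\binom{n}{\delta}$ and is \emph{not} multiplied by the number of rival mappings, and (ii) a misalignment-despite-typicality event whose probability carries the improved per-pair exponent $\frac{\zeta^2+\tau^2}{2\zeta}$; then sum only over sizes $\delta\ge\delta^*=n^{1-\beta}\ge\sqrt n$ (so even bounds of the form $e^{-\Omega(\delta)}$ per size suffice) and charge at most $\delta^*$ errors deterministically to any smaller misalignment, since the estimator outputs a single matching. Balancing the two thresholds for $\tau$ is what produces $\zeta\ge(1+2\sqrt{\beta(1-\beta)})\log n$. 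Incidentally, the vertical segment $1/2<\beta\le 1$ does not need any of this: your own linear-boundary series already gives $\sum_{\ell\ge2}\varepsilon^\ell=\frac{\varepsilon^2}{1-\varepsilon}\le 1\le n^{1-\beta}$ once $\varepsilon=ne^{-\zeta/2}$ is below the golden-ratio root of $\varepsilon^2+\varepsilon-1=0$, which is how the paper obtains it.

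A second, related gap concerns database alignment specifically. Entries of $\Gmat$ in a common row are dependent ($G_{u,v}$ and $G_{u,v'}$ are both functions of $\Avec(u)$), and the dependence propagates along cycles and even paths. Your maximum-row parabolic step (condition on $G_{u,M(u)}=g$ and integrate $\min\{1,(n-1)\Pr[\mathcal{N}(0,1)>g]\}$ against the matched density) and your elliptic plan both implicitly treat the row's false scores as independent of the conditioned true score; that is exact for planted matching but false for database alignment. The paper handles this by bounding the joint moment generating function of $\ip{\Gmat,\cdot}$ over cycle and even-path blocks, and Condition 1 is used there to make the resulting per-pair correction terms of order $\rho_{\max}^2\tau$ negligible — not merely to make each individual score approximately Gaussian. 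A pointwise "two-sided Gaussian approximation" of single scores does not control joint events over $\Theta(\delta)$ correlated entries with the $e^{-\delta(\cdot)}$ precision these bounds require, so to complete your argument for database alignment you would need to supply a joint-MGF (or equivalent) estimate quantifying the weak dependence, as the paper does.
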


The boundaries for the achievability regions of the algorithms are illustrated in \hyperref[fig:expectation0]{Fig. \ref*{fig:expectation0}}. 

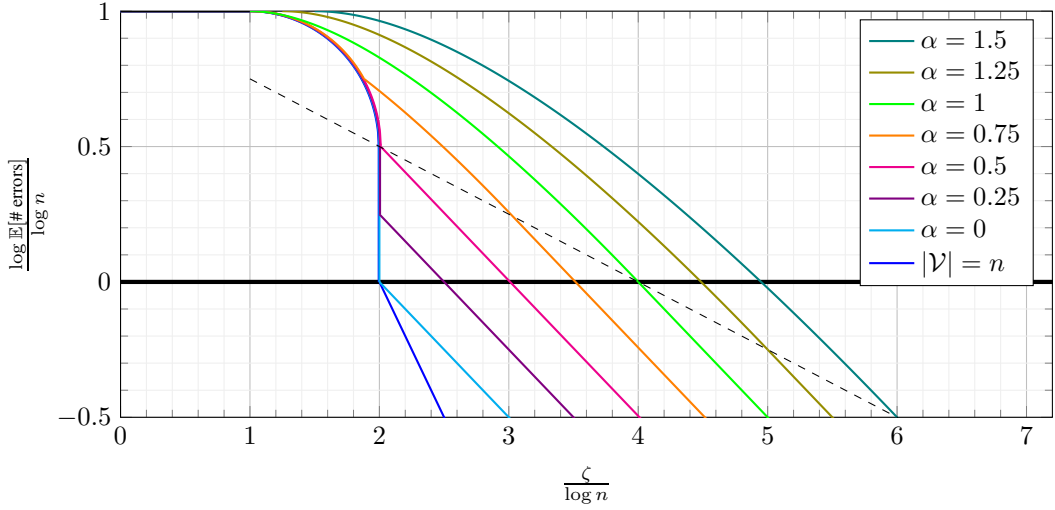
\begin{figure}
    \centering
    \begin{tikzpicture}
        \begin{axis}[
            xmin = 0, xmax = 7.2,
            ymin = -0.5, ymax = 1,
            xtick distance = 1,
            ytick distance = 0.5,
            grid = both,
            minor tick num = 4,
            major grid style = {lightgray},
            minor grid style = {lightgray!25},
            width = \textwidth,
            height = 0.5\textwidth,
            xlabel = {\(\frac{\zeta}{\log n}\)},
            ylabel = {\(\frac{\log\bbE[\textrm{\# errors}]}{\log n}\)},
            legend cell align = {left},
        ]

            \addplot[
                domain = 0:1,
                samples = 100,
                smooth,
                thick,
                teal,
            ]{1};

            \addplot[
                domain = 0:1,
                samples = 100,
                smooth,
                thick,
                olive,
            ]{1};
            
            \addplot[
                domain = 0:1,
                samples = 100,
                smooth,
                thick,
                green,
            ]{1};
            
            \addplot[
                domain = 0:1,
                samples = 100,
                smooth,
                thick,
                orange,
            ]{1};

            \addplot[
                domain = 0:1,
                samples = 100,
                smooth,
                thick,
                magenta,
            ]{1};

            \addplot[
                domain = 0:1,
                samples = 100,
                smooth,
                thick,
                violet,
            ]{1};

            \addplot[
                domain = 0:1,
                samples = 100,
                smooth,
                thick,
                cyan,
            ]{1};
            
            \addplot[
                domain = 0:1,
                samples = 100,
                smooth,
                thick,
                blue,
            ]{1};

            \addplot[
                domain = 0:1,
                samples = 100,
                smooth,
                thick,
                black,
            ]{1};
            
            
            \addplot[
                domain = 0:8,
                samples = 100,
                smooth,
                ultra thick,
                black,
            ]{0};

            \addplot[
                domain = 1:2-0.005,
                samples = 200,
                smooth,
                thick,
                blue,
            ]{sqrt(0.25-((x+0.005)/2-0.5)^2)+0.5};

            \addplot[
                domain = 2:8,
                samples = 200,
                smooth,
                thick,
                blue,
            ]{2-x};

            \addplot [mark=none, thick, blue] coordinates {(2-0.005, 0) (2-0.005, 0.51)};

            \addplot[
                domain = 1:2,
                samples = 200,
                smooth,
                thick,
                cyan,
            ]{sqrt(0.25-(x/2-0.5)^2)+0.5};

            \addplot[
                domain = 2:8,
                samples = 200,
                smooth,
                thick,
                cyan,
            ]{1-x/2};

            \addplot [mark=none, thick, cyan] coordinates {(2, 0) (2, 0.51)};

            \addplot[
                domain = 1:2+0.005,
                samples = 200,
                smooth,
                thick,
                violet,
            ]{sqrt(0.25-((x-0.005)/2-0.5)^2)+0.5};

            \addplot[
                domain = 2:8,
                samples = 200,
                smooth,
                thick,
                violet,
            ]{1.25-x/2};

            \addplot [mark=none, thick, violet] coordinates {(2+0.005, 0.25) (2+0.005, 0.51)};

            \addplot[
                domain = 1:2+0.01,
                samples = 200,
                smooth,
                thick,
                magenta,
            ]{sqrt(0.25-((x-0.01)/2-0.5)^2)+0.5};

            \addplot[
                domain = 2+0.01:8,
                samples = 200,
                smooth,
                thick,
                magenta,
            ]{1.5-(x-0.01)/2};

            \addplot [mark=none, thick, magenta] coordinates {(2+0.01, 0.5) (2+0.01, 0.51)};

            \addplot[
                domain = 1:1.866+0.015,
                samples = 200,
                smooth,
                thick,
                orange,
            ]{sqrt(0.25-((x-0.015)/2-0.5)^2)+0.5};

            \addplot[
                domain = 1.866+0.015:3,
                samples = 200,
                smooth,
                thick,
                orange,
            ]{1-(sqrt(x-0.015)-sqrt(0.75))^2};

            \addplot[
                domain = 3:8,
                samples = 200,
                smooth,
                thick,
                orange,
            ]{1.75-(x-0.015)/2};

            \addplot[
                domain = 1:4,
                samples = 200,
                smooth,
                thick,
                green,
            ]{1-(sqrt(x)-1)^2};
            
            \addplot[
                domain = 4:8,
                samples = 200,
                smooth,
                thick,
                green,
            ]{2-x/2};

            \addplot[
                domain = 1.25:5,
                samples = 200,
                smooth,
                thick,
                olive,
            ]{1-(sqrt(x)-sqrt(1.25))^2};
            
            \addplot[
                domain = 5:8,
                samples = 200,
                smooth,
                thick,
                olive,
            ]{2.25-x/2};

            \addplot[
                domain = 1.5:6,
                samples = 200,
                smooth,
                thick,
                teal,
            ]{1-(sqrt(x)-sqrt(1.5))^2};
            
            \addplot[
                domain = 6:8,
                samples = 200,
                smooth,
                thick,
                teal,
            ]{2.5-x/2};

            \addplot[
                domain = 1:8,
                samples = 200,
                thin,
                smooth,
                black,
                dashed
            ]{1-x/4};
            
            \legend{\(\alpha=1.5\), \(\alpha=1.25\), \(\alpha=1\), \(\alpha=0.75\), \(\alpha=0.5\), \(\alpha=0.25\), \(\alpha=0\), \(|\calV|=n\)}
        \end{axis}
    \end{tikzpicture}
    \caption{Comparison of boundaries for achievability regions of \underline{maximum likelihood algorithm} for various values of \(\alpha\). The x-axis is to signal strength and y-axis is the order of magnitude of expected number of mismatched users. Achievability regions are areas above/right of curves. Boundaries are parabolic/elliptic in the region above the dashed black line and consist of linear segments below it.}
    \label{fig:expectationMLE}
\end{figure}

\begin{theorem}[Expected number of errors in unbalanced case]
    \label{thm:alphabeta}
    Let \(n = |M| = |\calU|\) and \(|\calV|>n\). Define \(\alpha = \frac{\log\pth{|\calV|-n}}{\log n}\). The following sufficient conditions guarantee no more than \(n^{1-\beta+o(1)}\)\footnote{The constants are the same as those given in footnote for \hyperref[thm:beta]{Theorem \ref*{thm:beta}}.} errors in expectation. (The bounds that require \hyperref[cond:highDimensional]{Cond. \ref*{cond:highDimensional}} for database alignment are specified.)
    
    \begin{tabular}{|l c c c|}
        \hline
        Sufficient cond. & Range of \(\beta\) & Boundary & Req. \hyperref[cond:highDimensional]{Cond. \ref*{cond:highDimensional}}\\
        \hline
        \hline
        \multicolumn{4}{|l|}{Threshold testing}\\
        \(\zeta \geq \pth{\sqrt{\max\{\alpha,1\}+\beta}+\sqrt{\beta}}^2\log n\)\hspace{-10pt}  & \(0<\beta\) & parabolic & no\\
        \hline
        \multicolumn{4}{|l|}{Maximum row estimation}\\
        \(\zeta \geq \pth{\sqrt{\max\{\alpha,1\}}+\sqrt{\beta}}^2\log n\) &\(0<\beta\leq\max\{\alpha,1\}\) & parabolic & yes\\
        \(\zeta \geq 2\pth{\max\{\alpha,1\}+\beta}\log n\) & \(\max\{\alpha,1\}<\beta\) & linear & no\\
        \hline
        \multicolumn{4}{|l|}{Maximum likelihood estimation}\\
        \(\zeta \geq \pth{1+2\sqrt{\beta(1-\beta)}}\log n\) & \(0<\beta\leq\min\{1-\alpha,1/2\}\) & elliptic & yes\\
        \(\zeta \geq \pth{\sqrt{\alpha}+\sqrt{\beta}}^2\log n\) & \(1-\alpha<\beta\leq\alpha\) & parabolic & yes\\
        \(\zeta \geq 2\log n + 2\log\pth{\frac{3+\sqrt{5}}{2}}\) & \(1/2<\beta\leq 1-\alpha\) & vertical & no\\
        \(\zeta \geq 2(\alpha+\beta)\log n\) & \(\max\{\alpha,1-\alpha\}<\beta\) & linear & no\\
        \hline
    \end{tabular}

    \bcomment{
    The following are sufficient conditions to have expected number of errors bounded by \(2 n^{1-\beta}\):
    \begin{compactitem}
        \item Threshold testing:\\
        \makebox[6cm]{\(I_{XY} \geq \pth{\sqrt{\max\{\alpha,1\}+\beta}+\sqrt{\beta}}^2\log n\)\hfill}
        \item Maximum row estimation:\\
        \makebox[6cm]{\(I_{XY} \geq \pth{\sqrt{\max\{\alpha,1\}}+\sqrt{\beta}}^2\log n\)\hfill} if\, \(0<\beta\leq\max\{\alpha,1\}\)\\
        \makebox[6cm]{\(I_{XY} \geq 2\pth{\max\{\alpha,1\}+\beta}\log n\)\hfill} if\, \(\max\{\alpha,1\}\leq\beta\)
        \item Maximum likelihood estimation:\\
        \makebox[6cm]{\(I_{XY} \geq \pth{1+2\sqrt{\beta(1-\beta)}}\log n\)\hfill} if\, \(0<\beta\leq\min\{1-\alpha,1/2\}\)\\
        \makebox[6cm]{\(I_{XY} \geq 2\log n + \omega(1)\)\hfill} if\, \(1/2\leq\beta\leq\alpha\)\\
        \makebox[6cm]{\(I_{XY} \geq \pth{\sqrt{\alpha}+\sqrt{\beta}}^2\log n\)\hfill} if\, \(1-\alpha\leq\beta\leq\alpha\)\\
        \makebox[6cm]{\(I_{XY}\geq 2(\alpha+\beta)\log n\)\hfill} if\, \(\alpha\leq\beta\)
    \end{compactitem}
    }
\end{theorem}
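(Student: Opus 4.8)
All three theorems are achievability statements, and I would prove them with one first-moment argument, of which Theorem~\ref{thm:alphabeta} is the general (unbalanced, all error rates) instance. The idea is: for each of the three algorithms, characterise the combinatorial ``failure configurations'' that any misaligned user in \(\calU\) must belong to, bound the probability that a fixed configuration is consistent with the algorithm's output using Gaussian/Chernoff tail estimates, and sum over all configurations. For database alignment I would first replace the scores \(G_{u,v}\) by the planted-matching weights: under Condition~\ref{cond:highDimensional}, a CLT over the many weakly-correlated feature coordinates shows that (after the harmless additive recentering to which all three algorithms are invariant) \(G_{u,M(u)}\) behaves like a mean-\(2\zeta\), variance-\(2\zeta\) Gaussian and \(G_{u,v}\) for \(v\notmapped{M}u\) like a mean-\(0\), variance-\(2\zeta\) Gaussian, with \(\zeta=I_{XY}\); this is the content of Appendix~B, and it is why the parabolic/elliptic bounds for database alignment carry an extra \(o(1)\) in the exponent whereas the linear ones, governed by cruder per-pair bounds, do not require Condition~\ref{cond:highDimensional}.

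Threshold testing and maximum row estimation are routine. Putting the threshold at a \(\theta\)-fraction of the gap, a user \(u\) fails only if its true score drops below it (probability \(\approx e^{-\zeta(1-\theta)^2}\)) or some wrong pair exceeds it (probability \(\approx|\calV|e^{-\zeta\theta^2}\) after a union over the \(v\ne M(u)\)); summing over the \(n\) users gives \(n\,e^{-\zeta(1-\theta)^2}+n|\calV|\,e^{-\zeta\theta^2}\), and since \(|\calV|=n^{\max\{\alpha,1\}+o(1)}\), optimising \(\theta\) yields exactly \(\zeta\ge(\sqrt{\max\{\alpha,1\}+\beta}+\sqrt\beta)^2\log n\). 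For maximum row estimation I would condition on \(G_{u,M(u)}=g\), union-bound \(\Pr[\max_{v\ne M(u)}G_{u,v}>g]\) over the \(|\calV|-1\) competitors (legitimate for any dependence structure, but the conditional law factorises under Condition~\ref{cond:highDimensional}), integrate against the density of \(G_{u,M(u)}\), and evaluate by Laplace's method; the saddle point is interior when \(\beta\le\max\{\alpha,1\}\), giving the parabolic \(\zeta\ge(\sqrt{\max\{\alpha,1\}}+\sqrt\beta)^2\log n\), and moves to the boundary when \(\beta>\max\{\alpha,1\}\), giving the linear \(\zeta\ge2(\max\{\alpha,1\}+\beta)\log n\) --- exactly the split in the table.

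Maximum likelihood is the substantive case. Constraint (b) forces every \(u\in\calU\) to be matched, so the MLE is the maximum-weight injective matching and \(\tau\) is irrelevant; if \(\hat M\ne M\), then \(M\triangle\hat M\) is a disjoint union of alternating cycles and --- in the unbalanced case --- alternating paths whose endpoints are \(\calV\)-vertices that are unmatched in \(M\) and in \(\hat M\), and a misaligned \(u\) is precisely a \(\calU\)-vertex on one of these components. Global optimality of \(\hat M\) forces each component to be ``improving'' (its \(\hat M\)-edges outweigh its \(M\)-edges), an event of probability \(\approx e^{-k\zeta/2}\) for a component with \(k\) vertices in \(\calU\); there are about \(n^k\) such cycles and about \(n^{\alpha+k}\) such paths. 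Summing over short components already gives the linear regimes --- the length-one path (a true edge at \(u\) swapped for an edge to a free \(\calV\)-vertex) gives \(\zeta\ge2(\alpha+\beta)\log n\), and in the balanced case the two-cycle gives \(\zeta\ge(1+\beta)\log n\) --- but for \(\zeta<2\log n\) the first-moment sum over long components diverges, so the elliptic, parabolic and vertical regimes do not follow from this bound.

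Controlling long alternating structures when \(\zeta<2\log n\) is the crux, and I expect it to be the main obstacle. The plan is to use optimality more sharply than ``each component improves'': for the optimal matching to route an alternating structure through a \(\calU\)-vertex, that vertex's assigned \(\calV\)-endpoint must beat its best remaining alternative --- equivalently, via LP duality, each \(\calV\)-vertex carries an optimal dual price and \(\hat M\) satisfies the corresponding reduced-cost inequalities --- which contributes an extra decay factor per edge and makes a suitably weighted count over structures converge. The remaining optimisation runs over the structure's length and over how the required total deviation is split between ``true-edge-weak'' and ``false-edge-strong'' contributions; its envelope is the elliptic boundary \(\zeta\ge(1+2\sqrt{\beta(1-\beta)})\log n\) for small \(\beta\), the parabolic boundary \(\zeta\ge(\sqrt\alpha+\sqrt\beta)^2\log n\) in the intermediate unbalanced window, and a \(\beta\)-independent vertical boundary \(\zeta\ge2\log n+2\log\frac{3+\sqrt5}{2}\), the golden-ratio constant \(\frac{3+\sqrt5}{2}=\varphi^2\) being the fixed point of the recursion that bounds how far an improving alternating structure can reach. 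Finally I would check that for each \((\alpha,\beta)\) the minimum of the applicable regime conditions is the one stated (and that the listed \(\beta\)-ranges are where each is tight) --- a routine optimisation --- and propagate the exponent \(o(1)\) uniformly through the database-alignment reduction.
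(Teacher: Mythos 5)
Your threshold-testing argument matches the paper's, and your linear regimes for maximum row and maximum likelihood are essentially the paper's first-moment computation over elementary misalignments (cycles and even paths, Lemmas \ref{lemma:misalignment}/\ref{lemma:misalignmentPlanted} and \ref{lemma:countElementary}). The genuine gap is exactly where this theorem is hardest: the elliptic boundary \(\pth{1+2\sqrt{\beta(1-\beta)}}\log n\) and the parabolic boundary \(\pth{\sqrt{\alpha}+\sqrt{\beta}}^2\log n\) for the ML estimator. For these you offer only a plan --- LP duality, dual prices on \(\calV\)-vertices, and ``an extra decay factor per edge'' --- which you yourself flag as the main obstacle; no reduced-cost inequality is written down, and it is not argued that such a per-edge gain exists or that it yields the stated exponents. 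The paper's mechanism is different: fix a threshold \(\tau\) and cover each elementary misalignment of size \(\delta\) by (i) an atypicality event (the \(\delta\) true pairs have average score below \(\tau\)), which depends only on the true pairs and so needs no union bound over the choice of false pairs, and (ii) a misalignment-despite-typicality event, whose probability improves from \(e^{-\delta\zeta/2}\) to roughly \(\exp\pth{-\delta\,\frac{\zeta^2+\tau^2}{2\zeta}}\) (Lemmas \ref{lemma:condMisalignment} and \ref{lemma:condMisalignmentPlanted}, proved by a joint Chernoff/generating-function bound, with a \(6\rho_{\max}^2\tau\) correction in the database case --- which is precisely where Condition \ref{cond:highDimensional} is used). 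Combined with the truncation observation that the single realized misalignment, if of size below \(\delta^*=n^{1-\beta}\), contributes at most \(\delta^*\) errors, optimizing \(\tau\) gives the elliptic boundary, and tying \(\tau\) to \(\log s\) gives the parabolic one. None of this is present in your proposal, so the middle rows of the ML table are unproved.

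Two secondary inaccuracies. First, the vertical constant \(2\log\pth{\frac{3+\sqrt{5}}{2}}\) is not ``the fixed point of a recursion that bounds how far an improving alternating structure can reach,'' and it does not require going beyond the plain first-moment bound: with \(\varepsilon=n e^{-\zeta/2}\), the condition \(\zeta\geq 2\log n+2\log\pth{\frac{3+\sqrt{5}}{2}}\) forces \(\varepsilon\leq\frac{3-\sqrt{5}}{2}\), so the even-path series \(\sum_\delta s\,\delta\,\varepsilon^\delta=\frac{s\varepsilon}{(1-\varepsilon)^2}\leq s\leq n^{1-\beta}\) for \(\beta\leq 1-\alpha\); your claim that the sum over long components diverges applies only when \(\zeta<2\log n\), i.e., to the elliptic/parabolic regimes, not the vertical one. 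Second, for maximum row estimation in the database model the row entries \(G_{u,v}\), \(v\in\calV\), are not conditionally independent given \(G_{u,M(u)}\) (they all involve \(\Avec(u)\)), so the step ``the conditional law factorises under Condition \ref{cond:highDimensional}'' is not justified as stated; the paper sidesteps this by bounding the joint event directly via Lemma \ref{lemma:condMisalignment}, whose \(\rho_{\max}^2\) correction is the honest price of that dependence.
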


The boundaries for the achievability region of the maximum likelihood estimator for various values of \(\alpha\) are illustrated in \hyperref[fig:expectationMLE]{Fig. \ref*{fig:expectationMLE}}. Boundaries for \(\alpha=1.5\) and \(\alpha=1.25\) are tangent to \(y=1\) at \(x=1.5\) and \(x=1.25\) respectively, while all other boundaries are tangent to that line at \(x=1\). These match the almost-exact alignment threshold.

The boundaries for the achievability region of maximum likelihood/maximum row estimation, which coincide, and for thresholding at \(\alpha = 1.5\) are illustrated in \hyperref[fig:expectation1]{Fig. \ref*{fig:expectation1}}. Both boundaries are tangent to \(y=1\) at \(x=1.5\). This matches the almost-exact alignment threshold.

\begin{figure}
    \centering
    \begin{tikzpicture}
        \begin{axis}[
            xmin = 0, xmax = 7.2,
            ymin = -0.5, ymax = 1,
            xtick distance = 1,
            ytick distance = 0.5,
            grid = both,
            minor tick num = 4,
            major grid style = {lightgray},
            minor grid style = {lightgray!25},
            width = \textwidth,
            height = 0.5\textwidth,
            xlabel = {\(\frac{\zeta}{\log n}\)},
            ylabel = {\(\frac{\log\bbE[\textrm{\# errors}]}{\log n}\)},
            legend cell align = {left},
        ]
            
            \addplot[
                domain = 0:1,
                samples = 100,
                smooth,
                thick,
                red,
            ]{1};
            
            \addplot[
                domain = 0:1,
                samples = 100,
                smooth,
                thick,
                teal,
            ]{1};
            
            
            \addplot[
                domain = 0:8,
                samples = 100,
                smooth,
                ultra thick,
                black,
            ]{0};
            
            \addplot[
                domain = 1.5:8,
                samples = 200,
                smooth,
                thick,
                teal,
            ]{1-(sqrt(x)-sqrt(1.5))^2};
            
            \addplot[
                domain = 1.5:8,
                samples = 200,
                smooth,
                thick,
                red,
            ]{(-9+28*x-4*x^2)/(16*x)};
            
            \legend{thresholding, max row \& max likelihood}
        \end{axis}
    \end{tikzpicture}
    \caption{Comparison of the boundaries for the achievability regions when \(|\calU|=|M|=n\) and \(|\calV|=n^{1.5}\). The x-axis is signal strength and the y-axis is the order of magnitude of the number of the expected number of mismatched users. Achievability regions are areas above/right of curves.}
    \label{fig:expectation1}
\end{figure}
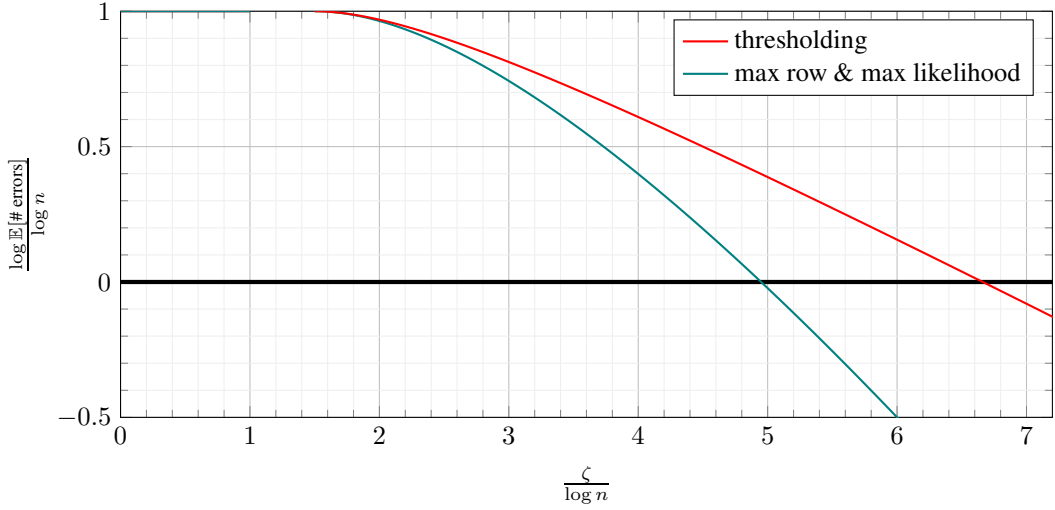

The proofs for Theorems \ref{thm:beta}, \ref{thm:alpha}, \ref{thm:alphabeta} are given in Appendix C.

\subsection{Converse results}
For the planted matching model, we have matching converse results in multiple regimes.
\begin{theorem}
    Let \(n = |M| = |\calU|\) and \(\alpha = \frac{\log\pth{|\calV|-n}}{\log n}\) if \(|\calV|>n\). Each of the following conditions guarantee that any estimator makes at least $\Omega(n^{1-\beta})$ errors with probability $1-o(1)$:
    \bcomment{
    \begin{tabular}{|r l c|}
        \hline
        Necessary cond. & Range of \(\beta\) & Boundary\\
        \hline
        \(\zeta \leq \pth{1+2\sqrt{\beta(1-\beta)}}\log n - \omega(\log \log n)\) & \(0<\beta\leq\min\{1-\alpha,1/2\}\) & elliptic\\
        \(\zeta \leq \pth{\sqrt{\alpha}+\sqrt{\beta}}^2 \log n - \omega(\log\log n)\) & \(1-\alpha<\beta\) & parabolic\\
        \(\zeta \leq 2 \log n - \omega(\log \log n) \) & \(1/2<\beta\) & vertical\\
        \hline
    \end{tabular}
    }
    \begin{tabular}{l c}
      \(\zeta \leq \pth{1+2\sqrt{\beta(1-\beta)}}\log n - \omega(\log \log n)\) & \(0 < \beta \leq \frac{1}{2}\)\\
      \(\zeta \leq \pth{\sqrt{\alpha}+\sqrt{\beta}}^2 \log n - \omega(\log\log n)\) & \(0 < \beta \leq 1\)\\
      \(\zeta \leq 2 \log n - \omega(\log \log n) \) & \(\frac{1}{2} \leq \beta \leq 1\).
    \end{tabular}
\end{theorem}

The proof is given in Appendix D.

The maximum likelihood estimator is also the maximum a posteriori estimator and thus is the optimal estimator for exact recovery.
It is not necessarily optimal for partial recover: it does not necessarily maximize the probability that it makes at most $n^{1-\beta}$ errors for $\beta < 1$.
However, this converse shows that the maximum likelihood estimator is asymptotically optimal for partial recovery: the conditions needed to ensure partial correctness match the converse conditions in the leading term.

\subsection{Interpretation and intuition behind results}

We present the intuition behind the various phase transitions of the achievability boundaries.

\subsubsection{Merging of boundary for maximum likelihood estimation and maximum row estimation}

As shown by \hyperref[thm:alphabeta]{Theorem \ref*{thm:alphabeta}} and \hyperref[fig:expectationMLE]{Fig. \ref*{fig:expectationMLE}}, the boundaries for the achievability regions of maximum likelihood estimation and maximum row estimation fully coincide if the number of unmatched users in \(\calV\) is on the order of \(n\) or greater. As shown by \hyperref[thm:alpha]{Theorem \ref*{thm:alpha}} and \hyperref[fig:threshold]{Fig. \ref*{fig:threshold}}, this is also true for the almost-exact alignment threshold.

Maximum row estimation corresponds to a relaxation of the maximum likelihood estimation by removing the constraint that every vertex in \(\calV\) can have at most one match in \(\calU\). That is, when looking for the true mapping matrix \(\Mmat\in\{0,1\}^{\calU\times\calV}\), maximum likelihood estimation still only accepts a single non-zero entry on each row, but ignores the number of entries on each column.

For \(\alpha<1\) bounded away from 1, the average non-zero entries in each column is \(\frac{n}{n+n^\alpha} \geq 1-o(1)\). So the column constraint of maximum likelihood estimation is tight for almost every column. Therefore, for such \(\alpha\), the column constraint is relevant and its removal results in the introduction of a significant increase in the expected number of errors. This creates a gap between the boundaries of maximum likelihood estimation and maximum row estimation.

On the other hand, for \(\alpha > 1\) bounded away from 1, the average non-zero entries in each column is \(\frac{n}{n+n^\alpha} \leq o(1)\). So the column constraint is loose for almost every column. Then, for such \(\alpha\), relaxing the column constraint results in no significant loss in performance and the boundaries for the two algorithms coincide.

\subsubsection{Transition from the elliptic to the quadratic boundary for maximum likelihood estimation}

By \hyperref[thm:alphabeta]{Theorem \ref*{thm:alphabeta}}, given \(\alpha>1/2\), there is a phase transition in the achievability boundary for maximum likelihood estimation as the bound on the expected number of errors goes beyond \(n^\alpha\). 

Recall that \(n=|M|=|\calU|\), and \(n^\alpha = |\calV|-n\). Consider a simplified model to generate an estimated mapping \(m\): In step 1, make an independent decision for every true pair whether to include it in \(m\) or not. Each pair is failed to be included with probability \(\frac{n^{1-\beta}}{n}\). In step 2, randomly assign each of the users in \(\calU\) that haven't been mapped in step 1, to a random user in \(\calV\) that also hasn't been mapped in step 1. In expectation, there are \(n^{1-\beta}\) and \(n^{1-\beta}+n^\alpha\) users that haven't been mapped in step 1 in \(\calU\) and \(\calV\) respectively.

Based on this process, every falsely-paired user in \(\calV\) will be mapped to about \(\frac{n^{1-\beta}}{n^{1-\beta}+n^\alpha} = \pth{1+n^{\alpha+\beta-1}}^{-1}\) users in expectation. If \(\alpha+\beta < 1\) bounded away from 1, then this value is \(1-o(1)\). So the restriction on having each user from \(\calV\) mapped to at most 1 user is tight for almost all users in \(\calV\). This implies that, at that point, this constraint is still able to contribute to the elimination of misalignments: An increase  in signal strength leads to users previously not paired in step 1 to be correctly paired, which in turn would trigger the mentioned constraint, leading to chain reactions that might fix even more misaligned pairs.

For \(\alpha+\beta >1\) bounded away from 1, the average number of users mapped to the falsely-paired users in \(\calV\) is \(o(1)\). In this regime, the constraint is loose for almost every falsely-paired user in \(\calV\), and the restriction does not help in eliminating misalignments.

The fact that this restriction stops being relevant around the point \(\beta = 1-\alpha\) demonstrates itself on the boundary as an immediate decrease in the absolute value of the slope. At this point, increasing mutual information does not decrease the error exponent as quickly, as the algorithm can no longer leverage the constraint on the number of pairs of users from \(\calV\).

Note that, for \(\alpha <1\) bounded away from 1, the gap between maximum likelihood estimation and maximum row estimation still persists whether \(\alpha+\beta < 1\) or \(\alpha+\beta >1\). This is due to the fact the correct pairing of the \(n-n^{1-\beta}\) users in step 1 does rely on the column constraint.

\subsubsection{Transition between quadratic boundary to linear boundary of maximum likelihood estimation and maximum row estimation}
\label{subsec:bad-pairs}
When a true pair has a sufficiently low score in \(\Gmat\) or \(\Wmat\), the expected number of errors involving that pair is larger than 1.
Call this a "bad true pair."
This phase transition between the parabolic boundary and the linear one corresponds to a change in the importance of the bad true pairs.
In the parabolic region, most errors involve a bad true pair.
In the linear region, most errors involve true pairs that are not bad.
See the discussion in Appendix E of the cycle-path decomposition of a pair of matchings for the precise interpretation of "most errors" in these statements.

\subsubsection{Halving of slope of linear boundary of maximum likelihood estimation}
\label{subsec:halving}

As shown by \hyperref[thm:alphabeta]{Theorem \ref*{thm:alphabeta}} and \hyperref[thm:alphabeta]{Theorem \ref*{thm:alphabeta}}, the slope of the linear boundary for maximum likelihood estimation is halved when going from \(|\calV|=n\) for \(|\calV|>n\). This is illustrated by the last two curves in \hyperref[fig:expectationMLE]{Fig. \ref*{fig:expectationMLE}}.

For both cases, the linear region boundary is only relevant for \(\beta\) large, i.e. when the error exponent is very small. In this regime, the misalignment of any pair is very rare. For \(|\calV|=n\), a misalignment requires at least two pairs of users. Specifically, we need some pairs \((u_0,v_0)\) and \((u_1,v_1)\) such that, \((u_0,v_1)\) and \((u_1,v_0)\) jointly make a better pairing. This requires the occurrence of two relatively low score pairs as well as the corresponding misalignment to have two relatively high score false pairs. As such, this error event requires many unlikely things to coincide, making its likelihood very small as we increase signal strength beyond the point \(\zeta=2\log n\).

On the other hand, when \(|\calV|\geq n+1\), there is at least one user, say \(v'\in\calV\), that already has no pair. Therefore a misalignment may consist of a single misaligned pair, e.g. \((u_0,v')\) instead of \((u_0,v_0)\). This requires the occurrence of a single relatively low score pair as well as a relatively high score misalignment of the first pair with an with unpaired user, of which there are \(|\calV|-n\). The inclusion of even a single extra therefore makes the unlikely misalignment event somewhat less exceptional.

\subsubsection{Vertical segment of boundary of maximum likelihood estimation}
This phase transition involves a shift in the structure of the typical error.
As explained in \hyperref[subsec:halving]{Subsection \ref*{subsec:halving}}, in the linear region, the dominant type of error is either a 2-cycle error (if \(|\mathcal{V}| = n\)) or a \((1,1)\)-path error (\(|\mathcal{V}| > n\)).
In the elliptic boundary region, much longer cycles and paths are dominant.
In the balanced case all errors come from cycles.
When \(\zeta = 2 \log n\), the expected contributions of each cycle length to the number of errors are equal.
Each of these contributions has a linear dependence on \(\zeta\) with a slope proportional to the cycle length.
Thus the contribution of cycles of length $\omega(1)$ produces the vertical boundary.
The top of the vertical boundary occurs due to an effect similar to that described in \hyperref[subsec:bad-pairs]{Subsection \ref*{subsec:bad-pairs}}.
In the elliptical boundary region, most long-cycle errors involve a bad true pair, so the number of bad true pairs controls the overall number of errors.

\subsubsection{Gap between maximum row estimation and threshold testing}
The difference between the maximum row estimators and the threshold testing estimator is constraint (b) in the linear program, which ensures that the estimated matrix has exactly one 1 in each row.
This constraint can be included because of our assumption that \(|M| = |\mathcal{U}|\), i.e. that every user is the first database has a match in the second database.
The gap between the performance of the maximum row and threshold testing estimators means that this constraint corrects most errors in the threshold testing estimator.
If we has a sufficiently large gap between \(|M|\) and \(|\mathcal{U}\), we would see the performance of these two estimators converge, just and the ML and maximum row estimators converge in performance when \(|\mathcal{V}|\) is sufficiently larger than \(|\mathcal{U}|\).

\section{Open questions and future work}
\TODO{
low dimensional features: we think low dim case is strictly easier. This is suggested by our results that don't require the high dim assumption.
Detailed structure of the zero-order phase transition
More complete understanding of conditions for zero order transition
}
We believe that Gaussian database alignment becomes information-theoretically easier as the feature dimensionality decreases.
In other words, as the same amount of mutual information is concentrated in a smaller number of features, it is more valuable for alignment.
Several of our achievability results do not require Condition 1 and thus provide evidence for this.
Our new converse works only for the planted matching problem, i.e. in the infinite dimension limit of the database alignment problem.
Showing this monotonic dependence on feature dimensionality is an open problem.

Other interesting questions involve the finer structure of the phase transition at the exact recovery threshold. If $\zeta = 2 \log n + c \log \log n$, how many errors does the ML estimator make?
Finally, what are the precise conditions that cause a planted matching or database alignment problem to have a discontinuity in the number of errors at this threshold?

\bibliographystyle{abbrv}

\newpage

\appendix

\section{Algorithms}
\label{sec:algo}

First we introduce the information density matrix \(\Gmat\) in \hyperref[subsec:algo1]{Subsection \ref*{subsec:algo1}}, which justifies the implementations of the algorithms that we present. Then in \hyperref[subsec:algo2]{Subsection \ref*{subsec:algo2}} we formulate these algorithms as linear programs with a clear hierarchy in their constraints. Finally \hyperref[subsec:algo3]{Subsection \ref*{subsec:algo3}} presents an analysis of the computational complexities of each algorithm.

\subsection{Information density matrix for database alignment}
\label{subsec:algo1}

Let \(\Avec,\Bvec\) denote correlated Gaussian databases as described in \hyperref[subsec:modelDatabase]{Subsection \ref*{subsec:modelDatabase}}. Let \(f_{XY}\), \(f_X\) and \(f_Y\) denote the joint and marginal distributions for correlated features in \(\Avec\) and \(\Bvec\). Given any partial mapping \(m\), let \(\calU_m\subseteq\calU\) and \(\calV_m\subseteq\calV\) denote the sets of users that have a mapping under \(m\) and \(\calW_m\subseteq\calU_m\times\calV_m\) denote the set of pairs mapped by \(m\). Then the log-likelihood of observing \(\Avec\) and \(\Bvec\) under the assumption that \(M=m\) is given by
\begin{align}
    \label{eq:algo1}
    \sum_{(u,v)\in \calW_m}\log f_{XY}(\Avec(u),\Bvec(v)) + \sum_{u\in\calU\setminus\calU_m}\log f_X(\Avec(u)) + \sum_{v\in\calV\setminus\calV_m}\log f_Y(\Bvec(v)).
\end{align}

Let \(\Gmat\in\bbR^{\calU\times\calV}\) denote the information density matrix such that \(G_{u,v} = \log\frac{f_{XY}(\Avec(u),\Bvec(v))}{f_X(\Avec(u))f_Y(\Bvec(v))}\). In other words, \(G_{u,v}\) is the log-likelihood ratio between hypotheses \(u\mapped{M}v\) vs. \(u\notmapped{M}v\).

Let \(\mmat\in\{0,1\}^{\calU\times\calV}\) denote a matrix encoding of the mapping \(m\) such that \(m_{u,v} = 1\iff u\mapped{m}v\). Then, the inner product \(\ip{\Gmat,\mmat}\) equals \(\sum_{(u,v)\in \calW_m}\log f_{XY}(\Avec(u),\Bvec(v)) - \sum_{u\in\calU_m}\log f_X(\Avec(u)) - \sum_{v\in\calV_m}\log f_Y(\Bvec(v))\). It then follows that
\begin{align}
    \label{eq:algo2}
    \ip{\Gmat,\mmat} + \sum_{u\in\calU} \log f_X(\Avec(u)) + \sum_{v\in\calV}\log f_Y(\Bvec(v))
\end{align}
exactly equals the expression given in (\ref{eq:algo1}). The terms following \(\ip{\Gmat,\mmat}\) in (\ref{eq:algo2}) do not depend on \(m\). So, the choice of \(m\) that maximizes \(\ip{\Gmat,\mmat}\) is the same as the maximizer for log-likelihood, as given in (\ref{eq:algo2}). Then, Then \(\Gmat\) contains all information relevant to identifying the underlying mapping \(M\).

\subsection{Log-likelihood for planted matching}
\label{subsec:algo11}

The log-likelihood of observation \(\Wmat\) under the planted matching modeled described in \hyperref[subsec:modelPlanted]{Subsection \ref*{subsec:modelPlanted}}. The pdf of \(\Wmat\) given \(M\) is given by the expression
\begin{align*}
    \frac{1}{(2\pi)^{|\calU\times\calV|/2}}\exp\pth{-\frac{1}{2}\ip{\Wmat - \mu\Mmat,\Wmat-\mu\Mmat}}.
\end{align*}
Then, the log-likelihood is a constant factor away from \(-\frac{1}{2}\ip{\Wmat - \mu\Mmat,\Wmat-\mu\Mmat} = \ip{\mu\Wmat,\mmat}-\frac{1}{2}||\Wmat||_F^2 -\frac{\mu^2}{2}||M||_F^2\) where \(||.||_F\) denotes the Frobenius norm. \(||\Wmat||_F^2\) does not depend on \(\Mmat\) and \(||M||_F^2\) is equal to the size of \(M\). Then, maximizing over mappings with fixed size, \(\mmat\) that maximizes \(\ip{\mu\Wmat,\mmat}\) also maximizes the likelihood of \(\Wmat\) given \(M=m\).

Alternatively, we can optimize over \(\ip{\Wmat_G,\mmat}\) where \(\Wmat_G = \mu\Wmat -\mu^2/2\) since \(\ip{\Wmat_G,\mmat}\) is a constant factor of \(|\calU\times\calV|\mu^2/2\) away from \(\ip{\mu\Wmat,\mmat}\).

\subsection{Algorithms}
\label{subsec:algo2}

\textbf{\underline{Maximum likelihood estimation}}

For database alignment, \(\ip{\Gmat,\mmat}\) is a constant factor away from the log-likelihood of mapping \(m\), as shown in \hyperref[subsec:algo1]{Subsection \ref*{subsec:algo1}}. For planted mtaching, given \(\Wmat_G \define \mu\Wmat -\mu^2/2\), \(\ip{\Wmat_G,\mmat}\) is a constant factor away from the log-likelihood of mapping \(m\), as shown in \hyperref[subsec:algo11]{Subsection \ref*{subsec:algo11}}. So, optimizing \(\ip{\Gmat,\mmat}\) or \(\ip{\Wmat_G,\mmat}\) over mapping matrices \(\mmat\) gives us the maximum likelihood estimate for the two problems.

This is an instance of the linear assignment problem, and therefore can be solved by the Hungarian algorithm in polynomial time (\cite{ramshaw2012minimum}).

Alternatively, this can be expressed as a linear problem as given in \hyperref[subsec:model-algo]{Subsection \ref*{subsec:model-algo}}:    \begin{align*}
    \textrm{maximize} \ip{\Gmat-\tau,\mmat} \qquad & \textrm{(a)}\,\, \sum_{u\in\calU} m_{u,v} \leq 1, \,\,\forall v\in\calV\\
    & \textrm{(b)}\,\, \sum_{v\in\calV} m_{u,v} = 1, \,\,\forall u\in\calU\\
    & \textrm{(c)}\,\, m_{u,v} \in [0,1], \,\,\,\,\, \forall (u,v)\in\calU\times\calV
\end{align*}
The value of \(\tau\in\bbR\) is irrelevant under constraint (b) in finding the maximizer \(\mmat\): Given constraint (b), \(\mmat\) has fixed sum of entries, and therefore \(\ip{\Gmat-\tau,\mmat} = \ip{\Gmat,\mmat}-\tau\sum m_{u,v} = \ip{\Gmat,\mmat} - \tau|\calU|\), so the objective function is shifted by a constant term that depends on \(\tau\) but not on \(\mmat\).

\bcomment{
Let \(\mvec\in\bbR^{(\calU\times\calV)}\) denote the vector representation of the matrix \(\mmat\in\bbR^{\calU\times\calV}\). Then, the constraint matrix of the linear program above can be written as
\begin{align*}
    \Amat_\calV\mvec &\leq \onevec\\
    \Amat_\calU\mvec &= \onevec\\
    \Imat\mvec &\leq \onevec\\
    \Imat\mvec &\geq \zerovec
\end{align*}
where \(\Amat_\calV\in\{0,1\}^{\calV\times(\calU\times\calV)}\) and  \(\Amat_\calU\in\{0,1\}^{\calU\times(\calU\times\calV)}\) are matrices with exactly one non-zero entry in each row. Then \(\crocVec{\Amat_\calV}{\Amat_\calU}\) is the incidence matrix of a bipartite graph with \(\calU\) and \(\calV\) the two bipartite sets and \((\calU\cup\calV)\) the edge set. (Specifically, vertex \(w\) is incident to edge \((u,v)\) if and only if \(w=u\) or \(w=v\).) It is well known that the incidence matrix of a bipartite graph is totally unimodular. Furthermore, \(\crocVec{\Amat}{\Imat}\) is totally unimodular if \(\Amat\) is totally unimodular. It then follows that the constraint matrix for this linear program is totally unimodular, and therefore it is guaranteed to have an integral solution. (The existence of an integral solution is trivial since the linear program is equivalent to the linear assignment problem.)
}

\textbf{\underline{Maximum row estimation}}

The objective function \(\ip{\Gmat-\tau,\mmat}\) can be broken down into its row-wise sums \(\sum_u (\Gmat-\tau)_{u,*}^\top \mmat_{u,*}\) where \((.)_{u,*}\) denotes the row of the matrix corresponding to user \(u\in\calU\). Then, removing (a), which is the only constraint that takes into account multiple rows at once, breaks down the optimization problem into the sum of row-wise optimization problems, where each row of \(\mmat\) can be optimized independently. That is, alignment is performed independently over each row. This gives us \textbf{maximum row estimation}.

Given any \(u\), the algorithm looks at the log likelihood scores of mappings \((u,v)\) for each \(v\in\calV\) and picks \(v\) that has the highest likelihood. Users in \(\calV\) may be mapped to multiple users if they happen to be the best match for multiple users in \(\calU\). The mapping of each user \(u\in\calU\) under this relaxation would be the maximum likelihood estimate for \(u\) if we were blind to the existence of other users in \(\calU\).

\textbf{\underline{Threshold testing}}

The objective function \(\ip{\Gmat-\tau,\mmat}\) can be broken down into entry-wise sums \(\sum_{u,v} (G_{u,v}-\tau)m_{u,v}\). Removing conditions (a) and (b) breaks down any dependence between entries of \(\mmat\) allows us to optimize all entries in the matrix independently from each other. Then we are left with an algorithm that independently considers each pair of users and makes a decision on whether they are true pairs or not. Specifically, \((u,v)\) is estimated to be a true match if and only if \(G_{u,v}-\tau\) is positive. Since \(G_{u,v}\) is defined to be the log-likelihood ratio between hypotheses \(u\mapped{M}v\) vs. \(u\notmapped{M}v\), the decision rule \(G_{u,v}-\tau\geq 0\) is equivalent to the likelihood-ratio test at some significance level determined by the log threshold \(\tau\). We refer to this relaxation as \textbf{threshold testing}.

\subsection{Computational complexity}
\label{subsec:algo3}

Let \(d\) denote the maximum of the number of features per user in the two database, \(n_M = |\calU|\) the size of the true mapping, and \(n_\calU,n_\calV\) denote the sizes of the two user sets \(\calU,\calV\).

\underline{\textbf{Summary}}

Let \(d \leq \calO(n_M)\). The computational complexity of the algorithms are given as follows:
\begin{compactitem}
    \item Maximum likelihood estimation: \(\calO\pth{n_M\cdot n_\calU\cdot n_\calV}\) for entire set of users \(\calU\times\calV\)
    \item Maximum row estimation: \(\calO\pth{d\cdot n_\calU\cdot n_\calV}\) for entire set of users \(\calU\times\calV\),\\
    \phantom{Maximum row estimation: }\(\calO\pth{d^2\cdot n_\calV}\) for a given user \(u\in\calU\) against entire set \(\calV\).
    \item Threshold testing: \(\calO\pth{d\cdot n_\calU\cdot n_\calV}\) for entire set of users \(\calU\times\calV\),\\
    \phantom{Threshold testing: }\(\calO\pth{d^2}\) for a given pair of users \((u,v)\in\calU\times\calV\).
\end{compactitem}

\underline{\textbf{Computing \(\Gmat\) through the canonical form}}

Identifying the affine transformations described in \hyperref[sec:canonical]{Appendix \ref*{sec:canonical}} to transform features into canonical form takes a sequence of two Cholesky decompositions (\(\Sigmamat_a = \Lmat_a^\top\Lmat_a\), \(\Sigmamat_b=\Lmat_b^\top\Lmat_b\)), two matrix multiplications with inverted triangular matrices (\(\Lmat_a^{-1}\Sigmamat_{ab}(\Lmat_b^\top)^{-1}\)), one singular value decomposition (\(\Lmat_a^{-1}\Sigmamat_{ab}(\Lmat_b^\top)^{-1}=\Umat\Pmat\Vmat^\top\)) and two more matrix multiplications with inverted triangular matrices (\(\Umat^\top\Lmat_a^{-1}\) and \(\Vmat^\top\Lmat_b^{-1}\)). This can be done in \(\calO(d^3)\)-time.

Performing the affine transformation to transform features into canonical form as described in consists of one vector addition and one matrix-vector multiplication (\((\Umat^\top\Lmat_a^{-1})(\xvec-\muvec_a)\) or \((\Vmat^\top\Lmat_b^{-1})(\yvec-\muvec_b)\)). Then, transforming a single feature vector takes \(\calO(d^2)\)-time and transforming the entire database takes \(\calO(d^2\cdot n_\calU)\) and \(\calO(d^2\cdot n_\calV)\) for \(\calU\) and \(\calV\) respectively.

Given databases in canonical form, a single entry in \(\Gmat\) can be computed in \(\calO(d)\). This follows from the fact that, in canonical form, there is a one-to-one correspondence between feature entries from the two databases and therefore \(\log\frac{f_{XY}(\Avec(u),\Bvec(v))}{f_X(\Avec(u))f_Y(\Bvec(v))} = \sum_i \log\frac{f_{X_iY_i}(A_i(u),B_i(v))}{f_{X_i}(A_i(u))f_{Y_i}(B_i(v))}\), where the summation is over indices \(i\in\calD\). Then it takes \(\calO(d\cdot n_\calU \cdot n_\calV)\) to compute the entire matrix \(\Gmat\) based on features in canonical form.

Therefore, computing values of \(\Gmat\) from the databases has complexity
\begin{compactitem}
    \item \(\calO\bpth{d^3+d^2\cdot(n_\calU+n_\calV)+d\cdot n_\calU\cdot n_\calV }\) for the entire matrix \(\Gmat\),
    \item \(\calO\bpth{d^3+d^2\cdot n_\calV}\) for a row of \(\Gmat\) and
    \item \(\calO\bpth{d^3}\) for a single entry of \(\Gmat\).
\end{compactitem}

\underline{\textbf{Computing \(\Gmat\) without going through the canonical form}}

Given \(\Gmat\) in raw form (i.e. not necessarily canonical form), finding the likelihood requires calculating \(\det(\Sigmamat)\), \(\det(\Sigmamat_a)\) and \(\det(\Sigmamat_b)\), which takes \(\calO(d^3)\)-time, as well as \([\xvec^\top,\yvec^\top] \Sigmamat^{-1}[\xvec^\top,\yvec^\top]^\top\), \(\xvec^\top\Sigmamat_a^{-1}\xvec\) and \(\yvec^\top\Sigmamat_b^{-1}\yvec\) for each feature pair, which takes \(\calO(d^2)\) time per feature pair. Then it takes \(\calO(d^3+d^2\cdot n_\calU \cdot n_\calV)\) to compute the entire matrix \(\Gmat\) based on features in raw form. This is less efficient than doing the calculation through the canonical form which takes \(\calO\bpth{d^3+d^2\cdot(n_\calU+n_\calV)}\) to obtain features in canonical form and \(\calO(d\cdot n_\calU \cdot n_\calV)\) to get \(\Gmat\) based on features in canonical form.

\underline{\textbf{Maximum likelihood estimation}}

The (unbalanced) linear assignment problem which can be solved by the Hungarian algorithm in \(\calO(n_M\cdot n_\calU \cdot n_\calV)\) (\cite{ramshaw2012minimum}). Then, the total complexity of maximum likelihood estimation for database alignment, including the computation of \(\Gmat\), is \(\calO\bpth{d^3+d^2\cdot(n_\calU+n_\calV)+d\cdot n_\calU\cdot n_\calV + n_M\cdot n_\calU \cdot n_\calV}\). The complexity for planted matching is \(\calO(n_M\cdot n_\calU \cdot n_\calV)\).

\underline{\textbf{Maximum row estimation}}

For database alignment, given the corresponding row of \(\Gmat\), identifying the match of a user in \(\calU\) takes \(\calO(n_\calV)\)-time. Then the total complexity to align a single user \(u\in\calU\), including the complexity of calculating row \((\Gmat)_{u,*}\), is \(\calO(d^3+d^2 n_\calV)\). Consequently, aligning the entire set \(\calU\) takes \(\calO\bpth{d^3+d^2\cdot(n_\calU+n_\calV)+d\cdot n_\calU\cdot n_\calV}\)-time.

For planted matching, identifying the match of a user in \(\calU\) takes \(\calO(n_\calV)\)-time, while aligning the entire set takes \(\calO(n_\calU\cdot n_\calV)\)-time.

\underline{\textbf{Threshold testing algorithm}}

For database alignment, given the corresponding entry in \(\Gmat\), performing threshold testing over a pair of users takes \(\calO(1)\)-time. Then the total complexity to test a single user pair \((u,v)\), including the complexity of calculating \(G_{u,v}\), is \(\calO(d^3)\). Then performing the test over all pairs would take \(\calO\bpth{d^3+d^2\cdot(n_\calU+n_\calV)+d\cdot n_\calU\cdot n_\calV }\)-time.

For planted matching, performing threshold testing over a pair of users takes \(\calO(1)\)-time, while performing the test over all pairs would take \(\calO(n_\calU\cdot n_\calV)\)-time.

\section{Canonical form of the correlation statistics}
\label{sec:canonical}

For simplicity of computation as well as analysis, we would like the correlated feature indices in \(\Avec(u)\in\bbR^{\calD_a}\) and \(\Bvec(v)\in\bbR^{\calD_b}\) to have a one-to-one correspondance. Specifically, we would like the index sets \(\calD_a\) and \(\calD_b\) to be identical, and for features across databases to be correlated only if they have the same index. So, given true pair \(u\mapped{M}v\), the features \(A_i(u)\) and \(B_j(v)\) are correlated if and only if \(i=j\).

Given the correlation statistics \(\muvec\) and \(\Sigmamat\), it is possible to perform affine transformations on features to guarantee this type of correspondance between correlated feature vectors. We say a pair of databases with statistics of this desired form is in canonical form.

\subsection{Existence and construction of canonical transformation}

The generality of the canonical form is stated in the following lemma while the construction of the transformation that gives features in canonical form is described in the proof of the lemma.
\begin{lemma}[Existence of the canonical form]
    \label{lemma:canonicalVariance}
    Let \(\Xvec\) taking values in \(\bbR^{\calD_a}\) and \(\Yvec\) taking values in \(\bbR^{\calD_b}\) be a pair of correlated Gaussian vectors. If the mean and joint variance is known, one can define a pair of affine transformations \(\tvec_a:\bbR^{\calD_a}\to\bbR^{\calD}\) and \(\tvec_b:\bbR^{\calD_b}\to\bbR^{\calD}\) for some set \(\calD\) such that the mutual information between \(\tvec_a(\Xvec)\) and \(\tvec_b(\Yvec)\) equals the mutual information between \(\Xvec\) and \(\Yvec\) and \(\tvec_a(\Xvec)\in\bbR^{\calD}\) and \(\tvec_b(\Yvec)\in\bbR^{\calD}\) are a pair of correlated databases with mean \(\zerovec\) and joint variance \(\crocMat{\Imat}{\diag(\rhovec)}{\diag(\rhovec)}{\Imat}\) for some \(\rhovec\in(-1,1)^\calD\).
\end{lemma}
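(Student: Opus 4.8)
The plan is to produce the canonical form by a centering step, a whitening step, and a rotation coming from the singular value decomposition of the normalized cross-covariance — in other words, by carrying out the canonical correlation analysis of \((\Xvec,\Yvec)\) — and then discarding the coordinates that carry no correlation. First I would reduce to the non-degenerate case: projecting \(\Xvec\) onto a complement of \(\ker\Sigmamat_a\) (and similarly for \(\Yvec\)) is an affine map that is injective on the support of \(\Xvec-\muvec_a\), so it preserves the joint distribution up to relabeling and in particular preserves mutual information; hence we may assume \(\Sigmamat_a\) and \(\Sigmamat_b\) are invertible. Let \(\Kmat = \Sigmamat_a^{-1/2}\Sigmamat_{ab}\Sigmamat_b^{-1/2}\) (the matrix appearing in \hyperref[cond:highDimensional]{Condition \ref*{cond:highDimensional}}; symmetric square roots or Cholesky factors both work). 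Then the centered, whitened vectors \(\tilde\Xvec = \Sigmamat_a^{-1/2}(\Xvec-\muvec_a)\) and \(\tilde\Yvec = \Sigmamat_b^{-1/2}(\Yvec-\muvec_b)\) have covariance \(\Imat\) and cross-covariance \(\Kmat\). Take a singular value decomposition \(\Kmat = \Umat\Pmat\Vmat^\top\) with \(\Umat,\Vmat\) orthogonal and singular values \(\sigma_1\ge\dots\ge\sigma_r>0\) on the diagonal of \(\Pmat\), where \(r=\operatorname{rank}\Kmat\). The rotated vectors \(\Umat^\top\tilde\Xvec\) and \(\Vmat^\top\tilde\Yvec\) still have covariance \(\Imat\) and now have the diagonal cross-covariance \(\Pmat\). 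Taking \(\calD\) to be an index set of size \(r\), I would define \(\tvec_a(\xvec)\) and \(\tvec_b(\yvec)\) to be, respectively, the first \(r\) coordinates of \(\Umat^\top\Sigmamat_a^{-1/2}(\xvec-\muvec_a)\) and of \(\Vmat^\top\Sigmamat_b^{-1/2}(\yvec-\muvec_b)\). Because any subvector of an \(\Imat\)-covariance Gaussian again has covariance \(\Imat\), the pair \((\tvec_a(\Xvec),\tvec_b(\Yvec))\) has mean \(\zerovec\) and joint covariance \(\crocMat{\Imat}{\diag(\rhovec)}{\diag(\rhovec)}{\Imat}\) with \(\rhovec=(\sigma_1,\dots,\sigma_r)\).

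Next I would verify the two required properties. The entries of \(\rhovec\) lie in \([0,1)\subseteq(-1,1)\): nonnegativity is automatic, while \(\sigma_i\ge 1\) would force a unit-variance linear functional of \(\tilde\Xvec\) to agree almost surely with one of \(\tilde\Yvec\), contradicting non-degeneracy of the joint covariance — equivalently this is precisely the case \(I(\Xvec;\Yvec)=\infty\), which the statement implicitly excludes. For mutual information, centering, whitening, and the rotations \(\Umat^\top,\Vmat^\top\) are bijective affine maps applied separately to the two marginals, hence leave \(I(\Xvec;\Yvec)\) unchanged; it remains to see that truncating to the first \(r\) coordinates is harmless. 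The coordinates of \(\Umat^\top\tilde\Xvec\) of index above \(r\) are uncorrelated with every coordinate of \(\Vmat^\top\tilde\Yvec\) (those rows of \(\Pmat\) vanish) and, having covariance \(\Imat\), are uncorrelated with the first \(r\) coordinates of \(\Umat^\top\tilde\Xvec\) as well; by joint Gaussianity they are therefore jointly independent of \((\tvec_a(\Xvec),\Vmat^\top\tilde\Yvec)\), so removing them does not change the mutual information, and the same holds symmetrically on the \(\Yvec\) side. Chaining these identities gives \(I(\tvec_a(\Xvec);\tvec_b(\Yvec)) = I(\tilde\Xvec;\tilde\Yvec) = I(\Xvec;\Yvec)\).

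The only genuine subtlety is the bookkeeping around rank deficiency and unequal dimensions \(|\calD_a|\ne|\calD_b|\): this is exactly why the target dimension must be \(r=\operatorname{rank}\bpth{\Sigmamat_a^{-1/2}\Sigmamat_{ab}\Sigmamat_b^{-1/2}}\) rather than \(|\calD_a|\) or \(|\calD_b|\), and why the truncation-invariance argument above (as opposed to a plain change of basis) is needed. Everything else is the routine CCA computation; as a byproduct one obtains \(I(\Xvec;\Yvec) = -\tfrac12\sum_{i=1}^{r}\log(1-\rho_i^2)\), linking \(\rhovec\) to the signal strength used in the rest of the paper, though this identity is not needed for the lemma itself.
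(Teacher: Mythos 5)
Your proposal is correct and follows essentially the same route as the paper's own proof: reduce to invertible \(\Sigmamat_a,\Sigmamat_b\), whiten via a square-root factor (the paper uses Cholesky factors \(\Lmat_a,\Lmat_b\), you allow symmetric square roots, which is immaterial), take the SVD of the normalized cross-covariance, and drop the coordinates with zero singular value, noting they are independent of everything else so mutual information is unchanged. Your explicit justifications that the singular values lie in \([0,1)\) and that truncation preserves mutual information are slightly more detailed than the paper's, but the argument is the same.
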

\begin{proof}
    Let \(\muvec = \crocVec{\muvec_a}{\muvec_b}\) and \(\Sigmamat = \crocMat{\Sigmamat_a}{\Sigmamat_{ab}}{\Sigmamat_{ab}^\top}{\Sigmamat_b}\) denote the mean and variance of \(\crocVec{\Xvec}{\Yvec}\).

    If \(\Sigmamat_a\) is not full rank, then there is some subset of \(\calD_a\) that can be discarded without loss of information. That is, we can throw away some indices of \(\Xvec\) to get a shorter vector which allows us to reconstruct the original vector \(\Xvec\). This follows from the fact that a multivariate Gaussian vector with covariance \(\Sigmamat\) can be written as a linear combination of \(\textrm{rank}(\Sigmamat)\) i.i.d. Gaussian normal random variables. Then, without loss of generality, assume \(\Sigmamat_a\) and \(\Sigmamat_b\) are full rank.
    
    \(\Sigmamat_a\) and \(\Sigmamat_b\) are covariance matrices, therefore they are positive semi-definite. It then follows that these matrices have Cholesky decompositions: \(\Sigmamat_a = \Lmat_a\Lmat_a^\top\) and \(\Sigmamat_b = \Lmat_b\Lmat_b^\top\) where \(\Lmat_a\) and \(\Lmat_b\) are lower triangular matrices with non-negative diagonal entries. By the assumption that \(\Sigmamat_a\) and \(\Sigmamat_b\)\ are full rank, it follows that the Cholesky decomposition gives triangular matrix with strictly positive entries. Then \(\Lmat_a\) and \(\Lmat_b\) are invertible.

    Let \(d_a \define |\calD_a|\) and \(d_b \define |\calD_b|\). Consider the singular value decomposition of \(\Lmat_a^{-1}\Sigmamat_{ab}\pth{\Lmat_b^\top}^{-1}\): \(\Umat\in\bbR^{\calD_a\times\{1,2,\cdots,d_a\}}\) and \(\Vmat\in\bbR^{\calD_b\times\{1,2,\cdots,d_b\}}\) orthonormal matrices and \(\Pmat\in\bbR^{d_a\times d_b}\) a diagonal matrix such that \(\Umat\Pmat\Vmat^\top = \Lmat_a^{-1}\Sigmamat_{ab}\pth{\Lmat_b^\top}^{-1}\).
    
    Let \(\fvec_a:\bbR^{\calD_a}\to\bbR^{d_a}\) and \(\fvec_b:\bbR^{\calD_b}\to\bbR^{d_b}\) such that
    \begin{align*}
        \fvec_a(\xvec) &= \Umat^\top \Lmat_a^{-1}\pth{\xvec - \muvec_a}\\
        \fvec_b(\yvec) &= \Vmat^\top \Lmat_b^{-1}\pth{\yvec - \muvec_b}.
    \end{align*}
    Note that both these transformations are invertible.
    
    We can verify that
    \begin{align*}
        \crocVec{\Umat^\top\Lmat_a^{-1}}{\Vmat^\top\Lmat_b^{-1}}\crocMat{\Sigmamat_a}{\Sigmamat_{ab}}{\Sigmamat_{ab}^\top}{\Sigmamat_b}\crocVec{\Umat^\top\Lmat_a^{-1}}{\Vmat^\top\Lmat_b^{-1}}^\top = \crocMat{\Imat}{\Pmat}{\Pmat}{\Imat}.
    \end{align*}
    Then
    \begin{align*}
        (\Xvec,\Yvec)\sim\calN(\muvec,\Sigmamat) \iff \pth{\fvec_a(\Xvec),\fvec_b(\Yvec)}\sim \calN\pth{\zerovec,\crocMat{\Imat}{\Pmat}{\Pmat^\top}{\Imat}}.
    \end{align*}
     
    By the invertibility of the transformations, there has been no loss of mutual information. If \(\Pmat\) has no empty rows or columns, then \(d_a=d_b\) must hold and we are done.
    
    If the \(i\)-th row of \(\Pmat\) is all-zero, then the \(i\)-th entry of \(\fvec_a(\Xvec)\) is completely independent from \(\Yvec\) or \(\fvec_b(\Yvec)\). It then follows that we can drop this entry without any loss of mutual information. The same argument applies for columns of \(\Pmat\) in relation to entries of \(\fvec_b(\Yvec)\).
    
    Let \(d\) denote the number of non-zero entries in the diagonal matrix \(\Pmat\), \(\calD\) be some arbitrary set of size \(d\). Let \(\Emat_a\in\{0,1\}^{\calD\times\{1,2,\cdots,d_a\}}\) such that rows of \(\Emat_a\) are the \(d\) standard basis vectors corresponding to the \(d\) non-empty rows of \(\Pmat\). Left multiplying a vector by \(\Emat_a\) gives us a shorter vector by `throwing away' all entries corresponding to empty rows of \(\Pmat\). Let \(\Emat_b\in\{0,1\}^{\calD\times\{1,2,\cdots,d_b\}}\) be a matrix of the same kind that `throws away' entries corresponding to empty columns of \(\Pmat\). Then \(\Emat_a \Pmat \Emat_b^\top\) is a diagonal matrix with no zeros on the diagonal. We use \(\rhovec\in\bbR^{\calD}\) to denote the vector formed by the diagonal entries of \(\Emat_a \Pmat \Emat_b^\top\) (i.e. the non-zero diagonal entries of \(\Pmat\).)
    
    Let \(\tvec_a:\bbR^{\calD_a}\to\bbR^{\calD}\) and \(\tvec_b:\bbR^{\calD_b}\to\bbR^{\calD}\) such that
    \begin{align*}
        \tvec_a'(\xvec) &= \Emat_a\Umat^\top \Lmat_a^{-1}\pth{\xvec - \muvec_a}\\
        \tvec_b'(\yvec) &= \Emat_b\Vmat^\top \Lmat_b^{-1}\pth{\yvec - \muvec_b}.
    \end{align*}
    
    It can be verified that
    \begin{align*}
        (\Xvec,\Yvec)\sim\calN(\muvec,\Sigmamat) \iff \pth{\tvec_a(\Xvec),\tvec_b(\Yvec)}\sim \calN\pth{\zerovec,\crocMat{\Imat}{\diag(\rhovec)}{\diag(\rhovec)}{\Imat}}.
    \end{align*}
\end{proof}

\subsection{Low per-feature correlation}

\hyperref[lemma:svd2rho]{Lemma \ref*{lemma:svd2rho}} shows the significance of \(\left|\left|\Sigmamat_a^{-1/2}\Sigmamat_{ab}\Sigmamat_b^{-1/2}\right|\right|_2\), which is used to characterize \hyperref[cond:highDimensional]{Condition \ref*{cond:highDimensional}}.

\begin{lemma}
    \label{lemma:svd2rho}
    Let \(\Sigmamat = \crocMat{\Sigmamat_a}{\Sigmamat_{ab}}{\Sigmamat_{ab}^\top}{\Sigmamat_b}\) be the covariance matrix between pairs of correlated feature vectors and let and the \(\rhovec\in(-1,1)^{\calD}\) correlation vector that characterizes the correlation in canonical form. Then
    \begin{align*}
        \left|\left|\Sigmamat_a^{-1/2}\Sigmamat_{ab}\Sigmamat_b^{-1/2}\right|\right|_2 = \max_{i\in\calD} |\rho_i|,
    \end{align*}
    where \(||\cdot||_2\) denotes the \(\ell_2\) operator norm, i.e. largest singular value.
\end{lemma}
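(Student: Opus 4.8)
The plan is to connect $\Sigmamat_a^{-1/2}\Sigmamat_{ab}\Sigmamat_b^{-1/2}$ to the matrix $\Lmat_a^{-1}\Sigmamat_{ab}(\Lmat_b^\top)^{-1}$ that appears in the proof of \hyperref[lemma:canonicalVariance]{Lemma \ref*{lemma:canonicalVariance}}. By that construction, $\Lmat_a^{-1}\Sigmamat_{ab}(\Lmat_b^\top)^{-1}=\Umat\Pmat\Vmat^\top$ with $\Umat,\Vmat$ orthonormal and $\Pmat$ (rectangular) diagonal, and the nonzero diagonal entries of $\Pmat$ are, up to sign, exactly the components of $\rhovec$. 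Since the operator norm of a matrix with such an SVD is $\max_i|\Pmat_{ii}|$, it suffices to show that $\Sigmamat_a^{-1/2}\Sigmamat_{ab}\Sigmamat_b^{-1/2}$ and $\Lmat_a^{-1}\Sigmamat_{ab}(\Lmat_b^\top)^{-1}$ have the same singular values.

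The key observation is that the Cholesky factor $\Lmat_a$ and the symmetric square root $\Sigmamat_a^{1/2}$ differ only by an orthogonal matrix. Indeed, from $\Lmat_a\Lmat_a^\top=\Sigmamat_a=\Sigmamat_a^{1/2}\Sigmamat_a^{1/2}$ one gets $(\Sigmamat_a^{-1/2}\Lmat_a)(\Sigmamat_a^{-1/2}\Lmat_a)^\top=\Imat$, so $\Qmat_a\define\Sigmamat_a^{-1/2}\Lmat_a$ is orthogonal; likewise $\Qmat_b\define\Sigmamat_b^{-1/2}\Lmat_b$ is orthogonal. Writing $\Lmat_a=\Sigmamat_a^{1/2}\Qmat_a$ and $\Lmat_b=\Sigmamat_b^{1/2}\Qmat_b$ and using that $\Sigmamat_a^{1/2},\Sigmamat_b^{1/2}$ are symmetric, a short computation gives
\begin{align*}
    \Lmat_a^{-1}\Sigmamat_{ab}(\Lmat_b^\top)^{-1} = \Qmat_a^\top\,\Sigmamat_a^{-1/2}\Sigmamat_{ab}\Sigmamat_b^{-1/2}\,\Qmat_b .
\end{align*}
Multiplication on the left and right by orthogonal matrices preserves singular values, so the two matrices have identical singular value sets. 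Combining this with the previous paragraph, $\norm{\Sigmamat_a^{-1/2}\Sigmamat_{ab}\Sigmamat_b^{-1/2}}_2 = \max_i|\Pmat_{ii}| = \max_{i\in\calD}|\rho_i|$.

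The only mild obstacle is bookkeeping: one must first invoke the reduction in \hyperref[lemma:canonicalVariance]{Lemma \ref*{lemma:canonicalVariance}} that lets us take $\Sigmamat_a$ and $\Sigmamat_b$ full rank without loss of generality (discarding redundant coordinates changes neither $\norm{\Sigmamat_a^{-1/2}\Sigmamat_{ab}\Sigmamat_b^{-1/2}}_2$ nor the set of nonzero singular values), so that $\Lmat_a,\Lmat_b,\Sigmamat_a^{1/2},\Sigmamat_b^{1/2}$ are all invertible and the manipulations above are justified; one must also note that the extra zero singular values contributed by empty rows/columns of $\Pmat$ do not affect the maximum. An essentially equivalent alternative avoids square roots: $\norm{\Sigmamat_a^{-1/2}\Sigmamat_{ab}\Sigmamat_b^{-1/2}}_2^2$ is the largest eigenvalue of $\Sigmamat_a^{-1}\Sigmamat_{ab}\Sigmamat_b^{-1}\Sigmamat_{ab}^\top$ (the top squared canonical correlation), which is conjugation-invariant under invertible affine changes of variables up to added zero eigenvalues, and in canonical form equals $\max_i\rho_i^2$.
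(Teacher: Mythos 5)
Your proof is correct and follows essentially the same route as the paper: both reduce the claim to showing that \(\Sigmamat_a^{-1/2}\Sigmamat_{ab}\Sigmamat_b^{-1/2}\) and \(\Lmat_a^{-1}\Sigmamat_{ab}(\Lmat_b^\top)^{-1}=\Umat\Pmat\Vmat^\top\) share singular values, using \(\Lmat\Lmat^\top=\Sigmamat\) and orthogonal invariance, and then read off \(\max_i|\rho_i|\) from the diagonal of \(\Pmat\). The only difference is presentational: you exhibit the orthogonal factors \(\Qmat_a=\Sigmamat_a^{-1/2}\Lmat_a\), \(\Qmat_b=\Sigmamat_b^{-1/2}\Lmat_b\) explicitly, whereas the paper compares the Gram matrices \(\Amat\Amat^\top\) and \(\Amat^\top\Amat\) along a chain of intermediate matrices.
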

\begin{proof}
    Let \(\Pmat,\Lmat_a,\Lmat_b,\Umat,\Vmat\) be as defined in the proof of \hyperref[lemma:canonicalVariance]{Lemma \ref*{lemma:canonicalVariance}}. Specifically, let \(\Lmat_a\) and \(\Lmat_b\) be triangular matrices such that \(\Lmat_a\Lmat_a^\top=\Sigmamat_a\) and \(\Lmat_b\Lmat_b^\top=\Sigmamat_b\), and \(\Pmat\) diagonal and \(\Umat,\Vmat\) orthonormal matrices such that \(\Umat\Pmat\Vmat^\top = \Lmat_a^{-1}\Sigmamat_{ab}(\Lmat_b^\top)^{-1}\).
    
    Define \(\Amat_0 \define \Sigmamat_a^{-1/2}\Sigmamat_{ab}\Sigmamat_b^{-1/2}\), \(\Amat_1 \define \Sigmamat_a^{-1/2}\Sigmamat_{ab}(\Lmat_b^\top)^{-1}\Vmat\) and \(\Amat_2 \define \Umat^\top\Lmat_a^{-1}\Sigmamat_{ab}(\Lmat_b^\top)^{-1}\Vmat\). 

    First we show that \(\Amat_2\) has the same singular values as \(\Amat_0\): The singular values of some matrix \(\Amat\) can be found by finding the eigenvalues of \(\Amat^\top\Amat\), or those of \(\Amat\Amat^\top\).
    \begin{itemize}
        \item Since \(\Lmat_b\Lmat_b^\top = \Sigmamat_b\) and \(\Vmat\Vmat^\top = \Imat\), it follows that \(\Amat_0\Amat_0^\top = \Amat_1\Amat_1^\top\).\\
        Then, \(\Amat_0\) and \(\Amat_1\) must have the same singular values.
        \item Since \(\Lmat_a\Lmat_a^\top = \Sigmamat_a\) and \(\Umat\Umat^\top = \Imat\), it follows that \(\Amat_1^\top\Amat_1 = \Amat_2^\top\Amat_2\).\\
        Then, \(\Amat_1\) and \(\Amat_2\) must have the same singular values.
    \end{itemize}
    Then \(\Amat_0\) and \(\Amat_2\) have the same singular values.

    In the proof of \hyperref[lemma:canonicalVariance]{Lemma \ref*{lemma:canonicalVariance}}, we are given that \(\Umat\Pmat\Vmat^\top = \Lmat_a^{-1}\Sigmamat_{ab}(\Lmat_b^\top)^{-1}\). Then, by the orthonormality of \(\Umat\) and \(\Vmat\), we have \(\Pmat = \Umat^\top\Lmat_a^{-1}\Sigmamat_{ab}(\Lmat_b^\top)^{-1}\Vmat = \Amat_2\).

    Since \(\Pmat = \diag\pth{\rhovec}\) is a diagonal matrix, its singular values are simply its diagonal entries in absolute value. Then the largest singular value is \(\max |\rho_i|\)
\end{proof}

\section{Achievability proofs}

The proofs for threshold testing and maximum row estimation state inequalities using the variable \(\nu \define \frac{\log|\calV|}{\log n}\). These inequalities directly translate to the statements in the main results using the variable \(\alpha\) by the fact that \(\nu=1\) if 1 if \(|\calV|=n\) and \(\nu=\max\acc{1, \frac{\log \pth{|\calV|-n}}{\log n}}\) if \(|\calV|>n\).

\subsection{Threshold testing}

\underline{\textbf{Quadratic boundary}}
\begin{proof}
    \label{proof:theorem:TT}
    Let \(\tau\) the threshold such that \(|\tau|\leq \zeta \). By \hyperref[lemma:typicality]{Lemma \ref*{lemma:typicality}} (\hyperref[lemma:typicalityPlanted]{Lemma \ref*{lemma:typicalityPlanted}}), the probability of a true pair failing the test is at most \(\exp\pth{-\frac{(\zeta -\tau)^2}{4\zeta }}\) and by \hyperref[lemma:FP]{Lemma \ref*{lemma:FP}} (\hyperref[lemma:FPPlanted]{Lemma \ref*{lemma:FPPlanted}}) the probability of a false pair passing the test is at most \(\exp\pth{-\frac{(\zeta +\tau)^2}{4\zeta }}\). The number of true pairs and false pairs are \(|M|\) and \(|\calU|\cdot|\calV|-|M|\) respectively. We bound the latter by \(|\calU|\cdot|\calV|\). So the expected number of false negatives is bounded by \(|M|\exp\pth{-\frac{(\zeta -\tau)^2}{4\zeta }}\) and expected number of false positives is bounded by \(|\calU|\cdot|\calV|\exp\pth{-\frac{(\zeta +\tau)^2}{4\zeta }}\).\\
    
    The log of the ratio of these two bounds is
    \begin{align*}
        \log\frac{|M|\exp\pth{-\frac{(\zeta -\tau)^2}{4\zeta }}}{|\calU|\cdot|\calV|\exp\pth{-\frac{(\zeta +\tau)^2}{4\zeta }}} &= \tau - \log\frac{|\calU|\cdot|\calV|}{|M|}.
    \end{align*}
    The choice of \(\tau = \log\frac{|\calU|\cdot|\calV|}{|M|}\) makes the log of the ratio zero. Hence the bounds for the expected numbers of false positives and negatives are equal. Then the bound on the number of errors is twice the bound on the number of false negatives.

    Let \(n\define |M|\), \(|\calU|=n\) and \(|\calV|=n^\nu\) for some \(\nu\geq 1\). Then, \(\tau = \log\frac{|\calU|\cdot|\calV|}{|M|} = \nu\log n\). Let \(x\define \frac{\zeta }{\log n}\). Then the number of false negatives (which is half the total error bound) is given by
    \begin{align*}
        |M|\exp\pth{-\frac{(\zeta -\tau)^2}{4\zeta }} &= n^{1-\frac{(x-\nu)^2}{4x}}.
    \end{align*}
    This expression is bounded by \(n^{1-\beta}\) if
    \begin{align}
        \label{eq:curve_LRT}
        x \geq \pth{\sqrt{\nu+\beta}+\sqrt{\beta}}^2.
    \end{align}

    This gives us the following inequalities that form part of the main results:
    \begin{compactitem}
        \item \hyperref[thm:alpha]{Theorem \ref*{thm:alpha}}\\
        Almost-exact alignment is achieved if (\ref{eq:curve_LRT}) is satisfied for some \(\beta\) such that \(n^{-\beta} \leq o(1)\), which is equivalent to \(\beta \geq \omega(1/\log n)\). Such \(\beta\) exists if \(\zeta \geq \nu\log n + \omega\pth{\sqrt{\log n}}\).\\
        Exact alignment is achieved if (\ref{eq:curve_LRT}) is satisfied for some \(\beta\) such that \(n^{1-\beta} \leq o(1)\), which is equivalent to \(\beta-1 \geq \omega(1/\log n)\). Such \(\beta\) exists if \(\zeta \geq \pth{1+\sqrt{1+\nu}}^2\log n + \omega\pth{1}\).
        \item \hyperref[thm:alphabeta]{Theorem \ref*{thm:alphabeta}}\\
        The number of errors is bounded by \(2n^{1-\beta}\) if \(\zeta  \geq \pth{\sqrt{\nu+\beta}+\sqrt{\beta}}^2\log n\).
        \item \hyperref[thm:beta]{Theorem \ref*{thm:beta}}\\
        This is a special case of \hyperref[thm:alphabeta]{Theorem \ref*{thm:alphabeta}} with \(\nu=1\).
    \end{compactitem}
\end{proof}

\subsection{Maximum row estimation}

Consider users \(u\in\calU\) and \(v,v'\in\calV\) such that \(u\mapped{M}v\). We want to bound the probability of the error event where \(u\) is falsely mapped to \(v'\). Under maximum row estimation, this corresponds to the event \(G_{u,v}\leq G_{u,v'}\). Without loss of generality, assume the set \(\calU\) consists of the single user \(u\).\\

\textbf{\underline{Linear boundary - relevant for large \(\beta\) and small \(\alpha\)}}

\begin{proof}
    By \hyperref[lemma:misalignment]{Lemma \ref*{lemma:misalignment}} (\hyperref[lemma:misalignmentPlanted]{Lemma \ref*{lemma:misalignmentPlanted}}), the probability of \(\{G_{u,v}\leq G_{u,v'}\}\) is upper bounded by \(\exp\pth{-\frac{\zeta }{2}}\).
    
    There are no more than\(|\calV|\) vertices \(v'\in\calV\setminus\{v\}\) that to which \(u\) can be falsely mapped. By the union bound, the probability that any of these events happens is upper bounded by \(|\calV|\exp\pth{-\frac{\zeta }{2}}\). Then, the expected number of misalignments over all \(|\calU|\) of rows is upper bounded by \(|\calU|\cdot|\calV|\exp\pth{-\frac{\zeta }{2}}\).

    Let \(n\define |M|\), \(|\calU|=n\), \(|\calV|=n^\nu\) for some \(\nu\geq 1\), and \(x\define \frac{\zeta }{\log n}\). Then, the bound on the expected number of misalignments is given by \(n^{1+\nu-\frac{x}{2}}\). This expression is bounded by \(n^{1-\beta}\) if
    \begin{align}
        \label{eq:curve_MRE0}
        x \geq 2(\nu+\beta).
    \end{align}

    This gives us the following inequalities that form part of the main results:
    \begin{compactitem}
        \item \hyperref[thm:alpha]{Theorem \ref*{thm:alpha}}\\
        Exact alignment is achieved if (\ref{eq:curve_MRE0}) is satisfied for some \(\beta\) such that \(n^{1-\beta} \leq o(1)\), which is equivalent to \(\beta-1 \geq \omega(1/\log n)\). Such \(\beta\) exists if \(\zeta \geq 2(1+\nu)\log n + \omega(1)\).
        \item \hyperref[thm:alphabeta]{Theorem \ref*{thm:alphabeta}}\\
        The number of errors is bounded by \(2n^{1-\beta}\) if \(\zeta  \geq 2(\nu+\beta)\log n\).
        \item \hyperref[thm:beta]{Theorem \ref*{thm:beta}}\\
        This is a special case of \hyperref[thm:alphabeta]{Theorem \ref*{thm:alphabeta}} with \(\nu=1\).
    \end{compactitem}
\end{proof}

\underline{\textbf{Quadratic boundary - relevant for small \(\beta\) and large \(\alpha\)}}

\begin{proof}
    Let \(\tau\) a threshold such that \(0\leq \tau \leq \zeta \). For the purpose of analysis, let us consider the alignment of the row corresponding to \(u\) a failure if either \(\{G_{u,v} < \tau\}\) or \(\{G_{u,v}\geq\tau \textrm{ and } G_{u,v}\leq G_{u,v'}\}\) for some \(v'\in\calV\setminus\{v\}\). In other words, we say the the algorithm has failed on the given row if either the true pair has score atypically low, in which case many false pairs will beat the true pair, or if a false pair beats the true pair despite the true pair having sufficiently high score. These two events fully cover the actual error event \(\{G_{u,v}\leq G_{u,v'}\}\).\\
    By \hyperref[lemma:typicality]{Lemma \ref*{lemma:typicality}} (\hyperref[lemma:typicalityPlanted]{Lemma \ref*{lemma:typicalityPlanted}}), the probability of \(\{G_{u,v} \leq \tau\}\) is bounded by \(\exp\pth{-\frac{(\zeta -\tau)^2}{4\zeta }}\). For database alignment, by \hyperref[lemma:condMisalignment]{Lemma \ref*{lemma:condMisalignment}}, the probability of \(\{G_{u,v}\geq\tau \textrm{ and } G_{u,v}\leq G_{u,v'}\}\) is bounded by \(\exp\pth{-\frac{\zeta ^2+\tau^2}{2\zeta}+6\rho_{\max}^2\tau}\). (For planted matching, by \hyperref[lemma:condMisalignmentPlanted]{Lemma \ref*{lemma:condMisalignmentPlanted}}, our bound is \(\exp\pth{-\frac{\zeta ^2+\tau^2}{2\zeta}}\) and there is additional term.) There are no more than \(|\calV|\) vertices \(v'\in\calV\setminus\{v\}\) that to which \(u\) can be falsely mapped. Then, by the union bound, the probability of \(\{G_{u,v}\geq\tau \textrm{ and } G_{u,v}\leq G_{u,v'}\}\) for some \(v'\) is bounded by \(|\calV|\exp\pth{-\frac{\zeta ^2+\tau^2}{2\zeta}+6\rho_{\max}^2\tau}\).\\
    The log of the ratio of these two bounds is
    \begin{align*}
        \log\frac{\exp\pth{-\frac{(\zeta -\tau)^2}{4\zeta }}}{|\calV|\exp\pth{-\frac{\zeta ^2+\tau^2}{2\zeta}+6\rho_{\max}^2\tau}} = \frac{(\zeta +\tau)^2}{4\zeta } - \log|\calV| - 6\rho_{\max}^2\tau
    \end{align*}
    for database alignment. (For planted matching, we drop the \(-6\rho_{\max}^2\tau\) term.)\\
    The choice of \(\tau^* = 2\sqrt{\zeta \log|\calV|}-\zeta \) makes the log of the ratio \(-6\rho_{\max}^2\tau^*\) for database alignment. Then the bound on the failure probability is no more than \(1+\exp\pth{6\rho_{\max}^2\tau^*}\) times the atypicality bound. (For planted matching, the log of the ratio is zero, so the bounds on the two types of error are equal. Therefore, the bound on the total error probability is twice that of the atypicality bound.)

    Let \(n\define |M|\), \(|\calU|=n\), \(|\calV|=n^\nu\) for some \(\nu\geq 1\), and \(x\define \frac{\zeta }{\log n}\). Then \(\tau^* = \pth{2\sqrt{\nu\cdot x}-x}\log n = 
    \pth{\nu -(\sqrt{x}-\sqrt{\nu})^2}\log n\). There are \(n\) rows. Then, the bound on the expected number of atypicality errors is given by
    \begin{align*}
        n\exp\pth{-\frac{(\zeta -\tau)^2}{4\zeta }} &= n^{1-\pth{\sqrt{\nu}-\sqrt{x}}^2}.
    \end{align*}
    
    This expression is bounded by \(n^{1-\beta}\) if
    \begin{align}
        \label{eq:curve_MRE1}
        x \geq \pth{\sqrt{\nu}+\sqrt{\beta}}^2.
    \end{align}
    For such \(x\), we get \(\frac{\tau^*}{\log n} = 2\sqrt{\nu x}-x \leq \nu-\beta\). So the total number of errors is \(n^{1-\beta}\cdot \pth{1+n^{6\rho_{\max}^2(\nu-\beta)}}\). By \hyperref[lemma:svd2rho]{Lemma \ref*{lemma:svd2rho}}, under \hyperref[cond:highDimensional]{Condition \ref*{cond:highDimensional}}, \(\rho_{\max}^2 \leq o(1)\) so the error bound becomes \(n^{1-\beta}\pth{1+n^{o(1)}} = n^{1-\beta+o(1)}\) for any finite value of \(\beta\) and \(\nu = \frac{\log |\calV|}{\log n}\).

    This gives us the following inequalities that form part of the main results:
    \begin{compactitem}
        \item \hyperref[thm:alpha]{Theorem \ref*{thm:alpha}}\\
        Almost-exact alignment is achieved if (\ref{eq:curve_MRE1}) is satisfied for some \(\beta\) such that \(n^{-\beta} \leq o(1)\), which is equivalent to \(\beta \geq \omega(1/\log n)\). Such \(\beta\) exists if \(\zeta \geq \nu\log n + \omega\pth{\sqrt{\log n}}\).\\
        Exact alignment is achieved if (\ref{eq:curve_MRE1}) is satisfied for some \(\beta\) such that \(n^{1-\beta} \leq o(1)\), which is equivalent to \(\beta-1 \geq \omega(1/\log n)\). Such \(\beta\) exists if \(\zeta \geq \pth{1+\sqrt{\nu}}^2\log n + \omega\pth{1}\).
        \item \hyperref[thm:alphabeta]{Theorem \ref*{thm:alphabeta}}\\
        The number of errors is bounded by \(2n^{1-\beta}\) if \(\beta \leq \nu\) and \(\zeta  \geq \pth{\sqrt{\nu}+\sqrt{\beta}}^2\log n\).
        \item \hyperref[thm:beta]{Theorem \ref*{thm:beta}}\\
        This is a special case of \hyperref[thm:alphabeta]{Theorem \ref*{thm:alphabeta}} with \(\nu=1\).
    \end{compactitem}
\end{proof}

\subsection{Maximum likelihood estimation}

\underline{\textbf{Linear boundary - relevant for large \(\beta\) and small \(\alpha\)}}

\begin{proof}
    Consider some elementary misalignment of size \(\delta\). (See \hyperref[subsec:elementary]{Subsection \ref*{subsec:elementary}} for elementary misalignments.) By \hyperref[lemma:misalignment]{Lemma \ref*{lemma:misalignment}} (\hyperref[lemma:misalignmentPlanted]{Lemma \ref*{lemma:misalignmentPlanted}}), the probability of the given misalignment is at most \(\exp\pth{-\delta\frac{\zeta }{2}}\).

    Let \(|\calU|=n\) and \(|\calV|=n+s\), where \(n\define|M|\) is the size of the matching. By \hyperref[lemma:countElementary]{Lemma \ref*{lemma:countElementary}}, there are at most \(\frac{n^\delta}{\delta}\) different type-I misalignments and at most \(sn^\delta\) different type-II misalignments of size \(\delta\). Furthermore, the number of type-I misalignments is 0 if \(\delta=1\). Define \(\varepsilon = \exp\pth{\log n - \frac{\zeta }{2}}\). Then, the expected number of type-I and type-II misalignments of size \(\delta\) are bounded by \(\varepsilon^\delta/\delta\) and \(s\varepsilon^\delta\) respectively.
    
    The contribution of a misalignment of size \(\delta\) is equal to \(\delta\). The total total contribution of all elementary misalignments gives us the expected number of errors. This expectation is bounded by
    \begin{align*}
        \sum_{\delta=2}^n \varepsilon^\delta + \sum_{\delta=1}^n s\delta\varepsilon^\delta
        &\leq \frac{\varepsilon^2}{1-\varepsilon} + \frac{s\varepsilon}{(1-\varepsilon)^2}
    \end{align*}
    Let us write \(x\define \frac{\zeta }{\log n}\) and \(\alpha \define \frac{\log s}{\log n}\) if \(s\geq 1\). Then the expression above can be written as \(\frac{n^{2-x}}{1-n^{1-x/2}} + \frac{n^{\alpha+1-x/2}}{\pth{1-n^{1-x/2}}^2}\).
    
    If \(x\geq 2+\frac{2\log \pth{\frac{\sqrt{5}-1}{2}}}{\log n}\) and \(s=0\), then this expression is bounded by 1. If \(x\geq 2+\frac{2\log\pth{\frac{3+\sqrt{5}}{2}}}{\log n}\) and \(\alpha> 0\), then the expression is bounded by \(n^\alpha(1+o(1))\).

    Furthermore, given some \(\beta>1+\Omega(1/\log n)\), the expected number of errors is
    \begin{align}
        \label{eq:curve_MLE0}
        \textrm{bounded by } n^{1-\beta} \textrm{ if } s = 0 \textrm{ and } x&\geq 1+\beta\\
        \label{eq:curve_MLE1}
        \textrm{bounded by } n^{1-\beta}(1+o(1)) \textrm{ if } s\geq 1 \textrm{ and } x&\geq 2(\alpha+\beta)
    \end{align}
    \begin{compactitem}
        \item \hyperref[thm:alpha]{Theorem \ref*{thm:alpha}}\\
        Exact alignment is achieved if (\ref{eq:curve_MLE0}) or (\ref{eq:curve_MLE1}) is satisfied for some \(\beta\) such that \(n^{1-\beta} \leq o(1)\), which is equivalent to \(\beta-1 \geq \omega(1/\log n)\). Such \(\beta\) exists if \(\zeta \geq (1+\alpha)\log n + \omega(1)\).
        \item \hyperref[thm:beta]{Theorem \ref*{thm:beta}}\\
        By (\ref{eq:curve_MLE0}), the number of errors is bounded by \(n^{1-\beta}\) if \(\zeta  \geq (1+\beta)\log n\).
        \item \hyperref[thm:alphabeta]{Theorem \ref*{thm:alphabeta}}\\
        By (\ref{eq:curve_MLE1}), the number of errors is bounded by \(n^{1-\beta}(1+o(1))\) if \(\zeta  \geq 2(\alpha+\beta)\log n\).
    \end{compactitem}
\end{proof}

\underline{\textbf{Elliptic boundary - relevant for smallest \(\beta\) and small \(\alpha\)}}

\begin{proof}

    Let \(\tau\) a threshold such that \(0\leq \tau \leq \zeta \). Consider a specific misalignment of size \(\delta\). Such a misalignment occurs if and only if one of the following is true:
    \begin{compactitem}
        \item Atypicality event: the average information density scores of the \(\delta\) true pairs is below \(\tau\), or
        \item Misalignment-despite-Typicality event: the set of \(\delta\) true pairs have average score greater than \(\tau\) but nevertheless the set of \(\delta\) false pairs have greater score than the corresponding set of true pairs. 
    \end{compactitem}

    By \hyperref[lemma:typicality]{Lemma \ref*{lemma:typicality}} (\hyperref[lemma:typicalityPlanted]{Lemma \ref*{lemma:typicalityPlanted}}), the probability of the true pairs having average score below the threshold is bounded by \(\exp\pth{-\delta\frac{(\zeta -\tau)^2}{4\zeta }}\). For database alignment, by \hyperref[lemma:condMisalignment]{Lemma \ref*{lemma:condMisalignment}}, the probability of that the false pairs have score greater than the true pair despite the true pairs having high score is bounded by \(\exp\pth{-\delta\cdot\frac{\zeta ^2+\tau^2}{2\zeta}+6\rho^2_{\max}\delta\tau}\). (For planted matching, by \hyperref[lemma:condMisalignmentPlanted]{Lemma \ref*{lemma:condMisalignmentPlanted}}, our bound is \(\exp\pth{-\delta\cdot\frac{\zeta ^2+\tau^2}{2\zeta}}\) and there is no extra term.)

    Let \(|\calU|=n\) and \(|\calV|=n+s\), where \(n\define|M|\) is the size of the matching. Let \(\alpha\define \frac{\log s}{\log n}\) and let \(\beta\in(0,1/2)\) some number less than \(1-\alpha\). Define \(\delta^*\define n^{1-\beta}\). Since \(\beta\leq 1-\alpha\), we have \(\delta^* = n^{1-\beta}\geq n^\alpha = s\). Consider some \(\delta \geq \delta^*\).

    By \hyperref[lemma:countMisalignment]{Lemma \ref*{lemma:countMisalignment}}, there are no more than \(\exp\bpth{\delta\pth{1+\log n + \log2}}\) different misalignment-despite-typicality events of size \(\delta\). Then, in log-expectation, the number of such events is no more than \(\delta\pth{\pth{1+\log n + \log2}-\frac{\zeta ^2+\tau^2}{2\zeta }+6\rho^2_{\max}\tau}\). If
    \begin{align}
        \label{eq:tau_MLE1}
        \tau = \sqrt{2\zeta \pth{\log n + \log \eta +1+\log2}-\zeta ^2},
    \end{align}
    the expected misalignment-despite-typicality events of size \(\delta\) is bounded by \(\exp\pth{-\delta (\eta-6\rho^2_{\max}\tau)}\). Define \(\varepsilon \define \exp\pth{-\eta+6\rho^2_{\max}\tau}\). \(\varepsilon \in (0,1)\) if \(\eta > 6\rho^2_{\max} I_{XY}\) (which is greater than \( 6\rho^2_{\max}\tau\)). Then the bound on the number of such events of size at least \(\delta^*\) is bounded by \(\sum_{\delta\geq\delta^*} \varepsilon^{\delta} \leq \frac{\varepsilon^{\delta^*}}{1-\varepsilon}\).

    By \hyperref[lemma:svd2rho]{Lemma \ref*{lemma:svd2rho}}, under \hyperref[cond:highDimensional]{Condition \ref*{cond:highDimensional}}, \(\rho_{\max}^2 \leq o(1)\). Then, there exists some choice for \(\eta\) that is \(o(\log n)\) and satisfies \(\eta > 6\rho^2_{\max} I_{XY}\).
    
    If \(\beta\leq 1/2\), then \(\delta^* > n^{1-\beta} \geq \sqrt{n}\). If, furthermore, \(\eta - 6\rho^2_{\max} I_{XY} \geq \Omega(1)\), then \(\frac{\varepsilon^{\delta^*}}{1-\varepsilon} = \frac{\exp\pth{-\Omega(\sqrt{n})}}{1-\exp\pth{-\Omega(1)}}\leq e^{-\Omega(\sqrt{n})}\). A misalignment can result in no more than \(n\) errors. Then, the expected number of errors caused by misalignment-despite-typicality errors of size at least \(\delta^*\) is \(ne^{-\Omega(\sqrt{n})} \leq o(1)\).

    Next we confirm the number of atypicality errors is small: There are no more than \(\binom{n}{\delta}\) different ways to get an atypicality event. This is bounded by \(\exp\pth{\delta + \delta\log\frac{n}{\delta}}\), which can further be bounded by \(\exp\pth{\delta + \delta\log\frac{n}{\delta^*}}\). Then, in expectation, there are no more than \(\exp\pth{\delta + \delta\log\frac{n}{\delta^*}-\delta\frac{(\zeta -\tau)^2}{4\zeta }}\) atypicality events of size \(\delta\).

    The log of the ratio of the bound on expected number of atypicality errors versus the expected number of misalignment-despite-typicality errors is equal to \(\delta\pth{\frac{(\zeta +\tau)^2}{4\zeta }-\log \delta^*-2-6\rho_{\max}^2\tau}\). If
    \begin{align}
        \label{eq:tau_MLE0}
        \tau \leq 2\sqrt{\zeta \pth{\log \delta^* + \log2}}-\zeta ,
    \end{align}
    then the log-ratio is at most \(-6\rho_{\max}^2\tau<0\) and the bound on the expected number of atypicality events of size \(\delta\) is bounded by that of misalignment-despite-typicality events of size \(\delta\).
    
    We identify the smallest \(\delta^*\) such that our choice of \(\tau\) in (\ref{eq:tau_MLE1}) satisfies the inequality in (\ref{eq:tau_MLE0}): Let us write \(x\define \frac{\zeta }{\log n}\). By (\ref{eq:tau_MLE1}), for the appropriate choice of \(\eta\) on the order of \(o(\log n)\) we have , \(\frac{\tau}{\log n} = \sqrt{2x\pth{1+o(1)}-x}\). For \(x = 1+2\sqrt{\beta(1-\beta)}\), such choice of \(\tau\) gives us \(\frac{\tau}{\log n} \leq |1-2\beta|+o(1)\). There exists some \(\varepsilon_1,\varepsilon_2\leq o(1)\) such that for any \(\beta \in (\varepsilon_1,1/2-\varepsilon_2)\), the choices of \(\tau \leq |1-2\beta|+o(1)\) and \(x = 1+2\sqrt{\beta(1-\beta)}\) satisfy (\ref{eq:tau_MLE0}).
    
    By (\ref{eq:tau_MLE0}), we require \(\frac{\tau}{\log n}\leq 2\sqrt{x\pth{1-\beta+\varepsilon+\frac{\log 2}{\log n}}}-x\). Picking \(x=1+2\sqrt{\beta(1-\beta)}\), there exists some \(\varepsilon \leq o(1)\) that satisfies \(\frac{\tau}{\log n}\leq |1-2\beta|+o(1)\leq 2\sqrt{x\pth{1-\beta+\varepsilon+\frac{\log 2}{\log n}}}-x\).

    We have shown that errors from misalignments of size greater than \(\delta^*\) is \(o(1)\): Those due to misalignment-despite-typicality type errors is \(o(1)\) and those due to atypicality type errors is bounded by that of misalignment-despite-typicality type errors, so also \(o(1)\). Finally, the expected number of errors due to misalignments smaller than \(\delta^*\) is at most \(\delta^*\). (This follows from the fact that only one of the misalignments can occur.)

    This gives us the following inequality that form part of the main results:
    \begin{compactitem}
        \item \hyperref[thm:alpha]{Theorem \ref*{thm:alpha}}\\
        Almost-exact alignment is achieved if \(x\geq 1+2\sqrt{\beta(1-\beta)}\) is satisfied for some \(\beta\) such that \(n^{-\beta} \leq o(1)\), which is equivalent to \(\beta \geq \omega(1/\log n)\). Such \(\beta\) exists if \(\zeta \geq \nu\log n + \omega\pth{\sqrt{\log n}}\).
        \item \hyperref[thm:beta]{Theorem \ref*{thm:beta}} and \hyperref[thm:alphabeta]{Theorem \ref*{thm:alphabeta}}\\
        The number of errors is bounded by \(n^{1-\beta+o(1)}\) if \(\beta\in(0,1/2)\), \(\beta\leq 1-\alpha\) and \(\zeta  \geq \pth{1+2\sqrt{\beta(1-\beta)}}\log n\).
    \end{compactitem}
\end{proof}

\underline{\textbf{Quadratic boundary - relevant for small \(\beta\) and large \(\alpha\)}}

\begin{proof}
    By the previous proof, we know that, for \(\tilde{\beta} = 1-\frac{\log s}{\log n}\), the expected number of errors due to misalignments of size at least \(s\) is \(o(1)\) if \(\zeta  \geq \pth{1+\sqrt{\tilde{\beta}(1-\tilde{\beta})}}\log n\). Here we show that, given a stronger bound on \(\zeta \), we can also bound the number of errors due to misalignments of size between \(\delta^*\) and \(s\) for some \(\delta^* = n^{1-\beta+\eta} \leq s\) where \(\beta\in[0,1/2]\) another constant strictly less than \(\tilde{\beta}\) and \(\eta\) some non-negative function of \(n\) which is to be determined later.

    Let \(\tau\) a threshold such that \(0\leq \tau \leq \zeta \). Consider a specific misalignment of size \(\delta\). Once again, we cover the misalignment event using two auxiliary events:
    \begin{compactitem}
        \item Atypicality event: the average information density scores of the \(\delta\) true pairs is below \(\tau\). By \hyperref[lemma:typicality]{Lemma \ref*{lemma:typicality}} (\hyperref[lemma:typicalityPlanted]{Lemma \ref*{lemma:typicalityPlanted}}), the probability of this event for size \(\delta\) is bounded by \(\exp\pth{-\delta\frac{(\zeta -\tau)^2}{4\zeta }}\).
        \item Misalignment-despite-Typicality event: the set of \(\delta\) true pairs have average score greater than \(\tau\) but nevertheless the set of \(\delta\) false pairs have greater score than the corresponding set of true pairs. For database alignment, by \hyperref[lemma:condMisalignment]{Lemma \ref*{lemma:condMisalignment}}, the probability of that the false pairs have score greater than the true pair despite the true pairs having high score is bounded by \(\exp\pth{-\delta\cdot\frac{\zeta ^2+\tau^2}{2\zeta}+6\rho^2_{\max}\delta\tau}\). (For planted matching, by \hyperref[lemma:condMisalignmentPlanted]{Lemma \ref*{lemma:condMisalignmentPlanted}}, our bound is \(\exp\pth{-\delta\cdot\frac{\zeta ^2+\tau^2}{2\zeta}}\) and there is no extra term.)
    \end{compactitem}

    Let \(|\calU|=n\) and \(|\calV|=n+s\), where \(n\define|M|\) is the size of the matching. The number of atypicality events of size \(\delta\) is bounded by \(\exp\pth{\delta + \delta \log \frac{n}{\delta}}\). By \hyperref[lemma:countMisalignment]{Lemma \ref*{lemma:countMisalignment}}, the number of misalignment-despite-typicality events of size \(\delta\) is bounded by \(\exp\bpth{\delta\pth{1+\log \frac{ns}{\delta} + \log2}}\).

    The log of the ratio of the bounds on the expected number of atypicality events and the expected number of misalignment-despite-typicality events is equal to \(\delta\pth{\frac{(\zeta +\tau)^2}{4\zeta }-\log s - \log 2 - 6\rho^2_{\max}\tau}\). If
    \begin{align}
        \label{eq:tau_MLE3}
        \tau = 2\sqrt{\zeta \pth{\log s + \log2}}-\zeta ,
    \end{align}
    the log-ratio is \(-6\rho^2_{\max}\tau<0\) and the bound on the expected number of atypicality events of size \(\delta\) is bounded by that of misalignment-despite-typicality events of size \(\delta\).

    Define \(x=\frac{\zeta }{\log n}\) and \(\alpha = \frac{\log s}{\log n}\). Then, given the value of \(\tau\) in (\ref{eq:tau_MLE3}), the expected number of misalignments-despite-typicality errors of size \(\delta\geq \delta^*\) bounded by \(\exp\pth{\delta(1+\log 2) + \delta\log\frac{ns}{\delta}-\delta\cdot\frac{\zeta ^2+\tau^2}{2\zeta }+6\rho^2_{\max}\tau}\) which is further bounded by \(\exp\pth{\delta(1+\log 2) + \delta\log\frac{ns}{\delta^*}-\delta\cdot\frac{\zeta ^2+\tau^2}{2\zeta }+6\rho^2_{\max}\tau} = n^{\delta\pth{\beta-\eta - (\sqrt{x}-\sqrt{\alpha})^2}}e^{\delta(1+\log 2 + 6\rho^2_{\max}\tau)}\). Pick \(\eta\) to be \(\frac{1+\log 2 + 6\rho^2_{\max}\tau}{\log n}\). Then the expression in the previous bound simplifies at \(n^{\delta\pth{\beta-(\sqrt{x}-\sqrt{\alpha})^2}}\). Define \(\varepsilon = n^{\delta\pth{\beta-(\sqrt{x}-\sqrt{\alpha})^2}}\). \(\varepsilon < 1\) if \(x > \pth{\sqrt{\alpha}+\sqrt{\beta}}^2\) and \(\alpha \geq \beta\). The contribution of each event to the size of the misalignment is at most \(\delta\). (This is because at most one misalignment can occur at a time. So the contribution of a misalignment-despite-typicality event is either 0 or \(\delta\).) Then, the total number of errors due to misalignment-despite-typicality events is bounded as
    \begin{align*}
        \sum_{\delta=0}^\infty \delta\varepsilon^\delta &= \frac{\varepsilon}{(1-\varepsilon)^2}.
    \end{align*}
    The bound above is \(o(1)\) if \(\varepsilon \leq o(1)\). In that case, the number of errors due to misalignments of size \(\delta\in\croc{n^{1-\beta+\eta},n^{1-\tilde{\beta}}}\) as well as those of size \(\delta\in\croc{n^{1-\tilde{\beta}},n}\) is \(o(1)\). The expected number of errors due to misalignments smaller than \(n^{1-\beta+\eta}\) is at most \(n^{1-\beta+\eta}\). (This follows from the fact that only one of the misalignments can occur.)

    By \hyperref[lemma:svd2rho]{Lemma \ref*{lemma:svd2rho}}, under \hyperref[cond:highDimensional]{Condition \ref*{cond:highDimensional}}, \(\rho_{\max}^2 \leq o(1)\). Then, \(\eta = \frac{1+\log 2 + 6\rho^2_{\max}\tau}{\log n} \leq o(1)\).
    
    \(\varepsilon \leq o(1)\) is equivalent to \(x > (\sqrt{\alpha}+\sqrt{\beta+\omega(1/\log n)})^2\), which we can rewrite as
    \begin{align}
        \label{eq:curve_MLE4}
        x > (\sqrt{\alpha}+\sqrt{\beta})^2 + \pth{1+\sqrt{\frac{\alpha}{\beta}}}\omega(1/\log n)
    \end{align}

    This gives us the following inequality that form part of the main results:
    \begin{compactitem}
        \item \hyperref[thm:alpha]{Theorem \ref*{thm:alpha}}\\
        Almost-exact alignment is achieved if (\ref{eq:curve_MLE4}) is satisfied for some \(\beta\) such that \(n^{-\beta} \leq o(1)\), which is equivalent to \(\beta \geq \omega(1/\log n)\). Such \(\beta\) exists if \(\zeta \geq \nu\log n + \omega\pth{\sqrt{\log n}}\).\\
        Exact alignment is achieved if (\ref{eq:curve_MLE4}) is satisfied for some \(\beta\) such that \(n^{1-\beta} \leq o(1)\), which is equivalent to \(\beta-1 \geq \omega(1/\log n)\). Such \(\beta\) exists if \(\alpha\geq 1\) and \(\zeta \geq (1+\sqrt{\alpha})^2\log n + \omega(1)\).
        \item \hyperref[thm:alphabeta]{Theorem \ref*{thm:alphabeta}}\\
        The number of errors is bounded by \(n^{1-\beta+o(1)}\) if \(1-\alpha \leq \beta \leq \alpha\) and \(\zeta  \geq \pth{\sqrt{\alpha}+\sqrt{\beta}}^2\log n\).
        \item \hyperref[thm:beta]{Theorem \ref*{thm:beta}}\\
        This is a special case of \hyperref[thm:alphabeta]{Theorem \ref*{thm:alphabeta}} with \(\alpha=1\).
    \end{compactitem}
\end{proof}

\section{Converse proofs for planted matching}
Our achievability statements take the form of upper bounds on \(E[d(\hat{M},M]\) in terms of \(\zeta\).
If \(E[d(\hat{M},M] \leq n^{1-\beta}\), then by Markov's inequality, these can be converted into upper bounds on \(\Pr[d(\hat{M},M) \geq \frac{n^{1-\beta}}{\varepsilon}] \leq \varepsilon\).
Note that \(\frac{n^{1-\beta}}{\varepsilon} = \exp((1-\beta)\log n + \log \frac {1}{\varepsilon})\), so \(\varepsilon\) appears only in a lower order term.

For $0 \leq \beta < 1$, we prove converse statements of the form \(\Pr[d(\hat{M},M) \leq n^{1-\beta}] \leq o(1)\).
These imply bounds \(E[d(\hat{M},M)] \geq(1-o(1))n^{1-\beta}\).

For all \(\beta > 1\), \(d(\hat{M},M) \leq n^{1-\beta} \) is equivalent to \(\hat{M}=M\).
Thus \(E[d(\hat{M},M)] \geq 1 - \Pr[\hat{M}=M]\).
If \(\Pr[\hat{M}=M] \leq \exp(-n^{1-\beta}) \) for \(\beta >1\), then again we have \(E[d(\hat{M},M)] \geq (1-o(1))n^{1-\beta}\).

\paragraph{Technical Lemmas:} Throughout this section, let $|\calV| = n+s$ and $s = n^\alpha$.
Recall that in the planted matching model, $\zeta = \mu^2/2$.

\begin{lemma}
\label{lemma:distance-converse}
Let \(0 \leq \beta \leq 1\) and \(\alpha \leq \Omega(1)\).
If for any estimator \(\hat{M}\) we have the bound 
\begin{equation}
\log \log \Pr[\hat{M} = M]^{-1} \geq (1-\beta)\log n + \log \log n +\omega(1),
\label{eqn:map-converse}
\end{equation}
then for any estimator \(\hat{M}'\) we have \(\Pr[d(\hat{M}',M) \leq n^{1-\beta}] \leq o(1)\). 
\end{lemma}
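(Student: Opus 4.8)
The plan is a rounding (ball‑covering) reduction: I will build from an arbitrary estimator \(\hat M'\) an exact‑recovery estimator \(\hat M\) whose success probability is at least \(\Pr[d(\hat M',M)\le n^{1-\beta}]\) divided by a small ball‑size factor, and then read off the conclusion directly from \eqref{eqn:map-converse}.

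\textbf{Step 1 (ball sizes).} Fix \(\hat M'\) and put \(r=n^{1-\beta}\). For a matching \(M^*\) let \(B(M^*,r)=\{M':d(M',M^*)\le r\}\). Any \(M'\in B(M^*,r)\) is obtained from \(M^*\) by choosing at most \(r\) users of \(\calU\) to remap and reassigning them injectively into \(\calV\), so, uniformly in \(M^*\),
\[
\abs{B(M^*,r)}\ \le\ N_r\ \define\ \sum_{k\le r}\binom{n}{k}\abs{\calV}^k\ \le\ (r+1)\pth{n\abs{\calV}}^{r}.
\]
Since \(\alpha\le\Omega(1)\), \(\abs{\calV}=n+n^\alpha\le n^{\max\{1,\alpha\}+o(1)}\), hence \(\log N_r\le O\pth{r\log n}=O\pth{n^{1-\beta}\log n}\) and therefore
\[
\log\log N_r\ \le\ (1-\beta)\log n+\log\log n+O(1).
\]

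\textbf{Step 2 (reduction).} Define the randomized estimator \(\hat M\): given the data, compute \(\hat M'\) and output a uniformly random element of \(B(\hat M',r)\). On the event \(\{d(\hat M',M)\le r\}\) one has \(M\in B(\hat M',r)\), so the conditional probability of outputting \(M\) is at least \(1/\abs{B(\hat M',r)}\ge 1/N_r\); taking expectations,
\[
\Pr[\hat M=M]\ \ge\ \frac{1}{N_r}\,\Pr\!\croc{\,d(\hat M',M)\le n^{1-\beta}\,}.
\]
(If \eqref{eqn:map-converse} is only assumed for deterministic estimators, fix the external randomness of \(\hat M\) to its optimal value, which only increases \(\Pr[\hat M=M]\).) Applying \eqref{eqn:map-converse} to \(\hat M\) with \(L\define(1-\beta)\log n+\log\log n\) gives \(\Pr[\hat M=M]\le\exp\!\pth{-e^{\omega(1)}e^{L}}\), while Step 1 gives \(\log N_r\le e^{L+O(1)}\); combining,
\[
\Pr\!\croc{\,d(\hat M',M)\le n^{1-\beta}\,}\ \le\ N_r\,\Pr[\hat M=M]\ \le\ \exp\!\pth{e^{L}\bpth{O(1)-e^{\omega(1)}}}\ \longrightarrow\ 0,
\]
because \(e^{\omega(1)}\to\infty\) and \(e^{L}\to\infty\). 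This is the claim.

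\textbf{Main obstacle.} The delicate point is Step 1 together with the \(\log\log\) bookkeeping: the constant multiplicative slack in \(N_r\) (from \(\binom{n}{k}\), \(\abs{\calV}\), and the \((r+1)\) factor) must collapse to an \(O(1)\) additive term only after two logarithms, so that it does not disturb the leading \((1-\beta)\log n+\log\log n\) terms that must be matched on both sides of \eqref{eqn:map-converse}; this is exactly what the hypothesis \(\alpha\le\Omega(1)\) secures, since an unbounded \(\alpha\) would inflate \(\abs{\calV}\) and spoil the match. A secondary technical item is pinning down the definition of \(d\) in the unbalanced case and justifying the count of \(B(M^*,r)\) when the remapping uses the \(s=n^\alpha\) uncovered vertices of \(\calV\); the crude bound \(\binom{n}{k}\abs{\calV}^k\) suffices but should be spelled out.
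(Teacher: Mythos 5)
Your proposal is correct and is essentially the paper's own argument: you round \(\hat M'\) to a uniformly random matching in the radius-\(n^{1-\beta}\) ball to get an exact-recovery estimator, lower-bound its success probability by \(\Pr[d(\hat M',M)\le n^{1-\beta}]\) divided by the ball size, and then compare exponents using the \(\log\log\) hypothesis. The only cosmetic difference is that the paper bounds the ball cardinality via its counting lemma (Lemma on the number of mappings at a given misalignment size) while you use the direct bound \(\sum_{k\le r}\binom{n}{k}|\calV|^k\); both give \(\exp\pth{O(n^{1-\beta}\log n)}\), which is all that is needed.
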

\begin{proof}
  
  Given an estimator \(\hat{M}'\), define a second estimator $\hat{M}$ as follows: select a matching uniformly from those within distance $d$ of $\hat{M}'$ such that $\hat{M}$ and the observation are conditionally independent given $\hat{M}'$.
  Then \[
    \Pr[\hat{M} = M] \geq \frac{\Pr[d(\hat{M}',M) \leq n^{1-\beta}]}{|\{m : d(m,\hat{M}') \leq n^{1-\beta}\}|}.
  \]
Combining this with \eqref{eqn:map-converse} and Lemma~\ref{lemma:countMisalignment}, we have
\begin{align*}
\log \Pr[d(\hat{M}',M) \leq n^{1-\beta}]^{-1} 
&\geq \log \Pr[\hat{M}' = M]^{-1} - n^{1-\beta}(2 + \max(1, \alpha+\beta) \log n)\\
&\geq \omega(n^{1-\beta} \log n) - O(n^{1-\beta} \log n)\\
&\geq \omega(n^{1-\beta} \log n) \geq \omega(\log n).
\end{align*}
\end{proof}

\begin{lemma}
For any estimator \(\hat{M}\)
\[
\Pr[\hat{M} = M] \leq \inf_{0 \leq \gamma \leq 1} \sum_k \binom{n}{k} \frac{s!}{(k+s)!} (1-Q((1-\gamma)\mu))^{n-k}(1-Q(\gamma\mu))^{k(k+s)} e^{k (2\gamma-1)\mu^2/2}
\]
\label{lemma:covering}
\end{lemma}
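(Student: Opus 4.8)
\begin{proof*}[Proof plan]
The plan is as follows. Since the prior on $M$ is uniform over matchings of size $n$, the maximum likelihood estimator coincides with the MAP estimator, which is optimal for exact recovery; hence $\Pr[\hat M = M] \le \Pr[\hat M_{ML} = M]$ for every estimator $\hat M$, and by the symmetry of the planted model under relabeling of $\calU$ and $\calV$ (which acts transitively on matchings of size $n$) we may fix an arbitrary target $M_0$ and bound $\Pr[M_0\text{ is the maximum-weight matching}]$.

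Fix a threshold parameter $\gamma\in[0,1]$ and split the rows of $\Wmat$ by the strength of their planted edge: call $u\in\calU$ \emph{weak} if $W_{u,M_0(u)}\le\gamma\mu$, let $S$ denote the (random) set of weak rows, $k=|S|$, and write $F=\calV\setminus M_0(\calU)$ for the $s$ free columns. The central structural step is to show that, on the event that $M_0$ is the maximum-weight matching and the weak set equals $S$, all three of the following occur: (i) every strong planted weight exceeds the threshold, $W_{u,M_0(u)}>\gamma\mu$ for $u\notin S$; (ii) every one of the $k(k+s)$ entries of the submatrix on rows $S$ and columns $M_0(S)\cup F$ is at most $\gamma\mu$; and (iii) within that submatrix $M_0|_S$ is the maximum-weight sub-matching among the $\frac{(k+s)!}{s!}$ injections $S\to M_0(S)\cup F$. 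Item (iii) is immediate, since any better sub-matching, extended by $M_0$ on the strong rows, would beat $M_0$ globally; (i) and the diagonal part of (ii) are just the definition of $S$; the free-column block of (ii) holds because replacing $(u,M_0(u))$ by $(u,v)$ for weak $u$ and $v\in F$ is an admissible single-edge move that must not improve the objective; and the remaining weak-partner block of (ii) is obtained by a path-swap argument that reroutes a weak row through a weak partner's column while pushing that partner onto a free column.

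To turn this into the stated bound, write $\Pr[M_0\text{ ML}]=\sum_{S}\Pr[M_0\text{ ML},\text{ weak set}=S]$ and bound each summand by $\Pr[\text{(i)}\cap\text{(ii)}\cap\text{(iii)}]$. Event (i) depends only on the strong planted weights and is independent of (ii)$\cap$(iii), which depends only on the submatrix; its probability is $\Pr_{\calN(\mu,1)}[W>\gamma\mu]^{\,n-k}=(1-Q((1-\gamma)\mu))^{n-k}$. For (ii)$\cap$(iii), change measure by tilting each of the $k$ weak planted weights from $\calN(\mu,1)$ down to $\calN(0,1)$: the Radon--Nikodym derivative is $\prod_{u\in S}e^{\mu W_{u,M_0(u)}-\mu^2/2}$, which on event (ii) (all weak planted weights $\le\gamma\mu$) is at most $e^{k(\gamma\mu^2-\mu^2/2)}=e^{k(2\gamma-1)\mu^2/2}$. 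Under the tilted law all $k(k+s)$ submatrix entries are i.i.d.\ $\calN(0,1)$, so the probability of (ii) is $(1-Q(\gamma\mu))^{k(k+s)}$, and by exchangeability of these entries under row/column permutations the conditional probability of (iii) given (ii) equals $\frac{s!}{(k+s)!}$. Multiplying the three factors, summing over the $\binom nk$ weak sets of each size $k$ and over $k$, and optimizing over $\gamma$ gives the claimed inequality.

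The main obstacle is the weak-partner block of event (ii): unlike the free-column block, the relevant rerouting is not a single-edge swap, and whether optimality of $M_0$ really forces those off-diagonal entries below $\gamma\mu$ is sensitive to the magnitude of the free-column maxima, hence to $\gamma$. Making this step go through for the range of $\gamma$ for which the final optimization is useful is where the bulk of the work lies, and it is the reason the result is phrased as an infimum over $\gamma$ rather than for a single canonical threshold.
\end{proof*}
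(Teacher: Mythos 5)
There is a genuine gap, and it sits exactly where you flag it: the weak-partner block of your event (ii) is not implied by optimality of \(M_0\), so the inclusion \(\{M_0 \text{ is the ML matching, weak set} = S\} \subseteq \text{(i)}\cap\text{(ii)}\cap\text{(iii)}\) fails. Optimality of a \emph{full} matching only yields inequalities on sums of weights along alternating paths and cycles, never entrywise thresholds. Your path-swap through a free column \(v\) gives \(W_{u,M_0(u')} \le W_{u,M_0(u)} + W_{u',M_0(u')} - W_{u',v} \le 2\gamma\mu - W_{u',v}\), which is useless when the free-column entries are small (they can be arbitrarily negative), and is unavailable altogether when \(s=0\). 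Concretely, with two weak rows \(u,u'\) one can have \(W_{u,M_0(u')}\) huge and \(W_{u',M_0(u)}\) very negative, so every alternating swap is unprofitable and \(M_0\) is still ML, yet (ii) is violated. Since the decomposition over weak sets \(S\) with these entrywise events is the step that produces the product \((1-Q((1-\gamma)\mu))^{n-k}(1-Q(\gamma\mu))^{k(k+s)}\), the bound does not follow from what you have written; the remaining "bulk of the work" is not a technicality but the missing core of the argument. (Your tilting computation giving \(e^{k(2\gamma-1)\mu^2/2}\) and the exchangeability count \(\frac{s!}{(k+s)!}\) are fine, conditional on (ii) holding.)

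The paper's proof supplies precisely the idea you are missing: instead of analyzing when the true full matching is ML, it bounds \(\Pr[\hat M = M]\) by \(\int \max_m \frac{s!}{(n+s)!} f(g-\mu m)\,dg\) and then \emph{relaxes} the maximization to all partial matchings, penalizing a partial matching of size \(n-k\) by \(e^{k\theta}\) with \(\theta=\frac{2\gamma-1}{2}\mu^2\). The observation space is then partitioned according to which penalized partial matching attains the maximum, and the region attributed to a given \(m\) is contained in the event defined by comparisons with its \emph{single-edge neighbors} only: each present edge must have \(g_{i,j} \ge \gamma\mu\) and each addable entry must have \(g_{i,j} \le \gamma\mu\). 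These are genuinely entrywise, independent conditions, which is what makes the product bound valid; the penalty \(\theta\) (equivalently \(\gamma\)) is what converts global optimality into local, one-edge comparisons. If you want to salvage your route, you would need to introduce this penalized family (or an equivalent surrogate estimator) rather than conditioning on the weak set of the true matching under full-ML optimality.
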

\begin{proof}
  For any estimator $\hat{M}$,
  \begin{equation}
    \Pr[\hat{M} = M] \leq \int_{\mathbb{R}^{n \times n}} \max_{m} \frac{s!}{(n+s)!}f(g-\mu m) dg \label{map-accuracy}
  \end{equation}
  where $f(g) = (2 \pi)^{\frac{n^2}{2}} \exp(-\frac{1}{2}\sum_{i,j} g_{i,j}^2)$, the density function of a $n^2$-dimensional standard Gaussian vector, and the maximization is over $n \times (n+s)$ matching matrices.

  Write $|m|$ for $\sum_{i,j} m_{i,j}$, which is the number of ones in $m$ and the number of pairs in the matching.
  Now we upper bound \eqref{map-accuracy} by
  \begin{equation}
    \int_{\mathbb{R}^{n \times n}} \max_{m} \frac{e^{(n-|m|)\theta}s!}{(n+s)!}f(g-\mu m) dg \label{map-accuracy-two}
  \end{equation}
  where the maximization is now over all partial and full matching matrices.

  Let $m'$ be a matching such that $m' \geq m$ and $|m'| = |m|+1$.
  Let $(i,j)$ be the entry where $m'_{i,j} = 1$ and $m_{i,j} = 0$.
  Then $f(x - \mu m') \geq f(x - \mu m')$ when
  \begin{align*}
    e^{(n-|m|-1)\theta}\exp(-(x_{i,j} - \mu)^2/2) &\geq e^{(n-|m|)\theta} \exp(-x_{i,j}^2/2)\\
    -2\theta - (x_{i,j} - \mu)^2 &\geq -x_{i,j}^2\\
    -2 \theta + 2 \mu x_{i,j} - \mu^2 &\geq 0\\
    x_{i,j} &\geq \frac{\theta}{\mu} + \frac{\mu}{2} = \mu \pth{\frac{\theta}{\mu^2}+\frac{1}{2}} 
  \end{align*}
  Let $\gamma = \frac{\theta}{\mu^2}+\frac{1}{2}$: the location of the boundary between the regions covered by $m$ and $m'$ as a fraction of the distance between the means.
  We will select $\gamma$ and use $\theta = \frac{2\gamma-1}{2}\mu^2$.
  
  A partial matching $m$ of size $|m| = n-k$ has $n-k$ smaller neighboring matchings and $k(k+s)$ larger neighbors.
  If $m_{i,j} = 1$, the gaussian centered at the neighboring matching with $m'_{i,j}=0$ has higher density in the region $x_{i,j} \leq \gamma\mu$
  The measure of the density centered at $m$ in that tail is $Q((1-\gamma)\mu)$.
  If $m_{i,j} = 0$ and there is a neighboring matching $m'$ with $m'_{i,j} = 1$, the gaussian centered at $m'$ has higher density in the region $x_{i,j} \geq \gamma\mu$
  The measure in that tail is $Q(\gamma\mu)$.
  These inequalities each involve one entry of the matrix, which are independent random variables.
  The measure not part of some tail is $(1-Q((1-\gamma)\mu))^{n-k}(1-Q(\gamma\mu))^{k(k+s)}\frac{e^{k \theta}s!}{(n+s)!}$.

  There are $\binom{n}{k}\binom{n+s}{k+s}(n-k)! = \binom{n}{k}\frac{(n+s)!}{(k+s)!}$ partial matchings of size $n-k$.
  Thus \eqref{map-accuracy-two} is at most
  \begin{align*}
  &\frac{s!}{(n+s)!}\sum_k \binom{n}{k} \frac{(n+s)!}{(k+s)!} (1-Q((1-\gamma)\mu))^{n-k}(1-Q(\gamma\mu))^{k(k+s)}e^{k \theta}\\
    &= \sum_k \binom{n}{k} \frac{s!}{(k+s)!} (1-Q((1-\gamma)\mu))^{n-k}(1-Q(\gamma\mu))^{k(k+s)} e^{k \theta}
  \end{align*}
\end{proof}
\begin{lemma}
\label{lemma:ellipse-coverse}
  Let $\log n \leq \mu^2/2 \leq 2 \log n - \omega\pth{\log \log n}$.
  For any estimator $\hat{M}$,
  \[
    \frac{\log \log \Pr[\hat{M} = M]^{-1}}{\log n} \geq \frac{1}{2} + \sqrt{\frac{\mu^2}{4\log n}\pth{1 - O\pth{\frac{\log \log n}{\log n}} - \frac{\mu^2}{4\log n}}} - O\pth{\frac{\log \log n}{\log n}}.
  \]
\end{lemma}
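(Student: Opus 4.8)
The plan is to start from the general bound of Lemma~\ref{lemma:covering} and choose the free parameter $\gamma$ just inside a feasibility interval dictated by the hypothesis $\mu^2/2<2\log n$. First I would observe that the bound of Lemma~\ref{lemma:covering} only decreases as $s$ grows, since $\frac{s!}{(k+s)!}\le\frac1{k!}$ and $(1-Q(\gamma\mu))^{k(k+s)}\le(1-Q(\gamma\mu))^{k^2}$ for every $s\ge0$; hence it suffices to bound the $s=0$ specialization
\[
\Pr[\hat M=M]\le\inf_{0\le\gamma\le1}\sum_{k\ge0}\binom nk\frac1{k!}\,\bigl(1-Q((1-\gamma)\mu)\bigr)^{n-k}\bigl(1-Q(\gamma\mu)\bigr)^{k^2}e^{k(2\gamma-1)\mu^2/2}.
\]
Writing $\zeta=\mu^2/2=x\log n$ with $1\le x\le 2-\omega(\log\log n/\log n)$, I would treat the $k=0$ term and the tail $k\ge1$ separately.

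The $k=0$ term is $(1-Q((1-\gamma)\mu))^{n}\le\exp(-nQ((1-\gamma)\mu))$, and the Gaussian lower tail bound $Q(a)\ge\frac{a}{\sqrt{2\pi}(1+a^2)}e^{-a^2/2}$ at $a=(1-\gamma)\mu=(1-\gamma)\sqrt{2x\log n}$ gives $nQ((1-\gamma)\mu)\ge n^{\,1-(1-\gamma)^2x}/\operatorname{polylog}(n)$, so this term is at most $\exp(-n^{\,1-(1-\gamma)^2x}/\operatorname{polylog}(n))$. For the tail I would use $\binom nk\frac1{k!}\le\frac{n^k}{(k!)^2}$, $(1-Q(\gamma\mu))^{k^2}\le e^{-k^2Q(\gamma\mu)}$, and the modified-Bessel estimate $\sum_{k\ge0}\frac{A^k}{(k!)^2}=I_0(2\sqrt A)\le e^{2\sqrt A}$ with $A=ne^{(2\gamma-1)\zeta}=n^{1+(2\gamma-1)x}$, splitting the sum at $k=n/2$: for $k\le n/2$ the factor $(1-Q((1-\gamma)\mu))^{n-k}$ yields $e^{-\frac n2 Q((1-\gamma)\mu)}$, and for $k>n/2$ the factor $e^{-k^2Q(\gamma\mu)}$ yields $e^{-\frac{n^2}4 Q(\gamma\mu)}$ with $\frac{n^2}4 Q(\gamma\mu)\ge n^{\,2-\gamma^2x}/\operatorname{polylog}(n)$. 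Thus the tail is at most $e^{2\sqrt A-\frac n2 Q((1-\gamma)\mu)}+e^{2\sqrt A-\frac{n^2}4 Q(\gamma\mu)}$, and both exponents tend to $-\infty$ provided $2\sqrt A=2n^{(1+(2\gamma-1)x)/2}$ is dominated, i.e.\ provided $\tfrac12(1+(2\gamma-1)x)<1-(1-\gamma)^2x$. This rearranges to $x(2\gamma^2-2\gamma+1)<1$; since $2\gamma^2-2\gamma+1=2(\gamma-\tfrac12)^2+\tfrac12\ge\tfrac12$, such $\gamma$ exist precisely when $x<2$ — which is the hypothesis — with feasible set $\gamma\in\bigl(\tfrac12-\tfrac12\sqrt{2/x-1},\ \tfrac12+\tfrac12\sqrt{2/x-1}\bigr)$; the companion condition $\tfrac12(1+(2\gamma-1)x)<2-\gamma^2x$ needed for the $k>n/2$ part is implied on this interval because there $x(2\gamma-1)\le\sqrt{x(2-x)}\le1$.

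On this interval both the $k=0$ term and the tail are $\exp(-n^{\,1-(1-\gamma)^2x}/\operatorname{polylog}(n))$, so $\log\log\Pr[\hat M=M]^{-1}\ge(1-(1-\gamma)^2x)\log n-O(\log\log n)$. Since $1-(1-\gamma)^2x$ is increasing in $\gamma$ on the feasible interval, I take $\gamma$ just below the right endpoint $\tfrac12+\tfrac12\sqrt{2/x-1}$, for which a direct computation gives $(1-\gamma)^2x=\tfrac12-\tfrac12\sqrt{x(2-x)}$, hence $1-(1-\gamma)^2x=\tfrac12+\tfrac12\sqrt{x(2-x)}$. Substituting $x/2=\mu^2/(4\log n)$ turns $\tfrac12\sqrt{x(2-x)}$ into $\sqrt{\tfrac{\mu^2}{4\log n}\bigl(1-\tfrac{\mu^2}{4\log n}\bigr)}$, which is the claimed leading term; the $O(\log\log n/\log n)$ corrections — including the one inside the square root — absorb the polylogarithmic prefactors of the Gaussian tails, the margin by which $\gamma$ must be kept strictly inside the feasible interval, and the $\omega(\log\log n)$ slack in the hypothesis.

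The step I expect to be the main obstacle is making the tail estimate uniform in $k$ near the two endpoints of the admissible range of $x$, namely $x\to1^+$ and $x\to2^-$, where the feasible $\gamma$-interval degenerates (right endpoint $\to1$, and the whole interval $\to\{\tfrac12\}$, respectively). There one cannot take $\gamma$ at the endpoint but must leave a margin of order $\sqrt{\log\log n/\log n}$, and one must check that both the $k\le n/2$ and the $k>n/2$ pieces of the sum — where the crude bound $\sum_k A^k/(k!)^2\le e^{2\sqrt A}$ is wasteful — remain suppressed; verifying that all the resulting lower-order losses collapse into the single $O(\log\log n/\log n)$ asserted in the lemma is the bookkeeping-heavy part.
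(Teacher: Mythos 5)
Your proposal is correct, and its skeleton is the paper's: start from Lemma~\ref{lemma:covering}, note that the bound only weakens when $\frac{s!}{(k+s)!}$ and $(1-Q(\gamma\mu))^{k(k+s)}$ are replaced by their $s=0$ analogues, drive the whole expression down to $\exp\pth{-\Omega(nQ((1-\gamma)\mu))}$, and then push $\gamma$ to the larger root of $x(\gamma^2+(1-\gamma)^2)=1$ (with $x=\mu^2/2\log n$), which is exactly where the paper's condition $\log n \geq (\gamma^2+(1-\gamma)^2)\mu^2/2$ becomes tight and which yields the same value $1-(1-\gamma)^2x=\tfrac12+\tfrac12\sqrt{x(2-x)}$. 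The only genuine difference is the device used to control the sum over $k$: the paper linearizes $k(k+s)\geq k^2\geq 2\ell k-\ell^2$ with an auxiliary parameter $\ell$, collapses the sum to the closed form $e^{\ell^2 x}(1-y+e^{\theta-2\ell x})^n$, and tunes $\ell$ so the extra term is $\leq y/3$; you instead keep the quadratic $e^{-k^2 x}$, split at $k=n/2$, and use $\binom{n}{k}\frac{1}{k!}\leq \frac{n^k}{(k!)^2}$ together with $\sum_k A^k/(k!)^2\leq e^{2\sqrt{A}}$, which leads to the same feasibility condition $x(2\gamma^2-2\gamma+1)<1$ and, as you verify, an automatically satisfied companion condition for the $k>n/2$ block. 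Your flagged bookkeeping concern does resolve: taking $\gamma$ a margin $\delta$ inside the right endpoint costs roughly $2x\delta(1-\gamma)$ in the final exponent while buying an exponent gap of roughly $2x t^*\delta$ over $2\sqrt{A}$ (with $t^*=\tfrac12\sqrt{(2-x)/x}$), so the required $\delta\asymp \frac{\log\log n}{\sqrt{x(2-x)}\,\log n}$ produces a loss of order $\frac{\log\log n}{\sqrt{b(1-b)}\,\log n}$ with $b=\mu^2/4\log n$; this is exactly the slack that the $O\pth{\frac{\log\log n}{\log n}}$ placed \emph{inside} the square root of the lemma's statement provides (it contributes $\approx \frac{b\,\epsilon}{\sqrt{b(1-b)}}$ with $b\geq 1/2$), so all lower-order losses are absorbed as claimed, including near $x\to 2^-$ where the hypothesis $\mu^2/2\leq 2\log n-\omega(\log\log n)$ keeps $t^*$ large enough.
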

\begin{proof}
  Let $x = Q(\gamma\mu)$, $y = Q((1-\gamma)\mu)$, and $\theta = \frac{2\gamma-1}{2}\mu^2$.
  From Lemma~\ref{lemma:covering}, for any $0 \leq \gamma \leq 1$, 
  \begin{align*}
  \Pr[\hat{M} = M]
  &\leq \sum_k \binom{n}{k} \frac{s!}{(k+s)!} (1-y)^{n-k}(1-x)^{k(k+s)}e^{k \theta}\\
  &\leq \sum_k \binom{n}{k} (1-y)^{n-k}e^{-k^2 x}\frac{e^{k \theta}}{k!}\\
  &\leq \sum_k \binom{n}{k} (1-y)^{n-k}e^{-(2 \ell k - \ell^2)x}e^{k \theta}\\
  & = e^{\ell^2 x}(1-y+e^{\theta-2\ell x})^n,
\end{align*}
where we used $(k+s)! \geq s!$ and $k(k+s) \geq k^2 \geq 2\ell k - \ell^2$, which holds for any $\ell$.
Suppose that can find values of $\theta$ and $\ell$ such that $e^{\theta-2\ell x} \leq y/3$ and $\ell^2 x \leq ny/3$.
Then
\begin{equation*}
  e^{\ell^2}(1-y+e^{\theta-2\ell x})^n \leq e^{ny/3}(1-2y/3)^n \leq e^{-ny/3}
\end{equation*}
and $\log \log \Pr[\hat{M}=M]^{-1} \geq \log(ny/3)$.

Let $q(z) = \frac{\exp(-z^2/2)}{Q(z)}$, so $1 \leq q(z) \leq O(z)$.
Then $x = Q(\gamma\mu) = \exp(-\gamma^2\mu^2/2)/q(\gamma\mu)$ and $y = Q((1-\gamma)\mu) = \exp(-(1-\gamma)^2\mu^2/2)/q((1-\gamma)\mu)$.

Now we find the value of $\ell$ that satisfies $e^{\theta-2\ell x} = y/3$:
\begin{align*}
  2 \ell x
  &= \theta - \log(y/3)\\
  &= \frac{(2\gamma-1)\mu^2}{2} + \log(3 q((1-\gamma) \mu)) + \frac{(1-\gamma)^2\mu^2}{2}\\
  &= \frac{\gamma^2 \mu^2}{2} + \log(3 q((1-\gamma) \mu)).
\end{align*}

Now we find a condition on $\gamma$ that ensures $(\ell x)^2 \leq nxy/3$. 
Expanding the definitions of $x$ and $y$, we have
\begin{align*}
  \frac{nxy}{3}
  &= \frac{1}{3 q(\gamma \mu) q((1-\gamma)\mu)}\exp\pth{\log n - \frac{\gamma^2\mu^2}{2}-\frac{(1-\gamma)^2\mu^2}{2}}
\end{align*}
so we get the condition
\[
  \log n
  \geq (\gamma^2 + (1-\gamma)^2)\frac{\mu^2}{2} + 2 \log\pth{\frac{\gamma^2 \mu^2}{4} + \frac{\log(3 q((1-\gamma) \mu))}{2}} + \log (3 q(\gamma \mu) q((1-\gamma)\mu)).
\]
Using $0 \leq \gamma \leq 1$ and picking the worst possible $\gamma$ in the lower order terms, we observe that a stronger condition is
\begin{align*}
  \log n
  &\geq (\gamma^2 + (1-\gamma)^2)\frac{\mu^2}{2} + 2 \log\pth{\frac{\mu^2}{4} + \frac{\log(3 q(\mu))}{2}} + \log (3 q(\mu)^2)\\
  &= (\gamma^2 + (1-\gamma)^2)\frac{\mu^2}{2} + \epsilon \log n.  
\end{align*}
Using $\mu^2 \leq 4 \log n$, we see that $0 \leq \epsilon \leq O\pth{\frac{\log \log n}{\log n}}$.
Because our final upper bound is $e^{-ny/3}$, we want to maximize $y$ and thus maximize $\gamma$.
Let 
$a  = \frac{\mu^2}{4 (1-\epsilon) \log n}$.
The larger root of $\frac{1}{2a} = \gamma^2 + (1-\gamma)^2$ is $\gamma = \frac{1+ \sqrt{\frac{1}{a}-1}}{2}$, so we need $a \leq 1$  to get a real root and $a \geq \frac{1}{2}$ to get $\gamma \leq 1$.
Then $(1-\gamma)^2 = \frac{\frac{1}{2a} - \sqrt{\frac{1}{a}-1}}{2}$, and $2a(1-\gamma)^2 = \frac{1}{2} - \sqrt{a-a^2}$, and
When $\mu^2 \leq 4 \log n - \omega(\log \log n)$, $a \geq 1/2$ for sufficiently large $n$.

Let

\begin{align*}
\frac{\log(ny/3)}{\log n}
  &= \frac{\log n + \log Q(\gamma\mu)- \log 3}{\log n}\\
  &= 1 - \frac{(1-\gamma)^2\mu^2}{2 \log n}- \frac{\log q((1-\gamma)\mu) - \log 3}{\log n}\\
  &= 1 - 2(1-\epsilon)a(1-\gamma)^2 - O\pth{\frac{\log \log n}{\log n}}\\
  &= 1 - (1-\epsilon)\pth{\frac{1}{2} - \sqrt{a(1-a)}}- O\pth{\frac{\log \log n}{\log n}}\\
  &= \frac{1}{2} + \frac{\epsilon}{2} + \sqrt{(1-\epsilon)a(1-\epsilon - (1-\epsilon)a)}- O\pth{\frac{\log \log n}{\log n}}\\
  &\geq \frac{1}{2} + \sqrt{\frac{\mu^2}{\log n}\pth{1 - O\pth{\frac{\log \log n}{\log n}} - \frac{\mu^2}{\log n}}} - O\pth{\frac{\log \log n}{\log n}}.
\end{align*}


\end{proof}

\begin{lemma}
\label{lemma:parabola-converse}
  Let $\alpha \log n + \omega(\log \log n) \leq \mu^2/2$.
  For any estimator $\hat{M}$,
  \[
    \frac{\log \log \Pr[\hat{M} = M]^{-1}}{\log n} \geq 1 - \pth{\sqrt{\frac{\mu^2}{2 \log n}} - \sqrt{\alpha-O\pth{\frac{\log \log n}{\log n}}}}^2 - O\pth{\frac{\log \log n}{\log n}}.
  \]
\end{lemma}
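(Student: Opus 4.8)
The plan is to reuse the covering bound of Lemma~\ref{lemma:covering} and argue exactly as in the proof of Lemma~\ref{lemma:ellipse-coverse}, but to exploit the term \((1-Q(\gamma\mu))^{k(k+s)}\) through the \emph{linear} lower bound \(k(k+s)\ge ks\) instead of the quadratic bound \(k^2 \ge 2\ell k-\ell^2\) used in the elliptic case. This is the natural choice when \(s=n^\alpha\) is large, and it has the technical bonus that no \(e^{\ell^2 x}\) prefactor appears, so only a single inequality on \(\gamma\) will have to be checked.

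Concretely, with \(x=Q(\gamma\mu)\), \(y=Q((1-\gamma)\mu)\), \(\theta=\frac{2\gamma-1}{2}\mu^2\), I would bound \(\frac{s!}{(k+s)!}\le s^{-k}\) and \((1-x)^{k(k+s)}\le(1-x)^{ks}\le e^{-ksx}\) inside Lemma~\ref{lemma:covering}; the sum over \(k\) then collapses to a genuine binomial,
\[
\Pr[\hat M=M]\ \le\ \sum_k \binom nk (1-y)^{n-k}\Bpth{\tfrac{e^{\theta-sx}}{s}}^{k}\ =\ \Bpth{1-y+\tfrac{e^{\theta-sx}}{s}}^{n}.
\]
Hence, if \(\gamma\) can be chosen so that \(e^{\theta-sx}/s\le y/2\), we get \(\Pr[\hat M=M]\le e^{-ny/2}\), i.e. \(\log\log\Pr[\hat M=M]^{-1}\ge\log(ny/2)\).

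For the choice of \(\gamma\), write \(x_\ast=\frac{\mu^2}{2\log n}\) and use \(Q(z)=e^{-z^2/2}/q(z)\) with \(1\le q(z)\le O(z)\) as in Lemma~\ref{lemma:ellipse-coverse}. Taking logarithms in \(e^{\theta-sx}/s\le y/2\) and using the identity \((2\gamma-1)+(1-\gamma)^2=\gamma^2\), the constraint becomes
\[
\gamma^2\tfrac{\mu^2}{2}+\log q((1-\gamma)\mu)+\log 2\ \le\ \alpha\log n+sx,
\]
and since \(sx\ge 0\) it suffices to have \(\gamma^2\frac{\mu^2}{2}\le\alpha\log n-O(\log\log n)\). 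I would therefore set \(\gamma\) with \(\gamma^2\frac{\mu^2}{2}=\alpha\log n-O(\log\log n)\), equivalently \(\gamma^2 x_\ast=\alpha-O(\log\log n/\log n)\); the hypothesis \(\alpha\log n+\omega(\log\log n)\le\mu^2/2\) forces \(\alpha<x_\ast\), hence \(\gamma\in[0,1)\), and it leaves the slack needed to absorb the lower-order terms. Substituting this \(\gamma\) into \(\log(ny/2)=\log n-(1-\gamma)^2\frac{\mu^2}{2}-\log q((1-\gamma)\mu)-\log 2\) and using \((1-\gamma)^2 x_\ast=(\sqrt{x_\ast}-\sqrt{\gamma^2 x_\ast})^2\) yields
\[
\frac{\log\log\Pr[\hat M=M]^{-1}}{\log n}\ \ge\ 1-\Bpth{\sqrt{x_\ast}-\sqrt{\alpha-O(\tfrac{\log\log n}{\log n})}}^{2}-O\Bpth{\tfrac{\log\log n}{\log n}},
\]
which is the claim after substituting \(x_\ast=\mu^2/(2\log n)\).

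The only genuine work is bookkeeping: tracking the \(q(\cdot)\) factors and the additive \(O(\log\log n)\) terms so that they collapse into the stated \(O(\log\log n/\log n)\) after dividing by \(\log n\) (this is clean in the regime \(\mu^2=\Theta(\log n)\) where the bound is non-trivial, since then \(\gamma\mu,(1-\gamma)\mu=O(\sqrt{\log n})\)), and checking that the chosen \(\gamma\) lies in \([0,1]\) and satisfies the single displayed inequality under the hypothesis. I expect this to be routine; the conceptual content — using \(k(k+s)\ge ks\) together with \(\frac{s!}{(k+s)!}\le s^{-k}\) so that the \(k\)-sum telescopes into \((1-y+b)^n\) with \(b\) small, then optimizing \(\gamma\) against one constraint — is a direct parallel of the elliptic argument and carries no surprises.
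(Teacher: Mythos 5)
Your proposal is correct and follows essentially the same route as the paper's proof: the same covering bound, the same linearization \(k(k+s)\geq ks\), the same choice \(\gamma^2\mu^2/2 = \alpha\log n - O(\log\log n)\), and the same conclusion via \(\log\log\Pr[\hat M = M]^{-1}\geq \log(ny/\mathrm{const})\). The only difference is cosmetic: you extract \(\frac{s!}{(k+s)!}\leq s^{-k}\) and then drop \(sx\geq 0\), whereas the paper bounds \(\frac{s!}{(k+s)!}\leq 1\) and instead checks that \(sx\) itself exceeds \(\theta-\log(y/2)\); both reduce to the same \(O(\log\log n)\) slack condition on \(\gamma\).
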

\begin{proof}
  Let $x = Q(\gamma\mu)$, $y = Q((1-\gamma)\mu)$, and $\theta = \frac{2\gamma-1}{2}\mu^2$.
  From Lemma~\ref{lemma:covering}, for any $0 \leq \gamma \leq 1$, 
  \begin{align*}
  \Pr[\hat{M} = M]
  &\leq \sum_k \binom{n}{k} \frac{s!}{(k+s)!} (1-y)^{n-k}(1-x)^{k(k+s)}e^{k \theta}\\
  &\sum_k \binom{n}{k} \frac{(n+s)!}{(k+s)!} (1-Q((1-\gamma)\mu))^{n-k}(1-Q(\gamma\mu))^{k(k+s)}\frac{e^{k \theta}s!}{(n+s)!}\\
  &= \sum_k \binom{n}{k} (1-y)^{n-k}(1-x)^{k(k+s)}\frac{e^{k \theta}s!}{(k+s)!}\\
  &\leq \sum_k \binom{n}{k} (1-y)^{n-k}e^{-k(k+s) x}\frac{e^{k \theta}s!}{(k+s)!}\\
  &\leq \sum_k \binom{n}{k} (1-y)^{n-k}e^{-ksx}e^{k \theta}\\
  & = (1-y+e^{\theta-s x})^n,
\end{align*}
where we used $(k+s)! \geq s!$ and $k^2 \geq 0$.
Suppose that can find $\theta$ such that $e^{\theta-s x} \leq y/2$.
Then
\begin{equation*}
  (1-y+e^{\theta-s x}) \leq (1-y/2)^n \leq e^{-ny/2}.
\end{equation*}
and $\log \log \Pr[\hat{M}=M]^{-1} \geq \log(ny/2)$.

Let $q(z) = \frac{\exp(-z^2/2)}{Q(z)}$, so $1 \leq q(z) \leq O(z)$.
Then $x = Q(\gamma\mu) = \exp(-\gamma^2\mu^2/2)/q(\gamma\mu)$ and $y = Q((1-\gamma)\mu) = \exp(-(1-\gamma)^2\mu^2/2)/q((1-\gamma)\mu)$.

Now we find the value of $\theta$ that satisfies $e^{\theta-s x} \leq y/2$:
\begin{align*}
  s x
  &\geq \theta - \log(y/2)\\
  &= \frac{(2\gamma-1)\mu^2}{2} + \log(2 q((1-\gamma) \mu)) + \frac{(1-\gamma)^2\mu^2}{2}\\
  &= \frac{\gamma^2 \mu^2}{2} + \log(3 q((1-\gamma) \mu))\\
  \alpha \log n - \gamma^2 \mu^2/2 - \log(q(\gamma \mu))
  &\geq \log\pth{\frac{\gamma^2 \mu^2}{2} + \log(3 q((1-\gamma) \mu))}
\end{align*}
A stronger condition is
\begin{align*}
    \alpha \log n - \gamma^2 \mu^2/2
  &\geq \log(q(\mu))+\log\pth{\frac{\mu^2}{2} + \log(3 q(\mu))}\\ 
  &= \epsilon \log n.
\end{align*}
We have $0 \leq \epsilon \leq O\pth{\frac{\log \log n}{\log n}}$.
Pick $\gamma = \sqrt{\frac{2 (\alpha - \epsilon) \log n}{\mu^2}}$.
To ensure $\gamma \leq 1$, we need $\mu^2/2 \geq \alpha \log n + \omega(\log \log n)$.
Then
\begin{align*}
  \frac{\log(ny/2)}{\log n}
  &= \frac{\log n + \log Q(\gamma\mu)- \log 2}{\log n}\\
  &= 1 - \frac{(1-\gamma)^2\mu^2}{2 \log n}- \frac{\log q((1-\gamma)\mu) - \log 2}{\log n}\\
  &= 1 - \pth{\sqrt{\frac{\mu^2}{2 \log n}} - \sqrt{\frac{\gamma^2\mu^2}{2 \log n}}}^2 - \frac{\log q((1-\gamma)\mu) - \log 2}{\log n}\\
    &= 1 - \pth{\sqrt{\frac{\mu^2}{2 \log n}} - \sqrt{\alpha-\epsilon}}^2 - \frac{\log q((1-\gamma)\mu) - \log 2}{\log n}\\
  &= 1 - \pth{\sqrt{\frac{\mu^2}{2 \log n}} - \sqrt{\alpha-O\pth{\frac{\log \log n}{\log n}}}}^2 - O\pth{\frac{\log \log n}{\log n}}.
\end{align*}
\end{proof}

\begin{theorem}
    Let \(n = |M| = |\calU|\) and \(\alpha = \frac{\log\pth{|\calV|-n}}{\log n}\) if \(|\calV|>n\). Each of the following conditions guarantee that any estimator makes at least $\Omega(n^{1-\beta})$ errors with probability $1-o(1)$:
    \bcomment{
    \begin{tabular}{|r l c|}
        \hline
        Necessary cond. & Range of \(\beta\) & Boundary\\
        \hline
        \(\zeta \leq \pth{1+2\sqrt{\beta(1-\beta)}}\log n - \omega(\log \log n)\) & \(0<\beta\leq\min\{1-\alpha,1/2\}\) & elliptic\\
        \(\zeta \leq \pth{\sqrt{\alpha}+\sqrt{\beta}}^2 \log n - \omega(\log\log n)\) & \(1-\alpha<\beta\) & parabolic\\
        \(\zeta \leq 2 \log n - \omega(\log \log n) \) & \(1/2<\beta\) & vertical\\
        \hline
    \end{tabular}
    }
    \begin{tabular}{l c}
      \(\zeta \leq \pth{1+2\sqrt{\beta(1-\beta)}}\log n - \omega(\log \log n)\) & \(0 < \beta \leq \frac{1}{2}\)\\
      \(\zeta \leq \pth{\sqrt{\alpha}+\sqrt{\beta}}^2 \log n - \omega(\log\log n)\) & \(0 < \beta \leq 1\)\\
      \(\zeta \leq 2 \log n - \omega(\log \log n) \) & \(\frac{1}{2} \leq \beta \leq 1\).
    \end{tabular}
\end{theorem}
\begin{proof}
Assume \(\zeta \leq \pth{1+2\sqrt{\beta(1-\beta)}}\log n - \omega(\log \log n)\).
Because $\beta \leq \frac{1}{2}$, \(\zeta \leq 2 \log n - \omega(\log \log n)\), and from Lemma~\ref{lemma:ellipse-coverse} \[
\frac{\log \log \Pr[\hat{M} \neq M]^{-1}}{\log n} \geq \frac{1}{2} + \sqrt{\frac{\zeta}{2\log n}\pth{1 - O\pth{\frac{\log \log n}{\log n}} - \frac{\zeta}{2\log n}}} - O\pth{\frac{\log \log n}{\log n}}.
\]
If \(\beta \geq \omega\pth{\frac{\log \log n}{\log n}}\), then
\begin{align*}
\log \log \Pr[\hat{M} \neq M]^{-1} 
&\geq \frac{\log n}{2} + \frac{1}{2}\sqrt{\zeta\pth{2 \log n - O\pth{\log \log n} - \zeta}} - O\pth{\log \log n}\\
&\geq (1-\beta)\log n + \omega(\log \log n)
\end{align*}
and the first part of the theorem follows from Lemma~\ref{lemma:distance-converse}.

If \(\zeta \leq 2 \log n - \omega(\log \log n)\), 
\begin{align*}
\log \log \Pr[\hat{M} \neq M]^{-1} 
&\geq \frac{\log n}{2} + \frac{1}{2}\sqrt{\zeta\pth{2 \log n - O\pth{\log \log n} - \zeta}} - O\pth{\log \log n}\\
&\geq \log n + \omega(\log \log n)
\end{align*}
and the third part of the theorem again follows from Lemma~\ref{lemma:distance-converse}.

Similarly, the second part of the theorem follows from Lemma~\ref{lemma:parabola-converse} and Lemma~\ref{lemma:distance-converse}.
\end{proof}

\section{Combinatorial analysis}

\subsection{Elementary misalignments between mappings}
\label{subsec:elementary}

Let \(\mmat,\mmat'\in\{0,1\}^{\calU\times\calV}\) be the matrix encodings of mappings \(m,m'\) and and let \(\Gmat\in\bbR^{\calU\times\calV}\) denote the score matrix for databases \(\Avec,\Bvec\). As shown in \hyperref[sec:algo]{Section \ref*{sec:algo}}, comparing the likelihoods of \(\Avec,\Bvec\) being generated by \(m\) versus \(m'\) is equivalent to comparing the values of \(\ip{\Gmat,\mmat}\) and \(\ip{\Gmat,\mmat'}\). 

Assume \(\mmat-\mmat'\) can be written in block diagonal form \(\crocMat{\Delta\mmat_1}{\zeromat}{\zeromat}{\Delta\mmat_2}\). Let \(\mmat_1' \define \mmat - \crocMat{\Delta\mmat_1}{\zeromat}{\zeromat}{\zeromat}\) and \(\mmat_2' \define \mmat - \crocMat{\zeromat}{\zeromat}{\zeromat}{\Delta\mmat_2}\). \(m_1',m_2'\) are two valid mappings that in some sense partition the disagreement between \(m\) and \(m'\) into two.
\begin{align*}
    \ip{\Gmat,\mmat} < \ip{\Gmat,\mmat'} \iff \ip{\Gmat,\mmat-\mmat'} &< 0\\
    \iff \ip{\Gmat,\crocMat{\Delta\mmat_1}{\zeromat}{\zeromat}{\zeromat}} + \ip{\Gmat,\crocMat{\zeromat}{\zeromat}{\zeromat}{\Delta\mmat_2}} <0\\
    \iff \ip{\Gmat,\mmat-\mmat_2'} + \ip{\Gmat,\mmat-\mmat_1'} < 0\\
    \implies \ip{\Gmat,\mmat-\mmat_2'} \leq 0 \textrm{ or } \ip{\Gmat,\mmat-\mmat_1'} \leq 0
\end{align*}
Then \(m'\) has higher score than \(m\) only if at least one of \(m_1',m_2'\) also has higher score than \(m\). Furthermore, \(m'\) is the minimizer for the inner product \(\ip{\Gmat,\mmat-\mmat'}\) only if both \(\ip{\Gmat,\mmat-\mmat_1'}\) and \(\ip{\Gmat,\mmat-\mmat_2'}\) are negative. So \(m'\) is the optimal mapping only if each of its `submappings' (i.e. mappings whose mismatch with \(m\) are entirely contained in \(m'\)) have higher score than \(m\).

It is then of interest to define elementary misalignments between mappings.
\begin{definition}
    \label{def:elementaryBlocks}
    Let \(m_1,m_2\) be a pair of mappings between \(\calU\) and \(\calV\) that are bijective between from their domain to their co-domain. Let \(\mmat_1,\mmat_2\in\{0,1\}^{\calU\times\calV}\) be binary matrices that encode these mappings. We say the mismatch between the two mappings is elementary if and only if \(\mmat_1-\mmat_2\) does not have a block-diagonal representation with multiple zero-sum, non-zero blocks.
\end{definition}

\begin{figure}
    \centering
    \begin{minipage}{0.3\textwidth}
        \begin{tikzpicture}
            \draw (-.2,0) node {\(u_1\)};
            \draw (-.2,-1) node {\(u_2\)};
            \draw (-.2,-2) node {\(u_3\)};
            \draw (-.2,-3) node {\(u'\)};
            
            \draw (2.2,0) node {\(v_1\)};
            \draw (2.2,-1) node {\(v_2\)};
            \draw (2.2,-2) node {\(v_3\)};
            \draw (2.2,-3) node {\(v'\)};
            
            \draw[dotted, rounded corners = 5] (-.5,0.3) -- (-.5,-3.3) -- (.1,-3.3) -- (.1,0.3) -- cycle;
            \draw[dotted, rounded corners = 5] (1.9,0.3) -- (1.9,-3.3) -- (2.5,-3.3) -- (2.5,0.3) -- cycle;
            
            \draw[thick,blue,-] (0,0) -- (2,0);
            \draw[thick,blue,-] (0,-1) -- (2,-1);
            \draw[thick,blue,-] (0,-2) -- (2,-2);
            
            \draw[thick,red,-] (0,0) -- (2,-1);
            \draw[thick,red,-] (0,-1) -- (2,-2);
            \draw[thick,red,-] (0,-2) -- (2,0);
            
            \draw (1,-3.2) node {I};
        \end{tikzpicture}
    \end{minipage}
    \begin{minipage}{0.3\textwidth}
        \begin{tikzpicture}
            \draw (-.2,0) node {\(u_1\)};
            \draw (-.2,-1) node {\(u_2\)};
            \draw (-.2,-2) node {\(u_3\)};
            \draw (-.2,-3) node {\(u'\)};
            
            \draw (2.2,0) node {\(v_1\)};
            \draw (2.2,-1) node {\(v_2\)};
            \draw (2.2,-2) node {\(v_3\)};
            \draw (2.2,-3) node {\(v'\)};
            
            \draw[dotted, rounded corners = 5] (-.5,0.3) -- (-.5,-3.3) -- (.1,-3.3) -- (.1,0.3) -- cycle;
            \draw[dotted, rounded corners = 5] (1.9,0.3) -- (1.9,-3.3) -- (2.5,-3.3) -- (2.5,0.3) -- cycle;
            
            \draw[thick,blue,-] (0,0) -- (2,0);
            \draw[thick,blue,-] (0,-1) -- (2,-1);
            \draw[thick,blue,-] (0,-2) -- (2,-2);
            
            \draw[thick,red,-] (0,0) -- (2,-1);
            \draw[thick,red,-] (0,-1) -- (2,-2);
            \draw[thick,red,-] (0,-2) -- (2,-3);
            
            \draw (1,-3.2) node {II};
        \end{tikzpicture}
    \end{minipage}
    \begin{minipage}{0.3\textwidth}
        \begin{tikzpicture}
            \draw (-.2,0) node {\(u_1\)};
            \draw (-.2,-1) node {\(u_2\)};
            \draw (-.2,-2) node {\(u_3\)};
            \draw (-.2,-3) node {\(u'\)};
            
            \draw (2.2,0) node {\(v_1\)};
            \draw (2.2,-1) node {\(v_2\)};
            \draw (2.2,-2) node {\(v_3\)};
            \draw (2.2,-3) node {\(v'\)};
            
            \draw[dotted, rounded corners = 5] (-.5,0.3) -- (-.5,-3.3) -- (.1,-3.3) -- (.1,0.3) -- cycle;
            \draw[dotted, rounded corners = 5] (1.9,0.3) -- (1.9,-3.3) -- (2.5,-3.3) -- (2.5,0.3) -- cycle;
            
            \draw[thick,blue,-] (0,0) -- (2,0);
            \draw[thick,blue,-] (0,-1) -- (2,-1);
            \draw[thick,blue,-] (0,-2) -- (2,-2);
            
            \draw[thick,red,-] (0,0) -- (2,-1);
            \draw[thick,red,-] (0,-3) -- (2,-2);
            \draw[thick,red,-] (0,-2) -- (2,-3);
            
            \draw (1,-3.2) node {III};
        \end{tikzpicture}
    \end{minipage}
    \caption{Examples the 3 types of elementary mismatches (of size 3) as bigraphs: Cycle (I), even path (II), pair of odd paths (III)}
    \label{fig:decomposition1}
\end{figure}
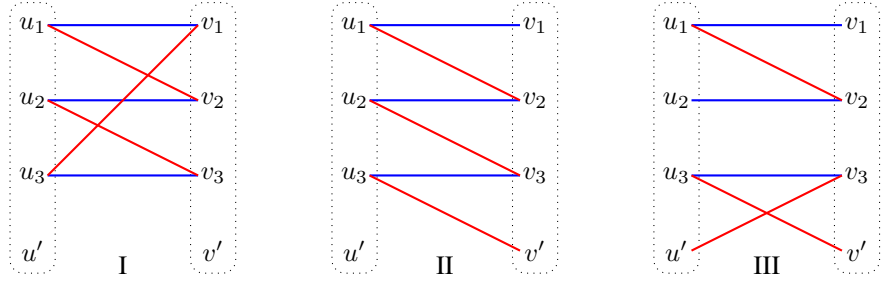

\begin{figure}
    \centering
    \begin{minipage}{0.3\textwidth}
        \begin{tikzpicture}
            \matrix[matrix of math nodes,left delimiter={[},right delimiter={]}]{ 
            \blue{\textbf{+}} & \red{\textbf{--}} & 0 & 0\\
            0 & \blue{\textbf{+}} & \red{\textbf{--}} & 0 \\
            \red{\textbf{--}} & 0 & \blue{\textbf{+}} & 0 \\
            0 & 0 & 0 & 0 \\
            };
            \draw (0,-1.3) node {I};
        \end{tikzpicture}
    \end{minipage}
    \begin{minipage}{0.3\textwidth}
        \begin{tikzpicture}
            \matrix[matrix of math nodes,left delimiter={[},right delimiter={]}]{ 
            \blue{\textbf{+}} & \red{\textbf{--}} & 0 & 0\\
            0 & \blue{\textbf{+}} & \red{\textbf{--}} & 0 \\
            0 & 0 & \blue{\textbf{+}} & \red{\textbf{--}} \\
            0 & 0 & 0 & 0 \\
            };
            \draw (0,-1.3) node {II};
        \end{tikzpicture}
    \end{minipage}
    \begin{minipage}{0.3\textwidth}
        \begin{tikzpicture}
            \matrix[matrix of math nodes,left delimiter={[},right delimiter={]}]{ 
            \blue{\textbf{+}} & \red{\textbf{--}} & 0 & 0 \\
            0 & \blue{\textbf{+}} & 0 & 0 \\
            0 & 0 & \blue{\textbf{+}} & \red{\textbf{--}} \\
            0 & 0 & \red{\textbf{--}} & 0 \\
            };
            \draw (0,-1.3) node {III};
        \end{tikzpicture}
    \end{minipage}
    \caption{Examples the 3 types of elementary misalignmend (of size 3) as matrices: Cycle (I), even path (II), pair of odd paths (III)}
    \label{fig:decomposition2}
\end{figure}

There are three types of elementary misalignments, as shown in \hyperref[fig:decomposition1]{Fig. \ref*{fig:decomposition1}} and \hyperref[fig:decomposition2]{Fig. \ref*{fig:decomposition2}}. These are
\begin{compactitem}
    \item I - Cycles: The two mappings use the same set of users from \(\calU\) and \(\calV\) but pair them up differently. This type of mismatch consists of a single cycle.
    \item II - Even paths: The two mappings use the same set of users from one of the sets (say \(\calU\)) but differ in the users they map from the other side (\(\calV\)) by 1 user. This type of mismatch consists of one path and is cycle-free.
    \item III - Pair of odd paths: The two mappings differ in the users they map on both sides, by 1 user per side. This type of mismatch consists of two paths and is cycle-free.
\end{compactitem}

The bigraph representation in \hyperref[fig:decomposition1]{Fig. \ref*{fig:decomposition1}} can be used to explain why these three are the only types of elementary misalignments. Since \(m_1\) and \(m_2\) map each user at most once, each vertex can have at most one edge from each mapping and a degree of at most 2. Then the bigraph has alternating edges and maximum degree 2. Graphs of maximum degree 2 decompose into cycles and paths. Each component in the bigraph corresponds to a block in the adjacency matrix.

Since edges are alternating between the two mappings, each cycle has even length and an equal number of edges coming from both graphs. Then each cycle corresponds to a block in \(\mmat_1-\mmat_2\) with sum of entries equal to 0. Therefore each cycle is an elementary misalignment. The same holds for even paths.

Odd paths contain one more edge from one mapping than from the other. Therefore these correspond to blocks in \(\mmat_1-\mmat_2\) whose sum equal \(1\) or \(-1\). Since \(\mmat_1-\mmat_2\) has sum of entries equal to zero, it follows that there must be an equal number of blocks whose entries sum up to \(1\) and blocks whose entries sum up to \(-1\). Pairing these up gives us elementary blocks.

\begin{lemma}
    \label{lemma:countElementary}
    Let \(\calU\) and \(\calV\) sets of users of size \(n\) and \(n+s\) respectively. Let \(m\) be the true mapping of size \(n\). Consider the elementary misalignments induced by all mappings \(m'\) of size \(n\).\\
    The number of distinct elementary type-I misalignments of size \(\delta\) is upper bounded by \(\frac{n^\delta}{\delta}\) if \(\delta\in\{2,3,\cdots\}\) and 0 if \(\delta=1\). The number of distinct elementary type-I misalignments of size \(\delta\) is upper bounded by \(sn^\delta\). There are be no elementary misalignments of type III.
\end{lemma}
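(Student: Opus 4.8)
The plan is to identify each elementary misalignment with a connected component of the bipartite ``difference graph'' \(H\) on vertex set \(\calU \sqcup \calV\) whose edges are the pairs on which \(m\) and \(m'\) disagree, each edge carrying a \(+\) if it belongs to \(m\) and a \(-\) if it belongs to \(m'\); this is exactly the picture established before the lemma, where cycles, even paths, and matched pairs of odd paths were shown to be the only indecomposable components. I would first record the structural consequence of the hypothesis \(|M| = |\calU|\) together with the restriction to estimates \(m'\) of size \(n\): both \(m\) and \(m'\) saturate \(\calU\), so every vertex of \(\calU\) has degree \(0\) or \(2\) in \(H\) (it is isolated when \(m(u) = m'(u)\), and otherwise incident to the two edges \((u, m(u))\) and \((u, m'(u))\)). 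Since a path component has its two endpoints among the degree-\(1\) vertices, and those can only lie in \(\calV\), no path of \(H\) has an endpoint in \(\calU\); in particular there are no odd paths, which yields the last assertion of the lemma — no type-III misalignments.

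Next I would count type-I (cycle) misalignments of size \(\delta\). A cycle uses a set \(S \subseteq \calU\) of \(\delta\) users, and the \(+\) edges incident to it are forced to be \(\{(u, m(u)) : u \in S\}\), so its \(\calV\)-side is \(m(S)\), determined by \(S\). What remains free is the bijection \(m'|_S \colon S \to m(S)\), and the requirement that the \(+\) and \(-\) edges together form a single \(2\delta\)-cycle (rather than several shorter ones) says precisely that the induced permutation of \(S\) is a single \(\delta\)-cycle; there are \((\delta-1)!\) of these, and none when \(\delta = 1\) since then \(m'|_S = m|_S\). Hence the count is \(\binom{n}{\delta}(\delta-1)! = \tfrac{1}{\delta}\,n(n-1)\cdots(n-\delta+1) \le n^\delta/\delta\), and it is \(0\) for \(\delta = 1\).

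Finally, for type-II (even path) misalignments of size \(\delta\): such a component is a path with both ends in \(\calV\), and by the degree/parity count along the path exactly one endpoint — call it \(v^\ast\) — is a \(\calV\)-vertex not covered by \(m\), hence one of the \(s = |\calV| - n\) ``extra'' vertices, while the other endpoint and the \(\delta - 1\) interior \(\calV\)-vertices are covered by \(m\). Reading the path starting from \(v^\ast\) yields the deterministic description \(v^\ast,\, u_1,\, m(u_1),\, u_2,\, m(u_2),\, \dots,\, u_\delta,\, m(u_\delta)\) with \(m'(u_1) = v^\ast\) and \(m'(u_{j+1}) = m(u_j)\); thus the misalignment is determined by \(v^\ast\) together with the ordered tuple \((u_1, \dots, u_\delta)\) of distinct users of \(\calU\), and conversely every such pair yields one. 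Orienting the path from its unique \(m\)-uncovered endpoint removes the path's symmetry, so this correspondence is injective, giving at most \(s \cdot n(n-1)\cdots(n-\delta+1) \le s\, n^\delta\).

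I expect the main subtlety to be the type-II count: one must notice that fixing the unique \(m\)-uncovered endpoint \(v^\ast\) canonically orients the path and thereby avoids a factor-of-\(2\) overcount, and then verify that the encoding by \((v^\ast, u_1, \dots, u_\delta)\) is genuinely a bijection onto the set of type-II misalignments. The type-I count is routine once a cycle component is recognized as a \(\delta\)-cycle permutation, and the absence of type-III components is immediate from the even-degree observation.
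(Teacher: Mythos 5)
Your proposal is correct and follows essentially the same route as the paper: choose the \(\delta\) true pairs and count cyclic arrangements \((\delta-1)!\) for type I, attach one of the \(s\) \(m\)-uncovered vertices of \(\calV\) and count the \(\delta!\) linear arrangements (your ordered tuple anchored at \(v^\ast\) is the same enumeration as the paper's ``break the cycle at one of the \(\delta\) edges''), and rule out type III from the fact that \(m\) and \(m'\) both saturate \(\calU\). Your degree-parity phrasing of the no-odd-paths step is a slightly cleaner way to state the paper's covered-vertex argument, but the counts \(\binom{n}{\delta}(\delta-1)!\le n^\delta/\delta\) and \(s\binom{n}{\delta}\delta!\le sn^\delta\) are identical.
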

\begin{proof}
    We count the ways to pick some \(m'\) that induces an elementary misalignment with \(m\) of size \(\delta\).
    
    There are \(\binom{n}{\delta}\) ways to pick the \(\delta\) pairs from \(m\) to be misaligned by \(m'\). Let us denote the sets of these users as \(\calU'\subseteq\calU\) and \(\calV'\subseteq\calV\). \(|\calU'|=|\calV'|=\delta\). Assume these sets are fixed.
    \begin{compactitem}
        \item If \(\delta=1\), there is no way to obtain a type I mismatch, since the only way to pair the single user in \(\calU'\) to the single user in \(\calV'\) is the same as the original mapping in \(m\).\\
        For \(\delta\in\{2,3,\cdots\}\), there are \((\delta-1)!\) ways to pair \(\calU'\) and \(\calV'\) to obtain a type I mismatch. (Forming the `cycle' in \hyperref[fig:decomposition1]{Fig. \ref*{fig:decomposition1}} is simply a matter of  arranging the blue edges around the cycle, which results in a unique way to pick the red edges.)\\
        Then, in total, there are \(\binom{n}{\delta}(\delta-1)! = \frac{1}{\delta}\binom{n}{\delta}\delta!\) ways to pick a type I mistmatch.
        \item There are \(s\) ways to pick a user \(w'\) from \(\calV\) that is not mapped by \(m\).\\
        Given this user, there are \(\delta!\) ways to pair \(\calU',\calV'\) and \(w'\), leaving one user from either \(\calU'\) unpaired. We can generate this pairing as follows: Take any of the \((\delta-1)!\) type I matchings. Break the cycle at any of the \(\delta\) red edges and connect that edge to \(w'\) from the appropriate side. This gives us an even path.\\
        Then, in total, there are \(s\binom{n}{\delta}\delta!\) ways to pick a type II mistmatch.
        \item We only consider mappings \(m'\) of size \(n\), which is the same as the true mapping \(m\). So, given the representation in \hyperref[fig:decomposition1]{Fig. \ref*{fig:decomposition1}}, there must be an equal number of red and blue edges.\\
        Odd paths have an extra edge of either color. To construct an odd path with more edges belonging to \(m'\), there need to vertices in both \(\calU\) and \(\calV\) not covered by \(m\). (These correspond to vertices \(u'\) and \(v'\) in \hyperref[fig:decomposition1]{Fig. \ref*{fig:decomposition1}}.) Since \(\calU\) has size equal to that of \(m\), all vertices in \(\calU\) are covered, and there can be no odd path with more edges from \(m'\).\\
        Since the total number of edges from each mapping needs to be equal, it then follows that there can also be no odd path with more edges from \(m\).
    \end{compactitem}
    
    Using the fact that \(\binom{n}{\delta}\delta! \leq n^\delta\), we simplify the expression to get the result.
\end{proof}

\begin{lemma}
    \label{lemma:countMisalignment}
    Let \(\calU\) and \(\calV\) sets of users of size \(n\) and \(n+s\) respectively. Let \(m\) be the true mapping of size \(n\). Let \(c\in(0,\infty)\) some arbitrary constant. The number of different mappings \(m'\) that result in a misalignment of size \(\delta\) is upper bounded by:
    \begin{compactitem}
        \item \(\exp\bpth{\delta\pth{1+\log n + \log(1+1/c)}}\) if \(\delta\geq cs\), and
        \item \(\exp\bpth{\delta\pth{1+\log \frac{ns}{\delta} + \log(1+c)}}\) if \(\delta\leq cs\).
    \end{compactitem}
\end{lemma}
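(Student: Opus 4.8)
The plan is to bound the count by a direct over-count of size-$n$ mappings $m'$ that disagree with $m$ on exactly $\delta$ users, followed by two elementary inequalities and a case split on $\delta$ versus $cs$. First I would record the structural observation that makes the count factor: since $|\calU| = n = |m| = |m'|$, both $m$ and $m'$ cover every vertex of $\calU$, so a misalignment of size $\delta$ is specified by the set $S \subseteq \calU$, $|S| = \delta$, of vertices on which $m$ and $m'$ disagree together with the restriction $m'|_S$, because $m'$ agrees with $m$ on $\calU \setminus S$ and is therefore determined by $(S, m'|_S)$. The map $m'|_S$ is an injection of $S$ into $\calV$ whose image is disjoint from $m(\calU \setminus S) = m(\calU)\setminus m(S)$, so its admissible targets form exactly the set $m(S) \cup \bpth{\calV \setminus m(\calU)}$, which has size $\delta + s$. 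Dropping the requirement that $m'|_S$ disagree with $m|_S$ at every vertex of $S$ (this only over-counts) gives
\[
  \#\{m' : \text{misalignment of size } \delta\} \;\leq\; \binom{n}{\delta}\,\frac{(\delta+s)!}{s!},
\]
since $\binom{n}{\delta}$ counts the choices of $S$ and $\frac{(\delta+s)!}{s!}$ counts the injections of a $\delta$-set into a $(\delta+s)$-set.

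\textbf{Estimating the two factors.} Next I would apply the standard bounds $\binom{n}{\delta} \leq \pth{\frac{en}{\delta}}^\delta = \exp\!\pth{\delta\pth{1 + \log\frac{n}{\delta}}}$ and $\frac{(\delta+s)!}{s!} = \prod_{i=1}^{\delta}(s+i) \leq (s+\delta)^\delta$, which multiply to
\[
  \#\{m'\} \;\leq\; \exp\!\pth{\delta\pth{1 + \log\tfrac{n(s+\delta)}{\delta}}}.
\]
Then split on the hypothesis: if $\delta \geq cs$ then $s + \delta \leq (1 + 1/c)\,\delta$, so $\log\frac{n(s+\delta)}{\delta} \leq \log n + \log(1 + 1/c)$, yielding the first bound; if $\delta \leq cs$ (so in particular $s \geq 1$ and $\log\frac{ns}{\delta}$ makes sense) then $s + \delta \leq (1 + c)\,s$, so $\log\frac{n(s+\delta)}{\delta} \leq \log\frac{ns}{\delta} + \log(1 + c)$, yielding the second bound. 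This finishes the argument; the two regimes used later correspond to $c = 1$.

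\textbf{Main obstacle.} There is no serious obstacle: the lemma is a Hamming-ball count for size-$n$ matchings, and the work is entirely in two textbook inequalities. The only points needing care are (i) justifying that the disagreement set $S$ on the $\calU$-side together with the re-matching $m'|_S$ determines $m'$ — this is where the assumption $|m| = |\calU| = n$ enters, since otherwise a mapping could also drop or add coverage on the $\calU$-side — and (ii) counting the admissible targets for $m'|_S$ as exactly the $\delta + s$ vertices in $m(S)$ together with the uncovered part of $\calV$, which is what injects the $s$-dependence and produces the two cases. (For $s = 0$ the second bullet is vacuous and the first specializes to the sharper $\binom{n}{\delta}\delta! \leq n^\delta$.)
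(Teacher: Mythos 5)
Your proposal is correct and follows essentially the same route as the paper's proof: choose the $\delta$ misaligned vertices via $\binom{n}{\delta}\leq\pth{\frac{en}{\delta}}^\delta$, bound the re-matching choices per vertex by $\delta+s$, and split on $\delta\gtrless cs$ using $s+\delta\leq(1+1/c)\delta$ or $s+\delta\leq(1+c)s$. Your extra care in identifying the admissible target set $m(S)\cup\pth{\calV\setminus m(\calU)}$ and using $\frac{(\delta+s)!}{s!}$ is a minor refinement of the same count, not a different argument.
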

\begin{proof}
    We count the number of different ways to construct \(m'\) that results in a misalignment of size \(\delta\).

    There are \(\binom{n}{\delta}\) different ways to pick the set of vertices to be misaligned by \(m'\).

    Given the \(\delta\) pairs of vertices to be misaligned, there are no more than \(\delta+s\) ways to misalign each vertex. So, there are no more than \((\delta+s)^\delta\) ways to misalign the set of \(\delta\) pairs.

    \(\binom{n}{\delta}\) is strictly less than \(\pth{\frac{en}{\delta}}^\delta\). If \(\delta\geq cs\), then \((\delta+s)^\delta\) is at most \((1+1/c)^\delta \delta^\delta\). If \(\delta\leq cs\), then \((\delta+s)^\delta\) is at most \((1+c)^\delta s^\delta\). The products of these terms give us the claimed results.
\end{proof}

\section{Concentration inequalities}

We use \(m,m',m_1\) etc to denote mappings between \(\calU\) and \(\calV\), and \(\mmat,\mmat',\mmat_1\in\{0,1\}^{\calU\times\calV}\) to denote binary matrix representations of these mappings, where \((\mmat)_{u,v}=1\) if and only if \(m\) maps \(u\) to \(v\).\\

\subsection{Concentration inequalities for database alignment}

\(\Gmat\) refers to the information density matrix under the database alignment setting as described in \hyperref[subsec:modelDatabase]{Subsection \ref*{subsec:modelDatabase}}. Specifically, \(\Gmat\) be the matrix such that \(G_{u,v}\) is the log-likelihood ratio of hypotheses \(u\mapped{M}v\) vs. \(u\notmapped{M}v\) for any \((u,v)\in\calU\times\calV\).

\begin{lemma}[Atypicality]
    \label{lemma:typicality}
    Given some \(|\tau|\leq I_{XY}\) and \(m\) a partial mapping fully contained in the true mapping \(M\), \(\Pr\croc{\ip{\Gmat-\tau,\mmat}\leq 0|m\subseteq M} \leq \exp\pth{-|m|\cdot\frac{(I_{XY}-\tau)^2}{4I_{XY}}}\).
\end{lemma}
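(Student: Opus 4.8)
The plan is to establish a Chernoff-type tail bound on the scalar random variable $\ip{\Gmat - \tau, \mmat} = \sum_{(u,v) \in \calW_m} (G_{u,v} - \tau)$, conditioned on $m \subseteq M$, i.e.\ on every pair $(u,v)$ in $m$ being a true pair. Under this conditioning, the summands $G_{u,v}$ are i.i.d.\ across the $|m|$ pairs, each distributed as the information density $\log \frac{f_{XY}(X,Y)}{f_X(X)f_Y(Y)}$ evaluated at a genuinely correlated feature pair $(X,Y)$. So it suffices to understand a single such summand and then tensorize.

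First I would reduce to the canonical form of Lemma~\ref{lemma:canonicalVariance}: after the affine transformation, a true feature pair has law $\calN(\zerovec, \crocMat{\Imat}{\diag(\rhovec)}{\diag(\rhovec)}{\Imat})$, and the per-pair information density splits as a sum over the $|\calD|$ coordinates, $\sum_i \log \frac{f_{X_iY_i}(X_i,Y_i)}{f_{X_i}(X_i)f_{Y_i}(Y_i)}$, with $\bbE$ of the whole thing equal to $I_{XY}$. Next I would compute the moment generating function of $G_{u,v}$ under the true-pair law. For a single bivariate-Gaussian coordinate with correlation $\rho$, the information density is a known explicit quadratic form in $(X_i,Y_i)$, and $\bbE[e^{s \cdot (\text{coordinate density})}]$ is a Gaussian integral that can be evaluated in closed form; multiplying over coordinates gives the MGF of $G_{u,v}$, hence $\log \bbE[e^{s G_{u,v}}] = \psi(s)$ for some explicit $\psi$ with $\psi(0)=0$, $\psi'(0) = I_{XY}$. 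Then the conditional probability in question is, by Markov applied to $e^{-s(\ip{\Gmat,\mmat} - \tau|m|)}$ for $s \ge 0$,
\begin{align*}
\Pr\bcroc{\ip{\Gmat - \tau, \mmat} \le 0 \mid m \subseteq M} \le \exp\bpth{|m|\bpth{\psi(-s) + s\tau}},
\end{align*}
and I would optimize over $s$. To land on the stated bound $\exp\bpth{-|m| \frac{(I_{XY}-\tau)^2}{4 I_{XY}}}$, the key quantitative input is a quadratic upper bound of the form $\psi(-s) \le -s I_{XY} + s^2 I_{XY}$ valid on the relevant range of $s$ (which is where $|\tau| \le I_{XY}$ gets used, so that the optimizing $s = \frac{I_{XY}-\tau}{2 I_{XY}} \in [0,1]$); plugging this in and minimizing $-s(I_{XY}-\tau) + s^2 I_{XY}$ over $s$ gives exactly $-\frac{(I_{XY}-\tau)^2}{4I_{XY}}$.

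The main obstacle is the bound $\psi(-s) \le -sI_{XY} + s^2 I_{XY}$: the exact $\psi$ for a correlation-$\rho$ coordinate involves terms like $-\tfrac12\log(1 - s(1-s)\tfrac{\rho^2}{1-\rho^2}\cdot(\text{stuff}))$ and one must check the sub-Gaussian-type second-order control holds uniformly over the coordinates and over $s \in [0,1]$, possibly invoking Condition~\ref{cond:highDimensional} ($\max_i|\rho_i| = o(1)$, via Lemma~\ref{lemma:svd2rho}) to make the higher-order terms in the logarithm negligible relative to the quadratic term. I expect the cleanest route is to prove the single-coordinate inequality $\psi_i(-s) \le -s\, I_{X_iY_i} + s^2\, I_{X_iY_i}$ for each $i$ (perhaps with an $o(1)$ slack absorbed into the statement), then sum over $i$ using additivity of both $\psi$ and $I_{XY}$; the planted-matching analogue (Lemma~\ref{lemma:typicalityPlanted}, referenced in parallel throughout) is the degenerate case where each summand $G_{u,v} = \mu W_{u,v} - \mu^2/2$ is exactly Gaussian with mean $\mu^2/2 = \zeta$ and variance $\mu^2 = 2\zeta$, for which $\psi(-s) = -s\zeta + s^2\zeta$ holds with equality and the computation is exact.
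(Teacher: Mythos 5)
Your overall route is the same as the paper's: condition on the pairs in \(m\) being true pairs, note the per-pair information densities are independent (disjoint users), apply a Chernoff bound \(\Pr\croc{\ip{\Gmat-\tau,\mmat}\leq 0\mid m\subseteq M}\leq \exp\bpth{|m|\pth{\psi(-s)+s\tau}}\), establish the quadratic bound \(\psi(-s)\leq -sI_{XY}+s^2I_{XY}\), and optimize at \(s=\frac{I_{XY}-\tau}{2I_{XY}}\in[0,1]\) (this is exactly where the paper uses \(|\tau|\leq I_{XY}\); its \(\theta\) is your \(s\)). The paper packages the MGF computation in a generating function \(R(\Thetamat)\) with a block-diagonal decomposition lemma, but the substance is identical to what you describe.

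The gap is the step you yourself flag as the main obstacle: the inequality \(\psi(-s)\leq -sI_{XY}+s^2I_{XY}\) is asserted, not proved, and your proposed fallbacks (invoking Condition~\ref{cond:highDimensional} or absorbing an \(o(1)\) slack into the statement) would only yield a weaker lemma — the statement is exact and is used in regimes where Condition~\ref{cond:highDimensional} is not assumed. The inequality in fact holds exactly, coordinate-wise, with no restriction on the \(\rho_i\): in canonical form the per-coordinate MGF of the true-pair information density is \(\bbE\croc{e^{-sG_i}}=\croc{\frac{(1-\rho_i^2)^{s}}{1-\rho_i^2s^2}}^{1/2}\) for \(|s|<1/|\rho_i|\) (Lemma~\ref{lemma:rOne2One}), so \(\psi_i(-s)=-sI_i-\frac{1}{2}\log\pth{1-\rho_i^2s^2}\) with \(I_i=-\frac{1}{2}\log(1-\rho_i^2)\), and the desired bound \(\psi_i(-s)\leq -sI_i+s^2I_i\) is equivalent to \(1-\rho_i^2s^2\geq(1-\rho_i^2)^{s^2}\), which is Bernoulli's inequality (Lemma~\ref{lemma:push2exp}) since \(s^2\in[0,1]\). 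Summing over coordinates and over the \(|m|\) pairs gives the bound with no slack; your identification of the planted-matching case as the exact-Gaussian degenerate instance is correct.
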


\begin{proof}
    The atypicality event is completely independent from users that are not contained in \(m\). Then, without loss of generality, we can assume \(m=M\) instead of \(m\subseteq M\).\\
    By \hyperref[cor:chernoff]{Corollary \ref*{cor:chernoff}},
    \begin{align*}
        \Pr\croc{\tau|m| \geq \ip{\Gmat,\mmat}\big|M=m} \leq \exp\pth{\theta\tau|m|}R\pth{(1-\theta)\mmat}.
    \end{align*}
    By \hyperref[cor:rMapping]{Corollary \ref*{cor:rMapping}}, this last expression is upper bounded by \(\exp\bpth{\theta\tau|m|-\theta(1-\theta)I_{XY}|m|}\).
    
    Let \(\theta = \frac{I_{XY}-\tau}{2I_{XY}}\). Then \(\tau\theta = \frac{I_{XY}\tau-\tau^2}{2I_{XY}}\) and \(\theta(1-\theta)I_{XY} = \frac{I_{XY}^2-\tau^2}{4I_{XY}}\), which gives us
    \begin{align*}
        &\Pr\croc{\tau|m| \geq \ip{\Gmat,\mmat}\big|M=m} \\
        &\leq \exp\bpth{\tau\theta|m|-\theta(1-\theta)I_{XY}|m|}\\
        &= \exp\bpth{-|m|\cdot\frac{I_{XY}^2-2I_{XY}\tau+\tau^2}{4I_{XY}}}
    \end{align*}
    which matches the claimed result.
\end{proof}

\begin{lemma}[False positive probability]
    \label{lemma:FP}
    Given some \(|\tau|\leq I_{XY}\), \(\Pr\croc{G_{u,v}\geq\tau|u\notmapped{M}v} \leq \exp\pth{-\frac{(I_{XY}+\tau)^2}{4I_{XY}}}\).
\end{lemma}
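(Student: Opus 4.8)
\emph{Proof proposal.} The plan is to run the same Chernoff/Markov argument as in the proof of \hyperref[lemma:typicality]{Lemma \ref*{lemma:typicality}}, only now under the ``null'' law in which $\Avec(u)$ and $\Bvec(v)$ are independent. Conditioning on $u\notmapped{M}v$, the pair $(\Avec(u),\Bvec(v))$ has product density $f_Xf_Y$, so $G_{u,v}=\log\frac{f_{XY}(\Avec(u),\Bvec(v))}{f_X(\Avec(u))f_Y(\Bvec(v))}$ is an explicit log-likelihood ratio, and its exponential moments under $f_Xf_Y$ are $\bbE_{f_Xf_Y}\bcroc{e^{\theta G_{u,v}}}=\int f_{XY}^{\theta}(f_Xf_Y)^{1-\theta}$ — exactly the Rényi-type quantity $R(\theta\mmat)$ associated (via \hyperref[cor:chernoff]{Corollary \ref*{cor:chernoff}}) with the single-pair mapping matrix $\mmat$ having a lone $1$ in position $(u,v)$. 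This is the same $R$ that appeared, with tilting coefficient $1-\theta$, in the atypicality proof.

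First I would write, for any $\theta\in[0,1]$, the tail bound $\Pr\bcroc{G_{u,v}\geq\tau \mid u\notmapped{M}v}\leq e^{-\theta\tau}\,\bbE_{f_Xf_Y}\bcroc{e^{\theta G_{u,v}}}$ and identify the right-hand expectation with $R(\theta\mmat)$ for the single-pair matrix, using \hyperref[cor:chernoff]{Corollary \ref*{cor:chernoff}}. Second, I would apply \hyperref[cor:rMapping]{Corollary \ref*{cor:rMapping}} (the bound $R(c\mmat)\leq\exp(-c(1-c)I_{XY}|m|)$ already invoked in \hyperref[lemma:typicality]{Lemma \ref*{lemma:typicality}}), here with $c=\theta$ and $|m|=1$, to obtain $\Pr\bcroc{G_{u,v}\geq\tau}\leq\exp\bpth{-\theta\tau-\theta(1-\theta)I_{XY}}$. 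Third, I would optimize the exponent: the maximizer of $\theta\mapsto\theta\tau+\theta(1-\theta)I_{XY}$ is $\theta^\star=\frac{I_{XY}+\tau}{2I_{XY}}$, which lies in $[0,1]$ precisely because $|\tau|\leq I_{XY}$ (so $\theta^\star\ge0$ iff $\tau\ge-I_{XY}$ and $\theta^\star\le1$ iff $\tau\le I_{XY}$) — this is exactly where the hypothesis on $\tau$ is consumed, mirroring the constraint $\theta\le1$ in the atypicality proof. Substituting $\theta^\star$, the linear and quadratic terms combine into a perfect square, giving $\theta^\star\tau+\theta^\star(1-\theta^\star)I_{XY}=\frac{(I_{XY}+\tau)^2}{4I_{XY}}$, hence the claimed bound $\exp\bpth{-\frac{(I_{XY}+\tau)^2}{4I_{XY}}}$.

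I do not expect a genuine obstacle here. The two points that need a line of care are: (i) checking that the exponential-moment identity $\bbE_{f_Xf_Y}[e^{\theta G_{u,v}}]=R(\theta\mmat)$ is indeed an instance of \hyperref[cor:chernoff]{Corollary \ref*{cor:chernoff}}/\hyperref[cor:rMapping]{Corollary \ref*{cor:rMapping}} — it is, since those results are stated for arbitrary mapping matrices and a single pair is a legitimate one, and the null-law MGF at $\theta$ coincides with the correlated-law negative MGF at $1-\theta$; and (ii) verifying $\theta^\star\in[0,1]$, which $|\tau|\leq I_{XY}$ supplies exactly. The concluding algebra is routine and parallel to the computation at the end of the proof of \hyperref[lemma:typicality]{Lemma \ref*{lemma:typicality}}.
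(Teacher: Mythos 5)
Your proposal is correct and follows essentially the same route as the paper: Markov's inequality with the tilted exponent, identification of the null-law exponential moment with the generating function $R(\theta\mmat')$ for the single-pair matrix, the bound $R(\theta\mmat')\leq\exp\pth{-\theta(1-\theta)I_{XY}}$ from \hyperref[cor:rMapping]{Corollary \ref*{cor:rMapping}}, and the optimal choice $\theta^\star=\frac{I_{XY}+\tau}{2I_{XY}}$ (legitimized by $|\tau|\leq I_{XY}$) yielding the perfect square. The only cosmetic difference is attribution: the identity $\bbE_{f_Xf_Y}[e^{\theta G_{u,v}}]=R(\theta\mmat')$ comes from the definition of $R$ (or \hyperref[lemma:genFunc2exp]{Lemma \ref*{lemma:genFunc2exp}} with the empty true mapping), not from \hyperref[cor:chernoff]{Corollary \ref*{cor:chernoff}} itself, which is how the paper phrases it as well.
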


\begin{proof}
    \begin{align*}
        \Pr\croc{G_{u,v}\geq \tau |u\notmapped{M}v} &\leq e^{-\tau\theta}\bbE\croc{\exp\pth{\theta G_{u,v}}|u\notmapped{M}v}
    \end{align*}
    
    \(G_{u,v}\) only depends on \(\Avec(u)\) and \(\Bvec(v)\). So, without loss of generality, we can assume \(\calU = \{u\}\) and \(\calV=\{v\}\). Since \(u\notmapped{M}v\), it follows that \(M\) maps nothing and the databases are independent. Let \(m\) denote the empty mapping and \(m'\) denote the mapping consisting of \((u,v)\). Then \(\theta G_{u,v} = \ip{\Gmat,\theta\mmat'} = \ip{\Gmat,\theta\mmat' + (1-\theta)\mmat}\). It then follows that
    \begin{align*}
        \Pr\croc{G_{u,v}\geq \tau |u\notmapped{M}v} &\leq e^{-\tau\theta}\bbE\croc{\exp\pth{\theta G_{u,v}}|u\notmapped{M}v}\\
        &= e^{-\tau\theta}\bbE\croc{\exp\pth{\ip{\Gmat,\theta\mmat' -\mmat}}|M=m}\\
        &= e^{-\tau\theta}R(\theta\mmat')
    \end{align*}
    where the last line follows from \hyperref[lemma:genFunc2exp]{Lemma \ref*{lemma:genFunc2exp}}. \(R(\theta\mmat') = R([\theta])\). By \hyperref[cor:rMapping]{Corollary \ref*{cor:rMapping}}, \(R([\theta]) \leq \exp\pth{-\theta(1-\theta)I_{XY}}\).
    
    \begin{align*}
        &\Pr\croc{G_{u,v}\geq \tau |u\notmapped{M}v}\\
        &\leq \exp\pth{-\tau\theta - -\theta(1-\theta)I_{XY}}\\
        &= \exp\pth{-\frac{I_{XY}^2+2I_{XY}\tau+\tau^2}{4I_{XY}}}
    \end{align*}
    which matches the claimed result.
\end{proof}

\begin{lemma}[Misalignment]
    \label{lemma:misalignment}
    Let \(m\) and \(m'\) denote two mappings of same size and \(\delta\) denote the number of pairs mapped by \(m\) but not by \(m'\). Then \(\Pr[\ip{\Gmat,\mmat}\leq\ip{\Gmat,\mmat'}|M=m] \leq \exp\pth{-\frac{\delta}{2}I_{XY}}\).
\end{lemma}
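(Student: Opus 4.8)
The plan is to apply a Chernoff (exponential Markov) bound to the random variable \(\ip{\Gmat,\mmat'-\mmat}\) and then re-use the generating-function estimates already established for Lemmas~\ref{lemma:typicality} and~\ref{lemma:FP}.

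First, since \(\{\ip{\Gmat,\mmat}\leq\ip{\Gmat,\mmat'}\}=\{\ip{\Gmat,\mmat'-\mmat}\geq 0\}\), Corollary~\ref{cor:chernoff} gives, for every \(\theta\geq 0\),
\[
\Pr\bcroc{\ip{\Gmat,\mmat}\leq\ip{\Gmat,\mmat'}\,\big|\,M=m}\leq\bbE\bcroc{\exp\bpth{\theta\ip{\Gmat,\mmat'-\mmat}}\,\big|\,M=m}.
\]
Because \(|m|=|m'|\), the matrix \(\mmat-\mmat'\) is zero-sum, so I would rewrite \(\theta\ip{\Gmat,\mmat'-\mmat}=\ip{\Gmat,\qmat-\mmat}\) with \(\qmat\define\theta\mmat'+(1-\theta)\mmat\). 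By Lemma~\ref{lemma:genFunc2exp} the right-hand side then equals \(R(\qmat)\). Note that \(\qmat-\mmat\) vanishes on the pairs mapped by both \(m\) and \(m'\), equals \(-\theta\) on the \(\delta\) pairs of \(m\setminus m'\), and equals \(\theta\) on the \(\delta\) pairs of \(m'\setminus m\) (these two counts coincide, again using \(|m|=|m'|\)).

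The heart of the argument is bounding \(R(\qmat)\). I would invoke the cycle/path decomposition of Subsection~\ref{subsec:elementary}: the \(2\delta\) pairs supporting \(\qmat-\mmat\) split into elementary components touching pairwise-disjoint sets of users, and a component of size \(\ell\) carries exactly \(\ell\) displaced true pairs and \(\ell\) false pairs. Since \(G_{u,v}\) depends only on \(\Avec(u),\Bvec(v)\) and features of distinct users are independent given \(M=m\), the generating function \(R(\qmat)\) factorizes over these components. Within one component, the contribution of its \(\ell\) displaced true pairs (each carried with coefficient \(1-\theta\)) is controlled as in the atypicality estimate of Lemma~\ref{lemma:typicality}, and the contribution of its \(\ell\) false pairs (each with coefficient \(\theta\)) as in the false-positive estimate of Lemma~\ref{lemma:FP}; applying Corollary~\ref{cor:rMapping} bounds the component's factor by \(\exp\bpth{-2\ell\,\theta(1-\theta)I_{XY}}\). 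Taking the product over all components, where \(\sum\ell=\delta\), yields \(R(\qmat)\leq\exp\bpth{-2\delta\,\theta(1-\theta)I_{XY}}\), and choosing \(\theta=\tfrac12\) (the maximizer of \(\theta(1-\theta)\) on \([0,1]\)) gives the claimed bound \(\exp\bpth{-\tfrac{\delta}{2}I_{XY}}\).

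The step I expect to be the main obstacle is the per-component bound. Unlike in Lemmas~\ref{lemma:typicality} and~\ref{lemma:FP}, the information densities along a cycle or a path are not independent---consecutive edges share a user---so one cannot simply multiply single-edge bounds, and a naive union over the \(2\delta\) edges would be far too weak. The real content, which I would route through Corollary~\ref{cor:rMapping} and ultimately through the explicit Gaussian form of the joint moment generating function in canonical coordinates, is that this dependence is harmless: a component of size \(\ell\) still loses the full factor \(\exp(-2\ell\,\theta(1-\theta)I_{XY})\). This is where \(I_{XY}\) enters and where the factor \(2\) originates, since the penalty combines one atypicality-type loss per displaced true pair with one false-positive-type loss per false pair.
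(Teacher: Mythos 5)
Your skeleton matches the paper's proof exactly: apply the Chernoff bound of Corollary~\ref{cor:chernoff}(b) with \(\theta=\tfrac12\), so the failure probability is bounded by \(R\bpth{\tfrac12\mmat+\tfrac12\mmat'}\); decompose the support of \(\mmat-\mmat'\) into elementary components (cycles and even paths, Subsection~\ref{subsec:elementary}); factorize \(R\) over the blocks via Lemma~\ref{lemma:blockGenFunc}, with the common pairs contributing \(R([1])=1\); and multiply per-component bounds of the form \(\exp\bpth{-\ell\,I_{XY}/2}\).

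The gap is in the step you yourself flag as the main obstacle: the per-component bound. You propose to obtain it from Corollary~\ref{cor:rMapping}, but that corollary only covers matrices of the form \((1-\theta)\mmat\) for a single \(0\)-\(1\) matrix with row and column sums at most one, i.e.\ it reduces to a product of \(1\times1\) blocks where the relevant information densities are genuinely independent. A cycle or even-path block is \((1-\theta)\mmat_j+\theta\mmat_j'\) with \emph{overlapping} supports sharing users, so it does not decompose into \(1\times1\) blocks and Corollary~\ref{cor:rMapping} simply does not apply; the heuristic ``one atypicality loss per true pair plus one false-positive loss per false pair'' is exactly what must be proved, not a consequence of the corollary. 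In the paper this is the content of Lemma~\ref{lemma:rCycle} and Lemma~\ref{lemma:rEvenPath}: starting from the explicit Gaussian formula of Lemma~\ref{lemma:computationGenFunc}, one bounds \(\det\Pmat(\Thetamat_j,\rho)\) via a Schur complement, the eigenvalues of the circulant (cycle) or path matrix \(\Qmat\) (using Perron--Frobenius in the path case), and the auxiliary inequalities of Lemmas~\ref{lemma:minProd}, \ref{lemma:push2exp} and \ref{lemma:compareEpsilon}; evaluated at \(\nu=1\), \(\theta=\tfrac12\) the extra \(\rho_{\max}^2\) terms vanish and one gets exactly \(R(\Thetamat_j)\leq\exp\bpth{-\delta_j I_{XY}/2}\) with no dimensionality condition. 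Without an argument of this kind (or some other proof that the dependence along a component costs nothing), your proof asserts its decisive inequality rather than establishing it.
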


\begin{proof}
    By \hyperref[cor:chernoff]{Corollary \ref*{cor:chernoff}}, \(\Pr[\ip{\Gmat,\mmat}\leq\ip{\Gmat,\mmat'}|M=m]\) is upper bounded by \(R(\Thetamat)\) where \(\Thetamat = \frac{1}{2}\mmat+\frac{1}{2}\mmat'\). \(\Thetamat\) can be represented in block-diagonal form. For each \(n-\delta\) pair of users \((u,v)\) that is mapped both by \(\mmat\) and \(\mmat'\), we get a \(1\times1\) block \([1]\). The remaining \(\delta\) pairs whose mapping is not the same between \(\mmat\) and \(\mmat'\), we get blocks \(\Thetamat_j\) that correspond to cycles or even paths as described in \hyperref[def:elementaryBlocks]{Definition \ref*{def:elementaryBlocks}}.\\
    Given this block-diagonal form, by \hyperref[lemma:blockGenFunc]{Lemma \ref*{lemma:blockGenFunc}}, \(R(\Thetamat)\) is equal to the product \(\croc{R([1])}^{n-\delta}\cdot\prod_j R(\Thetamat_j)\). By \hyperref[lemma:rOne2One]{Lemma \ref*{lemma:rOne2One}}, \(R([1])=1\). By Lemmas \ref{lemma:rCycle} and \ref{lemma:rEvenPath}, plugging in \(\nu=1\), we have
    \(R(\Thetamat_j)\leq \exp\pth{-\frac{\delta_j I_{XY}}{2}}\) where \(\delta_j\) is the total length (i.e. number of blue edges or number of red edges) of the corresponding cycle or even paths. The total number of user pairs that whose mapping differs between \(\mmat\) and \(\mmat'\) is \(\delta\). Then \(\prod_j R(\Thetamat_j) \leq \exp\pth{\sum_j-\frac{\delta_j I_{XY}}{2}} = \exp\pth{-\frac{\delta I_{XY}}{2}}\).\\
    It then follows that \(\Pr[\ip{\Gmat,\mmat}\leq\ip{\Gmat,\mmat'}|M=m] \leq R(\Thetamat) \leq \exp\pth{-\frac{\delta I_{XY}}{2}}\).
\end{proof}

\begin{lemma}[Misalignment-despite-typicality]
    \label{lemma:condMisalignment}
    Let \(m\) and \(m'\) denote two mappings of same size \(\delta\) such that no pair is mapped under both mappings. Given \(0\leq\tau\leq I_{XY}\),
    \begin{align*}
        \Pr\croc{\ip{\Gmat,\mmat}\geq \tau |m| \,\textrm{ and }\, \ip{\Gmat,\mmat'}\geq\ip{\Gmat,\mmat}\big|m\subseteq M} \leq \exp\pth{-\delta\cdot\frac{I_{XY}^2+\tau^2}{2I_{XY}}+\delta\cdot6\rho_{\max}^2\tau}
    \end{align*}
    where \(\rho_{\max} = \max |\rho_i|\), the largest correlation coefficient under the canonical form.
\end{lemma}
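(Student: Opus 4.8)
The plan is to run the same exponential-Markov (Chernoff) argument used for Lemmas~\ref{lemma:typicality}, \ref{lemma:FP} and \ref{lemma:misalignment}, but now with \emph{two} tilt parameters, one for each conjunct of the event. First I would reduce the conditioning: the event $\{\ip{\Gmat,\mmat}\geq\tau\delta\}\cap\{\ip{\Gmat,\mmat'}\geq\ip{\Gmat,\mmat}\}$ depends only on the features of users appearing in $m$ or in $m'$, and since $m$ and $m'$ share no pair, every pair of $m'$ is a non-matched pair no matter how $M$ extends $m$. Hence, exactly as in the proofs of Lemmas~\ref{lemma:typicality} and \ref{lemma:FP}, we may assume $M=m$ and that every user occurring in $m'$ but not in $m$ is unmatched. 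Now for any $\theta_1,\theta_2\geq 0$, on the event both $\ip{\Gmat,\mmat}-\tau\delta\geq 0$ and $\ip{\Gmat,\mmat'}-\ip{\Gmat,\mmat}\geq 0$, so the indicator of the event is at most $\exp\bpth{\theta_1(\ip{\Gmat,\mmat}-\tau\delta)+\theta_2(\ip{\Gmat,\mmat'}-\ip{\Gmat,\mmat})}$. Taking expectations,
\[
  \Pr[\,\mathrm{event}\mid M=m\,]\ \leq\ e^{-\theta_1\tau\delta}\,\bbE\bcroc{\exp\bpth{\ip{\Gmat,\ (\theta_1-\theta_2)\mmat+\theta_2\mmat'}}\,\Big|\,M=m},
\]
and it remains to bound this moment generating function and then optimize over $\theta_1,\theta_2$.

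Next I would evaluate the moment generating function. Writing $\Thetamat=(\theta_1-\theta_2+1)\mmat+\theta_2\mmat'$ to absorb the normalizing factor $e^{-\ip{\Gmat,\mmat}}$ that defines the generating function $R$, the matrix $\Thetamat$ inherits the block-diagonal structure of $\mmat-\mmat'$: by Definition~\ref{def:elementaryBlocks} its support decomposes into cycles and even paths over disjoint user sets, and by multiplicativity of $R$ over blocks (Lemma~\ref{lemma:blockGenFunc}) the MGF factors over these blocks, a cycle or even path of length $\ell$ carrying $\ell$ entries with coefficient $\theta_1-\theta_2+1$ (the true pairs) alternating with $\ell$ entries with coefficient $\theta_2$ (the false pairs). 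Within a block I would pass to the canonical form of the correlation statistics (Lemma~\ref{lemma:canonicalVariance}): $\Gmat=\sum_{i\in\calD}\Gmat^{(i)}$ with the $\Gmat^{(i)}$ independent across $i$, each $\Gmat^{(i)}$ a scalar log-likelihood ratio built from coordinate $i$, matched feature pairs having correlation $\rho_i$ and all others independent. This reduces the whole bound to a product, over blocks and over coordinates $i\in\calD$, of scalar moment generating functions of the form $\bbE[\exp(\ip{\Gmat^{(i)},\Thetamat_{\mathrm{block}}})]$.

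The main work — the step I expect to be the real obstacle — is bounding this per-coordinate, per-block scalar moment generating function. In the planted-matching analogue (Lemma~\ref{lemma:condMisalignmentPlanted}) the true score $W_{u,v}$ and every false score $W_{u,v'}$ are independent, so a block's MGF splits edge by edge into elementary Gaussian moment generating functions, and after optimizing the tilts one gets exactly $\exp(-\ell\,(\zeta^2+\tau^2)/(2\zeta))$ per block; the identity $(\zeta^2+\tau^2)/(2\zeta)=(\zeta-\tau)^2/(4\zeta)+(\zeta+\tau)^2/(4\zeta)$ exhibits this as the product of the per-pair atypicality bound of Lemma~\ref{lemma:typicality} and the per-pair false-positive bound of Lemma~\ref{lemma:FP}, which is the relevant sanity check. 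For database alignment a true pair $(u_j,v_j)$ and the adjacent false pair $(u_j,v_{j+1})$ share the feature $\Avec(u_j)$ (and symmetrically on the $\calV$ side), so the scores within a block are genuinely coupled and the edgewise factorization fails. I would handle this by writing $\Gmat^{(i)}$ explicitly as a quadratic form in the pair of standardized scalar Gaussians with coefficients of order $\rho_i$, computing the joint MGF of the at-most-four standardized Gaussians incident to a given true/false pair, and expanding in $\rho_i$: the leading term reproduces the planted exponent with $\rho_i^2/2$ playing the role of the per-coordinate mutual information, while the coupling contributes only terms of order $\rho_i^2$ (times the tilts and $\tau$). Summing over $i$ with $\sum_{i\in\calD}\rho_i^2=2I_{XY}(1+o(1))$ and bounding the remainder using $\rho_i^2\leq\rho_{\max}^2$ together with $\tau\leq I_{XY}$ then gives the block bound $\exp\bpth{-\ell\,(I_{XY}^2+\tau^2)/(2I_{XY})+6\ell\rho_{\max}^2\tau}$, the constant $6$ absorbing the remainder estimates. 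Keeping this correction down to $O(\rho_{\max}^2\tau)$ per misaligned pair is the delicate point, since that is exactly the quantity that vanishes under Condition~\ref{cond:highDimensional}.

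Finally I would multiply the block bounds: the lengths $\ell$ sum to $\delta$, the total number of misaligned pairs. The tilts $\theta_1,\theta_2$ are chosen to mirror $\theta=\tfrac{I_{XY}-\tau}{2I_{XY}}$ from Lemma~\ref{lemma:typicality} (for the $\tau$-threshold part) and $\theta=\tfrac12$ from Lemma~\ref{lemma:misalignment} (for the misalignment comparison), so that the per-block optimization lands on $(I_{XY}^2+\tau^2)/(2I_{XY})$; combined with the prefactor $e^{-\theta_1\tau\delta}$ this yields the stated inequality. The routine parts are the two Gaussian moment-generating-function evaluations and the tilt optimization; the only real difficulty is the per-coordinate control of the shared-feature coupling described above.
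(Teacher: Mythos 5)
Your outer skeleton is exactly the paper's: a two-parameter Chernoff bound for the joint event (this is Corollary~\ref{cor:chernoff}(c), with your $(\theta_1,\theta_2)$ a reparametrization of the paper's $(\nu,\theta)$), reduction of the exponent matrix to cycle/even-path blocks, multiplicativity of $R$ over blocks (Lemma~\ref{lemma:blockGenFunc}), and a final choice of tilts equivalent to $\nu = 1+\tau/I_{XY}$ that turns $-\frac{I_{XY}}{2}\nu(2-\nu)$ plus the prefactor $e^{-\tau\delta(\nu-1)}$ into $-\frac{I_{XY}^2+\tau^2}{2I_{XY}}$. That part is fine.

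The gap is the step you yourself flag as the obstacle: bounding the per-block generating function $R\pth{\nu(1-\theta)\mmat_j+\nu\theta\mmat_j'}$. Your proposal is to expand locally, computing ``the joint MGF of the at-most-four standardized Gaussians incident to a given true/false pair'' and multiplying over pairs, treating the shared-feature coupling as an $O(\rho_i^2)$ correction. But the block MGF does not factor into such local pieces: within a cycle or even path every consecutive pair shares a feature, so the exponent is a single quadratic form in all $\approx 2\delta_j$ standardized Gaussians of the block, and the coupling propagates along the whole block rather than being confined to four-variable neighborhoods. The paper handles this globally: Lemma~\ref{lemma:computationGenFunc} expresses $R(\Thetamat)$ as a determinant, and Lemmas~\ref{lemma:rCycle} and \ref{lemma:rEvenPath} bound $\det\Pmat(\Thetamat,\rho)$ via a Schur complement whose spectrum is controlled explicitly (eigenvalues $\cos(2k\pi/n)$ for the cycle, Perron--Frobenius for the even path), followed by Lemmas~\ref{lemma:minProd}, \ref{lemma:push2exp} and \ref{lemma:compareEpsilon}. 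Nothing in your sketch supplies a substitute for this computation, nor the uniformity in block length that it provides (the bound must be exponential in $\delta_j$ with no loss growing with $\delta_j$). Moreover, the size of the correction is stated too loosely: a coupling term that were genuinely of order $\rho_i^2$ per coordinate per pair would sum over $i\in\calD$ to order $I_{XY}$ and destroy the exponent; the actual content of Lemma~\ref{lemma:compareEpsilon} is that the per-coordinate correction is fourth order in $\rho_i$ (i.e.\ $\rho_{\max}^2$ relative to the leading $\rho_i^2$ term), which is precisely what yields the $6\rho_{\max}^2\tau$ per misaligned pair. As written, the decisive estimate is asserted rather than proved, so the proposal does not yet constitute a proof of the lemma.
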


By \hyperref[lemma:svd2rho]{Lemma \ref*{lemma:svd2rho}}, under \hyperref[cond:highDimensional]{Condition \ref*{cond:highDimensional}}, \(\rho_{\max}^2 \leq o(1)\), so the bound can be simplified as \(\exp\pth{-\delta\cdot\frac{I_{XY}^2+\tau^2}{2I_{XY}}(1-o(1))}\).

\begin{proof}
    By \hyperref[cor:chernoff]{Corollary \ref*{cor:chernoff}}, \(\Pr\croc{\ip{\Gmat,\mmat}\geq \tau |m| \,\textrm{ and }\, \ip{\Gmat,\mmat'}\geq\ip{\Gmat,\mmat}\big|m\subseteq M}\) is upper bounded by \(e^{-\tau|m|(\nu-1)}R\pth{\Thetamat}\) where \(\Thetamat = \nu(1-\theta)\mmat + \nu\theta\mmat'\). Consider the decomposition of \(\Thetamat\) into blocks: we get cycles and even paths as described in \hyperref[def:elementaryBlocks]{Definition \ref*{def:elementaryBlocks}}. (There are no one-by-one blocks since \(m\) and \(m'\) have no intersection.) This decomposition gives us a block-diagonal representation of \(\Thetamat\). By \hyperref[lemma:blockGenFunc]{Lemma \ref*{lemma:blockGenFunc}}, \(R(\Thetamat) = \prod R(\Thetamat_j)\), where \(\Thetamat_j\) denotes the block corresponding to an elementary misalignment (i.e. cycle or even path).
    
    Let \(m_j\) and \(m_j'\) denote the partial misalignments that correspond to the intersection of block \(\Thetamat_j\) with \(m\) and \(m'\). Let \(\delta_j=|m_j|=|m_j'|\) denote their size. (Cycle and even path type misalignment consist of mappings of equal size.) Under \hyperref[cond:highDimensional]{Condition \ref*{cond:highDimensional}}, Lemmas \ref{lemma:rCycle} and \ref{lemma:rEvenPath} give us
    \begin{align*}
        R\pth{\nu(1-\theta)\mmat_j + \nu\theta\mmat_j'} &\leq \exp\pth{-\delta_j\cdot\frac{I_{XY}}{2}\nu(2-\nu) + 6\delta_j\rho_{\max}^2 I_{XY}}\\
        \implies R\pth{\nu(1-\theta)\mmat + \nu\theta\mmat'} &\leq \exp\pth{-\delta\cdot\frac{I_{XY}}{2}\nu(2-\nu)+6\delta \rho_{\max}^2 I_{XY}}
    \end{align*}
    Pick \(\nu = 1 + \frac{\tau}{I_{XY}}\). Then
    \begin{align*}
        R\pth{\nu(1-\theta)\mmat + \nu\theta\mmat} &\leq \exp\pth{-\delta\cdot\frac{I_{XY}^2-\tau^2}{2I_{XY}}+\delta\cdot 6\rho_{\max}^2\tau}
    \end{align*}
    and \(e^{-\tau\delta(\nu-1)} = \exp\pth{-\delta\cdot\frac{\tau^2}{I_{XY}}}\). Then \(e^{-\tau|m|(\nu-1)}R\pth{\Thetamat} \leq \exp\pth{-\delta\cdot\frac{I_{XY}^2+\tau^2}{2I_{XY}}+\delta\cdot 6\rho_{\max}^2\tau}\) and we have the claimed result.
\end{proof}

\subsection{Concentration inequalities for planted matching}

\(\Wmat\) refers to the edge weight matrix of the bipartite graph under the planted matching setting described in \hyperref[subsec:modelPlanted]{Subsection \ref*{subsec:modelPlanted}}. We state the concentration inequalities in terms of \(\Wmat\), as well as another matrix \(\Wmat_G\define\mu\Wmat - \frac{\mu^2}{2}\), which is \(\Wmat\) scaled and shifted to match the statistics of \(\Gmat\), as given in \hyperref[app:stats]{Appendix \ref*{app:stats}}.

\begin{lemma}[Atypicality]
    \label{lemma:typicalityPlanted}
    Let \(m\) a partial mapping fully contained in the true mapping \(M\). Given some \(\tau_W \leq \mu\) and \(\tau_G = \mu\tau - \frac{\mu^2}{2}\),
    \begin{align*}
        \Pr\croc{\tau_W|m|\geq \ip{\Wmat,\mmat}|m\subseteq M} &\leq \exp\pth{-\frac{(\mu-\tau_W)^2}{2}}
    \end{align*}
    which is equivalent to
    \begin{align*}
        \Pr\croc{\tau_G|m|\geq \ip{\Wmat_G,\mmat}|m\subseteq M} &\leq \exp\pth{-\frac{(\zeta-\tau_G)^2}{4\zeta}}
    \end{align*}
    where \(\zeta = \mu^2/2\).
\end{lemma}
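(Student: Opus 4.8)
The plan is to reduce the statement to a single Gaussian tail estimate, since under the planted‑matching model the relevant weights are genuinely independent normals. Conditioned on $m\subseteq M$, every pair $(u,v)$ mapped by $m$ is a true pair, so $W_{u,v}\sim\calN(\mu,1)$, and by the model these entries are mutually independent. Hence $\ip{\Wmat,\mmat}=\sum_{(u,v)\in m}W_{u,v}$ is distributed as $\calN(|m|\mu,|m|)$, and the event $\{\tau_W|m|\geq\ip{\Wmat,\mmat}\}$ is a genuine lower‑tail event because the hypothesis $\tau_W\leq\mu$ places the threshold at or below the mean.

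From here I would write $\ip{\Wmat,\mmat}=|m|\mu+\sqrt{|m|}\,Z$ with $Z\sim\calN(0,1)$ and conclude
\[
  \Pr\croc{\tau_W|m|\geq\ip{\Wmat,\mmat}\,\big|\,m\subseteq M}
  =\Pr\croc{Z\geq\sqrt{|m|}\,(\mu-\tau_W)}
  \leq\exp\pth{-\tfrac{|m|(\mu-\tau_W)^2}{2}},
\]
using the standard bound $\Pr[Z\geq t]\leq e^{-t^2/2}$ for $t\geq0$ (valid here since $t=\sqrt{|m|}(\mu-\tau_W)\geq0$); I would state the bound with this $|m|$ factor in the exponent, consistent with the way the lemma is invoked (with $\delta=|m|$) in the achievability proofs. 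Equivalently one can run the Chernoff step directly: for $\theta\geq0$, $\Pr[\ip{\Wmat,\mmat}\leq\tau_W|m|]\leq e^{\theta\tau_W|m|}\,\bbE[e^{-\theta W_{u,v}}]^{|m|}=\exp\!\big(|m|(\theta(\tau_W-\mu)+\theta^2/2)\big)$, and optimizing at $\theta=\mu-\tau_W\geq0$ recovers the same exponent. This plays the role that \hyperref[cor:chernoff]{Corollary \ref*{cor:chernoff}} and \hyperref[cor:rMapping]{Corollary \ref*{cor:rMapping}} play in the database‑alignment proof of \hyperref[lemma:typicality]{Lemma \ref*{lemma:typicality}}, only simpler, since no generating‑function machinery is needed for exact Gaussians.

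For the second, ``equivalent'' form, substitute $\Wmat_G=\mu\Wmat-\mu^2/2$, so that $\ip{\Wmat_G,\mmat}=\mu\ip{\Wmat,\mmat}-\tfrac{\mu^2}{2}|m|$ and $\tau_G=\mu\tau_W-\tfrac{\mu^2}{2}$; then $\{\tau_G|m|\geq\ip{\Wmat_G,\mmat}\}$ is literally the same event as $\{\tau_W|m|\geq\ip{\Wmat,\mmat}\}$, so the probabilities coincide, and it only remains to match the exponents. With $\zeta=\mu^2/2$ one gets $\zeta-\tau_G=\mu^2-\mu\tau_W=\mu(\mu-\tau_W)$, hence $\tfrac{(\zeta-\tau_G)^2}{4\zeta}=\tfrac{\mu^2(\mu-\tau_W)^2}{2\mu^2}=\tfrac{(\mu-\tau_W)^2}{2}$, which is exactly the first exponent (again with the common $|m|$ factor).

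There is no serious obstacle: the argument is essentially one Gaussian tail bound plus an affine change of variables. The only points demanding care are the sign condition $\tau_W\leq\mu$ (equivalently $\tau_G\leq\zeta$), which is what keeps the Chernoff parameter nonnegative and identifies the correct (lower) tail, and the bookkeeping of the normalization $\Wmat\mapsto\Wmat_G$ so that the planted‑matching statement lines up with the $\Gmat$‑based statements used elsewhere; both are routine.
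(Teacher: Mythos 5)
Your proposal is correct and follows essentially the same route as the paper: a Chernoff/Gaussian tail bound on the conditionally independent true-pair weights, optimized at \(\theta=-(\mu-\tau_W)\), followed by the affine substitution \(\Wmat_G=\mu\Wmat-\mu^2/2\) and the algebra \(\zeta-\tau_G=\mu(\mu-\tau_W)\) to match exponents. If anything, your treatment of \(\ip{\Wmat,\mmat}\) as a single \(\calN(|m|\mu,|m|)\) variable is slightly cleaner than the paper's per-entry bound followed by a product, and your inclusion of the factor \(|m|\) in the exponent is the form actually used (with \(\delta=|m|\)) in the achievability arguments.
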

\begin{proof}
    Given \(u\mapped{M}v\), \(W_{u,v}\) is normal with mean \(\mu\) and unit variance. Then, its moment generating function is given by \(\bbE\croc{e^{\theta W_{u,v}}}=e^{\theta \mu + \frac{\theta^2}{2}}\). By Markov's inequality, for any \(\theta < 0\),
    \begin{align*}
        \Pr\croc{W_{u,v} \leq \tau_W|u\mapped{M}v} &\leq e^{-\theta\tau_W}\bbE\croc{e^{\theta W_{u,v}}}\\
        &= \exp\pth{-\theta \tau_W +\theta\mu + \frac{\theta^2}{2}}
    \end{align*}
    Pick \(\theta = -(\mu - \tau_W)\). Then \(\Pr\croc{W_{u,v} \leq \tau_W|u\mapped{M}v} \leq \exp\pth{-\frac{(\mu-\tau_W)^2}{2}}\). Since all entries in \(\Wmat\) are independent, \(\Pr\croc{\ip{\Wmat,\mmat} \leq \tau_W|m|\,\big|m\subseteq M}\) is the product of all of these terms.

    \(\frac{(\mu-\tau_W)^2}{2} = \frac{\pth{\mu^2 - \mu\tau_W}^2}{2\mu^2} = \frac{\pth{\mu^2/2 - (\mu\tau_W-\mu^2/2)}^2}{2\mu^2} = \frac{(\zeta - \tau_G)^2}{4\zeta}\). Then \(\Pr\croc{(W_G)_{u,v} \leq \tau_G|u\mapped{M}v} = \Pr\croc{W_{u,v} \leq \tau_W|u\mapped{M}v} \leq \exp\pth{-\frac{(\zeta - \tau_G)^2}{4\zeta}}\). Once again, taking the product of all of this term over all pairs in \(m\) gives us the claimed result.
\end{proof}

\begin{lemma}[False positive probability]
    \label{lemma:FPPlanted}
    Given some \(\tau_W \geq 0\) and \(\tau_G = \mu\tau - \frac{\mu^2}{2}\),
    \begin{align*}
        \Pr\croc{W_{u,v} \geq \tau_W|u\notmapped{M}v} &\leq \exp\pth{-\frac{\tau_W^2}{2}}
    \end{align*}
    which is equivalent to
    \begin{align*}
        \Pr\croc{(W_G)_{u,v} \leq \tau_G|u\notmapped{M}v} &\leq \exp\pth{-\frac{(\zeta + \tau_G)^2}{4\zeta}}
    \end{align*}
    where \(\zeta = \mu^2/2\).
\end{lemma}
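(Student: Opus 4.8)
The plan is to reduce the first inequality to the standard Gaussian tail bound and then obtain the second by an affine change of variables, paralleling exactly the relationship between the two forms of Lemma~\ref{lemma:typicalityPlanted}. Under the conditioning \(u\notmapped{M}v\), the planted matching model of \hyperref[subsec:modelPlanted]{Subsection \ref*{subsec:modelPlanted}} makes \(W_{u,v}\) a standard Gaussian: it has unit variance and mean \(\bbE[W_{u,v}]=0\) because \((u,v)\) is not a true pair, and it is independent of the rest of \(\Wmat\). So the first claim is simply the statement that a standard normal random variable has upper tail \(\Pr[W_{u,v}\geq\tau_W]\leq\exp(-\tau_W^2/2)\) for \(\tau_W\geq 0\).

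First I would prove this tail bound by the Chernoff method, exactly as in Lemma~\ref{lemma:typicalityPlanted}: the conditional moment generating function is \(\bbE[e^{\theta W_{u,v}}\mid u\notmapped{M}v]=e^{\theta^2/2}\), so Markov's inequality gives \(\Pr[W_{u,v}\geq\tau_W\mid u\notmapped{M}v]\leq e^{-\theta\tau_W+\theta^2/2}\) for every \(\theta>0\); taking \(\theta=\tau_W\) (nonnegative since \(\tau_W\geq0\)) yields \(\exp(-\tau_W^2/2)\). Note that, unlike the database-alignment case, no generating-function or block-decomposition machinery is needed here, since the statement concerns a single entry of \(\Wmat\) rather than an inner product \(\ip{\Wmat,\mmat}\).

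Next I would transfer the bound to the \(\Wmat_G\) formulation. Since \((W_G)_{u,v}=\mu W_{u,v}-\tfrac{\mu^2}{2}\) and \(\tau_G=\mu\tau_W-\tfrac{\mu^2}{2}\), and \(\mu>0\), the event in the second display coincides with \(\{W_{u,v}\geq\tau_W\}\), so the two probabilities are equal; it then remains only to rewrite the exponent using \(\zeta=\mu^2/2\), namely
\[
\frac{\tau_W^2}{2}=\frac{(\mu\tau_W)^2}{2\mu^2}=\frac{\pth{\tau_G+\mu^2/2}^2}{2\mu^2}=\frac{(\zeta+\tau_G)^2}{4\zeta},
\]
which gives exactly the claimed form, the planted matching counterpart of Lemma~\ref{lemma:FP}.

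There is essentially no obstacle: in contrast with the database-alignment setting, where \(G_{u,v}\) is a log-likelihood ratio and one must invoke the \(R(\cdot)\)-based estimates, here \(W_{u,v}\) is literally \(\calN(0,1)\) under the null hypothesis, so the whole argument is a one-line Gaussian tail estimate. The only point requiring care is the bookkeeping in the affine change of variables relating \(\tau_W\), \(\tau_G\), \(\mu\) and \(\zeta\), together with checking that the direction of the inequality is preserved (it is, since \(\mu>0\)).
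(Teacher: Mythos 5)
Your proposal is correct and matches the paper's own proof essentially verbatim: condition on \(u\notmapped{M}v\), apply the Chernoff/Markov bound with \(\theta=\tau_W\) to the standard-Gaussian moment generating function, and then transfer to the \(\Wmat_G\) form by the affine identity \(\frac{\tau_W^2}{2}=\frac{(\zeta+\tau_G)^2}{4\zeta}\). You also correctly read the second display as the event \(\{(W_G)_{u,v}\geq\tau_G\}\) (equivalently \(\{W_{u,v}\geq\tau_W\}\)), which is the intended false-positive event despite the inequality sign as printed in the lemma.
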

\begin{proof}
    Given \(u\notmapped{M}v\), \(W_{u,v}\) is normal with zero \(\mu\) and unit variance. Then, its moment generating function is given by \(\bbE\croc{e^{\theta W_{u,v}}}=e^{\frac{\theta^2}{2}}\). By Markov's inequality, for any \(\theta > 0\),
    \begin{align*}
        \Pr\croc{W_{u,v} \leq \tau_W|u\notmapped{M}v} &\leq e^{-\theta\tau_W}\bbE\croc{e^{\theta W_{u,v}}}\\
        &= \exp\pth{-\theta \tau_W + \frac{\theta^2}{2}}
    \end{align*}
    Pick \(\theta = \tau_W\). Then \(\Pr\croc{W_{u,v} \geq \tau_W|u\notmapped{M}v} \leq \exp\pth{-\frac{\tau_W^2}{2}}\).

    \(\frac{\tau_W^2}{2} = \frac{(\mu\tau_W)^2}{2\mu^2} = \frac{\pth{\mu^2/2 + (\mu\tau_W-\mu^2/2)}^2}{2\mu^2} = \frac{(\zeta + \tau_G)^2}{4\zeta}\). Then \(\Pr\croc{(W_G)_{u,v} \leq \tau_G|u\notmapped{M}v} = \Pr\croc{W_{u,v} \leq \tau_W|u\notmapped{M}v} \leq \exp\pth{-\frac{(\zeta + \tau_G)^2}{4\zeta}}\).
\end{proof}

\begin{lemma}[Misalignment]
    \label{lemma:misalignmentPlanted}
    Let \(m\) and \(m'\) denote two mappings of same size and \(\delta\) denote the number of pairs mapped by \(m\) but not by \(m'\). Then \( \leq \exp\pth{-\delta\frac{\mu^2}{4}}\).
    \begin{align*}
        \Pr\croc{\ip{\Wmat,\mmat}\leq\ip{\Wmat,\mmat'}|M=m} &\leq \exp\pth{-\delta\frac{\mu^2}{4}}
    \end{align*}
    which is equivalent to 
    \begin{align*}
        \Pr\croc{\ip{\Wmat_G,\mmat}\leq\ip{\Wmat_G,\mmat'}|M=m} &\leq \exp\pth{-\delta\frac{\zeta}{2}}
    \end{align*}
    where \(\zeta = \mu^2/2\).
\end{lemma}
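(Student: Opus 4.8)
The plan is to reduce the error event to a tail bound for a single one–dimensional Gaussian. First I would observe that the two displayed statements are equivalent: since \(\Wmat_G = \mu\Wmat - \tfrac{\mu^2}{2}\) and \(|m| = |m'|\), we have \(\ip{\Wmat_G,\mmat} - \ip{\Wmat_G,\mmat'} = \mu\pth{\ip{\Wmat,\mmat} - \ip{\Wmat,\mmat'}}\), and \(\mu>0\), so the two error events coincide; moreover \(\exp(-\delta\mu^2/4) = \exp(-\delta\zeta/2)\) because \(\zeta = \mu^2/2\). Hence it suffices to prove the first bound.

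Next I would pin down the law of \(X \define \ip{\Wmat,\mmat-\mmat'}\). The pairs mapped under both \(m\) and \(m'\) contribute identically to both inner products and cancel, so \(X = \sum_{(u,v)\in m\setminus m'} W_{u,v} - \sum_{(u,v)\in m'\setminus m} W_{u,v}\). Because \(|m| = |m'|\), both index sets have size \(\delta\), and since they occupy disjoint coordinates of \(\Wmat\), all \(2\delta\) entries involved are mutually independent. Conditioned on \(M=m\), each entry indexed by \(m\setminus m'\) is a true pair and is \(\calN(\mu,1)\), while each entry indexed by \(m'\setminus m\) is not in the true mapping and is \(\calN(0,1)\). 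Therefore \(X\) is Gaussian with mean \(\delta\mu\) and variance \(2\delta\).

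Finally I would apply a Chernoff bound exactly in the style of \hyperref[lemma:typicalityPlanted]{Lemma \ref*{lemma:typicalityPlanted}} and \hyperref[lemma:FPPlanted]{Lemma \ref*{lemma:FPPlanted}}: for \(\theta>0\), Markov's inequality gives \(\Pr[X\leq 0] \leq \bbE[e^{-\theta X}]\), and factoring the moment generating function over the independent entries yields \(\bbE[e^{-\theta X}] = \pth{e^{-\theta\mu+\theta^2/2}}^{\delta}\pth{e^{\theta^2/2}}^{\delta} = \exp\pth{\delta(\theta^2 - \theta\mu)}\). Optimizing at \(\theta = \mu/2\) gives exponent \(-\delta\mu^2/4\), which is the claimed bound. (Equivalently, one could note directly that \(\Pr[X\leq 0] = Q\pth{\mu\sqrt{\delta/2}} \leq \exp(-\delta\mu^2/4)\) via the standard Gaussian tail bound \(Q(z)\leq e^{-z^2/2}\).)

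There is essentially no obstacle here; this is the simplest of the planted–matching concentration lemmas. The only points that require care are the bookkeeping that, because \(|m|=|m'|\), the symmetric differences \(m\setminus m'\) and \(m'\setminus m\) have the same cardinality \(\delta\), and the observation that every pair in \(m'\setminus m\) is genuinely a false pair under the planting \(M=m\), so that it contributes a mean-zero rather than a mean-\(\mu\) summand to \(X\).
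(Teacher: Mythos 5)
Your proof is correct and follows essentially the same route as the paper's: identify \(\ip{\Wmat,\mmat}-\ip{\Wmat,\mmat'}\) as a Gaussian with mean \(\delta\mu\) and variance \(2\delta\) after the common pairs cancel, then apply a Chernoff/Markov bound optimized at \(\theta=\mu/2\) (the paper uses the equivalent \(\theta=-\mu/2\) with the opposite sign convention). Your explicit check of the equivalence between the \(\Wmat\) and \(\Wmat_G\) formulations is a small addition the paper leaves implicit, but the substance is identical.
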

\begin{proof}
    \(\ip{\Wmat,\mmat}-\ip{\Wmat,\mmat'}\) is the linear combination of \(2\delta\) independent Gaussian random variables and therefore is Gaussian. (\(|m|-\delta\) of the terms in \(\ip{\Wmat,\mmat}\) get canceled out by the \(|m|-\delta\) common terms in \(\ip{\Wmat,\mmat'}\).) The difference has mean \(\delta \mu\) and variance \(2\delta\). Then, the moment generating function is given by \(\bbE\croc{\exp\pth{\theta \ip{\Wmat,\mmat}-\theta\ip{\Wmat,\mmat'}}} = \exp\pth{\theta \delta \mu + \theta^2 \delta}\). Then, by Markov's inequality, for any \(\theta <0\),
    \begin{align*}
        \Pr\croc{\ip{\Wmat,\mmat}\leq\ip{\Wmat,\mmat'}|M=m} &= \Pr\croc{\ip{\Wmat,\mmat}-\ip{\Wmat,\mmat'}\leq0|M=m}\\
        &\leq \bbE\croc{\exp\pth{\theta \ip{\Wmat,\mmat}-\theta\ip{\Wmat,\mmat'}}}\\
        &= \exp\pth{\theta \delta \mu + \theta^2 \delta}.
    \end{align*}
    Picking \(\theta = -\mu/2\) gives us the claimed result.
\end{proof}

\begin{lemma}[Misalignmen-despite-typicality]
    \label{lemma:condMisalignmentPlanted}
    Let \(m\) and \(m'\) denote two mappings of same size \(\delta\) such that no pair is mapped under both mappings. Given some \(\tau_W\geq \mu/2\) and \(\tau_G = \mu\tau - \frac{\mu^2}{2}\),
    \begin{align*}
        \Pr\croc{\ip{\Wmat,\mmat}\geq \tau_W |m| \,\textrm{ and }\, \ip{\Wmat,\mmat'}\geq\ip{\Wmat,\mmat}\big|m\subseteq M} \leq \exp\pth{-\delta\cdot\pth{\tau_W-\mu/2}^2 - \delta\mu^2/4}
    \end{align*}
    which is equivalent to
    \begin{align*}
        \Pr\croc{\ip{\Wmat_G,\mmat}\geq \tau_G |m| \,\textrm{ and }\, \ip{\Wmat_G,\mmat'}\geq\ip{\Wmat_G,\mmat}\big|m\subseteq M} \leq \exp\pth{-\delta\cdot\frac{\tau_G^2+\zeta^2}{2\zeta}}
    \end{align*}
    where \(\zeta = \mu^2/2\).
\end{lemma}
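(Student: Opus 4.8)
The plan is to reduce the claim to a two-parameter Chernoff bound for sums of independent Gaussians; the $\Wmat_G$ form then follows by the affine substitution $\Wmat_G = \mu\Wmat-\mu^2/2$ together with $\zeta=\mu^2/2$. First I would set $S\define\ip{\Wmat,\mmat}=\sum_{(u,v)\in m}W_{u,v}$ and $S'\define\ip{\Wmat,\mmat'}=\sum_{(u',v')\in m'}W_{u',v'}$. Since $m\subseteq M$, the $\delta$ summands of $S$ are i.i.d.\ $\calN(\mu,1)$, so $\bbE[e^{tS}]=\exp(\delta(t\mu+t^2/2))$. The $\delta$ summands of $S'$ are each $\calN(0,1)$: here I use that the pairs of $m'$ are non-pairs of $M$, which holds in the elementary-misalignment setting where this lemma is applied (within a cycle or an even path $m'$ sends every user to a non-partner, and any extra vertex used by $m'$ is unmatched under $M$), so $\bbE[e^{tS'}]=\exp(\delta t^2/2)$. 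Because $m$ and $m'$ share no pair, $S$ and $S'$ are sums over disjoint entries of $\Wmat$ and hence independent.

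Next I would apply the exponential Markov inequality with two non-negative tilts $a,b\ge0$: from $\ind[S\ge\tau_W\delta]\le e^{a(S-\tau_W\delta)}$ and $\ind[S'\ge S]\le e^{b(S'-S)}$, taking expectations and factoring by independence gives
\begin{align*}
\Pr[S\ge\tau_W\delta,\ S'\ge S]
&\le e^{-a\tau_W\delta}\,\bbE\bcroc{e^{(a-b)S}}\,\bbE\bcroc{e^{bS'}}\\
&= \exp\Bpth{\delta\bpth{-a\tau_W+(a-b)\mu+\tfrac{(a-b)^2}{2}+\tfrac{b^2}{2}}}.
\end{align*}
Then I would minimize the bracketed expression over $a,b\ge0$. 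It is convex, and its stationary point is $b=\tau_W$, $a-b=\tau_W-\mu$, i.e.\ $a=2\tau_W-\mu$; the hypothesis $\tau_W\ge\mu/2$ is exactly what makes this point feasible ($a\ge0$, $b\ge0$). Substituting back collapses the bracket to $-\tau_W^2+\mu\tau_W-\mu^2/2=-\bpth{(\tau_W-\mu/2)^2+\mu^2/4}$, which is the asserted bound for the $\Wmat$ event.

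Finally I would record the equivalence with the $\Wmat_G$ statement. Since $\Wmat_G=\mu\Wmat-\mu^2/2$ and $|m|=|m'|=\delta$, one has $\ip{\Wmat_G,\mmat}=\mu\ip{\Wmat,\mmat}-\tfrac{\mu^2}{2}\delta$ and likewise for $m'$, so the event $\{\ip{\Wmat_G,\mmat}\ge\tau_G\delta\}\cap\{\ip{\Wmat_G,\mmat'}\ge\ip{\Wmat_G,\mmat}\}$ is literally the event above with $\tau_G=\mu\tau_W-\mu^2/2$ (and $\tau_W\ge\mu/2\iff\tau_G\ge0$). A one-line computation with $\zeta=\mu^2/2$ verifies $(\tau_W-\mu/2)^2+\mu^2/4=\tfrac{\tau_G^2+\zeta^2}{2\zeta}$, yielding the second displayed inequality.

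I do not expect a genuine obstacle here — the argument is a routine two-sided Gaussian tail estimate. The one point requiring care is the sign restriction $a,b\ge0$ on the tilting parameters (the exponential Markov step is only valid for non-negative exponents), and confirming that the minimizer $a=2\tau_W-\mu$, $b=\tau_W$ respects it precisely under the hypothesis $\tau_W\ge\mu/2$; this is the only place that hypothesis enters. I would also double-check that $S$ and $S'$ involve disjoint entries of $\Wmat$ (immediate from $m\cap m'=\emptyset$), which is what legitimizes the factorization of the moment generating functions.
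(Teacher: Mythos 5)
Your proposal is correct and is essentially the paper's own argument: the paper bounds the same event via $\theta_1\bpth{\ip{\Wmat,\mmat'}-\ip{\Wmat,\mmat}}+\theta_2\ip{\Wmat,\mmat}\geq\theta_2\tau_W|m|$ followed by an exponential Markov bound, with $\theta_1=\tau_W$ and $\theta_2=2(\tau_W-\mu/2)$, which are exactly your tilts $b$ and $a$, and the same substitution yields $-\delta\bpth{(\tau_W-\mu/2)^2+\mu^2/4}$ and the $\Wmat_G$ reformulation. Your explicit checks (independence of the two sums from $m\cap m'=\emptyset$, mean $0$ for the $m'$ entries, and feasibility $a,b\geq 0$ precisely when $\tau_W\geq\mu/2$) are the same implicit ingredients of the paper's proof, just stated more carefully.
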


\begin{proof}
    If \(y\geq x\) and \( x\geq t\), then \(\theta_1(y-x)+\theta_2(x-t) \geq 0\) for any choice of \(\theta_1,\theta_2> 0 \). Replacing \(y\) by \(\ip{\Wmat,\mmat'}\), \(x\) by \(\ip{\Wmat,\mmat}\) and \(t\) by \(\tau_W |m|\), we get the implication between the events of interest: \(\ip{\Wmat,\mmat'}\geq\ip{\Wmat,\mmat}\) and \(\ip{\Wmat,\mmat}\geq \tau_W |m|\) implies \(\theta_1 \pth{\ip{\Wmat,\mmat'}-\ip{\Wmat,\mmat}}+\theta_2\ip{\Wmat,\mmat} \geq \theta_2\tau_W|m|\).

    \(\ip{\Wmat,\mmat}\geq \tau_W |m|\) implies \(\theta_1 \pth{\ip{\Wmat,\mmat'}-\ip{\Wmat,\mmat}}+\theta_2\ip{\Wmat,\mmat}\) is the linear combination of \(2\delta\) independent Gaussian random variables and is therefore Gaussian. It has mean \(\delta\mu(\theta_2-\theta_1)\) and variance \(\delta\theta_1^2 + \delta(\theta_2-\theta_1)^2 = \delta\pth{2\theta_1^2-2\theta_1\theta_2+\theta_2^2}\).
    
    By Markov's inequality
    \begin{align*}
        &\Pr\croc{\theta_1 \pth{\ip{\Wmat,\mmat'}-\ip{\Wmat,\mmat}}+\theta_2\ip{\Wmat,\mmat} \geq \theta_2\tau_W|m|\,\big|m\subseteq M}\\ &\leq e^{-\theta_2\tau_W|m|}\bbE\croc{e^{\theta_1 \pth{\ip{\Wmat,\mmat'}-\ip{\Wmat,\mmat}}+\theta_2\ip{\Wmat,\mmat}}|m\subseteq M}\\
        &= \exp\pth{-\theta_2\tau_W \delta +\delta\mu(\theta_2-\theta_1) + \frac{\delta}{2}\pth{2\theta_1^2-2\theta_1\theta_2+\theta_2^2}}
    \end{align*}
    Pick \(\theta_1 = \tau_W\) and \(\theta_2= 2(\tau_W-\mu/2)\). Then, the expression in the last line simplifies to
    \begin{align*}
        \exp\pth{-(\tau_W-\mu/2)^2 - \mu^2/4},
    \end{align*}
    which matches the first part of the claim.

    \(-(\tau_W-\mu/2)^2 - \mu^2/4 = -\frac{(\mu\tau-\mu^2/2)+(\mu^2/2)^2}{\mu^2} = \frac{\tau_G^2 + \zeta^2}{2\zeta}\) gives us the second part of the claim.
\end{proof}

\subsection{Geometric intuition behind concentration inequalities}

For database alignment, by \hyperref[lemma:highDimStats]{Lemma \ref*{lemma:highDimStats}}, entries corresponding to true pairs in \(\Gmat\) have mean \(I_{XY}\) and false pairs have mean \(-I_{XY}\). All entries have variance \(2I_{XY}(1\pm o(1))\).

For planted matching, given \(\Wmat_G = \mu\Wmat - \mu^2/2\) a scaled and shifted version of the original edge weight matrix \(\Wmat\), entries corresponding to true pairs in \(\Wmat_G\) have mean \(\mu^2/2\) and false pairs have mean \(-mu^2/2\). All entries have variance \(\mu^2\).

For the rest of the section, we use \(\zeta\) refers to \(I_{XY}\) in the context of database alignment and \(\mu^2/2\) in the context of planted matching. Then true pairs have mean \(\zeta\), false pair have mean \(-\zeta\), and all pairs have variance \(2\zeta\) in both \(\Gmat\) and \(\Wmat_G\).

We want to bound the measure of the probability spaces that correspond to each type of error event. Consider the probability space \(\bbR^{\calU\times\calV}\). A two-dimensional projection of this space is given in \hyperref[fig:CE0]{Fig. \ref*{fig:CE0}}. Note the mean point \((\zeta ,-\zeta )\).

\begin{figure}[ht]
\centering
\resizebox{120pt}{!}{
    \begin{tikzpicture}[
     text height = 1.5ex,
     text depth =.1ex,
     b/.style={very thick}
     ]
     
    \draw (3,0.3) node {\(G_{u,v}\)};
    \draw (0.5,1.8) node {\(G_{u,v'}\)};
    
    \draw[b] (-1,0) -- (3.5,0);
    \draw[b] (0,-2.5) -- (0,2);

    \draw[b] (3.4,-0.1) -- (3.5,0);
    \draw[b] (3.4,0.1) -- (3.5,0);
    \draw[b] (-0.1,1.9) -- (0,2);
    \draw[b] (0.1,1.9) -- (0,2);

    \draw (2,-2) node {\(\cdot\)};
    \draw (2,-2) node {\(\circ\)};
    
    \draw (1.7,-0.2) node {\(\zeta\)};
    \draw (-0.5,-1.7) node {\(-\zeta\)};
    \draw[dotted] (2,0) -- (2,-2);
    \draw[dotted] (0,-2) -- (2,-2);
    \draw[b] (2,-0.1) -- (2,0.1);
    \draw[b] (-0.1,-2) -- (0.1,-2);
    
    \end{tikzpicture}
}
\caption{2-dimensional projection of probability space. \(G_{u,v}\) corresponds to a true pair and has mean \(\zeta\), while \(G_{u,v'}\) corresponds to a false pair and has mean \(-\zeta\).}
\label{fig:CE0}
\end{figure}
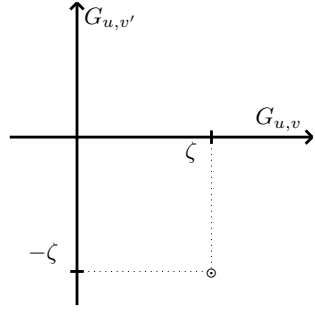

As shown in \hyperref[sec:algo]{Section \ref*{sec:algo}}, the objective function for all three algorithms is to maximize a linear combination of a shifted version of \(\Gmat\) or \(\Wmat_G\). All concentration equalities we use are bounds on the measures of half-spaces in the probability space \(\bbR^{\calU\times\calV}\). Approximating the entries of \(\Gmat\) as independent normal random variables with appropriate statistics (i.e. \((\mu,\sigma^2)=(\zeta ,2\zeta )\) for true pairs and \((\mu,\sigma^2)=(-\zeta ,2\zeta )\) for false pairs), we are able to get quick approximations for the bounds on half-spaces using the Chernoff bound. Specifically, the probability of a half-space is bounded by \(\exp\pth{-\frac{\ell^2}{2\sigma^2}}\) where \(\ell\) denotes the separation between the half-space and the mean point and \(\sigma^2=2\zeta \) is the variance of the terms. These bounds hold exactly in the planted matching case since entries of \(\Wmat_G\) are indeed independent normal random variables.

\subsubsection{True pair failing threshold testing}

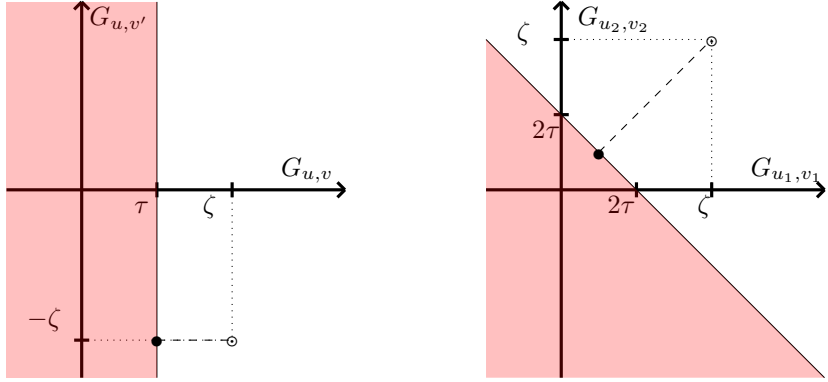
\begin{figure}[h]
\centering
\begin{minipage}{.45\textwidth}
\begin{tikzpicture}[
 text height = 1.5ex,
 text depth =.1ex,
 b/.style={very thick}
 ]
 
\draw (3,0.3) node {\(G_{u,v}\)};
\draw (0.5,2.3) node {\(G_{u,v'}\)};
    
\draw[b] (-1,0) -- (3.5,0);
\draw[b] (0,-2.5) -- (0,2.5);

\draw[b] (3.4,-0.1) -- (3.5,0);
\draw[b] (3.4,0.1) -- (3.5,0);
\draw[b] (-0.1,2.4) -- (0,2.5);
\draw[b] (0.1,2.4) -- (0,2.5);

\draw (2,-2) node {\(\cdot\)};
\draw (2,-2) node {\(\circ\)};
    
\draw (1.7,-0.2) node {\(\zeta\)};
\draw (-0.5,-1.7) node {\(-\zeta\)};
\draw[dotted] (2,0) -- (2,-2);
\draw[dotted] (0,-2) -- (2,-2);
\draw[b] (2,-0.1) -- (2,0.1);
\draw[b] (-0.1,-2) -- (0.1,-2);

\draw (1,-2.5) -- (1,2.5);
\draw (0.8,-0.2) node {\(\tau\)};
\draw[b] (1,-0.1) -- (1,0.1);
\fill[red,opacity=0.25] (1,-2.5) -- (1,2.5) -- (-1,2.5) -- (-1,-2.5) -- cycle;

\draw[dashed] (1,-2) -- (2,-2);
\draw (1,-2) node {\(\bullet\)};

\end{tikzpicture}
\end{minipage}
\begin{minipage}{.45\textwidth}
\begin{tikzpicture}[
 text height = 1.5ex,
 text depth =.1ex,
 b/.style={very thick}
 ]
 
\draw (3,0.3) node {\(G_{u_1,v_1}\)};
\draw (0.7,2.3) node {\(G_{u_2,v_2}\)};
    
\draw[b] (-1,0) -- (3.5,0);
\draw[b] (0,-2.5) -- (0,2.5);

\draw[b] (3.4,-0.1) -- (3.5,0);
\draw[b] (3.4,0.1) -- (3.5,0);
\draw[b] (-0.1,2.4) -- (0,2.5);
\draw[b] (0.1,2.4) -- (0,2.5);

\draw (2,2) node {\(\cdot\)};
\draw (2,2) node {\(\circ\)};
    
\draw (1.9,-0.2) node {\(\zeta\)};
\draw (-0.5,2.1) node {\(\zeta\)};
\draw[dotted] (2,0) -- (2,2);
\draw[dotted] (0,2) -- (2,2);
\draw[b] (2,-0.1) -- (2,0.1);
\draw[b] (-0.1,2) -- (0.1,2);

\draw (-1,2) -- (3.5,-2.5);
\draw (-0.2,0.8) node {\(2\tau\)};
\draw (0.8,-0.2) node {\(2\tau\)};
\draw[b] (-0.1,1) -- (0.1,1);
\draw[b] (1,-0.1) -- (1,0.1);
\fill[red,opacity=0.25] (-1,2) -- (3.5,-2.5) -- (-1,-2.5) -- cycle;

\draw[dashed] (0.5,0.5) -- (2,2);
\draw (0.5,0.5) node {\(\bullet\)};

\end{tikzpicture}
\end{minipage}
\caption{2-dimensional projection of the regions of the probability space corresponding to the event \(\{G_{u,v}\leq \tau\}\) (left-hand side) and the event \(\{G_{u_1,v_1}+G_{u_2,v_2}\leq 2\tau\}\) (right-hand side).}
\label{fig:CE1}
\end{figure}

The left-hand side of \hyperref[fig:CE1]{Fig. \ref*{fig:CE1}} illustrates the half-space corresponding to \(\{G_{u,v}\leq \tau\}\). The separation between the half-space and the mean point is equal to \(\ell=\zeta -\tau\). Then, the Chernoff bound gives us \(\exp\pth{-\frac{\ell^2}{2\sigma^2}} = \exp\pth{-\frac{\pth{\zeta -\tau}^2}{4\zeta }}\), which exactly matches the statement in \hyperref[lemma:typicality]{Lemma \ref*{lemma:typicality}}.

Similarly, the right-hand side of \hyperref[fig:CE1]{Fig. \ref*{fig:CE1}} illustrates the half-space corresponding to the case with 2 true pairs: \(\{G_{u_1,v_1}+G_{u_2,v_2}\leq 2\tau\}\). The separation between the half-space and the mean point is equal to \(\ell=\pth{\zeta -\tau}\sqrt{2}\). Then, the Chernoff bound gives us \(\exp\pth{-\frac{\ell^2}{2\sigma^2}} = \exp\pth{-2\cdot\frac{\pth{\zeta -\tau}^2}{4\zeta }}\), which exactly matches the statement in \hyperref[lemma:typicality]{Lemma \ref*{lemma:typicality}}.

This argument can be generalized to an arbitrary number of true pairs.

\subsubsection{False pair passing threshold testing}

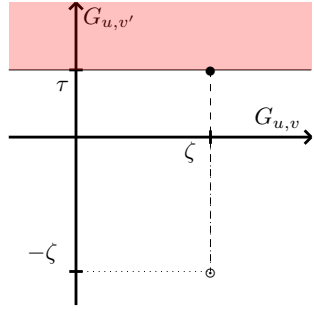
\begin{figure}[ht]
\centering
\resizebox{120pt}{!}{
\begin{tikzpicture}[
 text height = 1.5ex,
 text depth =.1ex,
 b/.style={very thick}
 ]
 
\draw (3,0.3) node {\(G_{u,v}\)};
\draw (0.5,1.8) node {\(G_{u,v'}\)};
    
\draw[b] (-1,0) -- (3.5,0);
\draw[b] (0,-2.5) -- (0,2);

\draw[b] (3.4,-0.1) -- (3.5,0);
\draw[b] (3.4,0.1) -- (3.5,0);
\draw[b] (-0.1,1.9) -- (0,2);
\draw[b] (0.1,1.9) -- (0,2);

\draw (2,-2) node {\(\cdot\)};
\draw (2,-2) node {\(\circ\)};
    
\draw (1.7,-0.2) node {\(\zeta\)};
\draw (-0.5,-1.7) node {\(-\zeta\)};
\draw[dotted] (2,0) -- (2,-2);
\draw[dotted] (0,-2) -- (2,-2);
\draw[b] (2,-0.1) -- (2,0.1);
\draw[b] (-0.1,-2) -- (0.1,-2);

\draw (-1,1) -- (3.5,1);
\draw (-0.2,0.8) node {\(\tau\)};
\draw[b] (-0.1,1) -- (0.1,1);
\fill[red,opacity=0.25] (-1,1) -- (3.5,1) -- (3.5,2) -- (-1,2) -- cycle;

\draw[dashed] (2,1) -- (2,-2);
\draw (2,1) node {\(\bullet\)};

\end{tikzpicture}
}
\caption{2-dimensional projection of the regions of the probability space corresponding to the event \(\{G_{u,v'}\geq \tau\}\)}
\label{fig:CE2}
\end{figure}

\hyperref[fig:CE2]{Fig. \ref*{fig:CE2}} illustrates the half-space corresponding to \(\{G_{u,v'}\geq \tau\}\). The separation between the half-space and the mean point is equal to \(\ell=\zeta +\tau\). Then, the Chernoff bound gives us \(\exp\pth{-\frac{\ell^2}{2\sigma^2}} = \exp\pth{-\frac{\pth{\zeta +\tau}^2}{4\zeta }}\), which exactly matches the statement in \hyperref[lemma:FP]{Lemma \ref*{lemma:FP}}.

\subsubsection{Misalignment}

\begin{figure}[ht]
\centering
\resizebox{120pt}{!}{
    \begin{tikzpicture}[
     text height = 1.5ex,
     text depth =.1ex,
     b/.style={very thick}
     ]
     
    \draw (3,0.3) node {\(G_{u,v}\)};
    \draw (0.5,1.8) node {\(G_{u,v'}\)};
        
    \draw[b] (-1,0) -- (3.5,0);
    \draw[b] (0,-2.5) -- (0,2);
    
    \draw (2,-2) node {\(\cdot\)};
    \draw (2,-2) node {\(\circ\)};
        
    \draw (1.7,-0.2) node {\(\zeta\)};
    \draw (-0.5,-1.7) node {\(-\zeta\)};
    \draw[dotted] (2,0) -- (2,-2);
    \draw[dotted] (0,-2) -- (2,-2);
    \draw[b] (2,-0.1) -- (2,0.1);
    \draw[b] (-0.1,-2) -- (0.1,-2);
    
    \draw (-1,-1) -- (2,2);
    \fill[red,opacity=0.25] (-1,-1) -- (2,2) -- (-1,2) -- cycle;
    
    \draw[dashed] (0,0) -- (2,-2);
    \draw (0,0) node {\(\bullet\)};
    
    \end{tikzpicture}
}
\caption{2-dimensional projection of the half-space corresponding to the event \(\{G_{u,v}\leq G_{u,v'}\}\).}
\label{fig:CE3}
\end{figure}
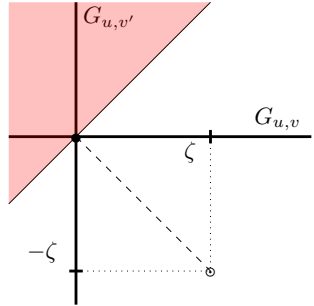

\hyperref[fig:CE3]{Fig. \ref*{fig:CE3}} illustrates the half-space corresponding to \(\{G_{u,v}\leq G_{u,v'}\}\). The separation between the half-space and the mean point is equal to \(\ell=\zeta \sqrt{2}\). Then, the Chernoff bound gives us \(\exp\pth{-\frac{\ell^2}{2\sigma^2}} = \exp\pth{-\frac{\zeta }{4}}\), which exactly matches the statement in \hyperref[lemma:misalignment]{Lemma \ref*{lemma:misalignment}}.

\subsubsection{Misalignment despite typicality}

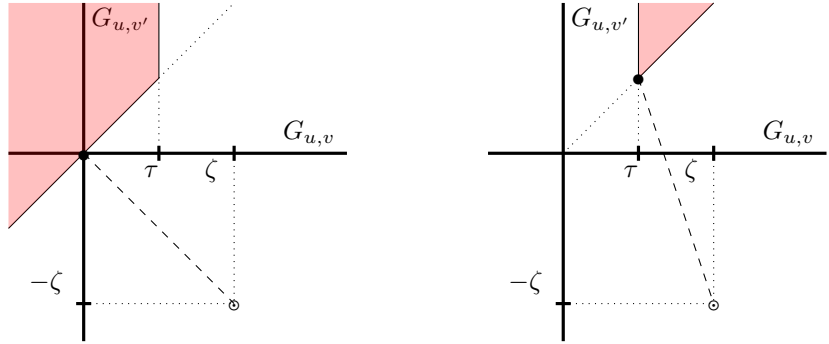
\begin{figure}[h]
\centering
\begin{minipage}{.45\textwidth}
\begin{tikzpicture}[
 text height = 1.5ex,
 text depth =.1ex,
 b/.style={very thick}
 ]
 
\draw (3,0.3) node {\(G_{u,v}\)};
\draw (0.5,1.8) node {\(G_{u,v'}\)};
    
\draw[b] (-1,0) -- (3.5,0);
\draw[b] (0,-2.5) -- (0,2);

\draw (2,-2) node {\(\cdot\)};
\draw (2,-2) node {\(\circ\)};
    
\draw (1.7,-0.2) node {\(\zeta\)};
\draw (-0.5,-1.7) node {\(-\zeta\)};
\draw[dotted] (2,0) -- (2,-2);
\draw[dotted] (0,-2) -- (2,-2);
\draw[b] (2,-0.1) -- (2,0.1);
\draw[b] (-0.1,-2) -- (0.1,-2);
    
\draw (-1,-1) -- (1,1);
\draw[dotted] (1,1) -- (2,2);
\fill[red,opacity=0.25] (-1,-1) -- (1,1) -- (1,2) -- (-1,2) -- cycle;
    
\draw[dashed] (0,0) -- (2,-2);
\draw (0,0) node {\(\bullet\)};

\draw[dotted] (1,0) -- (1,1);
\draw (1,1) -- (1,2);
\draw (0.9,-0.2) node {\(\tau\)};
\draw[b] (1,-0.1) -- (1,0.1);

\end{tikzpicture}
\end{minipage}
\begin{minipage}{.45\textwidth}
\begin{tikzpicture}[
 text height = 1.5ex,
 text depth =.1ex,
 b/.style={very thick}
 ]
 
\draw (3,0.3) node {\(G_{u,v}\)};
\draw (0.5,1.8) node {\(G_{u,v'}\)};
    
\draw[b] (-1,0) -- (3.5,0);
\draw[b] (0,-2.5) -- (0,2);

\draw (2,-2) node {\(\cdot\)};
\draw (2,-2) node {\(\circ\)};
    
\draw (1.7,-0.2) node {\(\zeta\)};
\draw (-0.5,-1.7) node {\(-\zeta\)};
\draw[dotted] (2,0) -- (2,-2);
\draw[dotted] (0,-2) -- (2,-2);
\draw[b] (2,-0.1) -- (2,0.1);
\draw[b] (-0.1,-2) -- (0.1,-2);
    
\draw[dotted] (0,0) -- (1,1);
\draw (1,1) -- (2,2);
\fill[red,opacity=0.25] (1,1) -- (1,2) -- (2,2) -- cycle;
    
\draw[dashed] (1,1) -- (2,-2);
\draw (1,1) node {\(\bullet\)};

\draw[dotted] (1,0) -- (1,1);
\draw (1,1) -- (1,2);
\draw (0.9,-0.2) node {\(\tau\)};
\draw[b] (1,-0.1) -- (1,0.1);

\end{tikzpicture}
\end{minipage}
\caption{Partition of the half-space in \hyperref[fig:CE3]{Fig. \ref*{fig:CE3}} according to whether information density \(G_{u,v}\) of the true pair \((u,v)\) is above or below the threshold \(\tau\).}
\label{fig:CE4}
\end{figure}

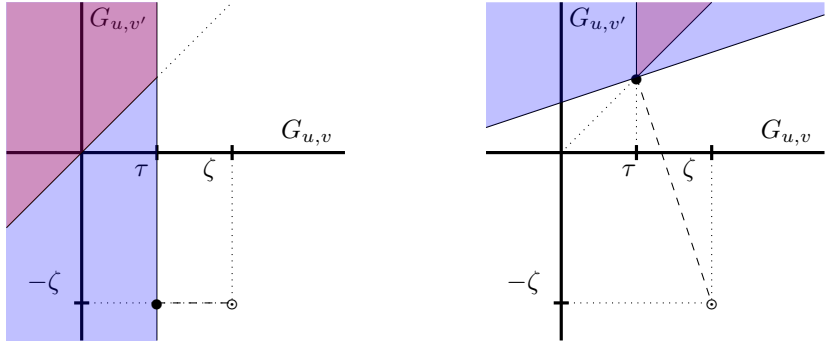
\begin{figure}[h]
\centering
\begin{minipage}{.45\textwidth}
\begin{tikzpicture}[
 text height = 1.5ex,
 text depth =.1ex,
 b/.style={very thick}
 ]
 
\draw (3,0.3) node {\(G_{u,v}\)};
\draw (0.5,1.8) node {\(G_{u,v'}\)};
    
\draw[b] (-1,0) -- (3.5,0);
\draw[b] (0,-2.5) -- (0,2);

\draw (2,-2) node {\(\cdot\)};
\draw (2,-2) node {\(\circ\)};
    
\draw (1.7,-0.2) node {\(\zeta\)};
\draw (-0.5,-1.7) node {\(-\zeta\)};
\draw[dotted] (2,0) -- (2,-2);
\draw[dotted] (0,-2) -- (2,-2);
\draw[b] (2,-0.1) -- (2,0.1);
\draw[b] (-0.1,-2) -- (0.1,-2);

\draw (1,-2.5) -- (1,2);
\draw (0.8,-0.2) node {\(\tau\)};
\draw[b] (1,-0.1) -- (1,0.1);
\fill[blue,opacity=0.25] (1,-2.5) -- (1,2) -- (-1,2) -- (-1,-2.5) -- cycle;
\draw (-1,-1) -- (1,1);
\draw[dotted] (1,1) -- (2,2);
\fill[red,opacity=0.25] (-1,-1) -- (1,1) -- (1,2) -- (-1,2) -- cycle;

\draw[dashed] (1,-2) -- (2,-2);
\draw (1,-2) node {\(\bullet\)};

\end{tikzpicture}
\end{minipage}
\begin{minipage}{.45\textwidth}
\begin{tikzpicture}[
 text height = 1.5ex,
 text depth =.1ex,
 b/.style={very thick}
 ]
 
\draw (3,0.3) node {\(G_{u,v}\)};
\draw (0.5,1.8) node {\(G_{u,v'}\)};
    
\draw[b] (-1,0) -- (3.5,0);
\draw[b] (0,-2.5) -- (0,2);

\draw (2,-2) node {\(\cdot\)};
\draw (2,-2) node {\(\circ\)};
    
\draw (1.7,-0.2) node {\(\zeta\)};
\draw (-0.5,-1.7) node {\(-\zeta\)};
\draw[dotted] (2,0) -- (2,-2);
\draw[dotted] (0,-2) -- (2,-2);
\draw[b] (2,-0.1) -- (2,0.1);
\draw[b] (-0.1,-2) -- (0.1,-2);
    
\draw[dotted] (0,0) -- (1,1);
\draw (1,1) -- (2,2);
\fill[red,opacity=0.25] (1,1) -- (1,2) -- (2,2) -- cycle;
    
\draw[dashed] (1,1) -- (2,-2);
\draw (1,1) node {\(\bullet\)};

\draw[dotted] (1,0) -- (1,1);
\draw (1,1) -- (1,2);
\draw (0.9,-0.2) node {\(\tau\)};
\draw[b] (1,-0.1) -- (1,0.1);

\draw (-1,1/3) -- (3.5,11/6);
\fill[blue,opacity=0.25] (-1,1/3) -- (3.5,11/6) -- (3.5,2) -- (-1,2) -- cycle;

\end{tikzpicture}
\end{minipage}
\caption{Blue-purple half-spaces containing the purple `slices' from \hyperref[fig:CE4]{Fig. \ref*{fig:CE4}}.}
\label{fig:CE5}
\end{figure}

The misalignment half-space shown in \hyperref[fig:CE3]{Fig. \ref*{fig:CE3}} can be broken down into two cases based on whether or not the true pairs have high enough average score. This is illustrated in \hyperref[fig:CE4]{Fig. \ref*{fig:CE4}}. These `slices' of a half-space can then be covered by another set of half-spaces, as illustrated in \hyperref[fig:CE5]{Fig. \ref*{fig:CE5}}. In both figures, the left-hand side corresponds to the atypicality event and the right-hand side corresponds to the misalignment-despote-typicality event.

If can be shown that, the half-space in the right-hand side of \hyperref[fig:CE5]{Fig. \ref*{fig:CE5}} is at distance (\(\sqrt{2\zeta ^2+2\eta^2}\)) to the mean point. Then, the Chernoff bound gives us \(\exp\pth{-\frac{\ell^2}{2\sigma^2}} = \exp\pth{-\frac{\zeta ^2+\tau^2}{2\zeta }}\), which exactly matches the statement in \hyperref[lemma:condMisalignment]{Lemma \ref*{lemma:condMisalignment}}.

These figures also help demonstrate the contribution of this approach in analysis: Both the original misalignment half-space in \hyperref[fig:CE3]{Fig. \ref*{fig:CE3}} as well as the misalignment-despite-typicality half-space in the right-hand side of \hyperref[fig:CE5]{Fig. \ref*{fig:CE5}} change based on the choice of \((u,v')\). The separation between half-spaces is greater with the misalignment-despite-typicality event, which gives us some improvement in the error bound at the cost of having to consider the atypicality error shown in the left-hand side of \hyperref[fig:CE5]{Fig. \ref*{fig:CE5}}. This last half-space however, does not depend on \((u,v')\). The atypicality half-space is fixed once we pick \((u,v)\). Therefore this term does not require taking a union bound.

For an appropriate choice of \(\tau\), the gains made by the improvement from the misalignment half-space to the misalignment-despite-typicality half-space can compensate for the cost of having to consider the atypicality half-space. This is thanks to the fact that, unlike misalignment and misalignment-despite-typicality, we need not need to reconsider atypicality for every choice of \((u,v')\).

\section{Generating function}

\begin{definition}[Generalized generating function]

    The generating function \(R=R^{\calU,\calV}:\bbR^{\calU\times\calV}\to\bbR\) is defined such that
    \begin{align*}
        R(\Thetamat) &\define \int\int \exp\pth{
        \ip{\Gmat,\Thetamat}} f_{\Avec}(\avec)f_{\Bvec}(\bvec)d\avec d\bvec
    \end{align*}
    where \(f_{\Avec},f_{\Bvec}\) denote the marginal probabilities for the two databases and \(\Gmat\in\bbR^{\calA\times\calB}\) denotes the information density matrix for \(\avec,\bvec\) as defined in \hyperref[subsec:algo1]{Section \ref*{subsec:algo1}}.
\end{definition}

\begin{lemma}
    \label{lemma:genFunc2exp}
    \(\bbE\croc{\exp\pth{\ip{\Gmat,\Thetamat}}|M=m_1} = R(\Thetamat+\mmat_1)\)
\end{lemma}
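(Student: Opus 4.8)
The plan is to unwind the definitions and exhibit an explicit change-of-measure identity. First I would write the left-hand side as an integral against the joint density of \((\Avec,\Bvec)\) conditioned on \(M=m_1\). Under the model of \hyperref[subsec:modelDatabase]{Subsection \ref*{subsec:modelDatabase}}, conditioned on \(M=m_1\) this joint density factorizes as
\begin{align*}
    f_{M=m_1}(\avec,\bvec) = \prod_{(u,v)\in\calW_{m_1}} f_{XY}(\avec(u),\bvec(v)) \cdot \prod_{u\in\calU\setminus\calU_{m_1}} f_X(\avec(u)) \cdot \prod_{v\in\calV\setminus\calV_{m_1}} f_Y(\bvec(v)),
\end{align*}
since correlated pairs are precisely those mapped by \(m_1\) and all other feature vectors are independent. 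The key observation is that the marginal product density \(f_{\Avec}(\avec)f_{\Bvec}(\bvec) = \prod_{u\in\calU} f_X(\avec(u)) \cdot \prod_{v\in\calV} f_Y(\bvec(v))\) does \emph{not} depend on \(m_1\) (because \(\Avec(u)\) and \(\Bvec(v)\) are i.i.d.\ with fixed variances regardless of the mapping), so dividing gives
\begin{align*}
    f_{M=m_1}(\avec,\bvec) = f_{\Avec}(\avec)f_{\Bvec}(\bvec)\prod_{(u,v)\in\calW_{m_1}} \frac{f_{XY}(\avec(u),\bvec(v))}{f_X(\avec(u))f_Y(\bvec(v))} = f_{\Avec}(\avec)f_{\Bvec}(\bvec)\exp\pth{\ip{\Gmat,\mmat_1}},
\end{align*}
where the last equality uses the definition \(G_{u,v}=\log\frac{f_{XY}(\avec(u),\bvec(v))}{f_X(\avec(u))f_Y(\bvec(v))}\) and the fact that \(\ip{\Gmat,\mmat_1}=\sum_{(u,v)\in\calW_{m_1}} G_{u,v}\) for the binary matrix encoding \(\mmat_1\).

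Then I would substitute this identity into the conditional expectation:
\begin{align*}
    \bbE\croc{\exp\pth{\ip{\Gmat,\Thetamat}}\big|M=m_1}
    &= \int\int \exp\pth{\ip{\Gmat,\Thetamat}} f_{M=m_1}(\avec,\bvec)\,d\avec\,d\bvec\\
    &= \int\int \exp\pth{\ip{\Gmat,\Thetamat}}\exp\pth{\ip{\Gmat,\mmat_1}} f_{\Avec}(\avec)f_{\Bvec}(\bvec)\,d\avec\,d\bvec\\
    &= \int\int \exp\pth{\ip{\Gmat,\Thetamat+\mmat_1}} f_{\Avec}(\avec)f_{\Bvec}(\bvec)\,d\avec\,d\bvec = R(\Thetamat+\mmat_1),
\end{align*}
using bilinearity of the inner product and the definition of \(R\). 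This completes the argument.

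There is no real obstacle here beyond bookkeeping; the one point that deserves care is justifying that the reference measure \(f_{\Avec}(\avec)f_{\Bvec}(\bvec)\) appearing in the definition of \(R\) is genuinely the \(m_1=\varnothing\) specialization of \(f_{M=m_1}\), i.e.\ that changing the mapping only reweights by \(\exp(\ip{\Gmat,\mmat_1})\) and does not alter the per-user marginals — this is exactly where the i.i.d.\ structure of the databases is used. One should also note implicitly that \(\Gmat\) here is the information-density matrix computed from the \emph{integration variables} \(\avec,\bvec\), matching the convention in the definition of \(R\), so the substitution is literal. Integrability (finiteness of \(R(\Thetamat+\mmat_1)\)) is not needed for the identity to hold as an equality in \([0,\infty]\), and is in any case guaranteed in the regimes where the lemma is applied.
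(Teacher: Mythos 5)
Your proposal is correct and follows essentially the same route as the paper: establish that the conditional density given \(M=m_1\) equals \(f_{\Avec}(\avec)f_{\Bvec}(\bvec)\exp\pth{\ip{\Gmat,\mmat_1}}\) (the paper states this additively as \(\ip{\Gmat,\Mmat}=\log f_{\Avec,\Bvec|M}-\log f_{\Avec}-\log f_{\Bvec}\)), then absorb \(\exp\pth{\ip{\Gmat,\mmat_1}}\) into the exponent and recognize the definition of \(R(\Thetamat+\mmat_1)\). No substantive differences.
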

\begin{proof}
    The key equality of the proof is that \(\ip{\Gmat,\Mmat} = \log f_{\Avec,\Bvec|M}(\avec,\bvec) - \log f_{\Avec}(\avec)-\log f_{\Bvec}(\bvec)\), where \(f_{\Avec,\Bvec|M}\) is the joint distribution between databases given \(M\). We show as follows:\\
    Let \(\calU_M\subseteq\calU\) and \(\calV_M\subseteq\calV\) denote the set of users that have a mapping under \(M\) and \(\calW_M\subset \calU_M\times\calV_M\) denote the set of pairs mapped by \(M\). By the model for Gaussian data structures (as given in \hyperref[subsec:modelDatabase]{Subsection \ref*{subsec:modelDatabase}}), all matched feature pairs and unmatched features are mutually independent. It then follows that
    \begin{align*}
        \log f_{\Avec,\Bvec|M}(\avec,\bvec) &= \sum_{(u,v)\in\calW_m}\log f_{XY}(\Avec(u),\Bvec(v))\\
        &\pheq + \sum_{u'\in\calU\setminus\calU_m}\log f_X(\Avec(u')) + \sum_{v'\in\calV\setminus\calV_m}\log f_Y(\Avec(v'))
    \end{align*}
    where \(f_{XY}\) denotes the joint distribution of correlated features while \(f_X,f_Y\) denote the marginals.\\
    As defined in \hyperref[subsec:algo1]{Section \ref*{subsec:algo1}}, \(G_{u,v}=\log\frac{f_{XY}(\Avec(u),\Bvec(v))}{f_X(\Avec(u))f_Y(\Bvec(v))}\) for any \(u\in\calU\) and \(v\in\calV\). Then,
    \begin{align*}
        \ip{\Gmat,\Mmat} &= \sum_{(u,v)\in\calW_m} \pth{\log f_{XY}(\Avec(u),\Bvec(v)) - \log f_X(\Avec(u)) - \log f_Y(\Bvec(v))}\\
        \implies \log f_{\Avec,\Bvec|M}(\avec,\bvec)-\ip{\Gmat,\Mmat} &= \sum_{(u,v)\in\calW_m}\pth{\log f_X(\Avec(u))+\log f_Y(\Bvec(v))}\\
        &\pheq + \sum_{u'\in\calU\setminus\calU_m}\log f_X(\Avec(u')) + \sum_{v'\in\calV\setminus\calV_m}\log f_Y(\Avec(v'))\\
        &= \sum_{u\in\calU}\log f_X(\Avec(u)) + \sum_{v\in\calV}\log f_Y(\Avec(v))\\
        &= \log f_{\Avec}(\avec) + \log f_{\Bvec}(\bvec)
    \end{align*}
    which shows that \(\ip{\Gmat,\Mmat} = \log f_{\Avec,\Bvec|M}(\avec,\bvec) - \log f_{\Avec}(\avec)-\log f_{\Bvec}(\bvec)\).\\
    Then
    \begin{align*}
        &\bbE\croc{\exp\pth{\ip{\Gmat,\Thetamat-\Mmat}}|M}\\
        &= \int\int \exp\pth{
        \ip{\Gmat,\Thetamat-\Mmat}} f_{\Avec,\Bvec|M}(\avec,\bvec)d\avec d\bvec\\
        &= \int\int \exp\pth{
        \ip{\Gmat,\Thetamat}+\log f_{\Avec,\Bvec|M}(\avec,\bvec) - \ip{\Gmat,\Mmat}} d\avec d\bvec\\
        &= \int\int \exp\pth{
        \ip{\Gmat,\Thetamat}+\log f_{\Avec}(\avec) + \log f_{\Bvec}(\bvec) } d\avec d\bvec\\
        &= \int\int \exp\pth{
        \ip{\Gmat,\Thetamat}} f_{\Avec}(\avec)f_{\Bvec}(\bvec)d\avec d\bvec = R(\Thetamat)
    \end{align*}
    This completes the proof.
\end{proof}

\begin{corollary}
    \label{cor:chernoff}
    Let \(\mmat_1,\mmat_2\in \{0,1\}^{\calU\times\calV}\) be the matrix encodings of the mappings \(m_1\) and \(m_2\) respectively. We have the following Chernoff bounds:
    \begin{compactenum}[a)]
        \item Probability of atypicality:\\
        \begin{align*}
            \Pr\croc{\tau|m_1| \geq \ip{\Gmat,\mmat_1}\big|M=m_1} \leq \exp\pth{\theta\tau|m_1|}R\pth{(1-\theta)\mmat_1}
        \end{align*}
        for any \(\theta>0\).
        \item Probability of misalignment:
        \begin{align*}
            \Pr\croc{\ip{\Gmat,\mmat_2} \geq \ip{\Gmat,\mmat_1}|M=m_1} \leq R\pth{(1-\theta)\mmat_1 + \theta\mmat_2}
        \end{align*}
        for any \(\theta>0\).
        \item Probability of misalignment despite typicality:
        \begin{align*}
            \Pr\croc{\ip{\Gmat,\mmat_1}\geq \tau|m_1|\land \ip{\Gmat,\mmat_2}\geq\ip{\Gmat,\mmat_1}\big|M=m_1} \leq e^{-\tau|m_1|(\nu-1)}R\pth{\nu(1-\theta)\mmat_1 + \nu\theta\mmat_2}
        \end{align*}
        for any \(\theta>0\) and \(\nu>1\).
    \end{compactenum}
\end{corollary}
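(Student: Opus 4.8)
The plan is to obtain all three inequalities from a single template: the exponential Markov (Chernoff) inequality combined with Lemma~\ref{lemma:genFunc2exp}. Concretely, for any matrix $\Thetamat \in \bbR^{\calU\times\calV}$ and any threshold $t \in \bbR$, Markov's inequality applied to $\exp\pth{\ip{\Gmat,\Thetamat}}$ gives $\Pr\croc{\ip{\Gmat,\Thetamat}\geq t \mid M=m_1} \leq e^{-t}\,\bbE\croc{\exp\pth{\ip{\Gmat,\Thetamat}}\mid M=m_1} = e^{-t}\,R\pth{\Thetamat + \mmat_1}$, where the last equality is exactly Lemma~\ref{lemma:genFunc2exp}. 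So the task in each part reduces to rewriting the event of interest in the form $\{\ip{\Gmat,\Thetamat}\geq t\}$ for an appropriate choice of $\Thetamat$ (depending on the free parameter $\theta$, and in part (c) also $\nu$) and then reading off $R\pth{\Thetamat + \mmat_1}$.

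For part (a), the atypicality event $\{\tau|m_1| \geq \ip{\Gmat,\mmat_1}\}$ is, for any $\theta>0$, identical to $\{\ip{\Gmat,-\theta\mmat_1} \geq -\theta\tau|m_1|\}$; applying the template with $\Thetamat = -\theta\mmat_1$ and $t = -\theta\tau|m_1|$ and using $-\theta\mmat_1+\mmat_1 = (1-\theta)\mmat_1$ gives the claimed $\exp\pth{\theta\tau|m_1|}R\pth{(1-\theta)\mmat_1}$. For part (b), rewrite the misalignment event $\{\ip{\Gmat,\mmat_2}\geq\ip{\Gmat,\mmat_1}\}$ as $\{\ip{\Gmat,\theta\mmat_2-\theta\mmat_1}\geq 0\}$ for $\theta>0$, so $\Thetamat = \theta\mmat_2-\theta\mmat_1$ and $t=0$, and $\Thetamat+\mmat_1 = (1-\theta)\mmat_1+\theta\mmat_2$, yielding the stated bound.

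Part (c) requires one extra idea. On the event $\{\ip{\Gmat,\mmat_1}\geq\tau|m_1|\}\cap\{\ip{\Gmat,\mmat_2}\geq\ip{\Gmat,\mmat_1}\}$ both of the quantities $X \define \ip{\Gmat,\mmat_2}-\ip{\Gmat,\mmat_1}$ and $Y \define \ip{\Gmat,\mmat_1}-\tau|m_1|$ are nonnegative, hence $aX+bY\geq 0$ for every $a,b\geq 0$, i.e. $\ip{\Gmat,\, a\mmat_2+(b-a)\mmat_1}\geq b\tau|m_1|$. Thus the event is contained in $\{\ip{\Gmat,\Thetamat}\geq t\}$ with $\Thetamat = a\mmat_2+(b-a)\mmat_1$ and $t = b\tau|m_1|$. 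Taking $a=\nu\theta$ and $b=\nu-1$ (both nonnegative exactly when $\theta>0$ and $\nu>1$, with no upper bound needed on $\theta$ because the coefficient of $\mmat_1$ inside $R$ may be negative) makes $\Thetamat+\mmat_1 = \nu\theta\mmat_2 + (\nu-\nu\theta)\mmat_1 = \nu(1-\theta)\mmat_1+\nu\theta\mmat_2$ and $e^{-t} = e^{-(\nu-1)\tau|m_1|}$, which is the asserted inequality.

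The argument is essentially mechanical once Lemma~\ref{lemma:genFunc2exp} is available; the only place demanding a little care is the coefficient bookkeeping in part (c) — verifying that the substitution $a=\nu\theta$, $b=\nu-1$ is admissible under the stated hypotheses $\theta>0$, $\nu>1$ and that it reproduces precisely the matrix argument $\nu(1-\theta)\mmat_1+\nu\theta\mmat_2$ appearing in the statement. I expect no real obstacle beyond that.
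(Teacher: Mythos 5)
Your proposal is correct and follows essentially the same route as the paper: each part is an exponential Markov (Chernoff) bound on a suitably chosen linear functional $\ip{\Gmat,\Thetamat}$, converted to the generating function via Lemma~\ref{lemma:genFunc2exp}, with the same choices $\Thetamat=-\theta\mmat_1$, $\Thetamat=\theta(\mmat_2-\mmat_1)$, and in (c) the same nonnegative combination with coefficients $\nu\theta$ and $\nu-1$ of the two defining inequalities. The coefficient bookkeeping in (c) matches the paper's computation exactly, so there is nothing to add.
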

\begin{proof}
    \begin{compactenum}[a)]
        \item Probability of atypicality:
        \begin{align*}
            &\Pr\croc{\tau|m_1| \geq \ip{\Gmat,\mmat_1}\big|M=m_1}\\
            &=\Pr\croc{\ip{\Gmat-\tau,\mmat_1}\leq 0|M=m_1}\\
            &= \Pr\croc{\exp\pth{\theta\cdot\ip{\Gmat-\tau,-\mmat_1}}\geq 1^\theta | M=m_1}\\
            &= \Pr\croc{\exp\pth{\ip{\Gmat-\tau,-\theta\mmat_1}}\geq 1 | M=m_1}\\
            &\leq \bbE\croc{\exp\pth{\ip{\Gmat-\tau,-\theta\mmat_1}}| M=m_1}\\
            &= \exp\pth{\theta\tau|m_1|}\bbE\croc{\exp\pth{\ip{\Gmat,-\theta\mmat_1}}| M=m_1}\\
            &= \exp\pth{\theta\tau|m_1|}R\pth{(1-\theta)\mmat_1}.
        \end{align*}
        \item Probability of misalignment:
        \begin{align*}
            &\Pr\croc{\ip{\Gmat,\mmat_2}\geq\ip{\Gmat,\mmat_1}|M=m_1}\\
            &=\Pr\croc{\ip{\Gmat,\mmat_2-\mmat_1}\geq 0|M=m_1}\\
            &= \Pr\croc{\exp\pth{\ip{\Gmat,\theta(\mmat_2-\mmat_1)}}\geq 1 | M=m_1}\\
            &\leq \bbE\croc{\exp\pth{\ip{\Gmat,\theta(\mmat_2-\mmat_1)}}| M=m_1}\\
            &= R\pth{(1-\theta)\mmat_1 + \theta\mmat_2}.
        \end{align*}
        \item Probability of misalignment despite typicality:\\
        If \(y\geq x\) and \( x\geq t\), then \(\theta_1(y-x)+\theta_2(x-t) \geq 0\) for any choice of \(\theta_1,\theta_2> 0 \). Replace \(\theta_1\) by \(\nu\theta\) and \(\theta_2\) by \(\nu-1\). \(\theta_1,\theta_2> 0 \) holds for any \(\theta>0\) and \(\nu>1\). It then follows that, if \(y\geq x\) and \( x\geq t\), then \(\nu \theta (y-x)-(1-\nu)x \geq (\nu-1)t\). Then
        \begin{align*}
            &\Pr\croc{\ip{\Gmat,\mmat_2}\geq\ip{\Gmat,\mmat_1}\land \ip{\Gmat,\mmat_1}\geq \tau|m_1|\,\,\big|M=m_1}\\
            &\leq \Pr\croc{\ip{\Gmat,\nu\theta(\mmat_2-\mmat_1)-(1-\nu)\mmat_1}\geq \tau|m_1|(\nu-1) |M=m_1}\\
            &= \Pr\croc{e^{\ip{\Gmat,\nu\theta(\mmat_2-\mmat_1)-(1-\nu)\mmat_1}}\geq e^{\tau|m_1|(\nu-1)} |M=m_1}\\
            &\leq e^{-\tau|m_1|(\nu-1)}\bbE\croc{e^{\ip{\Gmat,\nu\theta(\mmat_2-\mmat_1)-(1-\nu)\mmat_1}}|M=m_1}\\
            &= e^{-\tau|m_1|(\nu-1)}R\pth{\nu\theta(\mmat_2-\mmat_1)+\nu\mmat_1}\\
            &= e^{-\tau|m_1|(\nu-1)}R\pth{\nu(1-\theta)\mmat_1 + \nu\theta\mmat_2}
        \end{align*}
    \end{compactenum}
\end{proof}

\subsection{Main lemmas on the generating function}

\begin{lemma}
    \label{lemma:computationGenFunc}
    Define \(\Pmat=\Pmat^{\calU,\calV}:\bbR^{\calU\times\calV}\times(-1,1)\to\bbR^{(\calU\sqcup\calV)\times(\calU\sqcup\calV)}\) such that
    \begin{align*}
        \Pmat(\Thetamat,\rho) \define \pth{1-\rho^2}\Imat+\crocMat{\rho^2\cdot\diag(\Thetamat\onevec)}{-\rho\Thetamat}{-\rho\Thetamat^\top}{\rho^2\cdot\diag(\Thetamat^\top \onevec)}
    \end{align*}
    where \(\onevec\) represents appropriately indexed vectors of all ones.
    
    If \(\Pmat(\Thetamat,\rho_i)\) positive definite for each \(i\in\calD\), then evaluating the generating function \(R\) at \(\Thetamat\) gives the expression
    \begin{align*}
        R(\Thetamat) = \prod_{i\in\calD} \croc{\frac{\pth{1-\rho_i^2}^{|\calU|+|\calV|-\sum_{(u,v)} \Theta_{u,v}}}{\det \Pmat(\Thetamat,\rho_i)}}^{\frac{1}{2}}
    \end{align*}
    where \(\rhovec \in (-1,1)^{\calD}\) is the correlation vector in canonical form.
\end{lemma}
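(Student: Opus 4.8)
The plan is to factor the $|\calU\times\calV|$-dimensional integral defining $R$ into a product of per-feature Gaussian integrals, and then recognize each factor as a Gaussian normalizing constant. First I would invoke the canonical form (Lemma~\ref{lemma:canonicalVariance}): since correlated features with distinct indices $i\in\calD$ are mutually independent, $G_{u,v}=\sum_{i\in\calD}G^{(i)}_{u,v}$ with $G^{(i)}_{u,v}$ depending only on the scalar coordinates $a_u\define A_i(u)$ and $b_v\define B_i(v)$; and under the product marginals $f_{\Avec}f_{\Bvec}$ all coordinates are i.i.d.\ standard normal. Hence $\ip{\Gmat,\Thetamat}=\sum_i\ip{\Gmat^{(i)},\Thetamat}$ splits into a sum over disjoint, independent blocks of variables, and Tonelli's theorem gives $R(\Thetamat)=\prod_{i\in\calD}R_i(\Thetamat)$, where $R_i$ is the analogous integral in the feature-$i$ coordinates only. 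It then suffices to show $R_i(\Thetamat)=\bcroc{(1-\rho_i^2)^{|\calU|+|\calV|-\sum_{(u,v)}\Theta_{u,v}}/\det\Pmat(\Thetamat,\rho_i)}^{1/2}$.

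For the per-feature evaluation, fix $i$ and write $\rho=\rho_i$. A routine computation with the bivariate-normal density of correlation $\rho$ and standard-normal marginals gives $G^{(i)}_{u,v}=-\tfrac12\log(1-\rho^2)-\tfrac{\rho^2(a_u^2+b_v^2)}{2(1-\rho^2)}+\tfrac{\rho\,a_ub_v}{1-\rho^2}$. Pairing against $\Thetamat$, the constant term contributes $-\tfrac12\log(1-\rho^2)\sum_{(u,v)}\Theta_{u,v}$, the quadratic terms contribute $-\tfrac{\rho^2}{2(1-\rho^2)}\big(\avec^\top\diag(\Thetamat\onevec)\avec+\bvec^\top\diag(\Thetamat^\top\onevec)\bvec\big)$ (the row- and column-sums of $\Thetamat$ appearing here), and the bilinear term contributes $\tfrac{\rho}{1-\rho^2}\avec^\top\Thetamat\bvec$. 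Adding the standard-Gaussian weight $-\tfrac12(\avec^\top\avec+\bvec^\top\bvec)$ from $f_{\Avec}f_{\Bvec}$, the integrand of $R_i$ becomes $(1-\rho^2)^{-\frac12\sum_{(u,v)}\Theta_{u,v}}(2\pi)^{-(|\calU|+|\calV|)/2}\exp(-\tfrac12\xvec^\top Q_i\xvec)$ with $\xvec=[\avec;\bvec]$, and the key identity to check is that the precision matrix satisfies $(1-\rho^2)\,Q_i=\Pmat(\Thetamat,\rho)$.

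The last step is the Gaussian integral. Since $1-\rho^2>0$, the matrix $Q_i=\tfrac{1}{1-\rho^2}\Pmat(\Thetamat,\rho)$ is positive definite exactly when $\Pmat(\Thetamat,\rho_i)$ is, and the integral converges — to $(2\pi)^{(|\calU|+|\calV|)/2}(\det Q_i)^{-1/2}$ — precisely under that hypothesis, which the lemma assumes for every $i$. Using $\det Q_i=(1-\rho^2)^{-(|\calU|+|\calV|)}\det\Pmat(\Thetamat,\rho)$ cancels the $(2\pi)$ factors and yields $R_i(\Thetamat)=\bcroc{(1-\rho^2)^{|\calU|+|\calV|-\sum_{(u,v)}\Theta_{u,v}}/\det\Pmat(\Thetamat,\rho)}^{1/2}$; multiplying over $i\in\calD$ gives the claim. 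I do not anticipate a conceptual obstacle — it is an exact Gaussian moment computation — so the only thing demanding care is the bookkeeping in the middle step: correctly isolating the constant, the two diagonal quadratic pieces coming from $\Thetamat\onevec$ and $\Thetamat^\top\onevec$, and the cross term, and then tracking the powers of $(1-\rho^2)$ and $2\pi$ so they cancel to leave exactly the exponent $|\calU|+|\calV|-\sum_{(u,v)}\Theta_{u,v}$; a secondary detail is justifying both the factorization over $\calD$ (via independence and Tonelli) and the equivalence of integral convergence with positive-definiteness of $\Pmat(\Thetamat,\rho_i)$.
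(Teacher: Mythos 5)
Your proposal is correct and follows essentially the same route as the paper's proof: write $\ip{\Gmat,\Thetamat}$ plus the product-marginal log-density as a Gaussian quadratic form with precision matrix $\frac{1}{1-\rho^2}\Pmat(\Thetamat,\rho)$, evaluate the Gaussian integral, and take the product over $i\in\calD$ by independence of the canonical-form coordinates. Your version is just slightly more explicit than the paper about the Tonelli factorization over features and about positive-definiteness of $\Pmat(\Thetamat,\rho_i)$ being what guarantees convergence of each integral.
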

\begin{proof}
    Let \(\Avec,\Bvec\) databases in canonical form with statistics \(\muvec = \zerovec\) and \(\Sigmamat = \crocMat{\Imat}{\diag(\rhovec)}{\diag(\rhovec)}{\Imat}\) and \(\Gmat\in\bbR^{\calU\times\calV}\) their information density matrix.
    \begin{align*}
        &g_{XY}(\xvec,\yvec) = \log\frac{f_{\Xvec,\Yvec}(\xvec,\yvec)}{f_{\Xvec}(\xvec)f_{\Yvec}(\yvec)}\\
        &= \sum_{i\in\calD}\croc{-\frac{1}{2}\log\pth{1-\rho_i^2}-\frac{\rho_i^2\pth{x_i^2+y_i^2}-2\rho_ix_iy_i}{2\pth{1-\rho_i^2}}}
    \end{align*}

Define \(\avec\in\bbR^{\calU}\) and \(\bvec\in\bbR^{\calV}\) such that \(a_u\) and \(b_v\) denote the features associated with users \(u\) and \(v\) respectively. As shown in the proof for \hyperref[lemma:genFunc2exp]{Lemma \ref*{lemma:genFunc2exp}}, \(\log f_{\Avec,\Bvec|M}(\avec,\bvec) - \ip{\Gmat,\Mmat} = \sum\log f_X(a_u)+\sum\log f_Y(b_v)\). Without loss of generality, assume \(\rhovec=[\rho]\). Then,
\begin{align*}
    &\ip{\Gmat,\Thetamat} + \log f_{\Avec,\Bvec|M}(\avec,\bvec) - \ip{\Gmat,\Mmat}\\
    &= \ip{\Gmat,\Thetamat}+\sum_{u\in\calU}\log f_X(a_u)+\sum_{v\in\calV}\log f_Y(b_v)\\
    &= -\frac{1}{2}\log\pth{1-\rho^2}\sum_{(u,v)} \Theta_{u,v}-\frac{|\calU|+|\calV|}{2}\log\pth{2\pi}\\
    &\pheq - \frac{1}{2(1-\rho^2)}\sum_{u\in\calU} a_u^2\pth{1-\rho^2+\rho^2\sum_{v\in\calV}\Theta_{u,v}}\\
    &\pheq - \frac{1}{2(1-\rho^2)}\sum_{v\in\calV} b_v^2\pth{1-\rho^2+\rho^2\sum_{u\in\calU}\Theta_{u,v}}\\
    &\pheq -\frac{1}{2(1-\rho^2)}\sum_{u\in\calV}\sum_{v\in\calV} \rho\Thetamat_{u,v} a_ub_v + \rho\Thetamat_{u,v} b_va_u\\
    &= -\frac{1}{2}\log\pth{1-\rho^2}\sum_{(u,v)} \Theta_{u,v}-\frac{|\calU|+|\calV|}{2}\log\pth{2\pi}\\
    &\pheq -\frac{1}{2(1-\rho^2)}\crocVec{\avec}{\bvec}^\top \Pmat(\Thetamat,\rho)\crocVec{\avec}{\bvec}
\end{align*}
    
    Then we can write
    \begin{align*}
        \exp&\pth{\ip{\Gmat,\Thetamat-\Mmat}} f_{\Avec,\Bvec|M}(\avec,\bvec)\\
        &= \frac{\pth{1-\rho^2}^{-\frac{1}{2}\sum_{(u,v)} \Theta_{u,v}}}{(2\pi)^{\frac{|\calU|+|\calV|}{2}}}\cdot\exp\pth{-\frac{1}{2\pth{1-\rho^2}}\crocVec{\alphavec_i}{\betavec_i}^\top \Pmat(\Thetamat,\rho)\crocVec{\alphavec_i}{\betavec_i}}
    \end{align*}
    Taking the integral of this expression gives us the claimed result. For the case where \(\rhovec\) is multi-dimensional, taking the product of this expression over each \(i\in\calD\) gives us the proper expression.
\end{proof}

\begin{lemma}[Decomposition for block diagonal matrices]
    \label{lemma:blockGenFunc}
    
    If \(\Thetamat\in\bbR^{\calU\times\calV}\) can be written in block diagonal form, i.e. if \(\calU\) and \(\calV\) can be partitioned into \(\calU_1,\calU_2\) and \(\calV_1,\calV_2\) such that \(\Thetamat\) can be written as \(\Thetamat=\crocMat{\Thetamat_1}{0}{0}{\Thetamat_2}\in\bbR^{(\calU_1\sqcup\calU_2)\times(\calV_1\sqcup\calV_2)}\), then
    \begin{align*}
        R^{\calU,\calV}(\Thetamat) = R^{\calU_1,\calV_1}(\Thetamat_1)\cdot R^{\calU_2,\calV_2}(\Thetamat_2)
    \end{align*}
\end{lemma}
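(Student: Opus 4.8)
The plan is to unfold the definition of the generating function and observe that the integrand factors along the block structure. Unwinding the definition,
\[
R^{\calU,\calV}(\Thetamat) = \int\int \exp\Bpth{\sum_{(u,v)\in\calU\times\calV} G_{u,v}\,\Theta_{u,v}}\, f_{\Avec}(\avec)\,f_{\Bvec}(\bvec)\,d\avec\,d\bvec,
\]
where, by the definition of the information density matrix in Subsection~\ref{subsec:algo1}, $G_{u,v} = \log\frac{f_{XY}(\avec(u),\bvec(v))}{f_X(\avec(u))\,f_Y(\bvec(v))}$ is a fixed function of the two single feature vectors $\avec(u)$ and $\bvec(v)$ alone. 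Two structural facts drive the argument: (i) each summand $G_{u,v}\Theta_{u,v}$ is determined by the features of users $u$ and $v$ only; and (ii) under the Gaussian database model of Subsection~\ref{subsec:modelDatabase}, the feature vectors of distinct users are independent, so $f_{\Avec}(\avec) = \prod_{u\in\calU} f_X(\avec(u))$ and $f_{\Bvec}(\bvec) = \prod_{v\in\calV} f_Y(\bvec(v))$.

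First I would use block diagonality to annihilate the cross terms: since $\Theta_{u,v}=0$ for $(u,v)\in(\calU_1\times\calV_2)\cup(\calU_2\times\calV_1)$,
\[
\sum_{(u,v)\in\calU\times\calV} G_{u,v}\,\Theta_{u,v} = \sum_{(u,v)\in\calU_1\times\calV_1} G_{u,v}\,(\Theta_1)_{u,v} \;+\; \sum_{(u,v)\in\calU_2\times\calV_2} G_{u,v}\,(\Theta_2)_{u,v} = \ip{\Gmat_1,\Thetamat_1}+\ip{\Gmat_2,\Thetamat_2},
\]
where $\Gmat_j$ is the restriction of $\Gmat$ to $\calU_j\times\calV_j$; by fact~(i) the $j$-th term depends only on the restrictions $\avec|_{\calU_j}$ and $\bvec|_{\calV_j}$. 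Writing $\avec=(\avec_1,\avec_2)$, $\bvec=(\bvec_1,\bvec_2)$ with $\avec_j\define\avec|_{\calU_j}$, $\bvec_j\define\bvec|_{\calV_j}$, and using fact~(ii) to split $f_{\Avec}(\avec)\,f_{\Bvec}(\bvec)$ as a product of a factor in $(\avec_1,\bvec_1)$ and a factor in $(\avec_2,\bvec_2)$, the whole integrand becomes a product of a function of $(\avec_1,\bvec_1)$ and a function of $(\avec_2,\bvec_2)$. Since the integrand is nonnegative, Tonelli's theorem then lets me write the integral over the product feature space as the product of the two integrals over the factor spaces, and each of those is by definition exactly $R^{\calU_j,\calV_j}(\Thetamat_j)$. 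A routine induction on the number of blocks extends the identity to the general block-diagonal case invoked later in the paper.

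I do not expect a genuine obstacle here: the only point requiring care is the measure-theoretic bookkeeping that $\ip{\Gmat_j,\Thetamat_j}$ is measurable with respect to the features of block $j$ alone and that the product law $f_{\Avec}\otimes f_{\Bvec}$ factors accordingly. Both are immediate from the explicit formula for $G_{u,v}$ and the i.i.d.-across-users assumption of the model, after which the statement is a direct application of Fubini/Tonelli.
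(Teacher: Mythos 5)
Your proof is correct, but it follows a genuinely different route from the paper's. The paper proves this lemma as a corollary of the closed-form evaluation in Lemma~\ref{lemma:computationGenFunc}: it observes that \(\Pmat^{\calU,\calV}(\Thetamat,\rho)\) can be brought to block-diagonal form \(\crocMat{\Pmat^{\calU_1,\calV_1}(\Thetamat_1,\rho)}{0}{0}{\Pmat^{\calU_2,\calV_2}(\Thetamat_2,\rho)}\) by a simultaneous permutation of rows and columns, so its determinant factors, and (together with the additivity of the exponent \(|\calU|+|\calV|-\sum_{(u,v)}\Theta_{u,v}\) over blocks) the explicit product formula for \(R\) splits accordingly. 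You instead factor the defining integral directly: block-diagonality of \(\Thetamat\) kills the cross terms in \(\ip{\Gmat,\Thetamat}\), each remaining term \(G_{u,v}\Theta_{u,v}\) depends only on the features of \(u\) and \(v\), the reference measure \(f_{\Avec}(\avec)f_{\Bvec}(\bvec)\) is a product over users, and Tonelli then gives the factorization, with each factor being \(R^{\calU_j,\calV_j}(\Thetamat_j)\) by definition. Both arguments are sound. Yours is more elementary and slightly more general: it never invokes the Gaussian closed form, so it does not need the positive-definiteness hypothesis under which Lemma~\ref{lemma:computationGenFunc} is stated, and the identity holds in \([0,\infty]\) even when the integrals diverge; the paper's proof, by contrast, is a one-line consequence once the explicit formula has already been established, which is why it is organized that way there. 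Your closing remarks (measurability of the block terms, and induction to more than two blocks) are the right bookkeeping and present no obstacle.
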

\begin{proof}
    This follows from the fact that \(\Pmat^{\calU,\calV}(\Thetamat,\rho)\in\bbR^{\calU\times\calV}\) can be transformed into black matrix form as \(\crocMat{\Pmat^{\calU_1,\calV_1}(\Thetamat_1,\rho)}{0}{0}{\Pmat^{\calU_2,\calV_2}(\Thetamat_2,\rho)}\) by simultaneously permuting rows and columns.
    
    Then \(\det \Pmat^{\calU,\calV}(\Thetamat) = \det \Pmat^{\calU_1,\calV_1}(\Thetamat_1,\rho)\cdot\det \Pmat^{\calU_2,\calV_2}(\Thetamat_2,\rho)\) and we get the claimed result.
\end{proof}

\begin{lemma}
    \label{lemma:rOne2One}
    If \(\theta\in\bbR\) such that \(|\theta|<|1/\rho_i|\) for any \(i\in\calD\), then
    \begin{align*}
        R([1-\theta]) &= \prod_{i\in\calD}\croc{\frac{\pth{1-\rho_i^2}^{\theta}}{1-\rho_i^2\theta^2}}^{\frac{1}{2}}
    \end{align*}
    Furthermore, if \(\theta\in[-1,1]\) then
    \begin{align*}
        R([1-\theta]) &\leq \exp\pth{-\theta(1-\theta)I_{XY}},
    \end{align*}
    where \(I_{XY} = -\frac{1}{2}\sum_{i\in\calD}\log\pth{1-\rho_i^2}\).
\end{lemma}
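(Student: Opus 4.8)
The plan is to specialize Lemma~\ref{lemma:computationGenFunc} to the case $|\calU| = |\calV| = 1$ with the $1\times 1$ argument $\Thetamat = [1-\theta]$, read off the closed form, and then reduce the exponential upper bound to an elementary scalar inequality. First I would write $\Pmat([1-\theta],\rho)$ explicitly. Since $\Thetamat\onevec = \Thetamat^\top\onevec = [1-\theta]$, it is the symmetric matrix
\[
\Pmat([1-\theta],\rho) = \crocMat{1-\rho^2\theta}{-\rho(1-\theta)}{-\rho(1-\theta)}{1-\rho^2\theta},
\]
whose eigenvalues are $1-\rho^2\theta \mp \rho(1-\theta)$ and whose determinant expands to $(1-\rho^2\theta)^2 - \rho^2(1-\theta)^2 = (1-\rho^2)(1-\rho^2\theta^2)$. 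I would then note that this $2\times 2$ matrix is positive definite exactly when $|\theta| < 1/|\rho|$: the determinant is positive iff $\rho^2\theta^2 < 1$, and in that range the trace $2(1-\rho^2\theta)$ is automatically positive because $\rho^2|\theta| \le |\rho| < 1$. Hence $\Pmat([1-\theta],\rho_i)$ is positive definite for every $i\in\calD$ under the hypothesis $|\theta|<|1/\rho_i|$, so Lemma~\ref{lemma:computationGenFunc} applies.

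Substituting $|\calU|+|\calV| - \sum_{(u,v)}\Theta_{u,v} = 2-(1-\theta) = 1+\theta$ and the determinant into the formula of Lemma~\ref{lemma:computationGenFunc} gives
\[
R([1-\theta]) = \prod_{i\in\calD}\croc{\frac{(1-\rho_i^2)^{1+\theta}}{(1-\rho_i^2)(1-\rho_i^2\theta^2)}}^{1/2} = \prod_{i\in\calD}\croc{\frac{(1-\rho_i^2)^{\theta}}{1-\rho_i^2\theta^2}}^{1/2},
\]
which is the first claim.

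For the bound I would take logarithms. With $I_{XY} = -\tfrac12\sum_{i\in\calD}\log(1-\rho_i^2)$, the inequality $R([1-\theta]) \le \exp(-\theta(1-\theta)I_{XY})$ amounts, term by term, to $\theta\log(1-\rho_i^2) - \log(1-\rho_i^2\theta^2) \le \theta(1-\theta)\log(1-\rho_i^2)$, and since $\theta - \theta(1-\theta) = \theta^2$ this is equivalent to $(1-\rho_i^2)^{\theta^2} \le 1-\rho_i^2\theta^2$. Writing $r = \rho_i^2 \in [0,1)$ and $s = \theta^2 \in [0,1]$ (and noting $\theta\in[-1,1]$ already forces $|\theta|\le 1 < 1/|\rho_i|$, so the first part is available), this is the inequality $(1-r)^s \le 1 - rs$, which I would prove by observing that $r \mapsto (1-r)^s$ is concave on $[0,1)$ for $s\in[0,1]$ and therefore lies below its tangent line $1 - sr$ at $r=0$. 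Summing the term-by-term inequalities over $\calD$ and exponentiating yields the claimed bound.

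I do not anticipate a real obstacle: the only points needing care are the determinant expansion and checking that the positive-definiteness hypothesis of Lemma~\ref{lemma:computationGenFunc} indeed holds on the stated range of $\theta$. The scalar inequality $(1-r)^s \le 1-rs$ can alternatively be deduced from Bernoulli's inequality, so that step is robust.
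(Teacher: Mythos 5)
Your proposal is correct and follows essentially the same route as the paper: specialize Lemma~\ref{lemma:computationGenFunc} to the $1\times1$ block, compute $\det\Pmat([1-\theta],\rho)=(1-\rho^2)(1-\rho^2\theta^2)$ with the positive-definiteness check on $|\theta|<|1/\rho_i|$, and then reduce the exponential bound to $(1-\rho_i^2)^{\theta^2}\leq 1-\rho_i^2\theta^2$, which is exactly the paper's use of Bernoulli's inequality (Lemma~\ref{lemma:push2exp}).
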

\begin{proof}
    \begin{align*}
        \Pmat([1-\theta],\rho) = \crocMat{1-\rho^2\theta}{-\rho(1-\theta)}{-\rho(1-\theta)}{1-\rho^2\theta}
    \end{align*}
    The eigenvalues of this matrix are \(\pth{1-\rho^2\theta}-\rho(1-\theta)\) and \(\pth{1-\rho^2\theta}+\rho(1-\theta)\). They are both strictly positive if and only if \(|\theta|<|1/\rho|\). Their product equals \((1-\rho^2)(1-\rho^2\theta^2)\). Plugging in this value in the denominator of \(\frac{\pth{1-\rho_i^2}^{|\calU|+|\calV|-\sum_{(u,v)} \Theta_{u,v}}}{\det \Pmat(\Thetamat,\rho_i)}\) as given in \hyperref[lemma:computationGenFunc]{Lemma \ref*{lemma:computationGenFunc}} gives us the exact expression for \(R([1-\theta])\).
    
    By \hyperref[lemma:push2exp]{Lemma \ref*{lemma:push2exp}}, \((1-\rho^2\theta^2)\geq(1-\rho^2)^{\theta^2}\) if \(\theta\in[-1,1]\), which gives us
    \begin{align*}
        R([1-\theta]) &\leq \prod_{i\in\calD}\pth{1-\rho_i^2}^{\frac{\theta(1-\theta)}{2}}
    \end{align*}
    This gives us the upper bound for \(R([1-\theta])\).
\end{proof}

\begin{corollary}
    \label{cor:rMapping}
    Let \(\mmat\in\{0,1\}^{\calU\times\calV}\) some binary matrix with row sums and columns sums at most 1. Let \(n\) denote the sum of its entries. Then
    \begin{align*}
        R((1-\theta)\mmat) &= \prod_{i\in\calD}\croc{\frac{\pth{1-\rho_i^2}^{\theta}}{1-\rho_i^2\theta^2}}^{\frac{n}{2}}
    \end{align*}
    Furthermore, if \(\theta\in[-1,1]\) then
    \begin{align*}
        R([1-\theta]\mmat) &\leq \exp\pth{-n\theta(1-\theta)I_{XY}},
    \end{align*}
    where \(I_{XY} = -\frac{1}{2}\sum_{i\in\calD}\log\pth{1-\rho_i^2}\).
\end{corollary}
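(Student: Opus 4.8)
The plan is to exploit the fact that a binary matrix with row and column sums at most one is a partial permutation matrix. After simultaneously permuting the rows (relabeling $\calU$) and the columns (relabeling $\calV$), the scaled matrix $(1-\theta)\mmat$ becomes block diagonal: it splits into $n$ identical $1\times 1$ blocks carrying the value $1-\theta$, one per matched pair, together with a single all-zero block of shape $\calU_0\times\calV_0$, where $\calU_0\subseteq\calU$ and $\calV_0\subseteq\calV$ are the unmatched users. Relabeling users does not affect $R$, since $\Avec$ is i.i.d.\ across $\calU$ and $\Bvec$ is i.i.d.\ across $\calV$, so the integrand defining $R$ is invariant under such a relabeling (equivalently, this is the simultaneous row/column permutation already used in the proof of Lemma~\ref{lemma:blockGenFunc}).

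First I would invoke Lemma~\ref{lemma:blockGenFunc}, iterated over the blocks, to write
\[
    R((1-\theta)\mmat) = \big(R([1-\theta])\big)^n\cdot R^{\calU_0,\calV_0}(\zeromat).
\]
Next I would check that the all-zero block contributes trivially, $R^{\calU_0,\calV_0}(\zeromat)=1$: this is immediate from the definition of $R$, since $\ip{\Gmat,\zeromat}=0$ and $f_{\Avec},f_{\Bvec}$ are probability densities; alternatively it follows from Lemma~\ref{lemma:computationGenFunc}, where $\Pmat(\zeromat,\rho_i)=(1-\rho_i^2)\Imat$ and the exponent $|\calU_0|+|\calV_0|-\sum_{(u,v)}\Theta_{u,v}$ equals $|\calU_0|+|\calV_0|$, so numerator and denominator cancel in each factor.

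It then remains to substitute the closed form $R([1-\theta])=\prod_{i\in\calD}\big[(1-\rho_i^2)^{\theta}/(1-\rho_i^2\theta^2)\big]^{1/2}$ from Lemma~\ref{lemma:rOne2One} (applicable here because $|\rho_i|<1$ forces $|1/\rho_i|>1$, so the hypothesis $|\theta|<|1/\rho_i|$ holds in the relevant range $|\theta|\le 1$) and raise it to the $n$-th power to obtain the stated identity. For the inequality, I would likewise take the $n$-th power of the bound $R([1-\theta])\le\exp(-\theta(1-\theta)I_{XY})$ from Lemma~\ref{lemma:rOne2One}; since $R([1-\theta])\ge 0$, being the integral of a nonnegative function, the inequality is preserved under exponentiation, yielding $R((1-\theta)\mmat)\le\exp(-n\theta(1-\theta)I_{XY})$.

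There is essentially no serious obstacle here: the statement is a direct corollary of the preceding lemmas. The only points that need a word of justification are (i) that the relabeling underlying the block-diagonalization leaves $R$ unchanged, (ii) that the empty block evaluates to $1$, and (iii) that taking $n$-th powers respects both the equality and the inequality (nonnegativity of $R$) — all routine.
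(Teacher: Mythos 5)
Your proposal is correct and follows essentially the same route as the paper's proof: permute $(1-\theta)\mmat$ into block-diagonal form with $n$ copies of the $1\times 1$ block $[1-\theta]$ plus a zero block, apply Lemma~\ref{lemma:blockGenFunc} to factor $R$, note $R(\zeromat)=1$ (via Lemma~\ref{lemma:computationGenFunc}), and invoke Lemma~\ref{lemma:rOne2One} for the single-block value and bound. Your added remarks on the invariance under relabeling, the applicability condition $|\theta|<|1/\rho_i|$, and nonnegativity of $R$ when taking $n$-th powers are fine but not points where the paper's argument differs.
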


\begin{proof}
    \((1-\theta)\mmat\) has at most 1 non-zero entry in each row and in each column. Then it can be arranged to have block diagonal form where each non-zero block on the diagonal has size 1 and is equal to \([1-\theta]\). By \hyperref[lemma:blockGenFunc]{Lemma \ref*{lemma:blockGenFunc}}, \(R((1-\theta)\mmat)\) decomposes into the product of \(\croc{R([1-\theta])}^n\) and \(R(\zeromat)\) where \(\zeromat\) is the zero block of the block diagonal decomposition. By \hyperref[lemma:computationGenFunc]{Lemma \ref*{lemma:computationGenFunc}}, \(R(\zeromat)=1\). The expression for \(R([1-\theta]\) is given by \hyperref[lemma:rOne2One]{Lemma \ref*{lemma:rOne2One}}.
\end{proof}

\subsection{Generating function evaluated for cycles and even paths}

\begin{lemma}[Chernoff bound for a cycle]
    \label{lemma:rCycle}
    Given \(\nu\in[0,2]\), let \(m_1\) and \(m_2\) be two mappings of size \(n\) such that \(\Thetamat = \frac{\nu}{2}(\mmat_1 + \mmat_2)\) be a matrix block corresponding to a cycle of the type given in \hyperref[fig:decomposition2]{Fig. \ref*{fig:decomposition2}-I}.
    \begin{align*}
        \frac{\log R(\Thetamat)}{I_{XY}} &\leq -\frac{n}{2}\nu(2-\nu)+n\cdot\rho_{\max}^2(\nu-1)
    \end{align*}
    where \(I_{XY} = -\frac{1}{2}\sum\log\pth{1-\rho_i^2}\), mutual information between correlated features, and \(\rho_{\max} = \max |\rho_i|\), the largest correlation coefficient under the canonical form.
\end{lemma}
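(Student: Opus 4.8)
The plan is to evaluate $R(\Thetamat)$ exactly through Lemma~\ref{lemma:computationGenFunc}, use the cycle structure to diagonalize the associated quadratic form, and then bound the determinant that appears. For a cycle both $\calU$ and $\calV$ have $n$ elements and $\mmat_1,\mmat_2$ are permutation matrices, so every row sum and column sum of $\Thetamat=\tfrac\nu2(\mmat_1+\mmat_2)$ equals $\nu$; in particular $\sum_{(u,v)}\Theta_{u,v}=n\nu$ and $|\calU|+|\calV|-\sum_{(u,v)}\Theta_{u,v}=n(2-\nu)$. Because $\Thetamat\onevec=\nu\onevec$ and $\Thetamat^\top\onevec=\nu\onevec$, the matrix of Lemma~\ref{lemma:computationGenFunc} collapses to $\Pmat(\Thetamat,\rho)=A\Imat-\rho\crocMat{\zeromat}{\Thetamat}{\Thetamat^\top}{\zeromat}$ with $A:=1-\rho^2(1-\nu)$, and $\crocMat{\zeromat}{\Thetamat}{\Thetamat^\top}{\zeromat}$ is $\tfrac\nu2$ times the adjacency matrix of the $2n$-cycle formed by the alternating edges of $m_1$ and $m_2$ (Fig.~\ref{fig:decomposition2}-I). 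Hence, writing $b:=\rho\nu$, the eigenvalues of $\Pmat(\Thetamat,\rho)$ are $A-b\cos(\pi k/n)$ for $k=0,\dots,2n-1$; a short estimate using $\nu\in[0,2]$ shows these are all at least $(1-|\rho|)^2>0$, so $\Pmat$ is positive definite and Lemma~\ref{lemma:computationGenFunc} applies.

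Next I would put the determinant into closed form. Since the multiset $\{\cos(\pi k/n)\}$ is symmetric, $\det\Pmat(\Thetamat,\rho)=\prod_{k=0}^{2n-1}(A-b\cos(\pi k/n))$ depends only on $b^2=\rho^2\nu^2$; factoring out $b$ and using the Chebyshev factorization of the characteristic polynomial of $C_{2n}$ — equivalently the substitution $z+z^{-1}=2A/|b|$, legitimate because $A>|b|$ for $\nu\in[0,2]$ — one obtains $\det\Pmat(\Thetamat,\rho)=(\tilde u^{\,n}-\tilde w^{\,n})^2$, where $\tilde u,\tilde w$ are the roots of $z^2-Az+b^2/4$; equivalently $\tilde u^{\,n}-\tilde w^{\,n}=\tfrac{|b|^n}{2^{n-1}}\sinh(n\xi)$ with $\cosh\xi=A/|b|$. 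The identity that makes the bookkeeping close is $A^2-b^2=(1-\rho^2)\bigl(1-\rho^2(\nu-1)^2\bigr)$. Substituting into Lemma~\ref{lemma:computationGenFunc} and writing $\ell_i:=-\tfrac12\log(1-\rho_i^2)$ (so that $\sum_{i\in\calD}\ell_i=I_{XY}$, while coordinates with $\rho_i=0$, and the degenerate case $\nu=0$ where $\Thetamat=\zeromat$, contribute nothing), this gives $\log R(\Thetamat)=\sum_{i\in\calD}\bigl[-n(2-\nu)\ell_i-\log(\tilde u_i^{\,n}-\tilde w_i^{\,n})\bigr]$.

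The heart of the proof is a lower bound on $\tilde u_i^{\,n}-\tilde w_i^{\,n}$. The key inequality is $\sinh(m\eta)\ge 2^{m-1}\sinh^m\eta$ for all $\eta\ge0$ and all real $m\ge1$: for integer $m$ it follows from $\cosh\ge\sinh$ on $[0,\infty)$ and $\sum_{j\text{ odd}}\binom mj=2^{m-1}$, and in general from the fact that $\eta\mapsto\log\sinh(m\eta)-m\log\sinh\eta$ is decreasing on $(0,\infty)$ with infimum $(m-1)\log 2$. Applied with $m=n/2\ge1$ (a cycle has $n\ge2$) and $\eta=2\xi$ it yields $\tilde u_i^{\,n}-\tilde w_i^{\,n}\ge\bigl(A_i\sqrt{A_i^2-b_i^2}\bigr)^{n/2}$, hence $-\log(\tilde u_i^{\,n}-\tilde w_i^{\,n})\le-\tfrac n2\log A_i-\tfrac n4\log(A_i^2-b_i^2)$. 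I would then bound $-\tfrac14\log(A_i^2-b_i^2)=\tfrac12\ell_i-\tfrac14\log\bigl(1-\rho_i^2(\nu-1)^2\bigr)\le\tfrac12\ell_i\bigl(1+(\nu-1)^2\bigr)$ using the fact $1-\rho^2\theta^2\ge(1-\rho^2)^{\theta^2}$ for $|\theta|\le1$ (Lemma~\ref{lemma:push2exp}) with $\theta=\nu-1$, and bound $-\tfrac12\log A_i=-\tfrac12\log\bigl(1+\rho_i^2(\nu-1)\bigr)$ by the elementary one-variable estimate $\log\bigl(1+\rho^2(\nu-1)\bigr)\ge(1-\nu)(1-\rho^2)\log(1-\rho^2)$, proved by showing the difference of the two sides, viewed as a function of $\rho^2$, vanishes at $0$ and is monotone. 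Collecting terms, every summand is at most $\bigl(-\tfrac n2\nu(2-\nu)+n\rho_i^2(\nu-1)\bigr)\ell_i$; summing over $\calD$ and using $\rho_i^2\le\rho_{\max}^2$ yields the claimed inequality.

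The step that needs care is this determinant estimate. The coefficient $-\tfrac n2\nu(2-\nu)$ has to come out exactly — it is already tight at $\nu=1$, where the bound reads $R(\Thetamat)\le e^{-nI_{XY}/2}$ — so the hyperbolic inequality and the companion estimate for $\log A_i$ have essentially no slack, and the Chebyshev computation of $\det\Pmat$ must be carried through exactly rather than bounded crudely. The correction term $n\rho_{\max}^2(\nu-1)$ also requires attention: for $\nu\ge1$ (the range in which the lemma is invoked by Lemmas~\ref{lemma:misalignment} and~\ref{lemma:condMisalignment}) one sums the per-coordinate bounds first and then replaces $\sum_i\rho_i^2\ell_i$ by $\rho_{\max}^2I_{XY}$, whereas the regime $\nu<1$ forces the $\log A_i$ estimate to be run with $\rho_{\max}$ in place of $\rho_i$ from the start; keeping the signs straight through this case split is the most error-prone part of the argument.
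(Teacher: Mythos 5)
Your argument is correct on \(\nu\in[1,2]\) and reaches the stated bound by a genuinely different route for the determinant. Where the paper's proof takes a Schur complement to reduce \(\det\Pmat(\Thetamat,\rho)\) to an \(n\times n\) determinant, identifies the circulant eigenvalues \(\cos(2k\pi/n)\) of its auxiliary matrix, and lower-bounds the product of eigenvalues with the log-concavity estimate of Lemma~\ref{lemma:minProd}, you diagonalize \(\Pmat(\Thetamat,\rho)\) directly through the spectrum of the \(2n\)-cycle, obtain the closed form \(\det\Pmat=(\tilde u^{\,n}-\tilde w^{\,n})^2\), and lower-bound it with \(\sinh(m\eta)\ge 2^{m-1}\sinh^m\eta\). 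Both routes land on exactly the same intermediate inequality \(\det\Pmat(\Thetamat,\rho)\ge A^n\bigl[(1-\rho^2)\bigl(1-\rho^2(\nu-1)^2\bigr)\bigr]^{n/2}\) with \(A=1-\rho^2(1-\nu)\). After that, your Bernoulli step for \(1-\rho^2(\nu-1)^2\) coincides with the paper's, and your one-variable estimate \(\log\bigl(1+\rho^2(\nu-1)\bigr)\ge(1-\nu)(1-\rho^2)\log(1-\rho^2)\) plays the role of the paper's bound \(1-\rho^2(1-\nu)\ge(1+\rho^2)^{-(1-\nu)}\) combined with Lemma~\ref{lemma:compareEpsilon}; with \(\ell_i=-\tfrac12\log(1-\rho_i^2)\), both deliver the same final correction \(n(\nu-1)\rho_{\max}^2 I_{XY}\). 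What your route buys is an exact spectral expression for \(R(\Thetamat)\) that could in principle be pushed further; what the paper's route buys is lighter bookkeeping, since Lemma~\ref{lemma:minProd} replaces the Chebyshev/hyperbolic manipulations.

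One caveat: your closing suggestion that the regime \(\nu<1\) can be handled by running the \(\log A_i\) estimate with \(\rho_{\max}\) in place of \(\rho_i\) does not work — the inequality \(\log\bigl(1+\rho^2(\nu-1)\bigr)\ge(1-\nu)(1-\rho^2)\log(1-\rho^2)\) genuinely reverses for \(\nu<1\), and no such substitution repairs it. This is not a defect relative to the paper: the paper's own proof also needs \(\nu\ge1\) at the analogous steps (its bound \(1-\rho^2(1-\nu)\ge(1+\rho^2)^{-(1-\nu)}\) invokes Bernoulli with exponent \(\nu-1\in[0,1]\), and the final passage to \(\rho_{\max}^2 I_{XY}\) needs \(\nu-1\ge0\) to preserve the inequality direction), and in fact the stated correction term \(n\rho_{\max}^2(\nu-1)\) has the wrong sign for \(\nu<1\): a direct check at \(n=2\), \(\nu=\tfrac12\) (even with \(\rho\to0\)) falsifies the displayed bound there, while the lemma is only ever invoked with \(\nu\in[1,2]\). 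So simply restrict your write-up to \(\nu\ge1\) and drop the \(\nu<1\) case split.
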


\begin{proof}
    To keep track of \(\mmat_1\) and \(\mmat_2\), let us write \(\Thetamat = \nu\theta\mmat_1 + \nu(1-\theta)\mmat_2\) such that the two matrices have distinct coefficients.\\
    Without loss of generality, let \(\Thetamat\) have its rows and columns arranged such that
    \begin{align*}
        \Thetamat = \nu
        \begin{bmatrix}
            \theta & 1-\theta & 0  & \cdots & 0 & 0\\
            0 & \theta & 1-\theta  & \cdots & 0 & 0\\
            0 & 0 & \theta  & \cdots & 0 & 0\\
            \vdots & \vdots  & \vdots & \ddots & \vdots & \vdots\\
            0 & 0 & 0 & \cdots & \theta & 1-\theta\\
            1-\theta & 0 & 0 & \cdots & 0 & \theta\\
        \end{bmatrix}.
    \end{align*}
    \(\Thetamat\) has \(|\calU|=|m_1|=n\) rows and \(|\calV|=|m_1|=n\) columns.\\
    Recall the definition of \(\Pmat(\Thetamat,\rho)\) as given in \hyperref[lemma:computationGenFunc]{Lemma \ref*{lemma:computationGenFunc}}:
    \begin{align*}
        \Pmat(\Thetamat,\rho) &\define \pth{1-\rho^2}\Imat+\crocMat{\rho^2\cdot\diag(\Thetamat\onevec)}{-\rho\Thetamat}{-\rho\Thetamat^\top}{\rho^2\cdot\diag(\Thetamat^\top \onevec)}
    \end{align*}
    Define the \(\Pmat_1\) and \(\Pmat_2\) to be the matrices that form the diagonal blocks of \(\Pmat(\Thetamat,\rho) = \crocMat{\Pmat_1}{-\rho\Thetamat}{-\rho\Thetamat^\top}{\Pmat_2}\):
    \begin{align*}
        \Pmat_1 &\define \pth{1-\rho^2}\Imat + \rho^2\diag(\Thetamat\vctr{1})\\
        &= \pth{1-\rho^2(1-\nu)}\Imat\\
        \Pmat_2 &\define \pth{1-\rho^2}\Imat + \rho^2\diag(\Thetamat^\top\vctr{1})\\
        &= \pth{1-\rho^2(1-\nu)}\Imat
    \end{align*}
    Then
    \begin{align*}
        \det \Pmat(\Thetamat,\rho) &= \croc{\det \Pmat_1}\croc{\det\pth{\Pmat_2 - \rho^2\Thetamat^\top \Pmat_1^{-1} \Thetamat}}\\
        &= \croc{1-\rho^2(1-\nu)}^n\det\croc{\pth{1-\rho^2(1-\nu)}\Imat - \frac{\rho^2}{1-\rho^2(1-\nu)}\Thetamat^\top\Thetamat}\\
        &= \det\croc{\pth{1-\rho^2(1-\nu)}^2\Imat - \rho^2\Thetamat^\top\Thetamat}
    \end{align*}
    where \(\Pmat_2 - \rho^2\Thetamat^\top \Pmat_1^{-1} \Thetamat = \pth{1-\rho^2(1-\nu)}\Imat - \frac{\rho^2}{1-\rho^2(1-\nu)}\Thetamat^\top\Thetamat\) is the Schur complement of \(\Pmat_1\).\\
    We can write
    \begin{equation*}
        \Thetamat^\top\Thetamat\\
        = \nu^2\pth{1-2\theta(1-\theta)}\Imat+2\nu^2\theta(1-\theta)\Qmat
    \end{equation*}
    where 
    \begin{align*}
        \Qmat &\define \frac{1}{2}
        \begin{bmatrix}
            0 & 1 & 0  & \cdots & 0 & 0 & 1\\
            1 & 0 & 1  & \cdots & 0 & 0 & 0\\
            0 & 1 & 0  & \cdots & 0 & 0 & 0\\
            \vdots & \vdots  & \vdots & \ddots & \vdots & \vdots & \vdots\\
            0 & 0 & 0 & \cdots & 0 & 1 & 0\\
            0 & 0 & 0 & \cdots & 1 & 0 & 1\\
            0 & 0 & 0 & \cdots & 0 & 1 & 0
        \end{bmatrix}
    \end{align*}
    and \(\Qmat\) has eigenvalues \(\cos\pth{\frac{2k\pi}{n}}\), \(k\in\{0,1,2,\cdots,n-1\}\). These are all between \(-1\) and \(1\). Furthermore, the sum of the eigenvalues of \(\Qmat\) is \(\tr(\Qmat)=0\)
    Thus the eigenvalues of \(\pth{1-\rho^2(1-\nu)}^2\Imat - \rho^2\Thetamat^\top\Thetamat\) are 
    \[
    \pth{1-\rho^2(1-\nu)}^2 - \rho^2\nu^2\pth{\pth{1-2\theta(1-\theta)}+2\theta(1-\theta)\cos\pth{\frac{2k\pi}{n}}}.
    \] 
    By \hyperref[lemma:minProd]{Lemma \ref*{lemma:minProd}} with \(\tau = \pth{1-\rho^2(1-\nu)}^2 - \rho^2\nu^2\pth{1-2\theta(1-\theta)}\) and \(\sigma = -2\rho^2\nu^2\theta(1-\theta)\)
    \begin{align*}
    \det \Pmat(\Thetamat,\rho) 
    &= \det\croc{\pth{1-\rho^2(1-\nu)}^2\Imat - \rho^2\Thetamat^\top\Thetamat}\\
    &\geq \croc{\pth{1-\rho^2(1-\nu)}^2 - \rho^2\nu^2 + 4\rho^2\nu^2\theta(1-\theta)}^{\frac{n}{2}}
    \croc{\pth{1-\rho^2(1-\nu)}^2 - \rho^2\nu^2}^{\frac{n}{2}}\\
    &= \croc{\pth{1-\rho^2(1-\nu)}^2 - \rho^2\nu^2(1-2\theta)^2}^{\frac{n}{2}}\croc{1-\rho^2}^{\frac{n}{2}}
    \croc{1-\rho^2(1-\nu)^2}^{\frac{n}{2}}\\
    \end{align*}
    This last expression is maximized over \(\theta\) by picking \(\theta = 1/2\). Then, the inequality becomes the inequality above becomes \(\det \Pmat(\Thetamat,\rho) \geq \croc{1-\rho^2(1-\nu)}^{n}\croc{1-\rho^2}^{\frac{n}{2}}\croc{1-\rho^2(1-\nu)^2}^{\frac{n}{2}}\).
    Since \(\nu\in[0,2]\iff(1-\nu)^2\in[0,1]\) by \hyperref[lemma:push2exp]{Lemma \ref*{lemma:push2exp}}, we have the bound \(1-\rho^2(1-\nu)^2 \geq \pth{1-\rho^2}^{(1-\nu)^2}\). Furthermore, since \(-(1-\nu)\in[0,1]\), we have the bound \(1-\rho^2(1-\nu) \geq \pth{1+\rho^2}^{-(1-\nu)}\). Then,
    \begin{align*}
        \log \det \Pmat(\Thetamat,\rho) \geq \frac{n}{2}\pth{1+(1-\nu)^2}\log\pth{1-\rho^2}-n(1-\nu)\log\pth{1+\rho^2}
    \end{align*}
    \(R(\Thetamat)\) has \(\frac{1}{2}\pth{|\calU|+|\calV|-\sum_{(u,v)} \Theta_{u,v}} = \frac{n}{2}(1+1-\nu) = \frac{n}{2}(2-\nu)\) multiplicative terms of \(\pth{1-\rho^2}\) in the numerator and our bound has \(\frac{n}{4}\pth{1+(1-\nu)^2}\) of them in the denominator, which gives, in total, \(\frac{n}{4}\nu(2-\nu) + \frac{n}{2}(1-\nu)\) such terms. Then
    \begin{align*}
        \log R(\Thetamat) &= \frac{n}{4}\nu(2-\nu)\log\pth{1-\rho^2} - \frac{n}{2}(\nu-1)\croc{\log\pth{1+\rho^2}+\log\pth{1-\rho^2}}
    \end{align*}
    By \hyperref[lemma:compareEpsilon]{Lemma \ref*{lemma:compareEpsilon}}, \(-\frac{1}{2}\sum \log\pth{1-\rho_i^2}+\log\pth{1+\rho_i^2}\) is upper bounded by \(-\frac{1}{2}\sum \rho_i^2\log\pth{1-\rho_i^2}\), which itself is upper bounded by \(\rho_{\max}^2 I_{XY}\) where \(\rho_{\max} = \max |\rho_i|\) and \(I_{XY} = -\frac{1}{2}\sum \log\pth{1-\rho_i^2}\). This gives us the claimed result.
    
\end{proof}

\begin{lemma}[Chernoff bound for an even path]
    \label{lemma:rEvenPath}
    Given \(\nu\in[1,2]\), let \(m_1\) and \(m_2\) be two mappings of size \(n\) such that \(\Thetamat = \frac{\nu}{2}(\mmat_1+\mmat_2)\) be a matrix block corresponding to a cycle of the type given in \hyperref[fig:decomposition2]{Fig. \ref*{fig:decomposition2}-II}.
    \begin{align*}
        \frac{\log R(\Thetamat)}{I_{XY}} \leq -\frac{n}{2}\cdot \nu (2-\nu) - (\nu-1)^2\nu^2 + 6n\cdot \rho_{\max}^2(\nu-1)
    \end{align*}
    where \(I_{XY} = -\frac{1}{2}\sum\log\pth{1-\rho_i^2}\), mutual information between correlated features, and \(\rho_{\max} = \max |\rho_i|\), the largest correlation coefficient under the canonical form.
\end{lemma}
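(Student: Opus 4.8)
The plan is to mirror the proof of Lemma~\ref{lemma:rCycle} (Chernoff bound for a cycle), adapting the linear-algebra computation to the \emph{bidiagonal}, non-circulant structure of an even path. As there, write $\Thetamat = \nu\theta\mmat_1 + \nu(1-\theta)\mmat_2$ (with $\theta = 1/2$, the $\theta$ merely bookkeeping which edge comes from which mapping) and arrange rows and columns so that $\Thetamat\in\bbR^{n\times(n+1)}$ is the bidiagonal matrix with $\nu\theta$ on the main diagonal and $\nu(1-\theta)$ on the superdiagonal. The row sums are all $\nu$, while the column sums are $\nu$ for the $n-1$ interior columns and $\nu\theta$, $\nu(1-\theta)$ for the two boundary columns. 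Feeding this into the definition of $\Pmat(\Thetamat,\rho)$ from Lemma~\ref{lemma:computationGenFunc}, the top-left block is $\Pmat_1 = (1-\rho^2(1-\nu))\Imat_n$ — exactly as in the cycle case — but the bottom-right block is $\Pmat_2 = (1-\rho^2(1-\nu))\Imat_{n+1} - \rho^2\nu(1-\theta)E_{1,1} - \rho^2\nu\theta E_{n+1,n+1}$, i.e.\ the cycle's expression perturbed at its two boundary diagonal entries.

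Next, take the Schur complement with respect to $\Pmat_1$ (a scalar multiple of the identity), obtaining
\[
\det\Pmat(\Thetamat,\rho) = (1-\rho^2(1-\nu))^n\,\det\croc{\Pmat_2 - \tfrac{\rho^2}{1-\rho^2(1-\nu)}\Thetamat^\top\Thetamat}.
\]
As in the cycle proof, $\Thetamat^\top\Thetamat = \nu^2(1-2\theta(1-\theta))\Imat_{n+1} + 2\nu^2\theta(1-\theta)\Qmat$, but here $\Qmat$ differs from a path-graph adjacency only by a rank-two correction at the corners; equivalently $\Thetamat^\top\Thetamat$ equals a Toeplitz tridiagonal matrix $\Xi_0$ with explicit spectrum (eigenvalues of the form $\tfrac{\nu^2}{2}(1+\cos\tfrac{k\pi}{n+2})$, $k=1,\dots,n+1$) minus $\tfrac{\nu^2}{4}(E_{1,1}+E_{n+1,n+1})$. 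Merge this corner correction with the boundary perturbation of $\Pmat_2$, so that the Schur complement reads $S = S_0 + c\,(E_{1,1}+E_{n+1,n+1})$ with $S_0$ a polynomial in $\Xi_0$ (hence with known eigenvalues) and $c$ an explicit scalar, and peel off the rank-two piece via $\det S = \det S_0 \cdot \det(\Imat_{n+1} - c\,S_0^{-1}(E_{1,1}+E_{n+1,n+1}))$, which reduces to the determinant of an explicit $2\times 2$ matrix assembled from the corner entries of $S_0^{-1}$.

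With these pieces in hand, apply Lemma~\ref{lemma:minProd} (with $\tau,\sigma$ read off the clean part $S_0$ exactly as in the cycle proof) to lower-bound $\det S_0$ by a product of the form $[\,\cdot\,]^{(n+1)/2}[\,\cdot\,]^{(n+1)/2}$ over the explicit eigenvalues, and bound the $2\times2$ correction determinant below by a quantity depending only on $\nu$ and $\rho_{\max}$. Finally, convert the resulting product of $(1\pm\rho_i^2(\cdot))$ factors into an exponential via Lemmas~\ref{lemma:push2exp} and~\ref{lemma:compareEpsilon}, as in the cycle case, and carefully count the net number of $(1-\rho_i^2)$ factors: the numerator of $R(\Thetamat)$ now carries $|\calU|+|\calV|-\sum_{(u,v)}\Theta_{u,v} = n(2-\nu)+1$ of them, one more than the cycle because $|\calV| = n+1$. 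It is the interplay of this extra numerator factor with the modified Schur-complement determinant, together with the $2\times2$ correction, that produces the additional $-(\nu-1)^2\nu^2$ term and the enlarged $6n\rho_{\max}^2(\nu-1)$ slack in the statement (and in particular collapses to the cycle bound at $\nu=1$).

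The main obstacle is controlling the rank-two boundary perturbation: unlike the cycle, the path's two end-columns have unequal weights, which breaks the clean circulant/Toeplitz structure, so the eigenvalues of the Schur complement are no longer available in closed form. The saving grace is that the discrepancy from the cycle computation is only $O(1)$-sized in the exponent (the $-(\nu-1)^2\nu^2$ term and an extra $O(n\rho_{\max}^2)$), so the estimate need not be tight — a crude bound on the $2\times2$ corner determinant suffices — but arranging the bookkeeping so that the bound degrades by no more than the claimed amount is the delicate part.
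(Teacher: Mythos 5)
Your setup is faithful to the paper's strategy (same Schur complement with respect to \(\Pmat_1\), same plan of applying Lemma \ref{lemma:minProd}, \ref{lemma:push2exp}, \ref{lemma:compareEpsilon} and counting \((1-\rho^2)\) factors, including the correct count \(n(2-\nu)+1\)), and your identification of \(\Thetamat^\top\Thetamat\) as \(\tfrac{\nu^2}{4}\) times a tridiagonal Toeplitz matrix minus \(\tfrac{\nu^2}{4}(E_{1,1}+E_{n+1,n+1})\), with the boundary diagonal of \(\Pmat_2\), is correct. The gap is in how you handle these boundary terms. Your correction constant \(c\) in \(S=S_0+c(E_{1,1}+E_{n+1,n+1})\) works out to \(c=\tfrac{\rho^2\nu}{4(1-\rho^2(1-\nu))}\bigl[\nu-2+2\rho^2(1-\nu)\bigr]\le 0\) for \(\nu\in[1,2]\), so the \(2\times2\) determinant \(\det\pth{\Imat+cS_0^{-1}(E_{1,1}+E_{n+1,n+1})}\) is at most \(1\): it can only \emph{decrease} your lower bound on \(\det\Pmat(\Thetamat,\rho)\), never supply a gain. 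Meanwhile the Toeplitz part \(S_0\) has eigenvalues \(\tau+\sigma\cos\tfrac{k\pi}{n+2}\) whose cosines sum to zero, so Lemma \ref{lemma:minProd} gives only the symmetric split \((\tau-\sigma)^{(n+1)/2}(\tau+\sigma)^{(n+1)/2}\). Carrying out the factor count on this route, the extra numerator factor from \(|\calV|=n+1\) yields an \(O(1)\) gain of \(\tfrac{\nu(2-\nu)}{2}I_{XY}\), but \(-\tfrac12\log D_2\approx\tfrac{\rho^2\nu(2-\nu)}{4}\approx\tfrac{\nu(2-\nu)}{2}I_{XY}\) per dimension (it is second order in \(\rho\), hence of the same scale as \(I_{XY}\), not \(\rho_{\max}^2 I_{XY}\)-slack), and the two cancel. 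You are left with \(\log R/I_{XY}\le -\tfrac{n}{2}\nu(2-\nu)+O\pth{n\rho_{\max}^2(\nu-1)}\), i.e.\ you recover the cycle-type leading term and the slack, but not the additional \(-(\nu-1)^2\nu^2\).

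This matters because you have misread that term as slack that a "crude bound" can absorb: the \(-(\nu-1)^2\nu^2\) makes the path bound \emph{strictly stronger} than the cycle analogue at the \(O(1)\cdot I_{XY}\) level (e.g.\ at \(\nu=2\) it is worth \(-4I_{XY}\)), so it must be produced by a genuine improvement in the determinant bound, not tolerated as loss. The paper obtains it differently: instead of peeling off a rank-two correction, it folds the boundary diagonal of \(\Pmat_2\) into the auxiliary matrix \(\Qmat\) (so the diagonal perturbation cancels inside the Schur complement), and then exploits that \(\Qmat\) has a \emph{nonzero trace}; in Lemma \ref{lemma:minProd} this shifts extra exponent onto the small eigenvalue factor \(\tau+\sigma\), and that asymmetric split is precisely what generates the \(-(\nu-1)^2\nu^2\) term after the factor counting (it also sidesteps any need for corner entries of \(S_0^{-1}\)). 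In your decomposition the trace information is carried entirely by the rank-two piece with \(c<0\), where it can only hurt, so to rescue your route you would have to extract the asymmetry before invoking Lemma \ref{lemma:minProd} — at which point you are essentially reconstructing the paper's \(\Qmat\) argument.
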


\begin{proof}
    Without loss of generality, let \(\Thetamat\) have its rows and columns arranged such that
    \begin{align*}
        \Thetamat = \nu
        \begin{bmatrix}
            \theta & 1-\theta & 0  & \cdots & 0 & 0 & 0\\
            0 & \theta & 1-\theta  & \cdots & 0 & 0 & 0\\
            0 & 0 & \theta  & \cdots & 0 & 0 & 0\\
            \vdots & \vdots  & \vdots & \ddots & \vdots & \vdots & \vdots\\
            0 & 0 & 0 & \cdots &\theta & 1-\theta & 0\\
            0 & 0 & 0 & \cdots & 0 & \theta & 1-\theta\\
        \end{bmatrix}.
    \end{align*}
    \(\Thetamat\) has \(|\calU|=|m_1|=n\) rows and \(|\calV|=|m_1|+1=n+1\) columns.\\
    Define the \(\Pmat_1\) and \(\Pmat_2\) to be the matrices that form the diagonal blocks of \(\Pmat(\Thetamat,\rho) = \crocMat{\Pmat_1}{-\rho\Thetamat}{-\rho\Thetamat^\top}{\Pmat_2}\):
    \begin{align*}
        \Pmat_1 &\define \pth{1-\rho^2}\Imat + \rho^2\diag(\Thetamat\vctr{1})\\
        &= \pth{1-\rho^2(1-\nu)}\Imat\\
        \Pmat_2 &\define \pth{1-\rho^2}\Imat + \rho^2\diag(\Thetamat^\top\vctr{1})\\
        &= \pth{1-\rho^2(1-\nu)}\Imat - \rho^2\nu\diag(1-\theta,0,0,\cdots,0,\theta)
    \end{align*}
    Then
    \begin{align*}
        \det \Pmat(\Thetamat,\rho) &= \croc{1-\rho^2}^n \det\croc{\pth{1-\rho^2(1-\nu)}\Imat - \rho^2\nu\diag(1-\theta,0,0,\cdots,0,\theta) - \frac{\rho^2}{1-\rho^2(1-\nu)}\Thetamat^\top\Thetamat}
    \end{align*}
    where \(\Pmat_2 - \rho^2\Thetamat^\top \Pmat_1^{-1} \Thetamat = \pth{1-\rho^2(1-\nu)}\Imat - \rho^2\nu\diag(1-\theta,0,0,\cdots,0,\theta) - \frac{\rho^2}{1-\rho^2(1-\nu)}\Thetamat^\top\Thetamat\) is the Schur complement of \(\Pmat_1\).\\
    Define another auxilliary matrix
    \begin{align*}
        \Qmat \define \frac{1}{2\nu^2\theta(1-\theta)}\croc{\Thetamat^\top\Thetamat - \croc{1- 2\theta(1-\theta)}\nu^2\Imat + \nu\pth{1-\rho^2(1-\nu)}\diag(1-\theta,0,0,\cdots,0,\theta)}
    \end{align*}
    such that the Schur complement of \(\Pmat_1\) can be expressed as
    \begin{align*}
        &\pth{1-\rho^2(1-\nu)}\Imat - \rho^2\nu\diag(1-\theta,0,0,\cdots,0,\theta) - \frac{\rho^2}{1-\rho^2(1-\nu)}\Thetamat^\top\Thetamat\\
        &= \croc{1-\rho^2\pth{1-\nu}-\frac{\rho^2\nu^2\pth{1-2\theta(1-\theta)}}{1-\rho^2(1-\nu)}}\Imat-\frac{2\rho^2\nu^2\theta(1-\theta)}{1-\rho^2(1-\nu)}\Qmat
    \end{align*}
    It can be shown that
    \begin{align*}
        \Qmat &\define \frac{1}{2}
        \begin{bmatrix}
            1 & 1 & 0  & \cdots & 0 & 0 & 0\\
            1 & 0 & 1  & \cdots & 0 & 0 & 0\\
            0 & 1 & 0  & \cdots & 0 & 0 & 0\\
            \vdots & \vdots  & \vdots & \ddots & \vdots & \vdots & \vdots\\
            0 & 0 & 0 & \cdots & 0 & 1 & 0\\
            0 & 0 & 0 & \cdots & 1 & 0 & 1\\
            0 & 0 & 0 & \cdots & 0 & 1 & 1
        \end{bmatrix}
        + \nu(1-\nu)(1-\rho^2)\diag(1-\theta,0,0,\cdots,0,\theta).
    \end{align*}
    We have \(\nu(1-\nu)\max\{\theta,1-\theta\}\in[-1,0]\). Then, \(\Qmat\) is an irreducible non-negative square matrix with row sums at most 1. Consequently, by the Perron-Frobenius theorem, its eigenvalues are all between \(-1\) and \(1\). Then the eigenvalues of the Schur complement of \(\Pmat_1\) are between \(1-\rho^2(1-\nu)-\frac{\rho^2\nu^2}{1-\rho^2(1-\nu)}\) and \(1-\rho^2(1-\nu)-\frac{\rho^2\nu^2(1-2\theta)^2}{1-\rho^2(1-\nu)}\). Furthermore, the sum of the eigenvalues of \(\Qmat\) is \(\tr(\Qmat)=1+\nu(1-\nu)(1-\rho^2)\). Multiplying the \(n+1\) eigenvalues of \(\Qmat\) by \(-\frac{2\rho^2\nu^2\theta(1-\theta)}{1-\rho^2(1-\nu)}\) and adding \(1-\rho^2(1-\nu)-\frac{\rho^2\nu^2\pth{1-2\theta(1-\theta)}}{1-\rho^2(1-\nu)}\) gives us the \(n+1\) eigenvalues of the Schur complement of \(\Pmat_1\). Now we plug in \(\theta=1/2\). By \hyperref[lemma:minProd]{Lemma \ref*{lemma:minProd}}, the determinant of the Schur complement of \(\Pmat_1\) is lower bounded by
    \begin{align*}
        \croc{1-\rho^2(1-\nu)}^{\frac{n}{2}}\croc{1-\rho^2(1-\nu)-\frac{\rho^2\nu^2}{1-\rho^2(1-\nu)}}^{\frac{n}{2}+1+\nu(1-\nu)\pth{1-\rho^2}}.
    \end{align*}
   Then, multiplying the determinant of the Schur complement of \(\Pmat_1\) by \(\det \Pmat_1 = \croc{1-\rho^2(1-\nu)}^n\) gives us
    \begin{align*}
        \det\Pmat(\Thetamat,\rho) &\geq \croc{1-\rho^2(1-\nu)}^{n-1}\croc{\pth{1-\rho^2(1-\nu)}^2-\rho^2\nu^2}^{\frac{n}{2}+1+\nu(1-\nu)\pth{1-\rho^2}}\\
        &= \croc{1-\rho^2(1-\nu)}^{n-1-\nu(1-\nu)\pth{1-\rho^2}}\croc{\pth{1-\rho^2}\pth{1-\rho^2(1-\nu)^2}}^{\frac{n}{2}+1+\nu(1-\nu)\pth{1-\rho^2}}.
    \end{align*}
    Since \(\nu\in[0,2]\iff(1-\nu)^2\in[0,1]\) by \hyperref[lemma:push2exp]{Lemma \ref*{lemma:push2exp}}, we have the bound \(1-\rho^2(1-\nu)^2 \geq \pth{1-\rho^2}^{(1-\nu)^2}\). Furthermore, since \(-(1-\nu)\in[0,1]\), we have the bound \(1-\rho^2(1-\nu) \geq \pth{1+\rho^2}^{-(1-\nu)}\). Then,
    \begin{align*}
        \log \det \Pmat(\Thetamat,\rho) &\geq \pth{\frac{n}{2}+1+\nu(1-\nu)\pth{1-\rho^2}}\pth{1+(1-\nu)^2}\log\pth{1-\rho^2}\\
        &\pheq-(n-1-\nu(1-\nu)\pth{1-\rho^2})(1-\nu)\log\pth{1+\rho^2}.
    \end{align*}

    \(R(\Thetamat)\) has \(\frac{1}{2}\pth{|\calU|+|\calV|-\sum_{(u,v)} \Theta_{u,v}} = \frac{1}{2}(n+n+1-\nu n) = \frac{n}{2}(2-\nu)+\frac{1}{2}\) multiplicative terms of \(\pth{1-\rho^2}\) in the numerator and our bound has \(\frac{1}{2}\pth{\frac{n}{2}+1+\nu(1-\nu)\pth{1-\rho^2}}\pth{1+(1-\nu)^2}\) of them in the denominator, which gives, in total, \(\frac{n}{4}\nu(2-\nu) - \frac{n-1}{2}(\nu-1)+\frac{1}{2}\nu(\nu-1)^3\pth{1-\rho^2}-\frac{1}{2}\nu(\nu-1)\rho^2\) such terms. Then
    \begin{align*}
        \log R(\Thetamat) &\leq \frac{1}{2}\croc{\frac{n}{2}\nu(2-\nu)+\nu^2(\nu-1)^2 - \nu^2(\nu-1)^2\rho^2-\nu(\nu-1)\rho^2}\log\pth{1-\rho^2}\\
        &\pheq - \frac{\nu-1}{2}\croc{(n-1)+\nu(\nu-1)\pth{1-\rho^2}}\croc{\log\pth{1-\rho^2}+\log\pth{1+\rho^2}}
    \end{align*}
    For the multi-dimensional case, since under the canonical form all dimensions are mutually independent, we get \(\log R(\Thetamat)\) by simply summing this expression for all \(\rho_i\). This gives us
\begin{align*}
        \log R(\Thetamat) &\leq \croc{\frac{n}{2}\nu(2-\nu)+\nu^2(\nu-1)^2}\sum_{i\in\calD}\frac{1}{2}\log\pth{1-\rho_i^2}\\
        &\pheq - (\nu-1)\croc{\nu^2(\nu-1)+\nu}\sum_{i\in\calD}\frac{1}{2}\rho_i^2\log\pth{1-\rho_i^2}\\
        &\pheq -(\nu-1)(n-1)\sum_{i\in\calD}\frac{1}{2}\croc{\log\pth{1-\rho_i^2}+\log\pth{1+\rho_i^2}}\\
        &\pheq + (\nu-1)^2\nu \sum_{i\in\calD}\frac{1}{2}\rho_i^2\croc{\log\pth{1-\rho^2}+\log\pth{1+\rho^2}}
    \end{align*}
    Given \(\nu\geq 1\), the last term is non-positive, and therefore we can drop it while maintaining the inequality. By \hyperref[lemma:compareEpsilon]{Lemma \ref*{lemma:compareEpsilon}}, \(-\frac{1}{2}\sum \log\pth{1-\rho_i^2}+\log\pth{1+\rho_i^2}\) is upper bounded by \(-\frac{1}{2}\sum \rho_i^2\log\pth{1-\rho_i^2}\), which itself is upper bounded by \(\rho_{\max}^2 I_{XY}\) where \(\rho_{\max} = \max |\rho_i|\) and \(I_{XY} = -\frac{1}{2}\sum \log\pth{1-\rho_i^2}\). This gives us 
    \begin{align*}
        \frac{\log R(\Thetamat)}{I_{XY}} \leq -\frac{n}{2}\cdot \nu (2-\nu) - (\nu-1)^2\nu^2 + n\cdot \rho_{\max}^2(\nu-1) + \rho_{\max}^2(\nu-1)^2(\nu^2+1)
    \end{align*}
    Since \(n\geq 1\) and \(\nu\in[1,2]\), we have \((\nu-1)^2(\nu^2+1) \leq 5n(\nu-1)\). This gives us the claimed result.

\end{proof}

\section{Statistics of the information density matrix}
\label{app:stats}

Recall the definition of \(\Gmat\): Let \(f_{XY}\), \(f_X\) and \(f_Y\) denote the joint and marginal distributions for correlated features in \(\Avec\) and \(\Bvec\).
\begin{align*}
    G_{u,v} &= \log \frac{f_{XY}(\Avec(v),\Bvec(v))}{f_X(\Avec(u))f_Y(\Bvec(v))}
\end{align*}

\begin{lemma}
\label{lemma:stats}
The expressions for the first and second moments of \(\Gmat\) in terms of the correlation vector \(\rhovec\) as defined in \hyperref[sec:canonical]{Appendix \ref*{sec:canonical}} are given below.
\begin{compactenum}[(a)]
    \item Mean and variance of information density of a true pair:
    \begin{align*}
        \bbE\croc{G_{u,v}|u\mapped{M}v} &= \sum_{i\in\calD} - \frac{1}{2}\log\pth{1-\rho_i^2}\\
        \Var\pth{G_{u,v}|u\mapped{M}v} &= \sum_{i\in\calD}\rho_i^2
    \end{align*}
    \item Mean and variance of information density of a false pair:
    \begin{align*}
        \bbE\croc{G_{u,v'}|u\notmapped{M}v'} &= \sum_{i\in\calD} - \frac{1}{2}\log\pth{1-\rho_i^2} - \frac{\rho_i^2}{1-\rho_i^2}\\
        \Var\pth{G_{u,v'}|u\notmapped{M}v'} &= \sum_{i\in\calD} \frac{\rho_i^2(1+\rho_i^2)}{\pth{1-\rho_i^2}^2}
    \end{align*}
    
    \bcomment{
    \item Covariance between information density of a true match and a false match with a user in common:
    \begin{align*}
        \Cov\pth{G_{u,v},G_{u,v'}|u\mapped{M}v} &= \sum_{i\in\calD} -\frac{\rho_i^4}{2(1-\rho_i^2)}
    \end{align*}
    \item Covariance between information density of two false matches with a user in common:
    \begin{align*}
        \Cov\pth{G_{u,v'},G_{u,v''}|u\mapped{M}v} &= \sum_{i\in\calD}\frac{\rho_i^4}{4\pth{1-\rho_i^2}^2}
    \end{align*}
    \item Covariance between information density of two false matches that break apart two true matches:
    \begin{align*}
        \Cov\pth{G_{u_2,v_1},G_{u_1,v_2}|u_1\mapped{M}v_1,u_2\mapped{M}v_2} &= \sum_{i\in\calD}\frac{\rho_i^4}{1-\rho_i^2}
    \end{align*}
    \item Covariance between information density of two false matches that break apart one true match:
    \begin{align*}
        \Cov\pth{G_{u_2,v_1},G_{u_1,v_2}|u_1\mapped{M}v_1,u_2\notmapped{M}v_2} &= \sum_{i\in\calD} \frac{\rho_i^4(1+\rho_i^2)}{2\pth{1-\rho_i^2}^2}
    \end{align*}
    \item Covariance between information density of two matches that do not break apart any true match:
    \begin{align*}
        \Cov\pth{G_{u_2,v_1},G_{u_1,v_2}|u_1\notmapped{M}v_1,u_2\notmapped{M}v_2} &= 0
    \end{align*}
    }
\end{compactenum}
\end{lemma}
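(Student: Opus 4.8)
The plan is to reduce everything to the canonical form and then to a one‑dimensional Gaussian computation that a single orthogonal change of variables diagonalizes. Since the information density \(G_{u,v}\) is a log‑likelihood ratio, it is invariant under the invertible affine maps used in \hyperref[lemma:canonicalVariance]{Lemma \ref*{lemma:canonicalVariance}} to put the databases in canonical form, so we may assume \((\Avec(u),\Bvec(v))\) has mean \(\zerovec\) and joint covariance \(\crocMat{\Imat}{\diag(\rhovec)}{\diag(\rhovec)}{\Imat}\) with the coordinates \(i\in\calD\) mutually independent; this is exactly the regime in which \(\rhovec\) is defined. Then \(G_{u,v}=\sum_{i\in\calD} g_i\) with \(g_i = \log\frac{f_{X_iY_i}(A_i(u),B_i(v))}{f_{X_i}(A_i(u))f_{Y_i}(B_i(v))}\), and because the coordinates are independent, both the mean and the variance of \(G_{u,v}\) are the corresponding sums over \(i\). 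Substituting the bivariate Gaussian density into the definition gives the closed form
\[
g_i = -\tfrac12\log(1-\rho_i^2) - \frac{\rho_i^2(X_i^2+Y_i^2) - 2\rho_i X_iY_i}{2(1-\rho_i^2)},
\]
the same expression that appears in the proof of \hyperref[lemma:computationGenFunc]{Lemma \ref*{lemma:computationGenFunc}}.

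Next I would diagonalize the quadratic form with the orthogonal substitution \(U_i = (X_i-Y_i)/\sqrt2\), \(V_i = (X_i+Y_i)/\sqrt2\), under which \(X_i^2+Y_i^2 = U_i^2+V_i^2\) and \(X_iY_i = (V_i^2-U_i^2)/2\), so that \(g_i = -\tfrac12\log(1-\rho_i^2) - \frac{(\rho_i^2+\rho_i)U_i^2 + (\rho_i^2-\rho_i)V_i^2}{2(1-\rho_i^2)}\). The point of this substitution is that \(U_i\) and \(V_i\) are \emph{independent} Gaussians under both relevant laws: for a true pair \(U_i\sim\calN(0,1-\rho_i)\) and \(V_i\sim\calN(0,1+\rho_i)\) (uncorrelated and jointly Gaussian), while for a false pair \(U_i,V_i\sim\calN(0,1)\) (an orthogonal transform of i.i.d.\ standard Gaussians). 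Thus \(g_i\) is an affine function of two independent squared Gaussians, and using \(\bbE[W^2]=\sigma^2\) and \(\Var(W^2)=2\sigma^4\) for \(W\sim\calN(0,\sigma^2)\) the two moments drop out immediately: for a true pair the means \((\rho_i^2+\rho_i)(1-\rho_i)\) and \((\rho_i^2-\rho_i)(1+\rho_i)\) cancel, giving \(\bbE[g_i]=-\tfrac12\log(1-\rho_i^2)\), and \(\Var(g_i) = \frac{(\rho_i^2+\rho_i)^2\cdot2(1-\rho_i)^2 + (\rho_i^2-\rho_i)^2\cdot2(1+\rho_i)^2}{4(1-\rho_i^2)^2} = \rho_i^2\); for a false pair \(\bbE[g_i] = -\tfrac12\log(1-\rho_i^2) - \frac{\rho_i^2}{1-\rho_i^2}\) and \(\Var(g_i) = \frac{2(\rho_i^2+\rho_i)^2 + 2(\rho_i^2-\rho_i)^2}{4(1-\rho_i^2)^2} = \frac{\rho_i^2(1+\rho_i^2)}{(1-\rho_i^2)^2}\). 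Summing over \(i\in\calD\) then yields all four claimed identities.

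There is no genuine obstacle in this argument. The only points requiring a little care are the reduction to canonical form — one must note that, being a likelihood ratio, the information density is unchanged by the affine canonical transformation, so the statement "in terms of \(\rhovec\)" is well posed — and the fourth‑moment bookkeeping for the variances, which the \((U_i,V_i)\) diagonalization collapses into one‑line computations. The sign of \(\rho_i\) never enters, since only \(\rho_i^2\) and squared coefficients appear in the final expressions.
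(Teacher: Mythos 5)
Your computations are correct and yield exactly the stated moments, but your route through the one-dimensional calculation differs from the paper's. After the common reduction to canonical form and per-coordinate independence (your extra remark that the log-likelihood ratio is invariant under the invertible canonical transformation is a valid way to make ``in terms of \(\rhovec\)'' precise; the paper simply assumes canonical form), the paper rewrites each coordinate's contribution as \(-\tfrac12\log(1-\rho^2)-\tfrac{\rho^2(X-Y)^2}{2(1-\rho^2)}+\tfrac{\rho XY}{1+\rho}\) and then grinds through the mean and variance of each of the two random terms \emph{plus their covariance}, using the representation \(Y=\rho X+\sqrt{1-\rho^2}Z\) (and an independent \(W\) for false pairs) and explicit fourth-moment expansions. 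You instead keep the quadratic form \(\rho^2(X^2+Y^2)-2\rho XY\) and diagonalize it with the rotation \(U=(X-Y)/\sqrt2\), \(V=(X+Y)/\sqrt2\), under which \(U,V\) are independent with variances \(1\mp\rho\) for a true pair and both \(1\) for a false pair; each \(g_i\) then becomes an affine combination of two independent squared Gaussians, and \(\bbE[W^2]=\sigma^2\), \(\Var(W^2)=2\sigma^4\) finish the job with no cross terms. I checked the algebra: the true-pair mean of the quadratic part cancels, the true-pair variance collapses to \(\rho^2\), and the false-pair expressions come out as \(-\tfrac12\log(1-\rho^2)-\tfrac{\rho^2}{1-\rho^2}\) and \(\tfrac{\rho^2(1+\rho^2)}{(1-\rho^2)^2}\), matching the lemma. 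What your diagonalization buys is a shorter, covariance-free variance computation; what the paper's decomposition buys is a template that extends directly to the cross-covariances between different entries of \(\Gmat\) (the commented-out parts (c)--(g) of the lemma), where your single rotation no longer simultaneously diagonalizes all the quadratic forms involved.
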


\begin{proof}
Assume \(\Avec\) and \(\Bvec\) are given in canonical form with correlation vector \(\rhovec\in[-1,1]^{\calD}\). Then
\begin{align}
    \label{eq:Gxy}
    \log \frac{f_{XY}(\xvec,\yvec)}{f_X(\xvec)f_Y(\yvec)} &= \log \frac{\prod_i f_{X_iY_i}(x_i,y_i)}{\prod_i f_{X_i}(x_i)\prod_if_{Y_i}(y_i)}\nonumber\\
    &= \sum_{i\in\calD} \log\frac{f_{X_iY_i}(x_i,y_i)}{f_{X_i}(x_i)f_{Y_i}(y_i)}\nonumber\\
    &= \sum_{i\in\calD} \log\frac{\frac{1}{2\pi\sqrt{1-\rho_i^2}}\exp\pth{-\frac{1}{2}\crocVec{x_i}{y_i}^\top\crocMat{1}{\rho}{\rho}{1}^{-1}\crocVec{x_i}{y_i}}}{\frac{1}{\sqrt{2\pi}}\exp\pth{-\frac{x_i^2}{2}}\cdot\frac{1}{\sqrt{2\pi}}\exp\pth{-\frac{y_i^2}{2}}}\nonumber\\
    &= \sum_{i\in\calD} -\frac{1}{2}\log\pth{1-\rho_i^2}-\frac{\rho_i^2(x_i^2+y_i^2)-2\rho_ix_iy_i}{2(1-\rho_i^2)}\nonumber\\
    &= \sum_{i\in\calD} -\frac{1}{2}\log\pth{1-\rho_i^2}-\frac{\rho_i^2(x_i-y_i)^2}{2(1-\rho_i^2)}+\frac{\rho_ix_iy_i}{1+\rho_i}
\end{align}

Then \(G_{u,v}\) can be written as the sum of \(|\calD|\) independent random variables, each a function of only \(A_i(u),B_i(v)\) and \(\rho_i\). For the rest of this section we assume \(|\calD|=1\) and drop all subscripts \(i\) for simplicity of notation. The means, variances or covariances for the case \(|\calD|>1\) can be found by summing over the corresponding expression for different values of \(\rho\).

For the derivations (a) to (d), assume \(u\mapped{M}v\) and let \(X\define A(u)\), \(Y\define B(v)\). Then \((X,Y)\sim\calN\pth{\zerovec,\crocMat{1}{\rho}{\rho}{1}}\). Define \(Z \define \frac{Y-\rho X}{\sqrt{1-\rho^2}}\). Then \(Y = \rho X + \sqrt{1-\rho^2}Z\). Also define \(W \define B(v')\). Then \(W\), \(X\) and \(Z\) are i.i.d. standard normal random variables.

\begin{compactenum}[(a)]
    \item \(\bbE\croc{G_{u,v}|u\mapped{M}v}\) and \(\Var\pth{G_{u,v}|u\mapped{M}v}\):
    
    We want to find the mean and variance of \(-\frac{1}{2}\log\pth{1-\rho^2}-\frac{\rho^2(X-Y)^2}{2(1-\rho^2)}+\frac{\rho XY}{1+\rho}\). The first term is constant. We find the mean, variance and covariance of the two other terms.
    
    \(X-Y = (1-\rho)X - \sqrt{1-\rho^2}Z \sim \calN(0,2(1-\rho))\), which implies that \(\frac{(X-Y)^2}{2(1-\rho)}\sim\chi^2(1)\) and therefore \(\frac{\rho^2(X-Y)^2}{2(1-\rho^2)}\) has mean \(\frac{\rho^2}{1+\rho}\) and variance \(\frac{2\rho^4}{(1+\rho)^2}\).
    
    \(XY = \rho X^2 + \sqrt{1-\rho^2}XZ\). Then \(XY\) has mean \(\rho\).
    \begin{align*}
        (XY-\rho)^2 &= \rho^2X^4 + 2\rho\sqrt{1-\rho^2}X^3Z + (1-\rho^2)X^2Z^2\\
        &\pheq - 2\rho^2 X^2 - 2\rho\sqrt{1-\rho^2}XZ +\rho^2\\
        \implies \Var(XY) &= \bbE[(XY-\rho)^2]\\
        &= 3\rho^2 + (1-\rho^2) -2\rho^2 + \rho^2\\
        &= 1+\rho^2
    \end{align*}
    So \(\frac{\rho XY}{1+\rho}\) has mean \(\frac{\rho^2}{1+\rho}\) and variance \(\frac{\rho^2(1+\rho^2)}{(1+\rho)^2}\).
    
    \begin{align*}
        \Cov((X-Y)^2,XY) &= \bbE\croc{\pth{(X-Y)^2-2(1-\rho)}\pth{XY-\rho}}\\
        \pth{(X-Y)^2-2(1-\rho)}\pth{XY-\rho} &= -2(XY-\rho)^2 -2(XY-\rho) + (X^2+Y^2)(XY-\rho)\\
        &= -2(XY-\rho)^2 -2(XY-\rho) + X^3Y+XY^3-\rho X^2 -\rho Y^2\\
        \implies \Cov((X-Y)^2,XY) &= -2\Var(XY) + 2\bbE[X^3Y] - 2\rho\bbE[X^2]\\
        &= -2(1+\rho^2) +2 \bbE\croc{\rho X^4 + \sqrt{1-\rho^2}X^3Z} - 2\rho\\
        &= -2-2\rho^2 + 6\rho - 2\rho\\
        &= -2(1-\rho)^2
    \end{align*}
    The covariance between \(\frac{\rho^2(X-Y)^2}{2(1-\rho^2)}\) and \(\frac{\rho XY}{1+\rho}\) equals \(\frac{-\rho^3(1-\rho)}{(1+\rho)^2}\).
    
    Then
    \begin{align*}
        \bbE\croc{G_{u,v}|u\mapped{M}v} &= \bbE\croc{-\frac{1}{2}\log\pth{1-\rho^2}-\frac{\rho^2(X-Y)^2}{2(1-\rho^2)}+\frac{\rho XY}{1+\rho}}\\
        &= -\frac{1}{2}\log\pth{1-\rho^2} - \frac{\rho^2}{1+\rho} + \frac{\rho^2}{1+\rho}\\
        &= -\frac{1}{2}\log\pth{1-\rho^2}\\
        \Var\pth{G_{u,v}|u\mapped{M}v} &= \Var\pth{\frac{\rho^2(X-Y)^2}{2(1-\rho^2)}} + \Var\pth{\frac{\rho XY}{1+\rho}} - 2\Cov\pth{\frac{\rho^2(X-Y)^2}{2(1-\rho^2)},\frac{\rho XY}{1+\rho}}\\
        &= \frac{2\rho^4}{(1+\rho)^2} + \frac{\rho^2(1+\rho^2)}{(1+\rho)^2} + \frac{2\rho^3(1-\rho)}{(1+\rho)^2}\\
        &= \frac{\rho^2+2\rho^3+\rho^4}{(1+\rho)^2} = \rho^2
    \end{align*}
    
    \item \(\bbE\croc{G_{u,v'}|u\notmapped{M}v'}\) and \(\Var\pth{G_{u,v'}|u\notmapped{M}v'}\):
    
    Once again, we find the mean, variance and covariance of the random terms in \(-\frac{1}{2}\log\pth{1-\rho^2}-\frac{\rho^2(X-W)^2}{2(1-\rho^2)}+\frac{\rho XW}{1+\rho}\).
    
    \(X-W\sim\calN(0,2)\), which implies that \(\frac{(X-W)^2}{2}\sim\chi^2(1)\) and therefore \(\frac{\rho^2(X-W)^2}{2(1-\rho^2)}\) has mean \(\frac{\rho^2}{1-\rho^2}\) and variance \(\frac{2\rho^4}{\pth{1-\rho^2}^2}\).
    
    \(\bbE[XW]=0\) and \(\Var(XW) = \bbE[X^2W^2] = \bbE[X^2]\bbE[W^2]=1\). Then \(\frac{\rho XW}{1+\rho}\) has mean \(0\) and variance \(\frac{\rho^2}{(1+\rho)^2}\).
    
    \begin{align*}
        \Cov((X-W)^2,XW) &= \bbE\croc{\pth{(X-W)^2-2}XW}\\
        \pth{(X-W)^2-2}XW &= X^3W + XW^3 - 2X^2W^2 - 2XW\\
        \implies \Cov((X-W)^2,XW) &= -2
    \end{align*}
    The covariance between \(\frac{\rho^2(X-W)^2}{2(1-\rho^2)}\) and \(\frac{\rho XY}{1+\rho}\) equals \(\frac{-\rho^3}{(1+\rho)(1-\rho^2)}\).
    
    Then
    \begin{align*}
        \bbE\croc{G_{u,v'}|u\notmapped{M}v'} &= \bbE\croc{-\frac{1}{2}\log\pth{1-\rho^2}-\frac{\rho^2(X-W)^2}{2(1-\rho^2)}+\frac{\rho XW}{1+\rho}}\\
        &= -\frac{1}{2}\log\pth{1-\rho^2} - \frac{\rho^2}{1-\rho^2}\\
        \Var\pth{G_{u,v'}|u\notmapped{M}v'} &= \Var\pth{\frac{\rho^2(X-W)^2}{2(1-\rho^2)}} + \Var\pth{\frac{\rho XW}{1+\rho}} - 2\Cov\pth{\frac{\rho^2(X-W)^2}{2(1-\rho^2)},\frac{\rho XW}{1+\rho}}\\
        &= \frac{2\rho^4}{\pth{1-\rho^2}^2} + \frac{\rho^2}{(1+\rho)^2} + \frac{2\rho^3}{(1+\rho)(1-\rho^2)}\\
        &= \frac{2\rho^4+\rho^2(1-\rho)^2 + 2\rho^3(1-\rho)}{\pth{1-\rho^2}^2}\\
        &= \frac{\rho^2(1+\rho^2)}{\pth{1-\rho^2}^2}
    \end{align*}

    \bcomment{
    
    \item \(\Cov\pth{G_{u,v},G_{u,v'}|u\mapped{M}v}\):
    
    We want to find the covariance between \(-\frac{1}{2}\log\pth{1-\rho^2}-\frac{\rho^2(X-Y)^2}{2(1-\rho^2)}+\frac{\rho XY}{1+\rho}\) and \(-\frac{1}{2}\log\pth{1-\rho^2}-\frac{\rho^2(X-W)^2}{2(1-\rho^2)}+\frac{\rho XW}{1+\rho}\).
    
    We've already shown that \(\bbE[(X-Y)^2] = 2(1-\rho)\), \(\bbE[XY]=\rho\), \(\bbE[(X-W)^2] = 2\) and \(\bbE[XW]=0\).
    
    \begin{compactitem}
        \item \(\Cov\pth{\frac{\rho^2(X-Y)^2}{2(1-\rho^2)},\frac{\rho^2(X-W)^2}{2(1-\rho^2)}} = \frac{\rho^4}{2(1+\rho)^2}\):
        \begin{align*}
            &(X-Y)^2(X-W)^2\\ &= \pth{X(1-\rho)-Z\sqrt{1-\rho^2}}^2(X-W)^2\\
            &\quad= (1-\rho)^2X^4 -2\pth{W(1-\rho)+Z\sqrt{1-\rho^2}}(1-\rho)X^3\\
            &\quad\pheq + \pth{W^2(1-\rho)^2+Z^2(1-\rho^2)}X^2 -4X^2WZ(1-\rho)\sqrt{1-\rho^2}\\
            &\quad\pheq -2\pth{W^2Z(1-\rho)\sqrt{1-\rho^2}+WZ^2(1-\rho^2)}X + W^2Z^2(1-\rho^2)\\
            &\implies \bbE[(X-Y)^2(X-W)^2]\\ &= 3(1-\rho)^2+ (1-\rho)^2 + (1-\rho^2) + (1-\rho^2)\\
            &\quad= 6 - 8\rho + 2\rho^2\\
            &\implies \Cov\pth{(X-Y)^2,(X-W)^2}\\ &= \bbE[(X-Y)^2(X-W)^2] - \bbE[(X-Y)^2]\bbE[(X-W)^2]\\
            &\quad= 6 - 8\rho + 2\rho^2 - 4 + 4\rho\\
            &\quad= 2(1-\rho)^2
        \end{align*}
        
        \item \(\Cov\pth{\frac{\rho XY}{1+\rho},\frac{\rho^2(X-W)^2}{2(1-\rho^2)}} = \frac{\rho^4}{(1+\rho)^2(1-\rho)}\):
        \begin{align*}
            XY(X-W)^2 &= X\pth{\rho X + \sqrt{1-\rho^2}Z}\pth{X^2-2XW+W^2}\\
            &= \rho X^4 + \pth{\sqrt{1-\rho^2}Z-2\rho W}X^3\\
            &\pheq + \pth{\rho W^2-2\sqrt{1-\rho^2}WZ}X^2 + \sqrt{1-\rho^2}XZW^2\\
            \implies \bbE[XY(X-W)^2] &= 3\rho+\rho = 4\rho\\
            \implies \Cov\pth{XY,(X-W)^2} &= \bbE[XY(X-W)^2] - \bbE[XY]\bbE[(X-W)^2]\\
            &= 4\rho - 2\rho = 2\rho
        \end{align*}
        
        \item \(\Cov\pth{\frac{\rho^2(X-Y)^2}{2(1-\rho^2)},\frac{\rho XW}{1+\rho}} = 0\):
        \begin{align*}
            \bbE[(X-Y)^2XW] &= \bbE[(X-Y)^2X]\bbE[W] = 0\\
            \implies \Cov\pth{(X-Y)^2,XW} &= \bbE[(X-Y)^2XW]-\bbE[(X-Y)^2]\bbE[XW] = 0
        \end{align*}
        
        \item \(\Cov\pth{\frac{\rho XY}{1+\rho},\frac{\rho XW}{1+\rho}} = 0\):
        \begin{align*}
            \bbE[XYXW] &= \bbE[X^2Y]\bbE[W] = 0\\
            \implies \Cov\pth{XY,XW} &= \bbE[XYXW]-\bbE[XY]\bbE[XW] = 0
        \end{align*}
    \end{compactitem}
    
    Then
    \begin{align*}
        \Cov\pth{G_{u,v},G_{u,v'}|u\mapped{M}v} &= \Cov\pth{\frac{\rho^2(X-Y)^2}{2(1-\rho^2)},\frac{\rho^2(X-W)^2}{2(1-\rho^2)}} - \Cov\pth{\frac{\rho XY}{1+\rho},\frac{\rho^2(X-W)^2}{2(1-\rho^2)}}\\
        &= \frac{\rho^4}{2(1+\rho)^2} - \frac{\rho^4}{(1+\rho)^2(1-\rho)}\\
        &= -\frac{\rho^4}{2(1-\rho^2)}
    \end{align*}
    
    \item \(\Cov\pth{G_{u,v'},G_{u,v''}|u\mapped{M}v}\):
    
    We want to find the covariance between \(-\frac{1}{2}\log\pth{1-\rho^2}-\frac{\rho^2(X-W_0)^2}{2(1-\rho^2)}+\frac{\rho XW_0}{1+\rho}\) and \(-\frac{1}{2}\log\pth{1-\rho^2}-\frac{\rho^2(X-W_1)^2}{2(1-\rho^2)}+\frac{\rho XW_1}{1+\rho}\).
    
    We've already shown that \(\bbE[(X-W_0)^2] = \bbE[(X-W_1)^2] = 2\) and \(\bbE[XW_0]=\bbE[XW_1]=0\). It can also be shown that \(\Cov((X-W_0)^2,XW_1) = \Cov(XW_0,(X-W_1)^2) = \Cov(XW_0,XW_1) = 0\). Then we only need to compute \(\Cov\pth{\frac{\rho^2(X-W_0)^2}{2(1-\rho^2)},\frac{\rho^2(X-W_1)^2}{2(1-\rho^2)}}\).
    \begin{align*}
        (X-W_0)^2,(X-W_1)^2 &= X^4 - 2X^3(W_0+W_1) + X^2(W_0^2+W_1^2)\\
        &\pheq - 4X^2W_0W_1 -2X(W_0^2W_1 + W_0W_1^2) + W_0^2W_1^2\\
        \implies \bbE[(X-W_0)^2,(X-W_1)^2] &= 3 + 1 + 1 = 5\\
        \implies \Cov((X-W_0)^2,(X-W_1)^2) &= \bbE[(X-W_0)^2,(X-W_1)^2] -\bbE[(X-W_0)^2]\bbE[(X-W_1)^2] = 1
    \end{align*}
    
    Then
    \begin{align*}
        \Cov\pth{G_{u,v'},G_{u,v''}|u\mapped{M}v} &= \Cov\pth{\frac{\rho^2(X-W_0)^2}{2(1-\rho^2)},\frac{\rho^2(X-W_1)^2}{2(1-\rho^2)}}\\
        &= \frac{\rho^4}{4\pth{1-\rho^2}^2}
    \end{align*}
    
    \item \(\Cov\pth{G_{u_2,v_1},G_{u_1,v_2}|u_1\mapped{M}v_1,u_2\mapped{M}v_2}\):
    
    Let \(X_k\define A(u_k)\), \(Y_k\define B(v_k)\) and \(Z_k \define \frac{Y_k-\rho X_k}{\sqrt{1-\rho^2}}\) for \(k=\{1,2\}\). Then \(X_1,X_2,Z_1,Z_2\) are i.i.d. standard normal random variables and \(\{X_2,Y_1\}\) and \(\{X_1,Y_2\}\) are two pairs of independent standard normal variables.

    We want to find the covariance between \(-\frac{1}{2}\log\pth{1-\rho^2}-\frac{\rho^2(X_2-Y_1)^2}{2(1-\rho^2)}+\frac{\rho X_2Y_1}{1+\rho}\) and \(-\frac{1}{2}\log\pth{1-\rho^2}-\frac{\rho^2(X_1-Y_2)^2}{2(1-\rho^2)}+\frac{\rho X_1Y_2}{1+\rho}\). We know that \(\bbE[(X_2-Y_1)^2]=\bbE[(X_2-Y_1)^2]=2\) and \(\bbE[X_2Y_1]=\bbE[X_1Y_2]=0\).
    
    \begin{compactitem}
        \item \(\Cov\pth{\frac{\rho^2(X_2-Y_1)^2}{2(1-\rho^2)},\frac{\rho^2(X_1-Y_2)^2}{2(1-\rho^2)}} = 0\):
        \begin{align*}
            (X_2-Y_1)^2(X_1-Y_2)^2 &= \pth{X_2-\rho X_1 - \sqrt{1-\rho^2}Z_1}^2\pth{X_1-\rho X_2 - \sqrt{1-\rho^2}Z_2}^2\\
            \intertext{Ignoring the terms with an \(X_1,X_2,Z_1\) or \(Z_2\) factor of power 1, we get the following expansion:}
            (X_2-Y_1)^2(X_1-Y_2)^2 &= \rho^2(X_1^4+X_2^4)+(1+\rho^4-4\rho^2)X_1^2X_2^2\\
            &\pheq (1-\rho^2)(X_1^2Z_1^2+X_2^2Z_2^2) + \rho^2(1-\rho^2)(X_1^1Z_2^2+X_1^2Z_2^2)\\
            &\pheq \pth{1-\rho^2}^2Z_1^2Z_2^2 + (\cdots)\\
            \implies \bbE[(X_2-Y_1)^2(X_1-Y_2)^2] &= 6\rho^2 + (1+ \rho^4 - 4\rho^2) + 2(1-\rho^2) + 2\rho^2(1-\rho^2) + \pth{1-\rho^2}^2\\
            &= 4\\
            \implies \Cov\pth{(X_2-Y_1)^2,(X_1-Y_2)^2} &= \bbE[(X_2-Y_1)^2(X_1-Y_2)^2] - \bbE[(X_2-Y_1)^2]\bbE[(X_2-Y_1)^2]\\
            &= 0
        \end{align*}
        
        \item \(\Cov\pth{\frac{\rho^2(X_2-Y_1)^2}{2(1-\rho^2)},\frac{\rho X_1Y_2}{1+\rho}} = \Cov\pth{\frac{\rho X_2Y_1}{1+\rho},\frac{\rho^2(X_1-Y_2)^2}{2(1-\rho^2)}} = -\frac{\rho^5}{(1+\rho)^2(1-\rho)}\)
        \begin{align*}
            (X_2-Y_1)^2X_1Y_2 &= \pth{X_2-\rho X_1 - \sqrt{1-\rho^2}Z_1}^2X_1\pth{\rho X_2 + \sqrt{1-\rho^2} Z_2}\\
            \intertext{Ignoring the terms with an \(X_1,X_2,Z_1\) or \(Z_2\) factor of power 1, we get the following expansion:}
            (X_2-Y_1)^2X_1Y_2 &= -2\rho^2X_1^2X_2 + (\cdots)\\
            \implies \bbE[(X_2-Y_1)^2X_1Y_2] &= -2\rho^2\\
            \implies \Cov\pth{(X_2-Y_1)^2,X_1Y_2} &= \bbE[(X_2-Y_1)^2X_1Y_2] - \bbE[(X_2-Y_1)^2]\bbE[X_1Y_2]\\
            &= -2\rho^2
        \end{align*}
        
        \item \(\Cov\pth{\frac{\rho X_2Y_1}{1+\rho},\frac{\rho X_1Y_2}{1+\rho}} = \frac{\rho^4}{(1+\rho)^2}\)
        \begin{align*}
            \bbE[X_2Y_1X_1Y_2] &= \bbE[X_1Y_1]\bbE[X_2Y_2] = \rho^2\\
            \implies \Cov(X_2Y_1,X_1Y_2) &= \bbE[X_2Y_1X_1Y_2] - \bbE[X_2Y_1]\bbE[X_1Y_2]\\
            &= \rho^2
        \end{align*}
    \end{compactitem}
    
    Then
    \begin{align*}
        \Cov\pth{G_{u_2,v_1},G_{u_1,v_2}|u_1\mapped{M}v_1,u_2\mapped{M}v_2} &= \Cov\pth{\frac{\rho^2(X_2-Y_1)^2}{2(1-\rho^2)},\frac{\rho^2(X_1-Y_2)^2}{2(1-\rho^2)}}\\
        &\pheq - 2\Cov\pth{\frac{\rho^2(X_2-Y_1)^2}{2(1-\rho^2)},\frac{\rho X_1Y_2}{1+\rho}}\\
        &\pheq + \Cov\pth{\frac{\rho X_2Y_1}{1+\rho},\frac{\rho X_1Y_2}{1+\rho}}\\
        &= 0 + \frac{2\rho^5}{(1+\rho)^2(1-\rho)} + \frac{\rho^4}{(1+\rho)^2}\\
        &= \frac{\rho^4}{1-\rho^2}
    \end{align*}
    
    \item \(\Cov\pth{G_{u_2,v_1},G_{u_1,v_2}|u_1\mapped{M}v_1,u_2\notmapped{M}v_2}\):
    
    Let \(X\define A(u_1)\), \(Y\define B(v_1)\) and \(Z \define \frac{Y-\rho X}{\sqrt{1-\rho^2}}\). Furthermore let \(W_x \define A(u_2)\) and \(W_y\define B(v_2)\). Then \(W_x,W_y,X\) and \(Z\) are i.i.d. standard normal random variables.

    We want to find the covariance between \(-\frac{1}{2}\log\pth{1-\rho^2}-\frac{\rho^2(W_x-Y)^2}{2(1-\rho^2)}+\frac{\rho W_xY}{1+\rho}\) and \(-\frac{1}{2}\log\pth{1-\rho^2}-\frac{\rho^2(X-W_y)^2}{2(1-\rho^2)}+\frac{\rho XW_y}{1+\rho}\). We know that \(\bbE[(W_x-Y)^2]=\bbE[(X-W_y)^2]=2\) and \(\bbE[W_xY]=\bbE[XW_y]=0\). It can also be shown that \(\Cov((W_x-Y)^2,XW_y) = \Cov(W_xY,(X-W_y)^2) = \Cov(W_xY,XW_y) = 0\). Then we only need to compute \(\Cov\pth{\frac{\rho^2(W_x-Y)^2}{2(1-\rho^2)},\frac{\rho^2(X-W_y)^2}{2(1-\rho^2)}}\).
    \begin{align*}
        (W_x-Y)^2(X-W_y)^2 &= \pth{W_x-\rho X - \sqrt{1-\rho^2} Z}^2(X-W_y)^2\\
        \intertext{Ignoring the terms with an \(W_x,W_y,X\) or \(Z\) factor of power 1, we get the following expansion:}
        (W_x-Y)^2(X-W_y)^2 &= \rho^2 X^4 + X^2W_x^2 + \rho^2X^2W_y^2\\
        &\pheq + (1-\rho^2)X^2Z^2  + W_x^2W_y^2 + (1-\rho^2)Z^2W_y^2 + (\cdots)\\
        \implies \bbE[(W_x-Y)^2(X-W_y)^2] &= 3\rho^2 + 1 + \rho^2 + (1-\rho^2) + 1 + (1-\rho^2)\\
        &= 2(1+\rho^2)
    \end{align*}
    
    Then
    \begin{align*}
        \Cov\pth{G_{u_2,v_1},G_{u_1,v_2}|u_1\mapped{M}v_1,u_2\notmapped{M}v_2} &= \Cov\pth{\frac{\rho^2(W_x-Y)^2}{2(1-\rho^2)},\frac{\rho^2(X-W_y)^2}{2(1-\rho^2)}}\\
        &= \frac{\rho^4(1+\rho^2)}{2\pth{1-\rho^2}^2}
    \end{align*}
    
    \item \(\Cov\pth{G_{u_2,v_1},G_{u_1,v_2}|u_1\notmapped{M}v_1,u_2\notmapped{M}v_2}\)
    
    Conditioned on \(u_1\notmapped{M}v_1,u_2\notmapped{M}v_2\), \((A(u_2),B(u_1))\) is independent from \((A(u_1),B(u_2))\). Since \(G_{u_2,v_1}\) is a function of the former and \(G_{u_1,v_2}\) a function of the latter with no additional randomness, it follows that \(G_{u_2,v_1}\) and \(G_{u_1,v_2}\) are independent and therefore have no correlation.

    }
\end{compactenum}
\end{proof}

\begin{lemma}
    \label{lemma:highDimStats}
    Under \hyperref[cond:highDimensional]{Condition \ref*{cond:highDimensional}},
    \begin{compactenum}[(a)]
        \item Mean and variance of information density of a true pair:
        \begin{align*}
            \bbE\croc{G_{u,v}|u\mapped{M}v} &= I_{XY}\\
            \Var\pth{G_{u,v}|u\mapped{M}v} &= 2I_{XY}(1-o(1))
        \end{align*}
        \item Mean and variance of information density of a false pair:
        \begin{align*}
            \bbE\croc{G_{u,v'}|u\notmapped{M}v'} &= -I_{XY}(1+o(1))\\
            \Var\pth{G_{u,v'}|u\notmapped{M}v'} &= 2I_{XY}(1+o(1))
        \end{align*}
    \end{compactenum}
\end{lemma}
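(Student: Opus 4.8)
The plan is to start from the exact expressions for the first two moments provided by Lemma~\ref{lemma:stats} and collapse everything onto the single quantity $I_{XY} = -\frac{1}{2}\sum_{i\in\calD}\log\pth{1-\rho_i^2}$ by means of a Taylor expansion that is \emph{uniform in the coordinate index} $i$. The enabling observation is that, by Lemma~\ref{lemma:svd2rho}, Condition~\ref{cond:highDimensional} is precisely the statement $\rho_{\max}^2 \define \max_{i\in\calD}\rho_i^2 \leq o(1)$, so every $\rho_i^2$ is uniformly small and each per-coordinate relative error can be controlled by one bound that does not depend on $i$.

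For a true pair the mean is literally $I_{XY}$ by part (a) of Lemma~\ref{lemma:stats}, so there is nothing to prove there. For its variance, part (a) gives $\Var\pth{G_{u,v}|u\mapped{M}v} = \sum_{i\in\calD}\rho_i^2$; using $-\log(1-t) = t + \tfrac{t^2}{2} + \cdots = t\pth{1+O(t)}$ with an absolute constant in $O(\cdot)$ valid for $t\in[0,\rho_{\max}^2]$, I get $\rho_i^2 = \pth{-\log(1-\rho_i^2)}\pth{1+O(\rho_{\max}^2)}$ uniformly, hence $\sum_i \rho_i^2 = \pth{1+o(1)}\sum_i\pth{-\log(1-\rho_i^2)} = 2I_{XY}(1+o(1))$; since moreover $\rho_i^2 \leq -\log(1-\rho_i^2)$ termwise, the factor may be written as $(1-o(1))$ as claimed.

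For a false pair I substitute the same expansions into part (b) of Lemma~\ref{lemma:stats}. For the mean, $-\tfrac12\log(1-t) - \tfrac{t}{1-t} = \tfrac{t}{2} - t + O(t^2) = -\tfrac{t}{2}\pth{1+O(t)}$ uniformly for $t = \rho_i^2 \leq \rho_{\max}^2$, so $\bbE\croc{G_{u,v'}|u\notmapped{M}v'} = -\tfrac12\sum_i\rho_i^2(1+o(1)) = -I_{XY}(1+o(1))$ by the variance computation above. For the variance, $\tfrac{t(1+t)}{(1-t)^2} = t\pth{1+O(t)}$ uniformly, so $\Var\pth{G_{u,v'}|u\notmapped{M}v'} = \sum_i\rho_i^2(1+o(1)) = 2I_{XY}(1+o(1))$.

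The only point requiring care—and the ``main obstacle,'' such as it is—is making the $o(1)$ bookkeeping rigorous: the relative error in each coordinate must be bounded by a single $O(\rho_{\max}^2)$ independent of $i$, which is exactly what Condition~\ref{cond:highDimensional} supplies via Lemma~\ref{lemma:svd2rho}. Once that uniformity is in hand, factoring the common $1+o(1)$ out of each sum and identifying the remaining sum with $2I_{XY}$ is immediate, and no probabilistic argument beyond Lemma~\ref{lemma:stats} is needed.
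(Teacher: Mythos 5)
Your proposal is correct and follows essentially the same route as the paper: take the exact moments from Lemma~\ref{lemma:stats}, invoke Lemma~\ref{lemma:svd2rho} to turn Condition~\ref{cond:highDimensional} into $\rho_{\max}^2 \leq o(1)$, and compare $\rho_i^2$, $\tfrac{\rho_i^2}{1-\rho_i^2}$, and $\tfrac{\rho_i^2(1+\rho_i^2)}{(1-\rho_i^2)^2}$ with $-\log(1-\rho_i^2)$ via uniformly small Taylor-type errors. The only cosmetic difference is that the paper argues through the chain of inequalities $-\log(1-x) < \tfrac{x}{1-x} < \tfrac{x(1+x)}{(1-x)^2}$ together with limits as $x\to 0$, whereas you factor out a uniform $1+O(\rho_{\max}^2)$ per coordinate (and you spell out the true-pair variance step, which the paper leaves implicit); the content is the same.
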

\begin{proof}
    Exact expressions for the statistics are given in \hyperref[lemma:stats]{Lemma \ref*{lemma:stats}}. The expression for \(\bbE\croc{G_{u,v}|u\mapped{M}v}\) is exactly equal to the expression for mutual information \(I_{XY} = -\frac{1}{2}\sum \log\pth{1-\rho_i^2}\).

    By the Taylor series expansion of \(\log\), given \(x\in(0,1)\),
    \begin{align*}
        -\frac{1}{x}\log(1-x) &= \frac{1}{x}\sum_{k=1}^\infty \frac{x^k}{k}\\
        &< \frac{1}{x}\sum_{k=1} x^k = \frac{1}{1-x}
    \end{align*}
    So \(-\log(1-x) < \frac{x}{1-x}\). Furthermore, the limit of \(-\frac{1}{x}\log(1-x)\) and the limit of \(\frac{1}{1-x}\) are both 1 as \(x\to 0\). Then, given \(x\leq o(1)\), \(\frac{x}{1-x} = -(1+o(1))\log(1-x)\). Then \(\sum \frac{\rho_i^2}{1-\rho_i^2} = 2I_{XY}(1+o(1))\), which gives us \(\bbE\croc{G_{u,v'}|u\notmapped{M}v'} = I_{XY} -  2I_{XY}(1+o(1)) = -I_{XY}(1+o(1))\).

    Finally, since \(-\log(1-x) < \frac{x}{1-x} < \frac{x(1+x)}{(1-x)^2}\) and the limits of \(-\frac{1}{x}\log(1-x)\) and \(\frac{(1+x)}{(1-x)^2}\) are both 1 as \(x\to 0\), we have \(\sum \frac{\rho_i^2\pth{1+\rho_i^2}}{\pth{1-\rho_i^2}^2} = 2I_{XY}(1+o(1))\), which gives us the last part of the claim.
    
    So \(\frac{\rho_i^2}{1-\rho_i^2} > -\log\pth{1-\rho_i^2}\).
\end{proof}

\section{Other lemmas}

\begin{lemma}[Bernoulli's inequality]
    \label{lemma:push2exp}
    Let \(a\in(-1,\infty)\). Then
    \begin{itemize}
        \item \((1+ax)>(1+a)^x\) if \(x\in(0,1)\).
        \item \((1+ax)< (1+a)^x\) if \(x\in\bbR\setminus[0,1]\).
        \item If \(|ax|\leq o(1)\),\\
        then \(\left|\frac{1+ax}{(1+a)^x}-1\right|\leq o(1)\).
    \end{itemize}
\end{lemma}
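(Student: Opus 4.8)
The plan is to deduce all three parts from the strict concavity of one auxiliary function. Define
\(\phi(x) := \log(1+ax) - x\log(1+a)\) on the set \(D := \{x \in \bbR : 1+ax > 0\}\), which is an (open) half-line containing \([0,1]\), since \(1+ax\) is linear in \(x\) and positive at \(x=0\) (value \(1\)) and at \(x=1\) (value \(1+a>0\)). On \(D\) the function \(\phi\) is smooth, \(\phi(0)=\phi(1)=0\), and \(\phi''(x) = -a^2/(1+ax)^2 < 0\) whenever \(a \neq 0\); hence \(\phi\) is strictly concave on \(D\). A strictly concave function vanishing at the two points \(0<1\) lies strictly above the connecting chord (the zero function) on \((0,1)\) and strictly below its extension off \([0,1]\), so \(\phi(x) > 0\) for \(x \in (0,1)\) and \(\phi(x) < 0\) for \(x \in D\setminus[0,1]\).

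For the first bullet, \(x\in(0,1)\subseteq D\) gives \(\log(1+ax) > x\log(1+a) = \log((1+a)^x)\), and since \(\log\) is increasing, \(1+ax > (1+a)^x\). For the second bullet, \(x\in\bbR\setminus[0,1]\): either \(x\in D\), where \(\phi(x)<0\) yields \(1+ax < (1+a)^x\) by the same exponentiation; or \(x\notin D\), where \(1+ax \le 0 < (1+a)^x\) and the strict inequality is trivial. (For \(a=0\) every displayed inequality collapses to an equality; all uses in the paper have \(a\neq 0\).)

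For the third bullet, note that on \(D\) one has \(\frac{1+ax}{(1+a)^x} = e^{\phi(x)}\), so it suffices to show \(\phi(x) = o(1)\) when \(|ax| = o(1)\). I would write \(\phi(x) = \bigl(\log(1+ax)-ax\bigr) - x\bigl(\log(1+a)-a\bigr)\): the first bracket is \(O((ax)^2) = o(1)\) by the elementary estimate \(|\log(1+u)-u|\le u^2\) for \(|u|\) small, and — using that in the asymptotic regime of interest \(a\) stays in a fixed compact subinterval of \((-1,\infty)\) (in every application \(a = -\rho_i^2\) with \(\rho_{\max}^2 = o(1)\), so in fact \(a = o(1)\)) — one has \(|\log(1+a)-a| \le C|a|\), whence \(|x(\log(1+a)-a)| \le C|ax| = o(1)\); hence \(\phi(x)=o(1)\) and \(e^{\phi(x)} = 1+o(1)\). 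A series-free alternative: applying the first bullet to the exponents \(x\) and \(1-x\) gives, for \(x\in[0,1]\), the sandwich \(1 < \frac{1+ax}{(1+a)^x} \le 1 + \frac{a^2 x(1-x)}{1+a}\), from which the claim follows at once, with a companion estimate covering \(x\notin[0,1]\).

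The only genuinely delicate point is the last one: the hypothesis ``\(|ax| = o(1)\)'' does not by itself control \(x\log(1+a)\) if \(a\) is allowed to approach \(-1\) (one can then drive \((1+a)^x \to 0\) while \(1+ax \to 1\), breaking the claim). So the substantive step in part three is simply recognizing the intended convention that \(a\) remains bounded away from \(-1\) and from \(+\infty\), which holds automatically wherever the lemma is invoked; granting that, the whole statement reduces to the elementary convexity argument above and there is no real obstacle.
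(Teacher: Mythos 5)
Your proof is correct, and it takes a mildly different route from the paper's: the paper studies the difference \(f(x)=(1+a)^x-(1+ax)\), shows \(f''>0\), checks the signs of \(f'\) at \(0\) and \(1\), and concludes the sign pattern from \(f(0)=f(1)=0\), whereas you pass to the log-ratio \(\phi(x)=\log(1+ax)-x\log(1+a)\) and use strict concavity plus the chord argument at the roots \(0,1\). The two are the same convexity idea in different clothing, but your formulation buys two things. First, exponentiating \(\phi\) gives the ratio bound in the third bullet directly, while the paper's route (showing the \emph{difference} \(f(x)=o(1)\) via the Taylor series of \((1+a)^x\)) silently needs \((1+a)^x\) bounded away from \(0\) to convert a small difference into a ratio close to \(1\). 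Second, you correctly flag that the third bullet is false as literally stated if \(a\) may approach \(-1\) (your example with \(|ax|=o(1)\) but \(x\log(1+a)\to-\infty\) is right); the paper's proof has exactly the same hidden assumption, since its claim that \(\sum_{k\geq 1}x^k\log^k(1+a)/k!=o(1)\) also requires \(|x\log(1+a)|=o(1)\), which does not follow from \(|ax|=o(1)\) alone. In the paper's applications \(a=-\rho_i^2\) with \(\rho_{\max}^2=o(1)\), so your reading of the intended convention is the right one. Two small remarks: your parenthetical about \(a=0\) (where all inequalities degenerate to equalities) is a legitimate caveat the paper omits, and your handling of \(x\notin D\) (where \(1+ax\leq 0<(1+a)^x\)) cleanly covers a case the paper's \(f\)-based argument covers only implicitly. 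The ``series-free'' sandwich you sketch for \(x\in[0,1]\) also checks out, since \((1+ax)(1+a(1-x))=1+a+a^2x(1-x)\).
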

\begin{proof}
    Let \(f(x) \define (1+a)^x-(1+ax)\). \(f(0)=f(1)=0\). The first derivative is given by \(f'(x)=(1+a)^x\log(1+a)-a\). \(f'(0)\) is strictly negative for \(x=0\) since \(\log(1+a)<a\) and \(f'(1)\) is strictly positive for \(x=1\) since \(\log(1+a)\geq\frac{a}{1+a}\). Furthermore the second derivative \(f''(x) = (1+a)^x\log^2(1+a)\) is strictly positive. Then this function has a global minimum at some \(x^*\in(0,1)\), is strictly decreasing for \(x<x^*\) and strictly increasing for \(x>x^*\). Then \(f(x)<0\) for \(x\in(0,1)\) and \(f(x)>0\) for \(x\in(-\infty,0)\cap(1,\infty)\).
    
    The Taylor series expansion for \(f(x)\) at \(x=0\) is given by
    \begin{align*}
        f(x) &=-ax + \sum_{k=1}^\infty \frac{x^k\log^k(1+a)}{k!},
    \end{align*}
    which is \(-o(1) + o(1)\) if \(|ax|\leq o(1)\). Then \(\left|\frac{1+ax}{(1+a)^x}-1\right|\leq o(1)\).

    \bcomment{
    
    The Taylor series expansion for \(f(x)\) at \(x=1\) is given by
    \begin{align*}
        f(x)&=-a(x-1) + (1+a)\sum_{k=1}^\infty \frac{(x-1)^k\log^k(1+a)}{k!},
    \end{align*}
    which is \(-o(1) + (1+a)o(1)\) if \(|a(x-1)|\leq o(1)\). Then \(\left|\frac{1+ax}{(1+a)^x}-1\right|\leq o(1)\).

    }
    
\end{proof}

\begin{lemma}
    \label{lemma:minProd}
    Let  \(\tau\in(0,\infty)\), \(\sigma\in(-\tau,\tau)\), \(x_1,\cdots,x_n\in[-1,1]\) and \(s = \sum x_i\).
    Then
    \begin{align*}
        \prod_{i=1}^n (\tau+\sigma x_i) \geq \pth{\tau-\sigma}^{\frac{n-s}{2}}\pth{\tau+\sigma}^{\frac{n+s}{2}}
    \end{align*}
\end{lemma}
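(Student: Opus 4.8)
The plan is to take logarithms and recognize the statement as a direct consequence of the concavity of $t \mapsto \log(\tau + \sigma t)$ on $[-1,1]$. First I would observe that all the quantities appearing are strictly positive: since $|\sigma| < \tau$ and $x_i \in [-1,1]$, we have $\tau + \sigma x_i \geq \tau - |\sigma| > 0$ for each $i$, and likewise $\tau - \sigma > 0$ and $\tau + \sigma > 0$. Hence it is legitimate to pass to logarithms, and the claim becomes
\begin{align*}
    \sum_{i=1}^n \log(\tau + \sigma x_i) \geq \frac{n-s}{2}\log(\tau - \sigma) + \frac{n+s}{2}\log(\tau + \sigma).
\end{align*}

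Next I would introduce $\phi(t) \define \log(\tau + \sigma t)$, which is concave on $[-1,1]$ because it is the composition of the concave function $\log$ with the affine map $t \mapsto \tau + \sigma t$ (whose image lies in $(0,\infty)$). For each $i$, write $x_i$ as the convex combination $x_i = \lambda_i \cdot 1 + (1-\lambda_i)\cdot(-1)$ with $\lambda_i = \frac{1+x_i}{2} \in [0,1]$. Concavity of $\phi$ then gives
\begin{align*}
    \phi(x_i) \geq \lambda_i \phi(1) + (1-\lambda_i)\phi(-1) = \frac{1+x_i}{2}\log(\tau + \sigma) + \frac{1-x_i}{2}\log(\tau - \sigma).
\end{align*}
Summing this inequality over $i = 1,\dots,n$ and using $\sum_i x_i = s$ yields exactly the displayed logarithmic inequality; exponentiating both sides recovers the claimed product bound.

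I do not expect a genuine obstacle here: the argument is a one-step application of Jensen's inequality to each factor, and the only points requiring (routine) care are checking positivity of the arguments of the logarithm and noting the degenerate case $\sigma = 0$, where both sides equal $\tau^n$. It is worth remarking that the bound is tight whenever every $x_i \in \{-1, 1\}$, which is consistent with how the lemma is invoked in the cycle and even-path computations (\hyperref[lemma:rCycle]{Lemma \ref*{lemma:rCycle}}, \hyperref[lemma:rEvenPath]{Lemma \ref*{lemma:rEvenPath}}), where the relevant eigenvalues $\cos(2k\pi/n)$ sit in $[-1,1]$ and one only needs control of the product in terms of the trace $s = \sum_i x_i$.
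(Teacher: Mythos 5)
Your proposal is correct and is essentially the paper's own argument: the paper likewise sets $\theta_i=\frac{1+x_i}{2}$, writes $\tau+\sigma x_i$ as the convex combination $\theta_i(\tau+\sigma)+(1-\theta_i)(\tau-\sigma)$, applies concavity of $\log$ to each factor, sums, and exponentiates. Your positivity check and the remark on tightness are fine additions but do not change the route.
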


\begin{proof}
    Define \(\theta_i=\frac{x_i+1}{2}\in[0,1]\). By the concavity of \(\log\),
    \begin{align*}
        \log\pth{\tau+\sigma x_i} &= \log\pth{\theta_i(\tau+\sigma)+(1-\theta_i)(\tau-\sigma)}\\
        &\geq \theta_i\log(\tau+\sigma)+(1-\theta_i)\log(\tau-\sigma)\\
        &= \frac{x_i}{2}\log\pth{\frac{\tau+\sigma}{\tau-\sigma}}+\frac{1}{2}\log(\tau^2-\sigma^2)
    \end{align*}
    
    Then
    \begin{align*}
        \prod_{i=1}^n (\tau+\sigma x_i) &\geq \exp\pth{\sum_{i=1}^n \frac{x_i}{2}\log\pth{\frac{\tau+\sigma}{\tau-\sigma}}+\frac{1}{2}\log(\tau^2-\sigma^2)}\\
        &= \exp\pth{\frac{s}{2}\log\pth{\frac{\tau+\sigma}{\tau-\sigma}}+\frac{n}{2}\log(\tau^2-\sigma^2)}\\
        &= \pth{\tau+\sigma}^{\frac{n+s}{2}}\pth{\tau-\sigma}^{\frac{n-s}{2}}
    \end{align*}
\end{proof}

\begin{lemma}
    \label{lemma:compareEpsilon}
    Let \(\rho_{\max} \define \max_i |\rho_i|\) and \(I_{XY} \define -\frac{1}{2}\sum_i \log\pth{1-\rho_i^2}\). Then
    \begin{align*}
        -\sum_i \log\pth{1-\rho_i^2} + \log\pth{1+\rho_i^2} \leq \sum_i \rho_i^2\log\pth{1-\rho_i^2} \leq \rho_{\max}^2 I_{XY}
    \end{align*}
\end{lemma}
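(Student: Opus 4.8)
The plan is to prove the chain coordinatewise: each of the three quantities is a sum over the independent coordinates $i\in\calD$ (and, in the form the lemma is actually invoked in, carries a common overall factor $-\tfrac12$), so it suffices to compare the corresponding summands and then sum. Throughout, fix an index and abbreviate $t\define\rho_i^2\in[0,1)$, noting $\log(1-t)\le 0$ and $\rho_i^2\le\rho_{\max}^2$. Concretely, the two inequalities reduce to the coordinatewise statements $\log(1-t^2)\ge t\log(1-t)$ and $\rho_i^2\log(1-\rho_i^2)\ge\rho_{\max}^2\log(1-\rho_i^2)$.

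The second (right-hand) inequality is immediate: since $-\log(1-\rho_i^2)\ge 0$ and $\rho_i^2\le\rho_{\max}^2$, scaling a nonnegative number by a smaller factor gives $-\rho_i^2\log(1-\rho_i^2)\le-\rho_{\max}^2\log(1-\rho_i^2)$; summing over $i$ and recalling $I_{XY}=-\tfrac12\sum_i\log(1-\rho_i^2)$ closes it.

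The left-hand inequality is the substantive one. Coordinatewise it reads $\log(1-t^2)\ge t\log(1-t)$; writing $\log(1-t^2)=\log(1-t)+\log(1+t)$ and rearranging, this is equivalent to $-(1-t)\log(1-t)\le\log(1+t)$. For $t=0$ both sides vanish. For $t\in(0,1)$ I would apply Bernoulli's inequality, \hyperref[lemma:push2exp]{Lemma~\ref*{lemma:push2exp}}, with exponent $x=t\in(0,1)$ and base $1+a=1-t$ (so $a=-t\in(-1,0)$): this gives $(1-t)^{t}<1+(-t)\cdot t=1-t^2$, whence
\[
  (1-t)^{1-t}\;=\;\frac{1-t}{(1-t)^{t}}\;>\;\frac{1-t}{1-t^2}\;=\;\frac{1}{1+t}.
\]
Taking logarithms yields $(1-t)\log(1-t)>-\log(1+t)$, i.e. $-(1-t)\log(1-t)<\log(1+t)$, exactly what is needed; summing over $i$ (and dividing by $-2$, reversing the inequality) recovers the stated comparison with $-\tfrac12\sum_i\big(\log(1-\rho_i^2)+\log(1+\rho_i^2)\big)$.

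The only delicate point is precisely this last coordinatewise inequality: the naive second-order bounds $\log(1+t)\ge t-\tfrac{t^2}{2}$ and $(1-t)\log(1-t)\ge-t$ combine only to $\log(1+t)+(1-t)\log(1-t)\ge-\tfrac{t^2}{2}$, which is negative, so one genuinely needs the exact cancellation — delivered cleanly here by Bernoulli, or alternatively by the convexity observation that $h(t)\define\log(1+t)+(1-t)\log(1-t)$ has $h(0)=h'(0)=0$ and $h''(t)=\tfrac{1}{1-t}-\tfrac{1}{(1+t)^2}\ge0$ on $[0,1)$, forcing $h\ge0$. Everything else is monotonicity and bookkeeping of the factors of $\tfrac12$.
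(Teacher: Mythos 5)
Your proof is correct, and for the key step it takes a genuinely different route from the paper. Both arguments share the same skeleton: the right-hand inequality is the trivial observation that $-\log(1-\rho_i^2)\ge 0$ and $\rho_i^2\le\rho_{\max}^2$, and the substantive content is the single-variable inequality $-t\log(1-t)\ge-\log(1-t)-\log(1+t)$ for $t=\rho_i^2\in[0,1)$, which you correctly identify (and you are right to prove the lemma in the normalized, sign-corrected form in which it is actually invoked in Lemmas \ref{lemma:rCycle} and \ref{lemma:rEvenPath} — the displayed statement has a sign/factor slip). Where the paper proves this scalar inequality by expanding $-x\log(1-x)$ as a power series and comparing paired even/odd terms against the series of $-\log(1-x^2)$, you instead rearrange it to $\log(1+t)+(1-t)\log(1-t)\ge 0$ and deduce it from Bernoulli's inequality (the paper's own \hyperref[lemma:push2exp]{Lemma \ref*{lemma:push2exp}} with $a=-t$, $x=t$), with a convexity argument ($h(0)=h'(0)=0$, $h''\ge0$) as a self-contained alternative. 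Your route buys something real: it reuses an already-proved lemma and avoids the delicate term-pairing, which as printed in the paper is shaky — the displayed comparison $\frac{x^{2k}}{2k-1}+\frac{x^{2k+1}}{2k}\ge x^{2k}\pth{\frac{1}{2k-1}+\frac{1}{2k}}$ needs $x\ge\frac{2(k-1)}{2k-1}$ and so does not hold pairwise for all $x\in[0,1)$, whereas your Bernoulli/convexity derivation is airtight and arguably the cleaner proof to keep. Your side remark that naive second-order Taylor bounds are insufficient is accurate and explains why some exact cancellation (Bernoulli or convexity) is genuinely needed.
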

\begin{proof}
    First we show that, for any \(x\in[0,1]\), \(-x\log(1-x) \geq -\log(1-x)-\log(1+x)\):

    By the Taylor series expansion,
    \begin{align*}
        -x\log\pth{1-x} &= x\sum_{k=1}^\infty \frac{x^k}{k} = \sum_{k=2}^\infty \frac{x^{k}}{k-1}\\
        &= \sum_{k=1}^\infty \frac{x^{2k}}{2k-1} + \frac{x^{2k+1}}{2k}
    \end{align*}
    For \(x\in[0,1]\), \(\frac{x^{2k}}{2k-1} + \frac{x^{2k+1}}{2k} \geq x^{2k}\pth{\frac{1}{2k-1} + \frac{1}{2k}} = \frac{x^{2k}}{k}\pth{1+\frac{1}{2(2k-1)}}\), which is strictly greater than \(\frac{x^{2k}}{k}\) for any \(k>1/2\). Then
    \begin{align*}
        -x\log\pth{1-x} &= \sum_{k=1}^\infty \frac{x^{2k}}{2k-1} + \frac{x^{2k+1}}{2k}\\
        &> \sum_{k=1}^\infty \frac{x^{2k}}{k}\\
        &\eql{(*)} -\log\pth{1-x^2}\\
        &= -\log\pth{1-x} -\log \pth{1+x}
    \end{align*}
    where \((*)\) follows from the Taylor series expansion. Consequently, we have
    \begin{align*}
        -\rho^2\log\pth{1-\rho^2} \geq - \log\pth{1-\rho^2} - \log\pth{1+\rho^2}
    \end{align*}
    for any \(\rho\in[-1,1]\).

    The result follows from the fact that \(\sum \rho_i^2\log\pth{1-\rho_i^2} \leq \max_i \rho_i^2 \log\pth{1-\rho_i^2}\).
\end{proof}

\end{document}